\algnewcommand\algorithmicinput{\textbf{Input:}}
\algnewcommand\Input{\item[\algorithmicinput]}
\algnewcommand\algorithmicoutput{\textbf{Output:}}
\algnewcommand\Output{\item[\algorithmicoutput]}
\newcommand{\Desc}[2]{\Statex \makebox[2em][l]{#1}#2}
\newcolumntype{L}{>{$}l<{$}}
\newcolumntype{x}[1]{>{\centering\arraybackslash\hspace{0pt}}p{#1}}
\newcommand{\ketbra}[2]{\ket{#1}\bra{#2}}
\newcommand\av[1]{\left\langle #1 \right\rangle}
\newcommand\id{\mathbbm{1}}
\newcommand\hilb{\mathcal{H}}
\newcommand{\R}{\mathbb R}
\newcommand{\C}{\mathbb C}
\newcommand{\Z}{\mathbb Z}
\renewcommand{\P}{\mathcal{P}}
\newcommand{\X}{\mathbf X}
\newcommand{\Zb}{\mathbf Z}
\newcommand{\rmi}{\mathrm{i}}
\newcommand{\rmx}{\mathrm{x}}
\newcommand{\rmz}{\mathrm{z}}
\newcommand{\rmC}{\mathrm{C}}
\newcommand{\rmP}{\mathrm{P}}
\newcommand{\LC}{\mathrm{LC}}
\newcommand{\sep}{\mathrm{sep}}
\newcommand{\bfv}{\mathbf{v}}
\newcommand{\bfw}{\mathbf{w}}
\newcommand{\bfy}{\mathbf{y}}
\newcommand{\scrM}{\mathscr{M}}
\newcommand{\scrU}{\mathscr{U}}
\newcommand{\T}{\mathcal{T}}
\newcommand{\tr}{\operatorname{tr}}
\newcommand{\diag}{\operatorname{diag}}
\newcommand{\rank}{\operatorname{rank}}
\newcommand{\Span}{\operatorname{Span}}
\newcommand{\supp}{\operatorname{supp}}
\newcommand{\rspan}{\operatorname{rspan}}
\def\<{\langle}  
\def\>{\rangle}  
\newcommand{\caA}{\mathcal{A}}
\newcommand{\caH}{\mathcal{H}}
\newcommand{\Y}{\mathcal{Y}}
\newcommand{\fkB}{\mathfrak{B}}
\newtheorem{corollary}{Corollary}
\newtheorem{theorem}{Theorem}
\newtheorem{lemma}{Lemma}
\newtheorem{proposition}{Proposition}
\newtheorem{conjecture}{Conjecture}
\begin{document}
\renewcommand{\S}{\mathcal{S}}
\renewcommand{\L}{\mathcal{L}}
\setlength{\tabcolsep}{6pt}

\title{Optimal verification of stabilizer states}

\author{Ninnat Dangniam}
\email{ninnatdn@gmail.com}
\affiliation{
Department of Physics and Center for Field Theory and Particle Physics, Fudan University, Shanghai 200433, 
China}
\affiliation{State Key Laboratory of Surface Physics, Fudan University, Shanghai 200433, China}
\author{Yun-Guang Han}
\affiliation{
Department of Physics and Center for Field Theory and Particle Physics, Fudan University, Shanghai 200433, 
China}
\affiliation{State Key Laboratory of Surface Physics, Fudan University, Shanghai 200433, China}
\author{Huangjun Zhu}
\email{zhuhuangjun@fudan.edu.cn}
\affiliation{
Department of Physics and Center for Field Theory and Particle Physics, Fudan University, Shanghai 200433, 
China}
\affiliation{State Key Laboratory of Surface Physics, Fudan University, Shanghai 200433, China}
\affiliation{Institute for Nanoelectronic Devices and Quantum Computing, Fudan University, Shanghai 200433, China}
\affiliation{Collaborative Innovation Center of Advanced Microstructures, Nanjing 210093, China}

\date{\today}

\begin{abstract}
Statistical verification of a quantum state aims to certify whether a given unknown state is close to the target state with confidence. So far, sample-optimal verification protocols based on local measurements have been found only for disparate groups of states: bipartite pure states,  GHZ states, and antisymmetric basis states. 
In this work, we investigate systematically optimal verification of entangled stabilizer states using Pauli measurements.  First, we provide a lower bound on the sample complexity of any verification protocol based on separable measurements, which is independent of the number of qubits and the specific stabilizer state. Then we propose a simple algorithm for constructing optimal protocols based on Pauli measurements. Our calculations suggest that  optimal protocols based on Pauli measurements  can saturate the above bound for all entangled stabilizer states, and this claim is verified explicitly for states up to seven qubits. Similar results are derived when each party can choose only two measurement settings, say $X$ and $Z$. Furthermore, by virtue of the chromatic number, we provide an upper bound for the minimum number of settings required to verify any graph state, which is expected to be tight.  For experimentalists, optimal protocols and protocols with the minimum number of settings are explicitly provided for all equivalent classes of stabilizer states up to seven qubits. For theorists, general results on stabilizer states (including graph states in particular) and related structures derived here may be of independent interest beyond quantum state verification.
\end{abstract}

\maketitle

\tableofcontents

\section{Introduction}

Engineered quantum systems have the potential to efficiently perform tasks that are believed to be exponentially difficult for classical computers such as simulating quantum systems and solving certain computational problems. With the potential comes the challenge of verifying that the quantum devices give the correct results. The standard approach of quantum tomography accomplishes this task by fully characterizing the  unknown quantum system, but with the cost exponential in the system size. However, rarely do we need to completely characterize the quantum system  as we often have a good idea of how our devices work, and we may only need to know if the state produced or the operation performed is close to what we expect. The research effort to address these questions have grown into a mature subfield of quantum certification \cite{eisert2019}.

Statistical verification of a target quantum state $\rho =\ketbra{\Psi}{\Psi}$ \cite{pallister2018,hayashi2006,hayashi2009,ZhuH2019AdvS,ZhuH2019AdvL} is an approach for  certifying  that an unknown state $\sigma$ is ``close" to the target $\rho$ with some confidence. More precisely, the verification scheme accepts a density operator $\sigma$ that is close to the target state with the worst-case fidelity $1-\epsilon$ and confidence $1-\delta$. In other words, the  probability of accepting a ``wrong" state $\sigma$ with  $\av{\Psi|\sigma|\Psi} \le 1-\epsilon$ is at most  $\delta$. For the convenience of practical applications, usually the verification protocols are constructed using local operations and classical communication (LOCC). 
Such verification protocols have been gaining traction in the quantum certification community \cite{ZhuH2019E,ZhuH2019O,wang2019,li2019_bipartite,yu2019,liu2019,li2019_GHZ,li2020_phased_Dicke} because they are easy to implement and  potentially require only a small number of copies of the state. 
However, sample-optimal protocols under LOCC have been found only for bipartite maximally entangled states \cite{hayashi2006,hayashi2009,ZhuH2019O}, two-qubit pure states \cite{wang2019}, $n$-partite GHZ states \cite{li2019_GHZ}, and most recently antisymmetric basis states \cite{li2020_phased_Dicke}.  

Maximally entangled states and GHZ states are subsumed under the ubiquitous class of \emph{stabilizer states}, which can be highly entangled yet efficiently simulatable \cite{gottesmann1997,aaronson2004} and efficiently learnable \cite{rocchetto2018} under Pauli measurements.  Another notable  example of stabilizer states  are graph states \cite{hein2004,hein2006}, which have simple graphical representations that transform nicely under local Clifford unitary transformations. They find applications in secret sharing \cite{markham2008}, error correcting codes \cite{schlingemann2001a,schlingemann2001b}, and cluster states in particular are resource states for universal measurement-based quantum computing \cite{raussendorf2003}.
Stabilizer states and graph states 
can be defined for multiqudit systems with any local dimension; nevertheless, multiqubit stabilizer states are the most prominent because most quantum information processing tasks build on  multiqubit systems.
In this paper we
only consider qubit stabilizer states and graph states unless stated otherwise, but we believe that many results presented here can be generalized to the qudit setting as long as the local dimension is a prime. Efficient verification of stabilizer states have many applications, including but not limited to  blind quantum computing \cite{hayashi2015,takeuchi2019} and  quantum gate verification \cite{zhu2019gate,liu2019gate2019,zeng2019property-testing}.

While one might expect that the determination of an optimal verification strategy to be difficult in general, one could hope for the answer for stabilizer states in view of  their relatively simple structure. Given an $n$-qubit stabilizer state,   Pallister, Montanaro and Linden \cite{pallister2018} showed that the optimal strategy when restricted to the measurements of non-trivial stabilizers (to be introduced below) is to measure all $2^n-1$ of them with equal probabilities, which yields the optimal constant scaling of the number of samples,
\begin{align}
	N \approx \left\lceil\frac{2^n-1}{2^{n-1}} \frac{\ln\delta^{-1}}{\epsilon}\right\rceil \approx \left\lceil\frac{2\ln\delta^{-1}}{\epsilon}\right\rceil.
\end{align}
One could choose to measure only $n$ stabilizer generators at the expense of now a linear scaling \cite{pallister2018}:
\begin{align}
	N \approx \left\lceil\frac{n\ln\delta^{-1}}{\epsilon}\right\rceil.
\end{align}
This trade-off is not inevitable in general. Given a graph state associated with the graph $G$, by virtue of graph coloring, Ref.~\cite{ZhuH2019E} proposed an efficient protocol which requires  $\chi(G)$ measurement settings and $\lceil\chi(G)\epsilon^{-1}\ln\delta^{-1}\rceil$ samples. Here the chromatic number $\chi(G)$ of  $G$ is the smallest number of colors required  so that no two adjacent vertices share the same color. With this protocol, one can verify two-colorable graph states, such as  one- or two-dimensional cluster states, with  $\lceil 2\ln\delta^{-1}/\epsilon \rceil$ tests.

In this paper we study systematically optimal verification of stabilizer states using Pauli measurements. We prove that the spectral gap of any verification operator of an entangled stabilizer state based on separable measurements is upper bounded by $2/3$.
To verify the stabilizer state within infidelity $\epsilon$ and significance level $\delta$, therefore, the number of tests required is bounded from below by
\begin{align}\label{eq:2/3}
N = \biggl\lceil
\frac{1}{\ln[1-2\epsilon/3]}\ln\delta\biggr\rceil\approx \left\lceil\frac{3\ln\delta^{-1}}{2\epsilon}\right\rceil.
\end{align}  
Moreover,
we propose a simple algorithm for constructing optimal verification protocols of stabilizer states and graph states based on nonadaptive Pauli measurements.
An  optimal  protocol for each equivalent class of graph states with respect to local Clifford transformations (LC) and graph isomorphism is presented in Table~\ref{tab:optimal} in the Appendix, and our code is available on \href{https://github.com/ninnat/graph-state-verification}{Github}. These results suggest that for any entangled stabilizer state  the bound in \eqref{eq:2/3}
can be saturated by protocols built on  Pauli measurements.

In addition, we study the problem of optimal verification based on $X$ and $Z$ measurements and the problem of verification with the minimum number of measurement settings. This problem is of interest in many scenarios in which the accessible measurement settings are restricted.
Our study suggests that the maximum spectral gap achievable by $X$ and $Z$ measurements is $1/2$. For the  ring cluster state we prove this result rigorously by constructing an explicit optimal verification protocol.  We also prove that three settings based on Pauli measurements (or $X$ and $Z$ measurements) are both necessary and sufficient for verifying the odd ring cluster state with at least five qubits.

In the course of study, we introduce the concepts of admissible Pauli measurements and admissible test projectors for general stabilizer states and clarify their basic properties, which are of interest to quantum state verification in general. Meanwhile, we introduce several graph invariants that are tied to the verification of graph states and clarify their connections with the chromatic number. In addition to their significance to the current study, these results provide additional insights on stabilizer states and graph states themselves and are expected to find applications in various other related problems.

The rest of this paper is organized as follows. 
First, we present a brief introduction to quantum state verification in Sec.~\ref{sec:QSV} and preliminary results on  the stabilizer formalism in Sec.~\ref{sec:stabilizer}. In Sec.~\ref{sec:test}  we study canonical test projectors and admissible  test projectors for stabilizer states and graph states and clarify their properties.  In Sec.~\ref{sec:spectral-gap-optimization} we derive an upper bound for the spectral gap of verification operators  based on separable measurements. 
Moreover, we propose a simple algorithm for constructing optimal verification protocols based on  Pauli measurements and provide an explicit optimal protocol for each connected graph state up to seven qubits.
In Sec.~\ref{sec:XZ-verification}
we discuss optimal verification of graph states based on $X$ and $Z$ measurements. In Sec.~\ref{sec:minimal-verification} we consider the verification of graph states with the  minimum number of settings. Sec.~\ref{sec:summary} summarizes this paper. To streamline the presentation, the proofs of several technical results are relegated to the Appendices, which also contain Tables~\ref{tab:optimal} and \ref{tab:minimal-settings}.

\section{Statistical verification}\label{sec:QSV}
\subsection{The basic framework}
Let us formally introduce the framework of statistical verification of quantum states. Suppose we want to prepare the target state $\rho=|\Psi\>\<\Psi|$, but actually obtain the sequence of states $\sigma_1,\cdots,\sigma_N$ in $N$ runs. Our task is to determine whether these states are sufficiently close to the target state on average (with respect to the fidelity, say).
Following \cite{pallister2018,ZhuH2019AdvS,ZhuH2019AdvL}, we perform a local  measurement with binary outcomes $\{E_j,\id - E_j\}$, labeled as ``pass" and ``fail" respectively, on each state $\sigma_k$ for $k=1,\dots,N$ with some probability $p_j$. Each operator $E_j$ is called a \emph{test operator}. Here we  demand that the target state $\rho$ can  pass the test with certainty, which means $E_j\rho=\rho$. 
The sequence of states passes the verification procedure iff every outcome is ``pass".  The efficiency of the above verification procedure is  determined by  the \emph{verification operator}
\begin{align}
	\Omega = \sum_{j=1}^m p_j E_j,
\end{align}
where $m$ is the total number of measurement settings.

If the fidelity $\<\Psi|\sigma_k|\Psi\>$ is upper bounded by $ 1-\epsilon$, then 
the maximal average probability that $\sigma_k$ can pass each test  is \cite{pallister2018,ZhuH2019AdvL}
\begin{equation}\label{eq:PassingProb}
\max_{\<\Psi|\sigma|\Psi\>\leq 1-\epsilon }\tr(\Omega \sigma)=1- [1-\beta(\Omega)]\epsilon=1- \nu(\Omega)\epsilon.
\end{equation} 
Here  $\beta(\Omega)$ is the second largest eigenvalue of the verification operator $\Omega$, and  $\nu(\Omega):=1-\beta(\Omega)$ is the spectral gap from the maximum eigenvalue.
Suppose the states $\sigma_1, \sigma_2, \ldots, \sigma_N$ produced  are independent of each other. Then these states can pass $N$ tests with probability at most
\begin{align}\label{eq:PassProb}
\prod_{j=1}^N \tr(\Omega\sigma_j)\leq \prod_{j=1}^N [1-\nu(\Omega)\epsilon_j]\leq [1-\nu(\Omega)\bar{\epsilon}]^N, 
\end{align}
where  $\bar{\epsilon}=\sum_j \epsilon_j/N$  with $\epsilon_j=1-\<\Psi|\sigma_j |\Psi\>$ 
is the average infidelity \cite{ZhuH2019AdvS,ZhuH2019AdvL}.
If  $N$ tests are passed, then we can  ensure the condition
$\bar{\epsilon}<\epsilon$ with significance level $\delta= [1-\nu(\Omega)\epsilon]^N$. To verify these states within infidelity $\epsilon$ and significance level $\delta$, the number of tests required is \cite{pallister2018,ZhuH2019AdvS,ZhuH2019AdvL}
\begin{equation}\label{eq:NumTest}
N(\epsilon,\delta,\Omega)=\biggl\lceil
\frac{1}{\ln[1-\nu(\Omega)\epsilon]}\ln\delta\biggr\rceil
\leq 
\biggl\lceil
\frac{1}{\nu(\Omega)\epsilon}\ln\frac{1}{\delta}\biggr\rceil.
\end{equation}
If there is no restriction on the measurements, the optimal performance is achieved by performing the projective measurement onto the target state $\ketbra{\Psi}{\Psi}$ itself, which yields $\nu(\Omega)=1$ and  $N=\lceil\ln\delta/\ln(1-\epsilon)\rceil\leq \lceil\ln\delta^{-1}/\epsilon\rceil$ as the ultimate efficiency limit allowed by quantum theory.\footnote{A related certification framework by Kalev and Kyrillidis \cite{kalev2019} for stabilizer states is, in a sense, opposite to ours. In our framework, we are given the worst case fidelity and are asked to find an optimal measurement, whereas in their work we are given a (stabilizer) measurement and are asked to bound the worst case fidelity $1-\epsilon$ to the desired stabilizer state within some radius $r$ (their ``$\epsilon$").}

A set of test operators $\{E_j\}_{j=1}^m$ for $|\Psi\>$ is \emph{minimal} if any proper subset of $\{E_j\}_{j=1}^m$ cannot verify $|\Psi\>$ reliably because the common pass eigenspace of operators in the subset has dimension larger than one. A minimal set of test operators has the following properties.
\begin{proposition}\label{pro:SpectralGapMinSettings}
	Suppose $\Omega=\sum_j p_j E_j$ is a verification operator based on a minimal set of $m$ test operators. Then $\nu(\Omega)\leq 1/m$. If the inequality is saturated then $p_j=1/m$ for all $j$. 
\end{proposition}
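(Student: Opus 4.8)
The plan is to lower-bound the second-largest eigenvalue $\beta(\Omega)$ directly, exploiting minimality to manufacture, for each index $j$, a unit vector orthogonal to $|\Psi\>$ that passes every test except possibly $E_j$. First I would record that $|\Psi\>$ is an eigenvector of $\Omega$ with eigenvalue $1$: since $E_j\rho=\rho$ gives $E_j|\Psi\>=|\Psi\>$, we have $\Omega|\Psi\>=\sum_j p_j E_j|\Psi\>=|\Psi\>$, and each $E_j\le\id$ forces $\Omega\le\id$, so $1$ is the largest eigenvalue. Because the collection is minimal, the common pass eigenspace of the full set $\{E_j\}$ is one-dimensional, spanned by $|\Psi\>$; the eigenvalue-$1$ eigenspace of $\Omega$ coincides with this common pass eigenspace, since $\<\phi|\Omega|\phi\>=\|\phi\|^2$ together with $\<\phi|E_j|\phi\>\le\|\phi\|^2$ forces $E_j|\phi\>=|\phi\>$ for every $j$ with $p_j>0$. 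Hence the top eigenvalue is simple, and $\beta(\Omega)$ admits the variational form $\beta(\Omega)=\max_{|\phi\>\perp|\Psi\>,\,\|\phi\|=1}\<\phi|\Omega|\phi\>$.

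Next I would invoke minimality in its other direction. For each $j$, discarding $E_j$ leaves a proper subset whose common pass eigenspace has dimension at least two; since that space contains $|\Psi\>$, I can select a normalized vector $|\phi_j\>$ in it that is orthogonal to $|\Psi\>$. By construction $E_k|\phi_j\>=|\phi_j\>$ for all $k\neq j$, so
\begin{align}
\<\phi_j|\Omega|\phi_j\>=\sum_{k\neq j}p_k+p_j\<\phi_j|E_j|\phi_j\>\geq 1-p_j,
\end{align}
where the inequality uses $E_j\geq 0$. Feeding this into the variational bound yields $\beta(\Omega)\geq 1-p_j$ for every $j$, hence $\beta(\Omega)\geq 1-\min_j p_j$, and therefore $\nu(\Omega)=1-\beta(\Omega)\leq\min_j p_j$.

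I would then close with the pigeonhole estimate $\min_j p_j\leq\frac1m\sum_j p_j=\frac1m$, giving $\nu(\Omega)\leq 1/m$. For the saturation claim, equality $\nu(\Omega)=1/m$ forces the chain $1/m=\nu(\Omega)\leq\min_j p_j\leq 1/m$ to collapse, so $\min_j p_j=1/m$ while $\sum_j p_j=1$; a minimum equal to the mean can occur only when all entries coincide, so $p_j=1/m$ for every $j$.

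The two variational estimates are routine; the step I expect to carry the real weight is the construction of the witnesses $|\phi_j\>$, which is exactly where the definition of minimality is consumed. I must verify that ``the common pass eigenspace has dimension larger than one'' for every proper subset is precisely the hypothesis guaranteeing an admissible $|\phi_j\>\perp|\Psi\>$, and that the same definition applied to the \emph{full} set pins the top eigenspace of $\Omega$ to be simple, so that the variational characterization of $\beta(\Omega)$ as a maximum over $|\Psi\>^{\perp}$ is legitimate rather than merely an upper bound.
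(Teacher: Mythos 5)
Your proof is correct and follows essentially the same route as the paper: for each $j$ you use minimality to produce a witness state orthogonal to $|\Psi\>$ that passes all tests except possibly $E_j$, deduce $\beta(\Omega)\geq 1-p_j$, and conclude $\nu(\Omega)\leq\min_j p_j\leq 1/m$ with equality forcing uniform probabilities. The extra care you take to justify the variational characterization of $\beta(\Omega)$ is sound but not strictly needed, since $|\Psi\>$ being an eigenvector of the Hermitian operator $\Omega$ with the top eigenvalue already guarantees $\beta(\Omega)\geq\<\phi|\Omega|\phi\>$ for any unit $|\phi\>\perp|\Psi\>$.
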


\begin{proof}
	By assumption, for each  $k\in \{1,2,\ldots, m\}$, there exists a pure state $|\Psi_k\>$ that is orthogonal to the target state $|\Psi\>$ and belongs to the pass eigenspace of $E_j$ for all $j\neq k$, that is, $E_j|\Psi_k\>=|\Psi_k\>$. Therefore,
	\begin{align}
	\beta(\Omega)&\geq \<\Psi_k|\Omega|\Psi_k\>\geq \sum_{j\neq k} p_j \<\Psi_k|E_j|\Psi_k\>=\sum_{j\neq k} p_j\nonumber\\
	&=1-p_k\quad \forall k, 
	\end{align}
	which implies that 
	\begin{equation}
	\nu(\Omega)\leq \min_k p_k\leq 1/m. 
	\end{equation}
	Here the second inequality is saturated iff $p_k=1/m$ for all $k$. 
\end{proof}

\subsection{Verification of a tensor product}
Suppose the target state $|\Psi\>$ is a tensor product of the form $|\Psi\>=\bigotimes_{j=1}^J |\Psi_j\>$, where $J\geq 2$ and each tensor factor $|\Psi_j\>$ may be either separable or entangled. It is instructive to clarify the relation between the verification operators  of $|\Psi\>$ and that of each tensor factor.

Given a verification operator $\Omega$  for $|\Psi\>$, the \emph{reduced verification operator} of $\Omega$ for the tensor factor $|\Psi_j\>$ is defined as 
\begin{equation}\label{eq:ReducedVeriO}
\Omega_j:=\<\overline{\Psi}_j|\Omega|\overline{\Psi}_j\>,
\end{equation}
where $|\overline{\Psi}_j\>:=\bigotimes_{j'\neq j}|\Psi_{j'}\>$. Note that $\Omega_j|\Psi_j\>=|\Psi_j\>$, so
$\Omega_j$ is indeed a verification operator for $|\Psi_j\>$. If $\Omega$ is separable, then each $\Omega_j$ is also separable. Reduced test operators can be defined in a similar way.
\begin{proposition}\label{pro:ReducedVeriO}
	Suppose $\Omega$ is a verification operator for $|\Psi\>=\bigotimes_{j=1}^J |\Psi_j\>$, and $\Omega_j$ for $j=1,2,\ldots, J$ are  reduced verification operators of $\Omega$. Then 
	\begin{align}\label{eq:ReducedVeriGap}
	\beta(\Omega)\geq\max_{1\leq j\leq J} \beta(\Omega_j),\quad \nu(\Omega)\leq \min_{1\leq j\leq J} \nu(\Omega_j). 
	\end{align}
\end{proposition}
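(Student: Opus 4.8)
The two stated inequalities are equivalent, since $\nu=1-\beta$ for any verification operator, so the plan is to establish the single bound $\beta(\Omega)\geq\beta(\Omega_j)$ for each fixed $j$ and then take the maximum over $j$. The guiding idea is that a near-worst-case state for the reduced factor $|\Psi_j\>$ can be lifted to a near-worst-case state for the whole tensor product by attaching the correct target states on the remaining factors.

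First I would record the spectral data that makes the variational argument go through. Because $\Omega=\sum_j p_j E_j$ with $\sum_j p_j=1$ and $0\preceq E_j\preceq\id$, we have $0\preceq\Omega\preceq\id$, and $\Omega|\Psi\>=|\Psi\>$ shows that $|\Psi\>$ is a top eigenvector of $\Omega$ with eigenvalue $1$. Sandwiching, $\Omega_j=\<\overline{\Psi}_j|\Omega|\overline{\Psi}_j\>\preceq\<\overline{\Psi}_j|\,\id\,|\overline{\Psi}_j\>=\id$ on factor $j$, while $\Omega_j|\Psi_j\>=|\Psi_j\>$ (noted in the text) likewise makes $|\Psi_j\>$ a top eigenvector of $\Omega_j$ with eigenvalue $1$. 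Consequently both second-largest eigenvalues admit the Rayleigh-quotient characterization obtained by deleting the respective top eigenvector:
\begin{equation}
\beta(\Omega)=\max_{|\Phi\>\perp|\Psi\>}\frac{\<\Phi|\Omega|\Phi\>}{\<\Phi|\Phi\>},\qquad \beta(\Omega_j)=\max_{|\phi\>\perp|\Psi_j\>}\frac{\<\phi|\Omega_j|\phi\>}{\<\phi|\phi\>}.
\end{equation}

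Next I would pick a normalized maximizer $|\phi_j\>$ of the right-hand problem, so that $|\phi_j\>\perp|\Psi_j\>$ and $\<\phi_j|\Omega_j|\phi_j\>=\beta(\Omega_j)$, and lift it to the test vector $|\Phi_j\>:=|\phi_j\>\otimes|\overline{\Psi}_j\>$, where $|\phi_j\>$ occupies the $j$th tensor slot. Two short checks then finish the argument: orthogonality, $\<\Psi|\Phi_j\>=\<\Psi_j|\phi_j\>\prod_{j'\neq j}\<\Psi_{j'}|\Psi_{j'}\>=0$; and evaluation, $\<\Phi_j|\Omega|\Phi_j\>=\<\phi_j|\,\<\overline{\Psi}_j|\Omega|\overline{\Psi}_j\>\,|\phi_j\>=\<\phi_j|\Omega_j|\phi_j\>=\beta(\Omega_j)$. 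Feeding $|\Phi_j\>$ into the variational formula for $\beta(\Omega)$ gives $\beta(\Omega)\geq\beta(\Omega_j)$, and maximizing over $j$ yields the claim.

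I expect no deep obstacle here; the only points requiring care are bookkeeping ones. The first is confirming that $|\Psi\>$ and $|\Psi_j\>$ are genuinely top eigenvectors (eigenvalue exactly $1$), which is what legitimizes the ``delete the top eigenvector'' form of the variational principle for the second-largest eigenvalue; the bounds $\Omega\preceq\id$ and $\Omega_j\preceq\id$ above supply this. The second is the tensor-slot bookkeeping in the definition of $|\Phi_j\>$, ensuring that $|\overline{\Psi}_j\>$ really carries the target on every factor except $j$ so that the inner products factorize as used. The separability addendum is immediate: if $\Omega$ is separable then so is each $\Omega_j$, as already observed, and this structure plays no role in the bound.
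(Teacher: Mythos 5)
Your proof is correct and follows essentially the same route as the paper: the paper's one-line argument is precisely the lift $|\Phi_j\>\mapsto|\Phi_j\>\otimes|\overline{\Psi}_j\>$ of a maximizer orthogonal to $|\Psi_j\>$ into a competitor orthogonal to $|\Psi\>$, combined with the variational characterization of $\beta$. Your additional checks that $|\Psi\>$ and $|\Psi_j\>$ are top eigenvectors (via $\Omega\preceq\id$ and $\Omega_j\preceq\id$) are sound bookkeeping that the paper leaves implicit.
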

\begin{proof}
	\begin{align}
	\beta(\Omega_j)&=\max_{|\Phi_j\> :\<\Psi_j|\Phi_j\>=0} \<\Phi_j|\Omega_j|\Phi_j\>\nonumber\\
	&= \max_{|\Phi_j\> :\<\Psi_j|\Phi_j\>=0} (\<\Phi_j|\otimes \<\overline{\Psi}_j|)\Omega(|\Phi_j\>\otimes |\overline{\Psi}_j\>)\nonumber\\
	&\leq \max_{|\Phi\> :\<\Psi|\Phi\>=0}\<\Phi|\Omega|\Phi\>=\beta(\Omega),
	\end{align}
	which implies \eqref{eq:ReducedVeriGap}. 
\end{proof}

Conversely,  suppose $\Omega_j$ are verification operators for $|\Psi_j\>$ with spectral gap $\nu(\Omega_j)$ for $j=1,2,\ldots, J$. Let $\Omega=\bigotimes_{j=1}^J  \Omega_j$; then $\Omega$ is a verification operator for $|\Psi\>$, and $\Omega_j$ are reduced verification operators of $\Omega$ by the definition in \eqref{eq:ReducedVeriO}. Straightforward calculation shows that the spectral gap  of $\Omega$ reads
\begin{equation}
\nu(\Omega)=\min_{1\leq j\leq J}\nu(\Omega_j),
\end{equation}
which saturates the upper bound in \eqref{eq:ReducedVeriGap}. In addition, if each $\Omega_j$ can be realized by LOCC (Pauli measurements), then so can $\Omega$. On the other hand, the number of distinct test operators (measurement settings) required to realize $\Omega$ (naively as suggested by the definition) increases exponentially with the number $J$ of tensor factors. It is of practical interest to reduce this number.

Suppose $\Omega_j$ can be realized by the set of test operators $\{E_k^{(j)}\}_{k=1}^{m_j}$, that is, $\Omega_j=\sum_{k=1}^{m_j}p_k^{(j)} E_k^{(j)}$, where $(p_k^{(j)})_{k=1}^{m_j}$ is a probability vector. In addition, $|\Psi_j\>$ is the unique common eigenstate of $E_k^{(j)}$ with eigenvalue 1. Let $m=\max_j m_j$; then $|\Psi\>$ can be reliably  verified by the following test operators
\begin{equation}\label{eq:TestOperatorJoint}
E_k:=\bigotimes_{j=1}^J E_k^{(j)},\quad k=1,2,\ldots, m,
\end{equation}
where $E_k^{(j)}=\openone $ if $m_j<k\leq m$. To verify this claim, first note that $E_k|\Psi\>=|\Psi\>$ for $k=1,2,\ldots,m$, so each $E_k$ is a test operator for $|\Psi\>$.  Suppose $|\Phi\>$  is a common eigenstate of all $E_k$ with eigenvalue 1, that is, $\<\Phi| E_k|\Phi\>=1$. Let $\rho_j=\tr_{\bar{j}} (|\Phi\>\<\Phi|)$, where $\tr_{\bar{j}}$ denotes the partial trace over all tensor factors except for the $j$th factor. Then we have
\begin{align}
1\geq \tr(\rho_j E_k^{(j)})=\<\Phi| E_k^{(j)}\otimes \openone|\Phi\>\geq \<\Phi| E_k|\Phi\>=1
\end{align}
for all $j,k$. 
This equation implies that $\tr(\rho_j E_k^{(j)})=1$, so each $\rho_j$ is supported in the eigenspace of $E_k^{(j)}$ with eigenvalue 1 for all $k$. It follows that $\rho_j=|\Psi_j\>\<\Psi_j|$ and $|\Phi\>\<\Phi|=|\Psi\>\<\Psi|$, so  $|\Psi\>$ is the unique common eigenstate of all $E_k$ with eigenvalue~1 and it can be reliably  verified by the test operators in \eqref{eq:TestOperatorJoint}. 

Let $(q_k)_{k=1}^m$ be any probability vector with $q_k>0$ for all $k$ and $\Omega=\sum_{k=1}^m q_k E_k$; then $\Omega$ is a verification operator for $|\Psi\>$ with $\nu(\Omega)>0$ according to the above discussion. In addition, the reduced verification operator of $\Omega$ for tensor factor $|\Psi_j\>$ reads
\begin{equation}
\Omega_j=\<\overline{\Psi}_j|\Omega|\overline{\Psi}_j\>=\sum_{k=1}^m q_k E_k^{(j)}. 
\end{equation}
According to Proposition~\ref{pro:ReducedVeriO} we have
\begin{equation}
\nu(\Omega)\leq \nu(\Omega_j)=\nu\biggl(\sum_{k=1}^m q_k E_k^{(j)}\biggr)\leq \max_{(q_k')_k}\nu\biggl(\sum_{k=1}^m q_k' E_k^{(j)}\biggr),
\end{equation}
where the maximization is taken over all probability vectors with $m$ components. The right-hand side coincides with the maximum spectral gap achievable by any verification operator of $|\Psi_j\>$ that is based on the set of test operators $\{E_k^{(j)}\}_{k=1}^{m_j}$. Note that $q_k'=0$ for $m_j<k\leq m$ when the maximum spectral gap is attained given that $E_k^{(j)}=\openone $ for $m_j<k\leq m$.

\section{Stabilizer formalism}\label{sec:stabilizer}
\subsection{Pauli group}
Let $\hilb = (\C^2)^{\otimes n}$ be the Hilbert space of $n$ qubits. The Pauli group for one qubit is generated by the following three matrices: 
\begin{align}
X &=\begin{pmatrix}
0 & 1\\
1 & 0
\end{pmatrix},&
Y & =\begin{pmatrix}
0 & -\rmi\\
\rmi & 0
\end{pmatrix},&
Z & =\begin{pmatrix}
1 & 0\\
0 & -1
\end{pmatrix}. 
\end{align} 
The $n$-fold tensor products of Pauli matrices and the identity $\{\id,X,Y,Z\}^{\otimes n}$ form an orthogonal basis for the space $\fkB(\caH)$ of linear operators on $\hilb$. Together with the phase factors $\{\pm1,\pm \rmi\}$, these operators  generate  the \emph{Pauli group} $\P_n$, which has  order $4^{n+1}$. Two elements  of the Pauli group either commute  or anticommute.

Up to phase factors, $n$-qubit Pauli operators
can be labeled by vectors in the binary symplectic space $\Z_2^{2n}$ endowed with the symplectic form 
\begin{align}\label{eq:SymplecticForm}
[\mu,\nu] &:= \mu^TJ\nu,\quad J=\begin{pmatrix}
0 & -\id_n \\
\id_n & 0
\end{pmatrix}.
\end{align}
Let $\mu^\rmx$ ($\mu^\rmz$) be the vector composed of the first (last) $n$ elements of $\mu$; then $\mu = (\mu^\rmx;\mu^\rmz)$, where the semicolon denotes the vertical concatenation. In addition, we have $[\mu,\nu] =  \mu^\rmz \cdot \nu^\rmx + \mu^\rmx \cdot \nu^\rmz$. (Addition and subtraction are the same in arithmetic modulo 2.) The \emph{Weyl representation} \cite{gross2006} of each vector  $\mu\in \Z_2^{2n}$ yields  a Pauli operator
\begin{align}\label{eq:WeylRep}
g(\mu) = \rmi^{\mu^\rmx \cdot \mu^\rmz} \X^{\mu^\rmx} \Zb^{\mu^\rmz},
\end{align}
where $\X^{\mu^\rmx} = X_1^{\mu_1^\rmx} \otimes \cdots \otimes X_n^{\mu_n^\rmx}$ and similarly for $\Zb^{\mu^\rmz}$. Each Pauli operator in $\P_n$ is equal to $\rmi^k g(\mu)$ for $k=0,1,2,3$ and $\mu\in \Z_2^{2n}$. For example, we have $X = g((1;0))$, $Z = g((0;1))$, and $Y = g((1;1))$.
By the following identity 
\begin{align}
g(\mu)g(\nu) = \rmi^{[\mu,\nu]} g(\mu+\nu) = \rmi^{2[\mu,\nu]}g(\nu)g(\mu),
\end{align}
 $g(\mu)$ and $g(\nu)$ commute iff $[\mu,\nu] = 0$.
 
The symplectic complement  of a subspace $W$ in $\Z_2^{2n}$ is defined as 
\begin{align}
W^{\perp} = \{\mu \in \Z_2^{2n} \,|\, [\mu,\nu] = 0, \forall \nu \in W \}.
\end{align}
The  subspace $W$ is \emph{isotropic} if $W \subset W^{\perp}$, in which case $[\mu,\nu]=0$  for all $\mu,\nu\in W$. Hence all Pauli operators associated with vectors in $W$ commute with each other.
The maximal dimension of any isotropic subspace is $n$, and such a maximal isotropic subspace satisfies the equality $W=W^\perp$ and is called \emph{Lagrangian}. 
Each isotropic subspace $W$ of dimension $k$ is determined by a $2n\times k$ basis matrix over $\Z_2^{2n}$ whose columns form a basis of $W$. Conversely, a $2n\times k$ matrix $M$ is a basis matrix for  an isotropic subspace  iff the following condition holds
\begin{equation}
M^TJM=0_{k\times k}.
\end{equation}
Two Lagrangian subspaces $W$ and $W'$ of $\Z_2^{2n}$ are \emph{complementary} if their intersection is trivial (consists of the zero vector only), in which case  $\Span(W\cup W')=\Z_2^{2n}$. 

The Clifford group is the normalizer of the Pauli group $\P_n$. Up to phase factors, it is generated by phase gates, Hadamard gates for individual qubits and controlled-not gates for all pairs of qubits. Its quotient over the Pauli group is isomorphic to the symplectic group with respect to the symplectic form in \eqref{eq:SymplecticForm}.

\subsection{Stabilizer codes} 
A subgroup $\S$ of $\P_n$ is a \emph{stabilizer group} if $\S$ is commutative and does not contain $-\id$. Since $\S$ cannot contain a Pauli operator with phases $\pm \rmi$ (otherwise $-\id \in \S$), every element except the identity has order 2. Thus, $\S$  is isomorphic to an elementary abelian group $\Z_2^k$ of order $2^k$, where $k\leq n$ is the number of minimal generators.  Suppose that the stabilizer group $\S$ is generated by the $k$ generators $S_1, S_2, \ldots, S_k$; then the elements of $\S$ can be labeled by vectors in $\Z_2^k$ as follows,
\begin{equation}\label{eq:Sy}
	S^{\bfy}=\prod_{j=1}^k   S_j^{y_j}, \quad \bfy\in \Z_2^k. 
\end{equation}
The \emph{stabilizer code} $\caH_\S$ of $\S$ is the common eigenspace of eigenvalue 1 of all Pauli operators in $\S$, which has dimension $2^{n-k}$. Alternatively, it is also defined as the common eigenspace of eigenvalue 1 of the $k$ generators $S_1, S_2, \ldots, S_k$. The projector onto the code space reads
\begin{equation}\label{eq:StabCode}
\Pi_\S=\frac{1}{|\S|}\sum_{S\in \S}S=\prod_{j=1}^k \frac{1+S_j}{2}.
\end{equation}
Conversely, $\S$ happens to be the group of all Pauli operators in $\P_n$ that stabilize the stabilizer code $\caH_\S$. So there is a one-to-one correspondence between stabilizer groups and stabilizer codes. 
To later establish the relation between stabilizer groups and isotropic subspaces, we introduce the \emph{signed stabilizer group} of the  stabilizer code $\caH_\S$ to be the union 
\begin{equation}\label{def:signed-stab}
\bar{\S}:=\S\cup (-\S),
\end{equation}
where  $-\S:=\{-S|S\in \S\}$. 

Given the stabilizer group  $\S$  with generators  $S_1, S_2, \ldots, S_k$,
for each $\bfw\in \Z_2^k$, define $\S_\bfw$ as the  group generated by $(-1)^{w_j}S_j$ for $j=1,2,\ldots, k$; then $\S_\bfw$ is also a  stabilizer group. The stabilizer code $\caH_{\S_\bfw}$ of $\S_\bfw$
is the common eigenstate of  $S_1, S_2, \ldots, S_k$ with eigenvalue $(-1)^{w_j}$ for $j=1,2,\ldots, k$ and is also denoted by $\caH_{\S,\bfw}$. The projector onto the stabilizer code reads
\begin{equation}\label{eq:StabCode2}
\Pi_{\S,\bfw}=\prod_{j=1}^n \frac{1+(-1)^{w_j}S_j}{2}=\sum_{\bfy\in \Z_2^n} \chi_\bfw(S^\bfy) S^\bfy, 
\end{equation}
where
\begin{equation}
\chi_\bfw(S^\bfy)=\chi_\bfw(\bfy)=(-1)^{\bfw\cdot \bfy}
\end{equation}
can be understood as a character on $\S$ or $\Z_2^k$ \cite{serre1977}. Note that all stabilizer codes $\caH_{\S,\bfw}$ for $\bfw\in \Z_2^k$ share the same signed stabilizer group, that is, $\bar{\S}_\bfw=\bar{\S}$.

According to the Weyl representation  in \eqref{eq:WeylRep}, each element in $\S$ is equal to $g(\mu)$ or $-g(\mu)$ for  $\mu\in \Z_2^{2n}$. In this way, $\S$ is associated with an isotropic subspace $W\subset \Z_2^{2n}$  of dimension $k$, and there is a one-to-one correspondence between elements in $\S$ and vectors in $W$. Suppose 
the $k$ generators $S_1, S_2, \ldots, S_k$ of $\S$ correspond to the $k$ symplectic vectors
$\mu_1,\mu_2,\ldots, \mu_k$,  which form a basis in $W$.  Then $S^\bfy$   corresponds to the vector $\sum_j y_j \mu_j=M\bfy$ for each  $\bfy\in\Z_2^k$, where $M:=(\mu_1,\mu_2,\dots,\mu_k)$ is  a basis matrix for $W$. Note that all the stabilizer groups $\S_\bfw$ for $\bfw\in \Z_2^k$
are associated with the same isotropic subspace 
according to the above correspondence, and this correspondence extends to the signed stabilizer group $\bar{\S}$. Conversely, given an isotropic subspace $W$ of dimension $k$, $2^k$ stabilizer groups can be constructed as follows. Let $\{\mu_j\}_{j=1}^k$ be any basis for $W$; for each vector  $a$ in $\Z_2^k$,  a stabilizer group can be constructed from the  $k$ generators $(-1)^{a_j}g(\mu_j)$ for $j=1,2,\ldots, k$. All these stabilizer groups extend to a common signed stabilizer group. 
In this way, there is a one-to-one correspondence between signed stabilizer groups and isotropic subspaces.

Suppose $\S$ and $\S'$ are two $n$-qubit stabilizer groups of orders $2^k$ and $2^{k'}$, respectively; let  $\Pi_\S$ and $\Pi_{\S'}$ be the projectors onto the corresponding stabilizer codes. 
Then the overlap between $\Pi_\S$ and $\Pi_{\S'}$ reads
\begin{align}\label{eq:OverLap}
\tr(\Pi_{\S}\Pi_{\S'})&=\frac{1}{|\S|\cdot|\S'|}\sum_{S\in \S, S'\in \S'}
\tr(S S')\nonumber\\
&=\frac{2^n}{|\S|\cdot|\S'|}(|\S\cap \S'|-|(-\S)\cap \S'|)\nonumber\\
&=\begin{cases}
\frac{2^n|\bar{\S}\cap\S'|}{|\S|\cdot|\S'|} & \S\cap\S'=\bar{\S}\cap\S',\\
0 & \mathrm{otherwise}.
\end{cases}
\end{align}
Note that $\S\cap \S'$ is a subgroup of $\bar{\S}\cap \S'$ of index 2 if  $\S\cap \S'\neq \bar{\S}\cap \S'$. Equation \eqref{eq:OverLap} implies the following equation
\begin{align}\label{eq:OverLap2}
\tr(\Pi_{\S,\bfw}\Pi_{\S',\bfw'})
&=\begin{cases}
\frac{2^n|\bar{\S}\cap\S'|}{|\S|\cdot|\S'|} & \S_\bfw\cap\S'_{\bfw'}=\bar{\S}\cap\S',\\
0 & \mathrm{otherwise}.
\end{cases}
\end{align}
for all $\bfw\in \Z_2^k$ given that $\bar{\S}_\bfw=\bar{\S}$ and $\bar{\S}\cap\S'_{\bfw'}=\bar{\S}\cap\S'$. In addition, we have 
\begin{align}\label{eq:Overlap3}
\sum_\bfw\tr(\Pi_{\S,\bfw}\Pi_{\S'})=\tr(\Pi_{\S'})=\frac{2^n}{|\S'|}
\end{align}
thanks to the equality $\sum_\bfw \Pi_{\S,\bfw}=\id$. 
So the number of vectors $\bfw\in \Z_2^n$ at which $\tr(\Pi_{\S,\bfw}\Pi_{\S'})\neq 0$
 is equal to   $|\S|/|\bar{\S}\cap\S'|$.

\begin{lemma}\label{lem:StabIntersection}
	Suppose $\S_j, \T_j$ are stabilizer groups on $\caH_j$ for $j=1,2,\ldots, J$; let  $\S=\S_1\times \S_2\times \cdots \times \S_J$ and $\T=\T_1\times \T_2\times \cdots \times \T_J$ be stabilizer groups on $\caH_1\otimes \caH_2\otimes \cdots \otimes\caH_J$ and let $\bar{\T}$ be the signed stabilizer group associated with $\T$. Then 
	\begin{align}
	\S\cap \bar{\T}&=\L_1 \times \L_2\times\cdots \times \L_J, \label{eq:StabIntersection}
	\end{align}
	where  $\L_j=\S_j\cap \bar{\T}_j$ with $\bar{\T}_j$ being the signed stabilizer groups associated with $\bar{T}_j$. 
\end{lemma}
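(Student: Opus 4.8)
The plan is to establish the set equality in \eqref{eq:StabIntersection} by proving the two inclusions separately, with essentially all of the content living in one direction. The engine of the argument is the uniqueness of the tensor factorization of a Pauli operator across the partition $\caH_1\otimes\cdots\otimes\caH_J$: since the Weyl operators $g(\mu)$ form an orthogonal (hence linearly independent) basis and satisfy $g(\mu)=\bigotimes_j g(\mu^{(j)})$ exactly when $\mu=(\mu^{(1)};\ldots;\mu^{(J)})$ is split into the corresponding qubit blocks, any identity $\bigotimes_j A_j = c\bigotimes_j B_j$ between tensor products of signed Pauli operators forces $A_j$ to equal $B_j$ up to a sign $\eta_j\in\{\pm1\}$ for every $j$, with $\prod_j\eta_j=c$. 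This is the one structural fact I would isolate and invoke repeatedly.

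First I would dispatch the easy inclusion $\L_1\times\cdots\times\L_J\subseteq \S\cap\bar{\T}$. Here I would note that each $\L_j=\S_j\cap\bar{\T}_j$ is genuinely a subgroup of $\S_j$ (the signed group $\bar{\T}_j$ is closed under multiplication, and $\L_j\subseteq\S_j$ cannot contain $-\id$), so the product on the right is a well-defined stabilizer group. Given $L_j\in\L_j$, the tensor product $L_1\otimes\cdots\otimes L_J$ lies in $\S$ by construction; writing each $L_j=\eta_j T_j$ with $\eta_j\in\{\pm1\}$ and $T_j\in\T_j$, this product equals $\bigl(\prod_j\eta_j\bigr)\,T_1\otimes\cdots\otimes T_J\in\T\cup(-\T)=\bar{\T}$, so it indeed lies in $\S\cap\bar{\T}$.

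The substantive direction is $\S\cap\bar{\T}\subseteq \L_1\times\cdots\times\L_J$. I would take $S\in\S\cap\bar{\T}$ and write it in two ways: as $S=S_1\otimes\cdots\otimes S_J$ with $S_j\in\S_j$ (because $S\in\S$), and as $S=\epsilon\,T_1\otimes\cdots\otimes T_J$ with $\epsilon\in\{\pm1\}$ and $T_j\in\T_j$ (because $S\in\bar{\T}$). Applying the factorization-uniqueness statement to these two expressions yields $S_j=\eta_j T_j$ for each $j$ with $\eta_j\in\{\pm1\}$, whence $S_j\in\S_j\cap\bar{\T}_j=\L_j$ and therefore $S\in\L_1\times\cdots\times\L_J$. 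The only point that requires care — and the place I expect a careless argument to go wrong — is the sign bookkeeping: the single global sign $\epsilon$ attached to $S$ as an element of $\bar{\T}$ must be shown to distribute consistently as $\epsilon=\prod_j\eta_j$ among the factors, and this is precisely what the uniqueness of the Weyl label together with the matching of scalar prefactors delivers. An essentially equivalent route, which I would keep in reserve as a cross-check, is to pass to the symplectic picture and use the block decompositions $W_\S=\bigoplus_j W_{\S_j}$ and $W_{\bar{\T}}=\bigoplus_j W_{\T_j}$ to obtain $W_\S\cap W_{\bar{\T}}=\bigoplus_j(W_{\S_j}\cap W_{\T_j})$, then reinstate the signs to recover the group-level statement.
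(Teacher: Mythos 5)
Your proof is correct and follows essentially the same route as the paper's: both directions are handled by factoring each element across the tensor partition and matching the factors against the component groups. The only difference is that you make explicit the uniqueness-of-factorization and sign-bookkeeping step (that $S_1\otimes\cdots\otimes S_J=\epsilon\,T_1\otimes\cdots\otimes T_J$ forces $S_j=\eta_j T_j$ with $\eta_j\in\{\pm1\}$), which the paper's proof simply asserts when it writes that $S\in\bar{\T}$ implies $S_j\in\bar{\T}_j$.
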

\begin{proof}
	To simplify the notation, here we prove \eqref{eq:StabIntersection} in the case $J=2$; the general case can be proved in a similar way. Any  $S\in \S$ has the form $S=S_1\otimes S_2$ with $S_1\in \S_1$ and $S_2\in \S_2$. If in addition $S\in \bar{\T}$, then $S_1\in \bar{\T}_1$ and $S_2\in \bar{\T}_2$. Therefore, $S_1\in \L_1$ and $S_2\in \L_2$, so that $S\in \L_1\times \L_2$, which implies that $\S\cap \bar{\T}\subseteq\L_1 \times \L_2$. 
	
	Conversely, any  $S\in \L_1\times \L_2$ has the form  $S=S_1\otimes S_2$ with $S_1\in \S_1\cap \bar{\T}_1$ and $S_2\in \S_2\cap\bar{\T}_2$, which implies that $S\in \S$ and $S\in \bar{T}$. Therefore,  $\L_1 \times \L_2\subseteq \S\cap \bar{\T}$, which confirms \eqref{eq:StabIntersection} in view of the opposite   inclusion relation  derived above. 
\end{proof}

\subsection{\label{sec:StabState}Stabilizer states} 
When the stabilizer group $\S$ is maximal, that is, $|\S|=2^n$, the stabilizer code $\caH_\S$  has dimension 1 and is represented by a normalized state called a \emph{stabilizer state} and denoted by $|\S\>$.
Note that  $|\S\>$ is uniquely determined by $\S$ up to an overall phase factor. According to \eqref{eq:StabCode}, the projector onto $|\S\>$ reads
\begin{equation}\label{eq:StabState}
|\S\>\<\S|=\Pi_\S=\frac{1}{2^n}\sum_{S\in \S} S=\prod_{j=1}^n \frac{1+S_j}{2},
\end{equation}
where  $S_1, S_2, \ldots, S_n$ are a set of generators of $\S$. For each $\bfw\in \Z_2^n$, define $\S_\bfw$ as the  group generated by $(-1)^{w_j}S_j$ for $j=1,2,\ldots, n$; then $\S_\bfw$ is also a maximal stabilizer group. In addition, the associated stabilizer state  $|\S_\bfw\>$
is the common eigenstate of  $S_1, S_2, \ldots, S_n$ with eigenvalue $(-1)^{w_j}$ for $j=1,2,\ldots, n$. Let  $S^{\bfy}=\prod_{j=1}^n S_j^{y_j}$ for  $\bfy\in \Z_2^n$ as in \eqref{eq:Sy} with $k=n$; then the projector onto $|\S_\bfw\>$ reads
\begin{equation}\label{eq:StabState2}
|\S_\bfw\>\<\S_\bfw|=\Pi_{\S,\bfw}=\!\prod_{j=1}^n \frac{1+(-1)^{w_j}S_j}{2}=\!\sum_{\bfy\in \Z_2^n} (-1)^{\bfw\cdot \bfy} S^\bfy, 
\end{equation}
which reduces to \eqref{eq:StabState} when $\bfw=00\cdots 0$. The set of stabilizer states $|\S_\bfw\>$ for $\bfw\in \Z_2^n$ forms an orthonormal basis in $\hilb$, known as a \emph{stabilizer basis}. Stabilizer bases are in one-to-one correspondence with Lagrangian subspaces in $\Z_2^{2n}$. Based on this observation, one can determine  the total number of $n$-qubit stabilizer states, with the result  \cite{gross2006}
\begin{equation}
2^n \prod_{j=1}^n (2^j + 1) \ge 2^{n(n+3)/2},
\end{equation}
which is exponential in the number $n$ of qubits.
 
Note that all  stabilizer states in a stabilizer basis share the same signed stabilizer group as defined in  \eqref{def:signed-stab}, that is, 
$\bar{\S}_\bfw=\bar{\S}$ for all $\bfw\in \Z_2^n$. In this way, there is a one-to-one correspondence between signed stabilizer groups and stabilizer bases (and Lagrangian subspaces).  

Suppose  $\S$ and $\S'$ are two $n$-qubit maximal stabilizer groups. 
Then the fidelity between $|\S\>$ and $|\S'\>$  reads
\begin{align}\label{eq:StabFidelity}
|\<\S|\S'\>|^2
&=\begin{cases}
2^{-n}|\bar{\S}\cap\S'|\ & \S\cap\S'=\bar{\S}\cap\S',\\
0 & \mathrm{otherwise},
\end{cases}
\end{align}
according to 
\eqref{eq:OverLap}, 
where $\bar{\S}=\S\cup (-\S)$ is the signed stabilizer group of $|\S\>$. In addition,
\begin{align}\label{eq:StabFidelity2}
|\<\S_\bfw|\S'_{\bfw'}\>|^2=\begin{cases}
2^{-n}|\bar{\S}\cap\S'| & \S_\bfw\cap\S'_{\bfw'}=\bar{\S}\cap\S',\\
0 & \mathrm{otherwise}
\end{cases}
\end{align}
for all $\bfw,\bfw'\in \Z_2^n$ according to 
\eqref{eq:OverLap2}.
The number of vectors $\bfw\in \Z_2^n$ that satisfy 
the equality $|\<\S_\bfw|\S'\>|^2=
2^{-n}|\bar{\S}\cap\S'|$ is equal to   $2^{n}/|\bar{\S}\cap\S'|$. Two maximal stabilizer groups  $\S$ and $\S'$ are \emph{complementary} if $|\bar{\S}\cap\S'|=1$ or, equivalently, $|\S\cap\bar{\S'}|=1$, which is the case iff the Lagrangian subspaces associated with $\S$ and $\S'$, respectively, are complementary. In addition, $\S$ and $\S'$ are complementary iff the stabilizer bases associated with $\S$ and $\S'$ are mutually unbiased \cite{durt2010}, that is, 
\begin{equation}
|\<\S_\bfw|\S'_{\bfw'}\>|^2=2^{-n}\quad \forall \bfw,\bfw'\in \Z_2^n. 
\end{equation}

The following lemma is useful to computing the fidelities between stabilizer states; see Appendix \ref{app:lemma2} for a proof. 
\begin{lemma}\label{lem:StabFidelity}
Suppose $\S$ and $\S'$ are two maximal stabilizer groups with basis matrix $M$ and $M'$, respectively. Then 
\begin{align}
&|\bar{\S}\cap\S'|=|\ker(M^T JM')|=2^{n-\rank(M^T JM')},\nonumber\\
&=2^{n-\rank(M_\rmz^TM_\rmx'+M_\rmx^TM_\rmz')}, \label{eq:StabIntersect}\\
&\max_{\bfw\in \Z_2^n}|\<\S_\bfw|\S'_{\bfw'}\>|^2=
2^{-n}|\ker(M^T JM')|\nonumber\\
&=2^{-\rank(M^T JM')}=2^{-\rank(M_\rmz^TM_\rmx'+M_\rmx^TM_\rmz')}, \label{eq:StabFidelity3}
\end{align}
where $M_\rmx$ ($M_\rmz$) denotes  the submatrix of $M$  composed of the first (last) $n$ rows, and  $M_\rmx'$ ($M_\rmz'$) is defined in a similar way. 
\end{lemma}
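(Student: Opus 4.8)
The plan is to prove the two displayed chains \eqref{eq:StabIntersect} and \eqref{eq:StabFidelity3} by first nailing the single leftmost identity $|\bar{\S}\cap\S'|=|\ker(M^TJM')|$; everything else then falls out from rank-nullity over $\Z_2$, a one-line block computation, and the already-derived fidelity formula \eqref{eq:StabFidelity2}. The reduction step is to translate the group-theoretic count $|\bar{\S}\cap\S'|$ into a counting problem in $\Z_2^n$. Recall that $\bar{\S}$ corresponds to the Lagrangian subspace $W$ with basis matrix $M$, and that $\bar{\S}$ is sign-agnostic: it contains both $g(\mu)$ and $-g(\mu)$ for every $\mu\in W$. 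Parametrizing $\S'$ by $\bfy'\in\Z_2^n$ through $S'^{\bfy'}$, whose underlying symplectic vector is $M'\bfy'$, I would observe that $S'^{\bfy'}\in\bar{\S}$ precisely when $M'\bfy'\in W$. Since $\bfy'\mapsto S'^{\bfy'}$ is a bijection onto $\S'$, this identifies $|\bar{\S}\cap\S'|$ with the number of $\bfy'$ satisfying $M'\bfy'\in W$.

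The crux is then to convert the membership condition $M'\bfy'\in W$ into a \emph{linear} condition. Here I would use that $W$ is Lagrangian, i.e.\ $W=W^\perp$, so that $M'\bfy'\in W \Leftrightarrow M'\bfy'\in W^\perp \Leftrightarrow [\mu_i,M'\bfy']=0$ for every column $\mu_i$ of $M$ $\Leftrightarrow M^TJM'\bfy'=0$. Thus the admissible $\bfy'$ form exactly $\ker(M^TJM')$, giving $|\bar{\S}\cap\S'|=|\ker(M^TJM')|$, and rank-nullity over $\Z_2$ turns this into $2^{n-\rank(M^TJM')}$. The explicit form follows from a direct block computation: writing $M=(M_\rmx;M_\rmz)$ and $M'=(M_\rmx';M_\rmz')$ and using $J=\bigl(\begin{smallmatrix}0 & -\id_n\\ \id_n & 0\end{smallmatrix}\bigr)$, one gets $M^TJM'=M_\rmz^TM_\rmx'-M_\rmx^TM_\rmz'=M_\rmz^TM_\rmx'+M_\rmx^TM_\rmz'$ modulo $2$, which establishes the last equality of \eqref{eq:StabIntersect}.

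For the fidelity chain \eqref{eq:StabFidelity3}, I would simply invoke \eqref{eq:StabFidelity2}, according to which $|\<\S_\bfw|\S'_{\bfw'}\>|^2$ equals either $2^{-n}|\bar{\S}\cap\S'|$ or $0$. The number of $\bfw$ attaining the nonzero value is $2^n/|\bar{\S}\cap\S'|\geq 1$ (as noted after \eqref{eq:StabFidelity2}), so the maximum is actually attained and equals $2^{-n}|\bar{\S}\cap\S'|$. Substituting the first part gives $\max_\bfw|\<\S_\bfw|\S'_{\bfw'}\>|^2 = 2^{-n}\cdot 2^{n-\rank(M^TJM')}=2^{-\rank(M^TJM')}$, and the final expression $2^{-\rank(M_\rmz^TM_\rmx'+M_\rmx^TM_\rmz')}$ follows from the block identity above.

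The main obstacle is conceptual rather than computational, and it lives entirely in the first step: an element of $\S'$ whose symplectic vector lies in $W$ need not belong to $\S$ itself, because the sign attached to that symplectic vector may disagree. This is exactly why the statement concerns the \emph{signed} group $\bar{\S}$ — passing to $\bar{\S}$ absorbs the sign and reduces membership to the purely symplectic condition $M'\bfy'\in W$, at which point the Lagrangian identity $W=W^\perp$ linearizes the problem. Once this sign bookkeeping is handled carefully, the remainder is routine linear algebra over $\Z_2$.
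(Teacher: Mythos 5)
Your proof is correct and follows essentially the same route as the paper's: identify $\bar{\S}\cap\S'$ with $W\cap W'$, use the Lagrangian condition $W=W^\perp$ to linearize membership into $M^TJM'\bfy=0$, compute the block form of $M^TJM'$, and read off the fidelity statement from \eqref{eq:StabFidelity2}. Your explicit remarks on the sign bookkeeping (why one must pass to $\bar{\S}$) and on the attainability of the maximum are slightly more careful than the paper's terse appeal to \eqref{eq:StabFidelity2}, but the argument is the same.
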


\subsection{\label{sec:GraphState}Graph states}
Before introducing graph states, it is helpful to briefly review basic concepts related to graphs. A graph $G = (V,E)$ is defined by a vertex set  $V$ and an  edge set $E$ in which each element of $E$ is a two-element subset of $V$.  Without loss of generality, the vertex set can be chosen to be $V=\{1,2,\ldots,n\}$. Two vertices $i,j\in V$ are adjacent if $\{i,j\}\in E$. The neighbor of a vertex $j$ is composed of all vertices that are adjacent to $j$.
The adjacency relation is characterized by  the \emph{adjacency matrix} $A$, which is an $n\times n$ symmetric matrix with $A_{i,j}=1$ if $i$ and $j$ are adjacent and $A_{i,j}=0$ otherwise. A subset $B$ of $V$ is an \emph{independent set} if every two vertices in $B$ are not adjacent. The \emph{independence number} $\alpha(G)$ of $G$ is the maximum cardinality of independent sets of $G$. A coloring of $G$ is an assignment of colors to the vertices such that every two adjacent vertices receive different colors. The \emph{chromatic number} $\chi(G)$ of $G$ is the minimum number of colors required to color $G$. The graph $G$ is two colorable if $G$ can be colored using two distinct colors, that is, $\chi(G)\leq 2$.  

Denote by $|+\>=(|0\>+|1\>)/\sqrt{2}$  the eigenstate of $X$ with eigenvalue 1.
To each graph $G=(V,E)$ with $n$ vertices, a graph state $|G\>$ of $n$-qubits (corresponding to the $n$ vertices)
can be constructed from the product state $\ket{+}^{\otimes n}$ by applying the controlled-$Z$ gate 
\begin{equation}
\mathrm{CZ} = \ketbra{0}{0}\otimes\id + \ketbra{1}{1}\otimes Z
\end{equation}
to every pair of qubits that are adjacent. In other words, 
\begin{equation}
|G\>=\prod_{\{i,j\}\in E} \mathrm{CZ}_{i,j} \ket{+}^{\otimes n},
\end{equation}
where $\mathrm{CZ}_{i,j}$ denotes the CZ gate acting on the adjacent qubits $i$ and $j$. 
Note that all these CZ gates commute with each other, so their order in the product does not matter. To give some examples, a path or linear graph yields a linear cluster state; a ring or cycle  yields a ring cluster state; a square lattice yields a two-dimensional cluster state.  A star graph or  complete graph  yields a GHZ state up to a local Clifford transformation. 

\begin{figure*}
    \centering
    \begin{subfigure}
        \centering
        \includegraphics[height=1.2in]{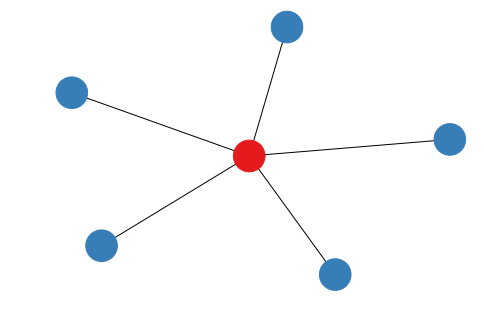}
    \end{subfigure}
    ~
    \begin{subfigure}
        \centering
        \includegraphics[height=1.2in]{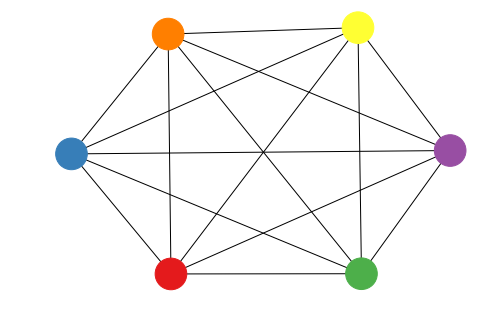}
    \end{subfigure}
    \caption{A star graph and a complete graph with the same number of vertices (depicted with minimum colorings) are equivalent under LC. Their associated graph states are both LC-equivalent to the GHZ state.}
    \label{fig:star-complete}
\end{figure*}

Alternatively, the graph state $|G\>$ is uniquely defined (up to a phase factor)
as the common eigenstate with eigenvalue 1 of the  $n$ stabilizer generators
\begin{align}\label{eq:graph-stabilizer-generator}
S_j = X_j \otimes \bigotimes_{(j,k) \in E} Z_k,  
\end{align}
which generate the stabilizer group $\S_G$ of $|G\>$. The elements of $\S_G$ can be labeled by vectors in $\Z_2^n$ as in \eqref{eq:Sy} with $k=n$, that is, 
\begin{equation}\label{eq:GraphStab}
S^{\bfy}=\prod_{j=1}^n S_j^{y_j}, \quad \bfy\in \Z_2^n. 
\end{equation}
For each $\bfw\in \Z_2^{2n}$, let $|G_\bfw\>$ be the common eigenstate of $S_j$ with eigenvalue $(-1)^{w_j}$ for $j=1,2,\ldots,n$. Then \eqref{eq:StabState2} reduces to 
\begin{equation}\label{eq:GraphState2}
|G_\bfw\>\<G_\bfw|=\prod_{j=1}^n \frac{1+(-1)^{w_j}S_j}{2}=\sum_{\bfy\in \Z_2^n} (-1)^{\bfw\cdot \bfy} S^\bfy.
\end{equation}
The stabilizer states $|G_\bfw\>$ for $\bfw\in \Z_2^n$ form a  stabilizer basis, known as a \emph{graph basis}.

Let $\mathbf{1}$ be the $n\times n$ identity matrix over $\Z_2$. The \emph{canonical basis matrix} of the graph state $|G\>$ is defined
 as the vertical concatenation of $\mathbf{1}$ and the adjacency matrix $A$ and is denoted by $\tilde{A}=(\mathbf{1};A)$ [in contrast with the horizontal concatenation denoted by $(\mathbf{1},A)$]. 
The symplectic vector associated with the stabilizer operator $S^\bfy$ in \eqref{eq:GraphStab}  can be expressed as  $\tilde{A}\bfy$, so the isotropic subspace $V_G$ associated with the graph state $|G\>$ is given by
\begin{equation}\label{eq:GraphIsoSpace}
V_G=\{\tilde{A}\bfy |\bfy\in \Z_2^n\}. 
\end{equation}

Thanks to the special form of the canonical basis matrices of graph states, Lemma~\ref{lem:StabFidelity} can be simplified as follows.
\begin{lemma}\label{lem:StabGraphFidelity}
	Suppose $G$ is an $n$-vertex graph  with adjacency matrix $A$  and $\S$ is a maximal stabilizer group of $n$ qubits with basis matrix $M$. Then 
	\begin{align}
	&|\S_G\cap\bar{\S}|=|\ker(M_\rmz^T +M_\rmx^TA )|=2^{n-\rank(M_\rmz^T +M_\rmx^TA )}, \label{eq:StabGraphIntersect}\\
	&\max_{\bfw'\in \Z_2^n}|\<G_\bfw|\S_{\bfw'}\>|^2=
	2^{-n}|\ker(M_\rmz^T +M_\rmx^TA )|\nonumber\\
	&=2^{-\rank(M_\rmz^T +M_\rmx^TA )}, \label{eq:StabGraphFidelity}
	\end{align}
	where $M_\rmx$ ($M_\rmz$) denotes  the submatrix of $M$  composed of the first (last) $n$ rows. 
\end{lemma}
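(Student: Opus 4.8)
The plan is to derive Lemma~\ref{lem:StabGraphFidelity} as an immediate specialization of Lemma~\ref{lem:StabFidelity}, exploiting the special structure of the canonical basis matrix of a graph state. The graph state $|G\>$ is a pure stabilizer state, so its stabilizer group $\S_G$ is maximal; by \eqref{eq:GraphIsoSpace} the associated Lagrangian subspace $V_G$ is spanned by the columns of the canonical basis matrix $\tilde A=(\mathbf 1;A)$, which is therefore a legitimate basis matrix for $\S_G$. Both $\S$ and $\S_G$ being maximal, Lemma~\ref{lem:StabFidelity} applies to the pair $(\S,\S_G)$, and the whole statement should fall out of a single substitution.

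First I would invoke Lemma~\ref{lem:StabFidelity} with the first group taken to be $\S$ (basis matrix $M$) and the second group taken to be $\S_G$ (basis matrix $M'=\tilde A$). Reading the submatrices off the vertical concatenation $\tilde A=(\mathbf 1;A)$ gives $M'_\rmx=\mathbf 1$ and $M'_\rmz=A$. Substituting these into the combination $M_\rmz^TM'_\rmx+M_\rmx^TM'_\rmz$ that appears in \eqref{eq:StabIntersect} collapses it to $M_\rmz^T\mathbf 1+M_\rmx^TA=M_\rmz^T+M_\rmx^TA$, exactly the matrix in \eqref{eq:StabGraphIntersect}. The chain $|\bar{\S}\cap\S_G|=|\ker(M_\rmz^T+M_\rmx^TA)|=2^{\,n-\rank(M_\rmz^T+M_\rmx^TA)}$ then follows directly, the final equality being rank--nullity for the $n\times n$ matrix $M_\rmz^T+M_\rmx^TA$ over $\Z_2$. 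Since set intersection is commutative, $|\S_G\cap\bar{\S}|=|\bar{\S}\cap\S_G|$, which yields \eqref{eq:StabGraphIntersect} as written.

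For the fidelity identity \eqref{eq:StabGraphFidelity}, the same substitution into the overlap formula of Lemma~\ref{lem:StabFidelity} gives $\max_{\bfw}|\<\S_\bfw|(\S_G)_{\bfw'}\>|^2=2^{-n}|\ker(M_\rmz^T+M_\rmx^TA)|=2^{-\rank(M_\rmz^T+M_\rmx^TA)}$, where the maximization runs over the stabilizer basis of $\S$ while the graph-basis state $(\S_G)_{\bfw'}=G_{\bfw'}$ is held fixed. Because the right-hand side carries no dependence on the fixed index, I may rename the dummy maximization variable and the fixed index ($\bfw\leftrightarrow\bfw'$) and use $|\<\S_{\bfw'}|G_\bfw\>|=|\<G_\bfw|\S_{\bfw'}\>|$ to put the statement into the exact form of \eqref{eq:StabGraphFidelity}.

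I do not anticipate any genuine difficulty, as the statement is essentially a corollary; the only points that need care are cosmetic bookkeeping — reading $M'_\rmx,M'_\rmz$ correctly off the vertical concatenation $\tilde A=(\mathbf 1;A)$, using commutativity of the intersection for the ordering $|\S_G\cap\bar{\S}|$, and matching the maximization index by relabeling rather than re-deriving. The one thing worth double-checking is that the convention ``$\S$ first, $\S_G$ second'' is what produces the matrix $M_\rmz^T+M_\rmx^TA$ rather than its transpose; the opposite ordering yields $M_\rmz+AM_\rmx$, which has the same rank and hence the same kernel size but is written differently.
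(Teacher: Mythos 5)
Your proof is correct and takes exactly the route the paper intends: Lemma~\ref{lem:StabGraphFidelity} is presented there as a direct specialization of Lemma~\ref{lem:StabFidelity} obtained by taking the second group to be $\S_G$ with canonical basis matrix $\tilde A=(\mathbf 1;A)$, so that $M_\rmz^TM'_\rmx+M_\rmx^TM'_\rmz$ collapses to $M_\rmz^T+M_\rmx^TA$. Your bookkeeping of the ordering (which group carries the bar), the index relabeling in the fidelity formula, and the rank--nullity count for the $n\times n$ matrix over $\Z_2$ are all accurate.
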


\begin{lemma}\label{lem:GraphFidelity}
Suppose $G$ and $G'$ are two $n$-vertex graphs with adjacency matrices $A$ and $A'$, respectively. 	
 Then 
	\begin{align}
	&|\S_G\cap\bar{\S}_{G'}|=|\ker(A+A')|=2^{n-\rank(A+A')}, \label{eq:GraphIntersect}\\
	&\max_{\bfw'\in \Z_2^n}|\<G_\bfw|G_{\bfw'}\>|^2=
	2^{-n}|\ker(A+A')|=2^{-\rank(A+A')}. \label{eq:GraphFidelity}
	\end{align}
\end{lemma}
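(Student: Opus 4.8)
The plan is to obtain Lemma~\ref{lem:GraphFidelity} as a direct specialization of Lemma~\ref{lem:StabGraphFidelity}, taking the generic stabilizer group $\S$ there to be the graph stabilizer group $\S_{G'}$ of the second graph state $|G'\>$. First I would recall that, by the definition of the canonical basis matrix leading to \eqref{eq:GraphIsoSpace}, the basis matrix of $\S_{G'}$ is $M=\tilde{A}'=(\mathbf{1};A')$, so that its top and bottom $n\times n$ blocks are $M_\rmx=\mathbf{1}$ and $M_\rmz=A'$. With this choice the states $|\S_{\bfw'}\>$ appearing in Lemma~\ref{lem:StabGraphFidelity} are exactly the graph-basis states $|G_{\bfw'}\>$, and $\bar{\S}$ becomes $\bar{\S}_{G'}$.

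Next I would substitute these blocks into the matrix $M_\rmz^T+M_\rmx^T A$ that governs both equalities of Lemma~\ref{lem:StabGraphFidelity}. This yields $M_\rmz^T+M_\rmx^T A = (A')^T+\mathbf{1}^T A = A'+A$, where the last step uses that adjacency matrices are symmetric, $(A')^T=A'$, together with the fact that addition is modulo $2$ so that $A'+A=A+A'$. Feeding this identification into \eqref{eq:StabGraphIntersect} and \eqref{eq:StabGraphFidelity} turns them verbatim into \eqref{eq:GraphIntersect} and \eqref{eq:GraphFidelity}, with $\ker(M_\rmz^T+M_\rmx^TA)$ replaced by $\ker(A+A')$.

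The only remaining bookkeeping is the cardinality count $|\ker(A+A')|=2^{n-\rank(A+A')}$, which is simply the rank--nullity theorem for the linear map $A+A'$ on $\Z_2^n$. I do not expect any genuine obstacle here: the entire analytic content is already packaged in Lemma~\ref{lem:StabGraphFidelity}, and it is the symmetry of the adjacency matrix that collapses the slightly asymmetric expression $M_\rmz^T+M_\rmx^TA$ into the clean symmetric form $A+A'$. If anything, the single point worth stating explicitly is precisely this use of $(A')^T=A'$, since without it one would be left with $(A')^T+A$ in place of $A+A'$.
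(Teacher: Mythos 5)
Your proposal is correct and matches the paper's (implicit) derivation: the paper states Lemma~\ref{lem:GraphFidelity} as an immediate specialization of Lemma~\ref{lem:StabGraphFidelity} with $M=\tilde{A}'=(\mathbf{1};A')$, exactly as you do, and the symmetry of $A'$ collapsing $(A')^T+A$ to $A+A'$ together with rank--nullity is the whole content. (The only cosmetic point is that the $|G_{\bfw'}\>$ in the lemma statement should be read as $|G'_{\bfw'}\>$, which your argument already treats correctly.)
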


It is known that every stabilizer state is equivalent to a graph state under some local Clifford  transformation (LC)
\cite{schlingemann2001b,grassl2002,nest2004}. In addition, Calderbank-Shor-Steane (CSS) states are equivalent to graph states of two-colorable graphs, and vice versa \cite{chen2007}. Recall that a stabilizer state is a CSS state if its stabilizer group can be expressed as the product of two groups one of which is generated by $X$ operators for individual qubits, while the other is generated by $Z$ operators. 
Two graph states $|G\>$ and $|G'\>$ are equivalent under LC iff the corresponding graphs $G$ and $G'$ are equivalent under local complementation, hence both transformations are abbreviated as LC. An LC with respect to a vertex $j$ turns adjacent (nonadjacent) vertices in the neighborhood of $j$ into nonadjacent (adjacent) vertices. For example, a complete graph is turned into a star graph under LC with respect to any given vertex. This fact explains why the corresponding graph states are equivalent under LC. Both graph states are equivalent to GHZ states, as illustrated in  Fig.~\ref{fig:star-complete}.

\section{Test projectors for stabilizer states}\label{sec:test}

\subsection{\label{sec:StabTest}Canonical test projectors for stabilizer states}
Pauli measurements are the simplest measurements that can be applied to verifying stabilizer states.  Each Pauli measurement for a single qubit can be specified by a symplectic vector in $\Z_2^2$; to be specific, Pauli $X$, $Y$, and $Z$ measurements correspond to the vectors $(1;0)$, $(0;1)$, and $(1;1)$, respectively, while the trivial measurement corresponds to the vector $(0;0)$. 
Each Pauli measurement on $n$ qubits can be specified by a symplectic vector $\mu=(\mu^\rmx;\mu^\rmz)$ in $\Z_2^{2n}$, which means  the Pauli operator measured for qubit $j$ is  $\rmi^{\mu^\rmx_j \mu^\rmz_j} X_j^{\mu^\rmx_j}Z_j^{\mu^\rmz_j}$ for $j=1,2,\ldots,n$.
The weight of the Pauli measurement is the number of qubit $j$ such that $(\mu^\rmx_j,\mu^\rmz_j)\neq (0,0)$. The Pauli measurement is \emph{complete} if  the weight is equal to $n$, so that the Pauli measurement for every  qubit is nontrivial. In that case, the corresponding symplectic vector $\mu$ is also called complete. The set of complete symplectic vectors in $\Z_2^{2n}$ is denoted by $(\Z_2^{2n})_\rmC$.

Let $\T_\mu$ be the stabilizer group generated by  the $n$ local Pauli operators $\rmi^{\mu^\rmx_j \mu^\rmz_j} X_j^{\mu^\rmx_j}Z_j^{\mu^\rmz_j}$ associated with the Pauli measurement; then $|\T_\mu|=2^k$, where  $k$ is the weight of the Pauli measurement. Let $\bar{\T}_\mu=\T_\mu \cup (-\T_\mu)$ be the signed stabilizer group. 
Each outcome of the Pauli measurement corresponds to a common eigenspace of the stabilizer group $\T_\mu$. It can be specified by a vector $\bfv\in \Z_2^n$, which corresponds to the common eigenspace  of $\rmi^{\mu^\rmx_j \mu^\rmz_j} X^{\mu^\rmx_j}Z^{\mu^\rmz_j}$ with eigenvalue $(-1)^{v_j}$ for $j=1,2,\ldots,n$. The  projector onto the eigenspace  reads 
\begin{align}\label{eq:OutcomeProj}
\Pi_{\mu,\bfv}= \bigotimes_{j=1}^n \frac{\id + (-1)^{v_j}\rmi^{\mu^\rmx_j \mu^\rmz_j} X_j^{\mu^\rmx_j}Z_j^{\mu^\rmz_j}}{2}.
\end{align}
Note that $v_j$ can only take on the value 0 if the Pauli measurement for qubit $j$ is trivial; otherwise, $\Pi_{\mu,\bfv}=0$. So those vectors $\bfv$ of interest to us belong to a subspace of $\Z_2^n$ of dimension $k$. Nevertheless, vectors outside this subspace do not affect the following analysis.

Suppose we want to verify the $n$-qubit stabilizer  state $|\S\>$ associated with the stabilizer group $\S$. Then any test operator $P$ based on the  Pauli measurement $\mu$ is a linear combination of $\Pi_{\mu,\bfv}$ for $\bfv\in \Z_2^n$.  To guarantee that the target state $|\S\>$ can always pass the test, $P$ must satisfy $P\geq \Pi_{\mu,\bfv}$ whenever $\<\S|\Pi_{\mu,\bfv}|\S\> >0$. The \emph{canonical test projector} based on the  Pauli measurement $\mu$ is defined as 
\begin{equation}
P_\mu =\sum_{\<\S|\Pi_{\mu,\bfv}|\S\> >0} \Pi_{\mu,\bfv}.
\end{equation}
This concept was introduced for GHZ states in \cite{li2019_GHZ}. 
By construction we have $P_\mu\leq E$ for any other test operator $E$ based on the same Pauli measurement, so it is natural to choose canonical test projectors if we want to construct an optimal verification protocol.

The stabilizer group $\L_\mu=\S\cap \bar{\T}_\mu$ is called a \emph{local subgroup} of $\S$ associated with the Pauli measurement $\mu$. Given any two stabilizer operators $S, S'$ in $\L_\mu$, the tensor factors of  $S, S'$  for any given qubit commute with each other. Therefore, $\L_\mu$ is locally commutative, and hence the name. The stabilizer operators in $\L_\mu$ can be measured simultaneously by the Pauli measurement $\mu$.

\begin{lemma}\label{lem:TestProjCanonical}
For any   $\mu\in \Z_2^{2n}$, the canonical test projector $P_\mu$ is identical to the stabilizer code projector associated with $\L_\mu$, that is,    
\begin{equation}\label{eq:TestProjCanonical}
P_\mu =\frac{1}{|\L_\mu |}\sum_{S\in \L_\mu }S.   
\end{equation}
\end{lemma}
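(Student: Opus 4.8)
The plan is to prove \eqref{eq:TestProjCanonical} by establishing a one-sided range containment together with an equality of ranks. Write $\Pi_{\L_\mu}:=\frac{1}{|\L_\mu|}\sum_{S\in\L_\mu}S$ for the right-hand side. First I would observe that both operators are orthogonal projectors: $P_\mu$ is a sum of the mutually orthogonal eigenprojectors $\Pi_{\mu,\bfv}$ of the commuting measurement group $\T_\mu$, while $\Pi_{\L_\mu}$ is the stabilizer-code projector of $\L_\mu$ in the sense of \eqref{eq:StabCode}. Since a projector contained in another of equal rank must equal it, it suffices to prove range$(P_\mu)\subseteq$ range$(\Pi_{\L_\mu})$ and $\rank(P_\mu)=\rank(\Pi_{\L_\mu})$.

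The conceptual heart is an observation about the ``pass'' outcomes. Fix $\bfv$ with $\<\S|\Pi_{\mu,\bfv}|\S\>>0$, so that $|\phi\>:=\Pi_{\mu,\bfv}|\S\>\neq 0$. Every $S\in\L_\mu=\S\cap\bar{\T}_\mu$ equals, up to a sign, an element of $\T_\mu$, hence commutes with $\Pi_{\mu,\bfv}$ (which is built from the generators of $\T_\mu$ as in \eqref{eq:OutcomeProj}) and acts on range$(\Pi_{\mu,\bfv})$ as a fixed eigenvalue $\pm1$. Because $S\in\S$ we have $S|\S\>=|\S\>$, so $S|\phi\>=S\Pi_{\mu,\bfv}|\S\>=\Pi_{\mu,\bfv}S|\S\>=|\phi\>\neq0$, which forces that eigenvalue to be $+1$. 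Thus $S\Pi_{\mu,\bfv}=\Pi_{\mu,\bfv}$ for all $S\in\L_\mu$, whence $\Pi_{\L_\mu}\Pi_{\mu,\bfv}=\Pi_{\mu,\bfv}$. Summing over the pass outcomes gives $\Pi_{\L_\mu}P_\mu=P_\mu$, i.e. range$(P_\mu)\subseteq$ range$(\Pi_{\L_\mu})$.

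It then remains to match ranks. The rank of $\Pi_{\L_\mu}$ is $2^n/|\L_\mu|$ by the stabilizer-code dimension formula. For $P_\mu$ I would count the pass outcomes by applying the overlap identity \eqref{eq:OverLap} (equivalently the counting statement following \eqref{eq:Overlap3}) with the measurement group $\T_\mu$ playing the role of the group whose signs are varied and the target group $\S$ in the role of $\S'$: the number of $\bfv$ with $\<\S|\Pi_{\mu,\bfv}|\S\>\neq0$ equals $|\T_\mu|/|\bar{\T}_\mu\cap\S|=2^k/|\L_\mu|$, where $k$ is the weight of $\mu$. Each such $\Pi_{\mu,\bfv}$ has rank $2^{n-k}$, so $\rank(P_\mu)=(2^k/|\L_\mu|)\cdot2^{n-k}=2^n/|\L_\mu|=\rank(\Pi_{\L_\mu})$, and equality of the projectors follows.

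The main obstacle I anticipate is the bookkeeping in this rank count: one must check that it is the signed group $\bar{\T}_\mu$, not $\T_\mu$ itself, that appears in $\bar{\T}_\mu\cap\S=\L_\mu$, and that every nonzero overlap $\<\S|\Pi_{\mu,\bfv}|\S\>$ takes the common value guaranteed by \eqref{eq:OverLap} so that the counting identity of \eqref{eq:Overlap3} applies verbatim. Both points follow once the roles of the two stabilizer groups are assigned correctly, but they are precisely where sign conventions could be mishandled. By contrast, the commutation-plus-nonvanishing argument of the second paragraph carries the real content and is robust; the remainder is a dimension count.
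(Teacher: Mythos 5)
Your proof is correct and takes essentially the same route as the paper's: show that the support of $P_\mu$ is contained in the stabilizer code of $\L_\mu$, then match dimensions by counting the pass outcomes via \eqref{eq:OverLap2} and \eqref{eq:Overlap3}. The only difference is that you make the support containment explicit with the commutation-plus-nonvanishing argument, a step the paper simply asserts.
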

\begin{proof}[Proof of Lemma~\ref{lem:TestProjCanonical}]
	 According to \eqref{eq:OverLap2},  $\<\S|\Pi_{\mu,\bfv}|\S\>$ is equal to either 0 or $|\bar{\T}_\mu\cap \S|/|\T_\mu|=|\L_\mu|/|\T_\mu|$. Moreover, by \eqref{eq:Overlap3}, it is nonzero for exactly  $|\T_\mu|/|\L_\mu|$ vectors $\bfv$ in $\Z_2^n$.
When $\<\S|\Pi_{\mu,\bfv}|\S\>$ is nonzero, the support of $\Pi_{\mu,\bfv}$ is contained in the stabilizer code $\caH_\mu$  associated with $\L_\mu$. So the support of $P_\mu$ lies in  $\caH_\mu$. In addition, we have
\begin{equation}
\tr(P_\mu)=\frac{|\T_\mu|}{|\L_\mu|}\frac{2^n}{|\T_\mu|}=\frac{2^n}{|\L_\mu|},
\end{equation}
so the support of $P_\mu$ has the same dimension as $\caH_\mu$.  It follows that $P_\mu$ must be the projector onto $\caH_\mu$, which implies \eqref{eq:TestProjCanonical}. 
\end{proof}

Lemma~\ref{lem:TestProjCanonical} implies that every canonical test projector of the stabilizer state $|\S\>$  is diagonal in the stabilizer basis associated with the stabilizer group $\S$, the diagonal elements are equal to either 0  or 1, and the rank of the test projector is equal to a power of 2. As a consequence, all canonical test projectors commute with each other. If the local subgroup $\L_\mu$ is trivial, that is, $|\L_\mu|=1$, then the test projector is equal to the identity and thus trivial; the corresponding Pauli measurement is thus useless to verify the stabilizer state $|\S\>$. 

To determine the diagonal elements of $P_\mu$ in the stabilizer basis, we need to specify a concrete basis. To this end, we can choose  any minimal set of generators for $\S$, say $S_1, S_2, \ldots, S_n$.   For each $\bfw\in \Z_2^n$, let $|\S_\bfw\>$ be the common eigenstate of $S_j$ with eigenvalues $(-1)^{w_j}$ for $j=1,2,\ldots, n$. Then $\{|\S_\bfw\>\}_{\bfw\in \Z_2^n}$ forms a stabilizer basis (cf.~Sec.~\ref{sec:StabState}).
Define $\bm{a}_\mu$ as the 
 $(2^n-1)\times 1$ column vector  composed of the entries 
\begin{equation}
\bm{a}_{\mu,\bfw}:=\<\S_\bfw|P_\mu|\S_\bfw\>,\quad \bfw\in \Z_2^n,\quad \bfw\neq  0,
\end{equation}
then the  test projector $P_\mu$ is determined by  $\bm{a}_\mu$ given that $\<\S_\bfw|P_\mu|\S_\bfw\>=1$ when $\bfw=0$. In view of this fact, the vector $\bm{a}_\mu$ is referred to as the \emph{test vector} associated with 
 the test projector $P_\mu$ or the Pauli measurement $\mu$ (with respect to a given stabilizer basis). Here the index $\bfw$ can also be replaced by the natural number $\sum_j 2^{j-1} w_j$ if necessary.

To efficiently compute the diagonal elements $\<\S_\bfw|P_\mu|\S_\bfw\>$ of the test projector $P_\mu$ and determine the test vector $\bm{a}_\mu$, we need to introduce additional tools.
Let $\scrU_1$ ($\scrU_2$) be the index set of qubits for which the Pauli measurement $\mu$ is nontrivial (trivial), that is, 
\begin{align}
\scrU_1&=\{j|\mu_j^\rmx= 1 \mbox{ or } \mu_j^\rmz=1 \},\\
\scrU_2&=\{j|\mu_j^\rmx=0 \mbox{ and } \mu_j^\rmz=0 \}.
\end{align}
Let  $M_\S$ be the basis matrix of $\S$ associated with the  generators  $S_1, S_2, \ldots, S_n$; denote by $M_{\S}^\rmx$ and $M_{\S}^\rmz$ the first $n$ rows and last $n$ rows, respectively.
Define
\begin{align}
M_{\S,\mu} &:=\diag(\mu^\rmz)M_{\S}^\rmx+\diag(\mu^\rmx)M_{\S}^\rmz,\quad \mu\in \Z_2^{2n},\\
\tilde{M}_{\S,\mu}&:=(M_{\S,\mu} (\scrU_1);M_{\S}^\rmx(\scrU_2); M_{\S}^\rmz(\scrU_2)),
\end{align}
where   $M_{\S,\mu} (\scrU_1)$ is the matrix composed of the rows of $M_{\S,\mu}$ indexed by the set $ \scrU_1$ and similarly for $M_{\S}^\rmx(\scrU_2)$ and $M_{\S}^\rmz(\scrU_2)$. Note that $\tilde{M}_{\S,\mu}=M_{\S,\mu}$ if the Pauli measurement $\mu$ is complete, in which case $\scrU_2$ is empty.
For an $m\times n$ matrix $M$ defined over $\Z_2$, denote by $\rank(M)$ the rank of $M$ and $\ker(M)$ the kernel of $M$:
\begin{equation}
\ker(M)=\{\bfy\in \Z_2^n|M \bfy=0\}. 
\end{equation}
Denote by $\rspan(M)$ the row span of $M$:
\begin{equation}
\rspan(M)=\{\bfv M  |\bfv\in \Z_2^m\}. 
\end{equation}

\begin{theorem}\label{thm:TestProjStab} 	
\begin{align}
\L_\mu&=\{S^\bfy| \bfy\in \ker(\tilde{M}_{\S,\mu})\},\label{eq:LocalSubgroupStab}\\
P_\mu &=\frac{1}{|\ker(\tilde{M}_{\S,\mu}) |}\sum_{\bfy\in \ker(\tilde{M}_{\S,\mu})}S^\bfy,\label{eq:TestProjStab}\\	
	\bm{a}_{\mu,\bfw}&=\<\S_\bfw|P_\mu|\S_\bfw\>
	=\begin{cases}
	1 & \bfw\in \rspan(\tilde{M}_{\S,\mu}),\\
	0 &\mathrm{otherwise}. 
	\end{cases}	\label{eq:StabDiagElement}
	\end{align}
\end{theorem}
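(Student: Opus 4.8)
The plan is to establish the three displayed identities in the order \eqref{eq:LocalSubgroupStab}, \eqref{eq:TestProjStab}, \eqref{eq:StabDiagElement}, with essentially all of the work concentrated in the first, since the other two follow by short arguments once $\L_\mu$ has been identified.

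First I would translate membership in the local subgroup $\L_\mu=\S\cap\bar{\T}_\mu$ entirely into conditions on symplectic vectors. The element $S^\bfy$ carries the symplectic vector $M_\S\bfy$, whose $\rmx$- and $\rmz$-components on qubit $j$ are $(M_\S^\rmx\bfy)_j$ and $(M_\S^\rmz\bfy)_j$. Since $\bar{\T}_\mu$ is, up to an overall sign, generated by the single-qubit Hermitian Paulis $g_j$ with symplectic vector $(\mu_j^\rmx;\mu_j^\rmz)$, an operator lies in $\bar{\T}_\mu$ iff on every qubit its local tensor factor is proportional over $\Z_2$ to that of $g_j$. I would split this according to the two index sets. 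For $j\in\scrU_2$ the measurement is trivial, so the local factor of $S^\bfy$ must be the identity, forcing $(M_\S^\rmx\bfy)_j=(M_\S^\rmz\bfy)_j=0$, which are exactly the constraints contributed by the blocks $M_\S^\rmx(\scrU_2)$ and $M_\S^\rmz(\scrU_2)$. For $j\in\scrU_1$ the factor must be a scalar multiple of $(\mu_j^\rmx;\mu_j^\rmz)$, and $\Z_2$-linear dependence of two length-two vectors is the vanishing of the determinant $\mu_j^\rmz(M_\S^\rmx\bfy)_j+\mu_j^\rmx(M_\S^\rmz\bfy)_j$, i.e.\ $(M_{\S,\mu}\bfy)_j=0$, which is precisely the constraint from the block $M_{\S,\mu}(\scrU_1)$. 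Assembling all these conditions gives $\tilde{M}_{\S,\mu}\bfy=0$, hence $S^\bfy\in\L_\mu$ iff $\bfy\in\ker(\tilde{M}_{\S,\mu})$, establishing \eqref{eq:LocalSubgroupStab}. The point requiring care, and the main obstacle I anticipate, is verifying that matching local factors qubit-by-qubit genuinely forces global membership in $\bar{\T}_\mu$: one must check that the single overall sign of $\bar{\T}_\mu$ absorbs all the individual phase discrepancies consistently, so that no additional obstruction beyond the per-qubit symplectic conditions can arise.

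Given \eqref{eq:LocalSubgroupStab}, identity \eqref{eq:TestProjStab} is immediate from Lemma~\ref{lem:TestProjCanonical}, which already writes $P_\mu$ as the normalized sum of the operators in $\L_\mu$; I would simply substitute the kernel description of $\L_\mu$ and use $|\L_\mu|=|\ker(\tilde{M}_{\S,\mu})|$.

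Finally, for the diagonal elements \eqref{eq:StabDiagElement} I would use $S^\bfy|\S_\bfw\>=(-1)^{\bfw\cdot\bfy}|\S_\bfw\>$ to obtain $\<\S_\bfw|P_\mu|\S_\bfw\>=|\ker(\tilde{M}_{\S,\mu})|^{-1}\sum_{\bfy\in\ker(\tilde{M}_{\S,\mu})}(-1)^{\bfw\cdot\bfy}$. This is a character sum over the subspace $\ker(\tilde{M}_{\S,\mu})$, which by orthogonality of characters equals $1$ when $\bfw$ is orthogonal, in the ordinary dot product, to every vector of the kernel, and $0$ otherwise. Invoking the standard duality $\ker(\tilde{M}_{\S,\mu})^\perp=\rspan(\tilde{M}_{\S,\mu})$ over $\Z_2$ then identifies the nonzero case with $\bfw\in\rspan(\tilde{M}_{\S,\mu})$, which yields \eqref{eq:StabDiagElement}.
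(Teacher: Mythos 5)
Your proposal is correct, and the last two steps coincide with the paper's proof: \eqref{eq:TestProjStab} follows from Lemma~\ref{lem:TestProjCanonical} exactly as you say, and \eqref{eq:StabDiagElement} is the same character sum over $\ker(\tilde{M}_{\S,\mu})$ combined with the duality $\ker(\tilde{M}_{\S,\mu})^{\perp}=\rspan(\tilde{M}_{\S,\mu})$ (the paper reaches the sum via $\tr(S'S^{\bfy})$ rather than via the eigenvalue relation $S^{\bfy}|\S_\bfw\>=(-1)^{\bfw\cdot\bfy}|\S_\bfw\>$, which is equivalent). Where you genuinely diverge is in establishing \eqref{eq:LocalSubgroupStab}: the paper stays entirely in the symplectic picture, writing down an explicit basis matrix $M_\mu^{\perp}$ for $V_\mu^{\perp}$, using $V_\mu=(V_\mu^{\perp})^{\perp}=\ker((M_\mu^{\perp})^TJ)$, and then showing that $(M_\mu^{\perp})^TJM_\S$ reduces to $\tilde{M}_{\S,\mu}$ after deleting zero rows; you instead characterize membership in $\bar{\T}_\mu$ operator-by-operator, qubit-by-qubit, and read off the same linear constraints. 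Your route is more elementary and makes the origin of each block of $\tilde{M}_{\S,\mu}$ transparent, while the paper's is more mechanical within the symplectic formalism. The one point you flag as an obstacle, namely that matching local factors up to phase really forces global membership in $\bar{\T}_\mu$, does need to be said but is not a genuine gap: every element of $\S$ and every $\bigotimes_j g_j^{c_j}$ is a Hermitian Pauli operator, two Hermitian Pauli operators with the same symplectic vector can differ only by a sign $\pm 1$, and $\bar{\T}_\mu=\T_\mu\cup(-\T_\mu)$ contains both signs by construction, so the per-qubit symplectic conditions are indeed sufficient. With that sentence added, your argument is complete.
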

Theorem~\ref{thm:TestProjStab} is proved in Appendix \ref{app:thm1_2}. Note that $\tilde{M}_{\S,\mu}$ reduces to $M_{\S,\mu}$ when the Pauli measurement is complete.  As an implication of Theorem~\ref{thm:TestProjStab},
 the order of the local subgroup $\L_\mu$ and the rank of the test projector $P_\mu$ read
\begin{align}
|\L_\mu|&=|\ker(\tilde{M}_{\S,\mu})|=2^{n-\rank(\tilde{M}_{\S,\mu})},\label{eq:LocalSubgroupOrder}\\
\tr(P_\mu)&=|\rspan(\tilde{M}_{\S,\mu})|=2^{\rank(\tilde{M}_{\S,\mu})}. \label{eq:StabProjRank}
\end{align}
The test projector $P_\mu$ is trivial iff $\tilde{M}_{\S,\mu}$ has  rank $n$ (full rank).

Let $r=\rank(\tilde{M}_{\S,\mu})$ and let $\bfw_1, \bfw_2,\ldots,\bfw_r$ be a basis in $\rspan(\tilde{M}_{\S,\mu})$. Then $\rspan(\tilde{M}_{\S,\mu})$ coincides with the span of these basis vectors, that is,
\begin{equation}
\rspan(\tilde{M}_{\S,\mu})=\Biggl\{\sum_{j=1}^r a_j \bfw_j |a_j\in \Z_2^n\; \forall j\Biggr \};
\end{equation}
this observation is helpful to computing the diagonal elements of $P_\mu$ in the stabilizer basis. Equation \eqref{eq:StabDiagElement} is equivalent to the following equation,
\begin{equation}\label{eq:StabDiagElement2}
\<G_\bfw|P_\mu|G_\bfw\>=\begin{cases}
1 & \bfw\cdot\bfy=0 \quad \forall \bfy \in
\ker(\tilde{M}_{\S,\mu}),\\
0 &\mathrm{otherwise},
\end{cases}
\end{equation}
since $\bfw\in \rspan(\tilde{M}_{\S,\mu})$ iff $\bfw\cdot\bfy=0$ for all $\bfy \in \ker(\tilde{M}_{\S,\mu})$. 
 Let $\bfy_1,\bfy_2, \ldots, \bfy_{n-r}$ be a basis in $\ker(\tilde{M}_{\S,\mu})$, which has  dimension $n-r$;  then  $\bfw\in \rspan(\tilde{M}_{\S,\mu})$ iff $\bfw\cdot\bfy_j=0$ for $j=1,2,\ldots, n-r$.

To determine the minimum rank of canonical test projectors, it suffices to consider complete Pauli measurements. According to \eqref{eq:StabProjRank}, we have
\begin{equation}
\min_{\mu\in \Z_2^{2n}} P_\mu=\min_{\mu\in (\Z_2^{2n})_\rmC}P_\mu=2^{\kappa(\S)}, \label{eq:MinRankStab}
\end{equation}
where 
\begin{equation}\label{eq:kappaS}
\kappa(\S):=\min_{\mu\in (\Z_2^{2n})_\rmC} \rank(M_{\S,\mu}), 
\end{equation} 
given that $\tilde{M}_{\S,\mu}=M_{\S,\mu}$ for $\mu\in (\Z_2^{2n})_\rmC$. Note that $2^{\kappa(\S)}$ is also the minimum rank of all test operators of $|\S\>$ based on Pauli measurements since the canonical test projector attains the minimum rank for a given Pauli measurement.
The following lemma relates the geometric measure of entanglement $\Lambda(\S)$ of any stabilizer state $\ket{\S}$ \cite{wei2003} to $\kappa(\S)$.
Denote by $\mathrm{Prod}$ the set of pure product states and by $\mathrm{Prod}_\rmP$ the set of pure product states whose single-qubit reduced states are eigenstates of Pauli operators.
Define
\begin{align}
	\Lambda(\S)&=\Lambda(|\S\>):= \max_{|\varphi\>\in \mathrm{Prod}} |\<\varphi|\S\>|^2, \label{eq:Lambda} \\
	\Lambda_\rmP(\S)&=\Lambda(|\S\>):= \max_{|\varphi\>\in \mathrm{Prod}_\rmP} |\<\varphi|\S\>|^2. \label{eq:LambdaP}
\end{align}

\begin{lemma}\label{lem:LambdaKappa}
	\begin{align}\label{eq:LambdaKappaStab}
	\Lambda(\S)\geq 	\Lambda_\rmP(\S)=2^{-\kappa(\S)}
	\end{align}
\end{lemma}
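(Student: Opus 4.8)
The plan is to treat the two parts of \eqref{eq:LambdaKappaStab} separately. The inequality $\Lambda(\S)\geq\Lambda_\rmP(\S)$ is immediate from $\mathrm{Prod}_\rmP\subseteq\mathrm{Prod}$, since the maximum defining $\Lambda(\S)$ ranges over a superset of the states defining $\Lambda_\rmP(\S)$. The substantive claim is the equality, and its crux is the identification of $\mathrm{Prod}_\rmP$ with the outcomes of complete Pauli measurements. First I would observe that any $|\varphi\>=\bigotimes_{j=1}^n|\varphi_j\>\in\mathrm{Prod}_\rmP$ has each tensor factor $|\varphi_j\>$ equal to a $\pm1$ eigenstate of one of $X,Y,Z$; recording which Pauli is chosen on each qubit gives a complete symplectic vector $\mu\in(\Z_2^{2n})_\rmC$, and recording the signs gives $\bfv\in\Z_2^n$, so that $|\varphi\>\<\varphi|=\Pi_{\mu,\bfv}$ as in \eqref{eq:OutcomeProj}. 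This correspondence is a bijection, whence
\begin{equation}
\Lambda_\rmP(\S)=\max_{\mu\in(\Z_2^{2n})_\rmC,\,\bfv\in\Z_2^n}\<\S|\Pi_{\mu,\bfv}|\S\>.
\end{equation}

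Next I would evaluate $\<\S|\Pi_{\mu,\bfv}|\S\>$ for fixed complete $\mu$. Because $\mu$ is complete, $\T_\mu$ is maximal and $\Pi_{\mu,\bfv}$ is a rank-one stabilizer projector, so the overlap formula \eqref{eq:OverLap2} (exactly as used in the proof of Lemma~\ref{lem:TestProjCanonical}) shows the overlap equals either $0$ or $|\bar{\T}_\mu\cap\S|/|\T_\mu|=|\L_\mu|/2^n$. Invoking \eqref{eq:LocalSubgroupOrder} together with $\tilde M_{\S,\mu}=M_{\S,\mu}$ for complete $\mu$, the nonzero value is $2^{n-\rank(M_{\S,\mu})}/2^n=2^{-\rank(M_{\S,\mu})}$. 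The completeness relation $\sum_\bfv\Pi_{\mu,\bfv}=\id$ forces $\sum_\bfv\<\S|\Pi_{\mu,\bfv}|\S\>=1$, so for each $\mu$ at least one outcome attains the nonzero value and $\max_\bfv\<\S|\Pi_{\mu,\bfv}|\S\>=2^{-\rank(M_{\S,\mu})}$.

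Finally I would maximize over complete $\mu$, using that $2^{-(\cdot)}$ is monotone decreasing to turn the minimization defining $\kappa(\S)$ into a maximization:
\begin{equation}
\Lambda_\rmP(\S)=\max_{\mu\in(\Z_2^{2n})_\rmC}2^{-\rank(M_{\S,\mu})}=2^{-\min_{\mu}\rank(M_{\S,\mu})}=2^{-\kappa(\S)},
\end{equation}
by the definition \eqref{eq:kappaS} of $\kappa(\S)$. I do not expect a genuine obstacle here: the only real content is the bijection of the first paragraph, and once that is in place the equality is forced by the rank formula for $|\L_\mu|$ from Theorem~\ref{thm:TestProjStab} and the definition of $\kappa(\S)$. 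The one point warranting care is confirming that every complete $\mu$ admits at least one outcome $\bfv$ with nonzero overlap, which is precisely what the completeness relation supplies.
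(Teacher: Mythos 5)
Your proposal is correct and follows essentially the same route as the paper's proof: identify states in $\mathrm{Prod}_\rmP$ with outcome projectors $\Pi_{\mu,\bfv}$ of complete Pauli measurements, evaluate the maximal overlap as $2^{-n}|\L_\mu|=2^{-\rank(M_{\S,\mu})}$ via the overlap formula and \eqref{eq:LocalSubgroupOrder}, and then minimize the rank over complete $\mu$ to obtain $2^{-\kappa(\S)}$. Your added remark that the completeness relation $\sum_\bfv\Pi_{\mu,\bfv}=\id$ guarantees a nonzero outcome for every complete $\mu$ is a small but welcome piece of extra care that the paper leaves implicit.
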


\begin{proof}
The inequality in \eqref{eq:LambdaKappaStab} follows from the definitions of $\Lambda(\S)$ and $\Lambda_\rmP(\S)$ in \eqref{eq:Lambda} and \eqref{eq:LambdaP}. To prove the equality $\Lambda_\rmP(\S)=2^{-\kappa(G)}$,
	note that each state in $\mathrm{Prod}_\rmP$ is a stabilizer state, and the state projector has the form in \eqref{eq:OutcomeProj} with  $\mu\in (\Z_2^{2n})_\rmC$. Therefore,
	\begin{align}
	\Lambda_\rmP(\S)&=\max_{\mu\in (\Z_2^{2n})_\rmC,\bfv} \<G|\Pi_{\mu,\bfv\in \Z_2^n}|G\>=\max_{\mu\in (\Z_2^{2n})_\rmC} 2^{-n}|\L_\mu|\nonumber\\
	&=\max_{\mu\in (\Z_2^{2n})_\rmC}2^{-\rank(M_{\S,\mu})}=2^{-\kappa(\S)},
	\end{align}	
where the second equality follows from \eqref{eq:StabFidelity2}, and the third equality follows from \eqref{eq:LocalSubgroupOrder} given that $\tilde{M}_{\S,\mu}=M_{\S,\mu}$ for $\mu\in (\Z_2^{2n})_\rmC$. 
\end{proof}

\subsection{\label{sec:TestProjGraph}Canonical test projectors for graph states}

For graph states, the discussions in the previous section can be simplified. Here we only consider complete Pauli measurements.
Suppose that $|G\>$ and $\{|G_\bfw\>\}_{\bfw\in \Z_2^n}$ are the  graph state and the corresponding graph basis associated with the graph $G=(V,E)$ as defined in Sec.~\ref{sec:GraphState}. Let $A$  be the adjacency matrix of $G$; then $\tilde{A}=(\mathbf{1};A)$ is the canonical basis matrix for $|G\>$. Define 
\begin{equation}\label{eq:Amu}
A_\mu =\diag(\mu^\rmz)+\diag(\mu^\rmx)A,\quad \mu\in \Z_2^{2n}. 
\end{equation}
The rank, kernel, and row span of $A_\mu$ are denoted by $\rank(A_\mu)$, $\ker(A_\mu)$, and $\rspan(A_\mu)$, respectively.

\begin{theorem}\label{thm:TestProjGraph}Suppose $\mu\in (\Z_2^{2n})_\rmC$; then
\begin{align}
\L_\mu&=\{S^\bfy| \bfy\in \ker(A_{\mu})\},\label{eq:LocalSubgroupGraph}\\
P_\mu &=\frac{1}{|\ker(A_{\mu}) |}\sum_{\bfy\in \ker(A_{\mu})}S^\bfy,\label{eq:TestProjGraph}\\		
\bm{a}_{\mu,\bfw}&=\<G_\bfw|P_\mu|G_\bfw\>
	=\begin{cases}
	1 & \bfw\in \rspan(A_\mu),\\
	0 & \mathrm{otherwise}. \label{eq:GraphDiagElement}
	\end{cases}
	\end{align}
\end{theorem}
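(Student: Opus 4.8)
The plan is to obtain Theorem~\ref{thm:TestProjGraph} as a direct specialization of Theorem~\ref{thm:TestProjStab} to the stabilizer group $\S=\S_G$ of the graph state, using the special form of its canonical basis matrix together with the completeness hypothesis $\mu\in(\Z_2^{2n})_\rmC$. Since Theorem~\ref{thm:TestProjStab} already provides $\L_\mu$, $P_\mu$, and the test vector $\bm{a}_\mu$ in terms of $\ker(\tilde{M}_{\S,\mu})$ and $\rspan(\tilde{M}_{\S,\mu})$, it suffices to show that $\tilde{M}_{\S,\mu}$ collapses to $A_\mu$ for graph states measured by complete Pauli settings.

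First I would pin down the basis matrix. The generators \eqref{eq:graph-stabilizer-generator} of $\S_G$ correspond to the symplectic vectors whose $X$-part is the standard basis vector $e_j$ and whose $Z$-part is the $j$th column of the adjacency matrix $A$; hence $M_\S=\tilde{A}=(\mathbf{1};A)$, as already recorded in \eqref{eq:GraphIsoSpace}, so that $M_\S^\rmx=\mathbf{1}$ and $M_\S^\rmz=A$. Substituting these into the definition of $M_{\S,\mu}$ and using $\diag(\mu^\rmz)\mathbf{1}=\diag(\mu^\rmz)$ gives
\[
M_{\S,\mu}=\diag(\mu^\rmz)+\diag(\mu^\rmx)A=A_\mu,
\]
matching the definition \eqref{eq:Amu}.

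The completeness hypothesis enters at the next step. When $\mu$ is complete, the index set $\scrU_2$ of trivially measured qubits is empty and $\scrU_1=\{1,\dots,n\}$, so the block construction of $\tilde{M}_{\S,\mu}$ loses its $M_\S^\rmx(\scrU_2)$ and $M_\S^\rmz(\scrU_2)$ rows and reduces to $M_{\S,\mu}(\scrU_1)=M_{\S,\mu}=A_\mu$. Feeding $\tilde{M}_{\S,\mu}=A_\mu$ into the three identities of Theorem~\ref{thm:TestProjStab} then yields \eqref{eq:LocalSubgroupGraph}, \eqref{eq:TestProjGraph}, and \eqref{eq:GraphDiagElement} verbatim.

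I do not expect a genuine obstacle here, since the content is already contained in Theorem~\ref{thm:TestProjStab}; the work reduces to the two bookkeeping observations that $M_\S=\tilde{A}$ forces $M_{\S,\mu}=A_\mu$ and that completeness erases the $\scrU_2$ blocks. The only point warranting care is confirming that no $\scrU_2$ contribution survives, i.e.\ that the simplification $\tilde{M}_{\S,\mu}=M_{\S,\mu}=A_\mu$ is legitimate precisely under the stated hypothesis $\mu\in(\Z_2^{2n})_\rmC$; for incomplete $\mu$ the extra rows would generally change $\ker$ and $\rspan$, so the restriction to complete measurements is essential rather than cosmetic.
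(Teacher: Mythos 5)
Your proposal is correct, and it is the deduction that the paper itself advertises in the main text (``Theorem~\ref{thm:TestProjGraph} is a special case of Theorem~\ref{thm:TestProjStab} tailored to \dots a complete Pauli measurement''). The two bookkeeping facts you isolate are exactly right: with $M_\S=\tilde{A}=(\mathbf{1};A)$ one gets $M_{\S,\mu}=\diag(\mu^\rmz)\mathbf{1}+\diag(\mu^\rmx)A=A_\mu$, and completeness makes $\scrU_2=\emptyset$ so that $\tilde{M}_{\S,\mu}=M_{\S,\mu}=A_\mu$; substituting into \eqref{eq:LocalSubgroupStab}--\eqref{eq:StabDiagElement} gives the three claims verbatim. (A small point worth noting explicitly: the identification of the $j$th generator $S_j$ with the column $\tilde{A}e_j$ is phase-exact because $A$ has zero diagonal, so $g(\tilde{A}e_j)=S_j$ with no $\rmi$ factor.) The paper's Appendix, by contrast, does not invoke Theorem~\ref{thm:TestProjStab} but reruns the whole argument from scratch in the graph-state setting: it computes $U_\mu=V_\mu^{\perp}\cap V_G$ directly via $M_\mu^TJ\tilde{A}=A_\mu$ and then redoes the character sum $\sum_{\bfy\in\ker(A_\mu)}(-1)^{\bfw\cdot\bfy}$ for the diagonal elements. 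The underlying linear algebra is identical; your route is more economical given that Theorem~\ref{thm:TestProjStab} is already established, while the paper's self-contained version lets a reader interested only in graph states skip the heavier notation ($\scrU_1$, $\scrU_2$, $M_\mu^{\perp}$) of the general proof.
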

\noindent Theorem~\ref{thm:TestProjGraph} is a special case of Theorem~\ref{thm:TestProjStab} tailored to the verification of a graph state based on a complete Pauli measurement; a simplified proof is presented in Appendix \ref{app:thm1_2}. As a corollary, we have
\begin{align}
|\L_\mu|&=|\ker(A_{\mu})|=2^{n-\rank(A_\mu)},\label{eq:GraphLocalSubgroupOrder}   \\
\tr(P_\mu)&=|\rspan(A_\mu)|=2^{\rank(A_\mu)}. \label{eq:GraphPmuRank}
\end{align}
The test projector $P_\mu$ is trivial iff $A_\mu$ has full rank.

Let $r=\rank(A_\mu)$ and let $\bfw_1 \bfw_2,\ldots,\bfw_r$ be a basis in $\rspan(A_\mu)$. Then $\rspan(A_\mu)=\{\sum_{j=1}^r a_j \bfw_j |a_j\in \Z_2^n\; \forall j \}$. Equation \eqref{eq:GraphDiagElement} is equivalent to the following equation,
\begin{equation}\label{eq:GraphDiagElement2}
\<G_\bfw|P_\mu|G_\bfw\>=\begin{cases}
1 & \bfw\cdot\bfy=0 \quad \forall \bfy \in
 \ker(A_\mu),\\
 0 &\mathrm{otherwise},
\end{cases}
\end{equation}
given that $\bfw\in \rspan(A_\mu)$ iff $\bfw\cdot\bfy=0$ for all $\bfy \in \ker(A_\mu)$. 
Let $\bfy_1,\bfy_2, \ldots, \bfy_{n-r}$ be a basis in $\ker(A_\mu)$, which has dimension $n-r$; then  $\bfw\in \rspan(A_\mu)$ iff $\bfw\cdot\bfy_j=0$ for all $j=1,2,\ldots, n-r$. These observations are helpful to computing $\<G_\bfw|P_\mu|G_\bfw\>$ efficiently.

Similar to \eqref{eq:MinRankStab}, the minimum rank of test projectors $P_\mu$ for $|G\>$ is given by 
\begin{equation}
\min_{\mu\in (\Z_2^{2n})_\rmC}P_\mu=2^{\kappa(G)},
\end{equation}
where 
\begin{equation}
\kappa(G):=\min_{\mu\in (\Z_2^{2n})_\rmC} \rank(A_\mu)=\kappa(\S_G).
\end{equation} 
Define
\begin{align}
\Lambda(G)&=\Lambda(|G\>):= \max_{|\varphi\>\in \mathrm{Prod}} |\<\varphi|G\>|^2=\Lambda(\S_G), \label{eq:LambdaGraph}\\
\Lambda_\rmP(G)&=\Lambda(|G\>):= \max_{|\varphi\>\in \mathrm{Prod}_\rmP} |\<\varphi|G\>|^2=\Lambda_\rmP(\S_G). \label{eq:LambdaGraphP}
\end{align} 
Then Lemma~\ref{lem:LambdaKappa} reduces to 
\begin{lemma}\label{lem:LambdaKappaGraph}
\begin{align}\label{eq:LambdaKappaGraph}
\Lambda(G)\geq 	\Lambda_\rmP(G)=2^{-\kappa(G)}.
\end{align}
\end{lemma}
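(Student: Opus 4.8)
The plan is to recognize that Lemma~\ref{lem:LambdaKappaGraph} is simply Lemma~\ref{lem:LambdaKappa} specialized to the stabilizer group $\S_G$ of the graph state $|G\>$, so that the whole task reduces to matching up the three quantities appearing in the two statements.

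First I would record the definitional identities already displayed above: $\Lambda(G)=\Lambda(\S_G)$ and $\Lambda_\rmP(G)=\Lambda_\rmP(\S_G)$ by the definitions in \eqref{eq:LambdaGraph} and \eqref{eq:LambdaGraphP}, together with $\kappa(G)=\kappa(\S_G)$ from the definition stated just before the lemma. With these three identifications in place, applying Lemma~\ref{lem:LambdaKappa} to $\S=\S_G$ gives $\Lambda(\S_G)\geq\Lambda_\rmP(\S_G)=2^{-\kappa(\S_G)}$, which is exactly \eqref{eq:LambdaKappaGraph}. This is the entire content of the lemma, and no further work is needed.

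For a self-contained argument that does not merely cite Lemma~\ref{lem:LambdaKappa}, I would replay its proof in the graph setting. The inequality $\Lambda(G)\geq\Lambda_\rmP(G)$ is immediate because $\mathrm{Prod}_\rmP\subseteq\mathrm{Prod}$. For the equality I would use that every state in $\mathrm{Prod}_\rmP$ is a stabilizer state whose projector has the form \eqref{eq:OutcomeProj} with a complete symplectic vector $\mu\in(\Z_2^{2n})_\rmC$, so that $\Lambda_\rmP(G)=\max_{\mu\in(\Z_2^{2n})_\rmC,\,\bfv}\<G|\Pi_{\mu,\bfv}|G\>$. Each nonzero overlap equals $2^{-n}|\L_\mu|$ by \eqref{eq:StabFidelity2}, and the graph-state count \eqref{eq:GraphLocalSubgroupOrder} gives $|\L_\mu|=2^{n-\rank(A_\mu)}$. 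Maximizing over complete $\mu$ and invoking the definition $\kappa(G)=\min_{\mu\in(\Z_2^{2n})_\rmC}\rank(A_\mu)$ then yields $\Lambda_\rmP(G)=2^{-\kappa(G)}$.

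There is essentially no obstacle here. The one point worth checking is that restricting to complete Pauli measurements loses nothing, which holds because a state in $\mathrm{Prod}_\rmP$ pins down a single-qubit Pauli eigenstate on every qubit and hence corresponds to a complete vector $\mu$; this matches the restriction built into $\kappa(G)$ through $A_\mu$, consistent with the identity $\tilde{M}_{\S_G,\mu}=M_{\S_G,\mu}=A_\mu$ that holds for complete measurements, since $M_{\S_G}^\rmx=\mathbf{1}$ and $M_{\S_G}^\rmz=A$ for the canonical basis matrix $\tilde{A}=(\mathbf{1};A)$.
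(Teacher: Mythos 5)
Your proposal is correct and matches the paper exactly: the paper gives no separate proof of Lemma~\ref{lem:LambdaKappaGraph}, presenting it simply as the specialization of Lemma~\ref{lem:LambdaKappa} to $\S=\S_G$ via the identifications $\Lambda(G)=\Lambda(\S_G)$, $\Lambda_\rmP(G)=\Lambda_\rmP(\S_G)$, and $\kappa(G)=\kappa(\S_G)$. Your optional self-contained replay also mirrors the paper's proof of Lemma~\ref{lem:LambdaKappa}, with the correct graph-state substitution $M_{\S_G,\mu}=A_\mu$.
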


The following lemma sets an upper bound for $\kappa(G)$, which in turn yields an upper bound for the minimum rank of test projectors $P_\mu$. 
\begin{lemma}\label{lem:kappaUB}
	$\kappa(G)\leq n-\alpha(G)$, where $\alpha(G)$ is the independence number of $G$. 
\end{lemma}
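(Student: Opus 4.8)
The plan is to prove the bound by exhibiting a single complete Pauli measurement whose associated matrix $A_\mu$ has rank exactly $n-\alpha(G)$, which suffices because $\kappa(G)$ is the minimum of $\rank(A_\mu)$ over all complete measurements. First I would fix a maximum independent set $B\subseteq V$ with $|B|=\alpha(G)$ and define the measurement $\mu$ that measures $X$ on every vertex in $B$ and $Z$ on every vertex in $V\setminus B$; that is, $(\mu_j^\rmx,\mu_j^\rmz)=(1,0)$ for $j\in B$ and $(\mu_j^\rmx,\mu_j^\rmz)=(0,1)$ for $j\notin B$. Every qubit then carries a nontrivial Pauli, so $\mu\in(\Z_2^{2n})_\rmC$ is complete, and the lemma is an upper bound for $\kappa(G)$, so a single good choice of $\mu$ is all that is required.

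Next I would read off $A_\mu=\diag(\mu^\rmz)+\diag(\mu^\rmx)A$ row by row. For $j\notin B$ the $j$th row is the standard basis row vector supported at $j$, while for $j\in B$ the $j$th row equals the $j$th row of the adjacency matrix $A$. Solving $A_\mu\bfy=0$, the equations indexed by $V\setminus B$ immediately force $y_j=0$ for all $j\notin B$, so every kernel vector is supported on $B$.

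The key step is to verify that the remaining equations, indexed by the rows in $B$, are then automatically satisfied. The equation attached to $j\in B$ reads $\sum_{k:\{j,k\}\in E}y_k=0$. Since $B$ is independent, no neighbor of a vertex $j\in B$ lies in $B$; hence every $k$ adjacent to $j$ obeys $k\notin B$ and therefore $y_k=0$, making the sum vanish regardless of the free entries $\{y_j\}_{j\in B}$. It follows that $\ker(A_\mu)$ is exactly the set of vectors supported on $B$, so $\dim\ker(A_\mu)=|B|=\alpha(G)$, and by the rank--nullity relation recorded in \eqref{eq:GraphLocalSubgroupOrder} we get $\rank(A_\mu)=n-\alpha(G)$. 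Consequently $\kappa(G)=\min_{\mu\in(\Z_2^{2n})_\rmC}\rank(A_\mu)\leq n-\alpha(G)$.

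There is no serious obstacle here: the entire content lies in recognizing that the \emph{measure $X$ on an independent set and $Z$ elsewhere} protocol is the right choice, and the independence of $B$ is exactly what renders the adjacency-row constraints vacuous, which is the only place the hypothesis is used. The mild point worth stating explicitly is why the conclusion is an inequality rather than an equality: a maximum independent set need not realize the globally minimal rank, so other complete measurements may do strictly better, but none of that is needed for the upper bound.
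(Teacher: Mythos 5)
Your proof is correct and follows essentially the same route as the paper: both choose the complete measurement with $X$ on a maximum independent set $B$ and $Z$ on $\overline{B}$, and both use independence of $B$ to see that the adjacency rows contribute nothing beyond the coordinates in $\overline{B}$. The only (cosmetic) difference is that you compute $\ker(A_\mu)$ and invoke rank--nullity, whereas the paper observes directly that $\rspan(\diag(\mu^\rmx)A)\subseteq\rspan(\diag(\mu^\rmz))$ so that $\rank(A_\mu)=|\overline{B}|$.
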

\noindent Lemma~\ref{lem:kappaUB} and \eqref{eq:LambdaKappaGraph} imply the following inequalities
\begin{equation}
\Lambda(G)\geq 	\Lambda_\rmP(G)\geq 2^{\alpha(G)-n},
\end{equation}
which was originally derived in \cite{markham2007}.

\begin{proof}[Proof of Lemma~\ref{lem:kappaUB}]
Let $B$ be an independent set of the graph $G = (V,E)$ with cardinality $\alpha(G)$. Consider the Pauli measurement in which $X$ measurements are performed on all qubits in $B$ and $Z$ measurements are performed on all qubits in the complement $\overline{B} = V\setminus B$. Let $\mu$ be the corresponding symplectic vector,
then $\mu^\rmx_j=1$ iff $j\in B$, while $\mu^\rmz_j=1$ iff $j\in \overline{B}$. 
Since $B$ is an independent set, $\rspan(\diag(\mu^\rmx)A) \subseteq \rspan(\diag(\mu^\rmz))$. Consequently, $\rank A_{\mu} = \rank (\diag(\mu^\rmz)) = |\overline{B}|$, and we have
\begin{equation}
\kappa(G)\leq \rank(A_\mu)=|\overline{B}|=n-\alpha(G).
\end{equation}
\end{proof}

\subsection{\label{sec:admissible}Admissible test projectors}
Suppose we want to verify the stabilizer state $|\S\>$ based on Pauli measurements. Let $E$ be a test operator  based on the Pauli measurement specified by the symplectic vector $\mu\in \Z_2^{2n}$.  The test operator $E$ is \emph{not admissible} if there exists another test operator $E'$  based on a Pauli measurement such that $E'\leq E$ and $\tr(E')<\tr(E)$. Let $P_\mu$ be the canonical test projector associated with the Pauli measurement $\mu$, then $E\geq P_\mu$, so an admissible test operator is necessarily a canonical test projector. The test vector $\bm{a}_\mu$ is (not) admissible if the  test projector $P_\mu$  is (not) admissible.
A Pauli measurement is (not) admissible if the corresponding canonical test projector is (not) admissible. Previously, the concepts of admissible measurements and admissible test projectors are considered only for GHZ states \cite{li2019_GHZ}.
\begin{proposition}\label{pro:incomplete-setting}
Any admissible Pauli measurement is complete. 
\end{proposition}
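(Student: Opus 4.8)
The plan is to prove the contrapositive: if a Pauli measurement $\mu$ is incomplete, then its canonical test projector $P_\mu$ is not admissible. Since an admissible test operator is necessarily a canonical test projector, and a measurement is admissible iff its canonical projector is, it suffices to exhibit another Pauli measurement $\mu'$ whose canonical projector satisfies $P_{\mu'}\leq P_\mu$ and $\tr(P_{\mu'})<\tr(P_\mu)$. By Lemma~\ref{lem:TestProjCanonical} every canonical projector is the code projector of its local subgroup, so a larger local subgroup yields a smaller code and hence a smaller projector. Therefore the whole problem reduces to producing $\mu'$ with $\L_{\mu'}\supsetneq\L_\mu$, after which $P_{\mu'}\leq P_\mu$ and $\tr(P_{\mu'})=2^n/|\L_{\mu'}|<2^n/|\L_\mu|=\tr(P_\mu)$ follow automatically.

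Let $\scrU_1,\scrU_2$ be the sets of measured and unmeasured qubits, with $\scrU_2\neq\emptyset$, and let $W$ be the Lagrangian subspace associated with $\S$. Working symplectically, an element of $\S$ lies in $\bar{\T}_\mu$ iff its symplectic vector $w$ satisfies $w_i\in\Span\{\mu_i\}$ on every measured qubit (over $\Z_2$ this is exactly $[\mu_i,w_i]=0$) and $w_i=0$ on every unmeasured qubit; thus $\L_\mu$ corresponds to $W\cap D'$, where $D'=\{w:\ w_i\in\Span\{\mu_i\}\ \forall i\}$ (note $\Span\{\mu_i\}=\{0\}$ when $i\in\scrU_2$). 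The key step is to consider the enlarged space $\mathcal{G}:=W\cap D$, where $D=\{w:\ w_i\in\Span\{\mu_i\}\ \forall i\in\scrU_1,\ w_i\ \text{free}\ \forall i\in\scrU_2\}$; this is the subgroup of stabilizer elements merely compatible with the measurement on $\scrU_1$, with arbitrary action on $\scrU_2$. I would then compute its dimension using Lagrangian self-duality:
\begin{align}
\dim\mathcal{G}&=\dim W+\dim D-\dim(W+D)\nonumber\\
&=\dim W+\dim D-\bigl(2n-\dim(W^\perp\cap D^\perp)\bigr).
\end{align}
A direct check shows $D^\perp=D'$ (the symplectic complement forces $u_i=0$ on $\scrU_2$ by nondegeneracy and $u_i\in\Span\{\mu_i\}$ on $\scrU_1$), so $W^\perp\cap D^\perp=W\cap D'=\L_\mu$. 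Since $\dim D=|\scrU_1|+2|\scrU_2|=n+|\scrU_2|$, this collapses to the clean identity
\begin{equation}
\dim\mathcal{G}=\dim\L_\mu+|\scrU_2|>\dim\L_\mu,
\end{equation}
because $|\scrU_2|\geq 1$. Hence there exists $g\in\mathcal{G}\setminus\L_\mu$, i.e. a stabilizer element compatible with $\mu$ on all measured qubits but nontrivial on some unmeasured qubit.

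Given such $g$, I would define $\mu'$ to agree with $\mu$ on $\scrU_1$ and to measure the single-qubit Pauli $(g)_i$ on each unmeasured qubit where $(g)_i\neq\id$ (completing the remaining qubits arbitrarily). Then every $S\in\L_\mu$ is trivial on $\scrU_2$ and compatible with $\mu$ on $\scrU_1$, so it remains measurable by $\mu'$, giving $\L_\mu\subseteq\L_{\mu'}$; meanwhile $g$ is measurable by $\mu'$ on every qubit by construction, so $g\in\S\cap\bar{\T}_{\mu'}=\L_{\mu'}$, while $g\notin\L_\mu$. Thus $\L_{\mu'}\supsetneq\L_\mu$, completing the contradiction with admissibility.

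The main obstacle is precisely the entangled case: when the unmeasured qubit has a maximally mixed reduced state there need be no single‑qubit stabilizer to measure, and naively filling in one qubit at a time can fail to enlarge the local subgroup (for example, an unmeasured Bell pair). What rescues the argument is that the enlarging element $g$ may be a multiqubit correlator spread across several unmeasured qubits, which one then measures jointly; the symplectic dimension count above is what guarantees such a $g$ exists without having to track cases. The two delicate points to verify carefully are the identity $D^\perp=D'$ together with the use of $W^\perp=W$, and the fact that choosing $\mu'$ to agree with $\mu$ on $\scrU_1$ is exactly what preserves all of $\L_\mu$ while the freedom on $\scrU_2$ lets in the new generator.
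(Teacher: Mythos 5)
Your proof is correct, and it takes a genuinely different route from the paper's. The paper argues by reduction: it considers (without loss of generality, via an implicit chain argument) a measurement of weight $n-1$ and invokes the fact that after $n-1$ single-qubit Pauli measurements the conditional reduced state of the remaining qubit is always a Pauli eigenstate, so one extra single-qubit measurement strictly shrinks the canonical test projector. You instead work globally in the symplectic picture: identifying $\L_\mu$ with $W\cap D'$ and computing $\dim(W\cap D)=\dim\L_\mu+|\scrU_2|$ via $D^\perp=D'$ and $W^\perp=W$ directly produces a stabilizer element $g$ that is compatible with $\mu$ on $\scrU_1$ and nontrivial on $\scrU_2$, after which completing the measurement along $g$ gives $\L_{\mu'}\supsetneq\L_\mu$. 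Your dimension count checks out (the complement of a qubit-wise direct sum is the direct sum of the local complements, and each $\Span\{\mu_i\}$ is Lagrangian in $\Z_2^2$), as does the final verification that $\mu'$ preserves $\L_\mu$ while admitting $g$. What your approach buys is that it handles all $|\scrU_2|\geq 1$ in one stroke and sidesteps the worry you correctly identify — that adding unmeasured qubits one at a time need not enlarge the local subgroup (an unmeasured Bell pair being the canonical example), so the enlarging element may have to be a multiqubit correlator; the quantitative identity $\dim(W\cap D)=\dim\L_\mu+|\scrU_2|$ is a nice byproduct. What the paper's route buys is brevity: the weight-$(n-1)$ reduction is legitimate (extend $\mu$ arbitrarily to weight $n-1$; either the projector already shrinks or the single-unmeasured-qubit case applies), and the Pauli-eigenstate fact is standard for stabilizer states, so the paper can dispatch the claim in three sentences at the cost of leaving the reduction implicit.
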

\begin{proof}
Following the proof for GHZ states in \cite{li2019_GHZ}, we prove the contrapositive, that an incomplete Pauli measurement is inadmissible. Without loss of generality, consider a Pauli measurement of weight $n-1$ on the $n$-qubit target stabilizer state $\ket{\S}$. After $n-1$ single-qubit Pauli measurements, the reduced states of the remaining party for all possible outcomes are eigenstates of a Pauli operator. So we can  always find an extra Pauli measurement on the remaining qubit that makes the canonical test projector smaller.
\end{proof}

As an example, consider the three-qubit linear cluster state defined by the three stabilizer generators $S_1 = XZ\id, S_2 =ZXZ$, and $S_3=\id ZX$. The state can be written as
\begin{align}
	\ket{G} = \frac{\ket{+}\ket{0}\ket{+} + \ket{-}\ket{1}\ket{-}}{\sqrt{2}},
\end{align}
where $\ket{\pm}$ is an eigenstate of $X$ with the eigenvalue $\pm1$.
If we perform $X$ and $Z$ measurements on the first and second qubit, respectively, then 
the reduced state of the third qubit is left in one of the $X$ eigenstates. Thus, measuring $X$ on the last qubit gives an admissible test projector $P_{XZX}$, whereas measuring $Y$ or $Z$ results in the same inadmissible test projector that has higher rank than $P_{XZX}$. Further calculation shows that there are five  admissible Pauli measurements in total, namely, $XZX$, $ZXZ$, $ZYY$, $YYZ$, $YXY$ (cf. Table~\ref{tab:optimal}).

Different Pauli measurements may give rise to the same canonical test projector. When more than one measurement settings share the same canonical test projector, the projector can be realized by an incomplete Pauli measurement that the two settings have in common. Therefore, this canonical test projector cannot be admissible by
by Proposition~\ref{pro:incomplete-setting}. This observation yields the following lemma.
\begin{lemma}\label{lem:admissibleOneToOne}
	No two admissible Pauli measurements  lead to the same canonical test projector.
\end{lemma}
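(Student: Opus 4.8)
The plan is to argue by contradiction, exploiting Proposition~\ref{pro:incomplete-setting} together with the fact that the canonical test projector determines its local subgroup. Suppose, contrary to the claim, that two distinct admissible Pauli measurements $\mu\neq\mu'$ share the same canonical test projector, $P_\mu=P_{\mu'}$. By Proposition~\ref{pro:incomplete-setting} both measurements are complete. By Lemma~\ref{lem:TestProjCanonical}, $P_\mu$ and $P_{\mu'}$ are the projectors onto the stabilizer codes of $\L_\mu$ and $\L_{\mu'}$, respectively; since distinct stabilizer groups correspond to distinct stabilizer codes, $P_\mu=P_{\mu'}$ forces the local subgroups to coincide, $\L_\mu=\L_{\mu'}=:\L$.

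Next I would localize the discrepancy to a single qubit. Since $\mu\neq\mu'$ and both are complete, there is a qubit $j$ on which the measured single-qubit Paulis differ; call them $P_j$ and $P_j'$, each nontrivial. The heart of the argument is to show that $\L$ is ``blind'' to this qubit. Every $S\in\L$ lies in $\bar{\T}_\mu\cap\bar{\T}_{\mu'}$, so, an overall sign aside, its Pauli letter on qubit $j$ must simultaneously belong to $\{\id,P_j\}$ and $\{\id,P_j'\}$. As $P_j$ and $P_j'$ are distinct nontrivial Pauli letters, the only common possibility is $\id$; hence every element of $\L$ acts as the identity on qubit $j$.

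I would then realize $P_\mu$ by an incomplete measurement. Let $\mu''$ agree with $\mu$ on every qubit except $j$, where it performs the trivial measurement. Because each $S\in\L$ acts trivially on qubit $j$ and lies in $\bar{\T}_\mu$, one checks both inclusions $\L\subseteq\S\cap\bar{\T}_{\mu''}$ and $\S\cap\bar{\T}_{\mu''}\subseteq\S\cap\bar{\T}_\mu=\L$, so that $\L_{\mu''}=\L$ and therefore $P_{\mu''}=P_\mu$ by Lemma~\ref{lem:TestProjCanonical}. But $\mu''$ has weight $n-1$ and is thus incomplete, so by Proposition~\ref{pro:incomplete-setting} it is inadmissible: there is a Pauli test operator $E'$ with $E'\leq P_{\mu''}$ and $\tr(E')<\tr(P_{\mu''})$. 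Since $P_{\mu''}=P_\mu$, the same $E'$ witnesses the inadmissibility of $P_\mu$, contradicting the admissibility of $\mu$.

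I expect the main obstacle to be the blindness step together with the verification that replacing the measurement on qubit $j$ by the trivial measurement leaves the local subgroup---and hence the canonical test projector---unchanged; everything else is a bookkeeping application of Proposition~\ref{pro:incomplete-setting} and Lemma~\ref{lem:TestProjCanonical}. A minor point to handle carefully is the role of the overall sign in $\bar{\T}_\mu$, which affects only the global phase and not the Pauli letter at any fixed qubit, so the single-qubit intersection argument goes through unaffected.
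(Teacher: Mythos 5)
Your proposal is correct and follows essentially the same route as the paper: the paper's (very terse) argument is precisely that a shared canonical test projector can be realized by the incomplete Pauli measurement the two settings have in common, and is therefore inadmissible by Proposition~\ref{pro:incomplete-setting}. Your write-up merely supplies the details the paper leaves implicit, namely that the local subgroup is blind to any qubit on which the two complete measurements disagree, so trivializing the measurement there leaves $\L_\mu$ and hence $P_\mu$ unchanged.
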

Lemma~\ref{lem:admissibleOneToOne} shows that there is a one-to-one correspondence between admissible Pauli measurements and admissible canonical test projectors. 

\begin{corollary}\label{cor:admissible}
A canonical test projector based on a Pauli measurement is admissible iff it cannot be realized by an incomplete Pauli measurement. 
\end{corollary}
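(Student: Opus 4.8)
The plan is to translate admissibility into the language of local subgroups via Lemma~\ref{lem:TestProjCanonical}. Since $P_\mu$ is the stabilizer code projector of $\L_\mu=\S\cap\bar{\T}_\mu$, and the correspondence between stabilizer groups and their code projectors is one-to-one, I have $P_\nu\leq P_\mu$ iff $\L_\nu\supseteq\L_\mu$, with $\tr(P_\nu)<\tr(P_\mu)$ iff the containment is strict. Hence $P_\mu$ is admissible precisely when $\L_\mu$ is maximal among local subgroups (no Pauli measurement $\nu$ gives $\L_\nu\supsetneq\L_\mu$), and $P_\mu$ is realized by a measurement $\nu$ iff $\L_\nu=\L_\mu$. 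The corollary then splits into two implications, which I would treat separately.

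The necessity direction, that an admissible $P_\mu$ cannot be realized by an incomplete measurement, is immediate from Proposition~\ref{pro:incomplete-setting}: if $P_\mu=P_\nu$ for some incomplete $\nu$, then $\nu$ is inadmissible, and since admissibility is an intrinsic property of the projector $P_\nu=P_\mu$, the projector $P_\mu$ is inadmissible. Taking the contrapositive settles this half.

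For sufficiency I would argue the contrapositive: an inadmissible $P_\mu$ is realizable by an incomplete measurement. The bridge is that $P_\mu$ is realizable by an incomplete measurement iff some qubit $i$ is \emph{redundant}, meaning no element of $\L_\mu$ acts nontrivially on $i$; in that case deleting the $i$th single-qubit measurement from $\mu$ yields an incomplete $\mu'$ with $\bar{\T}_{\mu'}\subseteq\bar{\T}_\mu$ and $\L_{\mu'}=\L_\mu$, so $P_{\mu'}=P_\mu$. The core step is to show that if \emph{every} qubit is nonredundant then $P_\mu$ is admissible: for each qubit $i$ pick $S^{(i)}\in\L_\mu$ acting nontrivially there, and note that since $S^{(i)}\in\bar{\T}_\mu$ its $i$th tensor factor is forced to be exactly the Pauli measured by $\mu$ at $i$. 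If some $\nu$ satisfied $\L_\nu\supseteq\L_\mu$, then $S^{(i)}\in\bar{\T}_\nu$ as well, so its $i$th factor must equal $I$ or the Pauli measured by $\nu$ at $i$; being nontrivial, it equals the latter. Comparing site by site forces $\nu$ to agree with $\mu$ on every qubit, whence $\nu=\mu$ and $\L_\nu=\L_\mu$, proving $\L_\mu$ maximal. The contrapositive yields that inadmissibility produces a redundant qubit, hence an incomplete realization.

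The main obstacle is this last implication (every qubit nonredundant $\Rightarrow$ admissible): one must rule out \emph{all} competing measurements $\nu$, not merely coarsenings of $\mu$, including ones differing arbitrarily from $\mu$. The structural fact that every element of $\bar{\T}_\mu$ (resp.\ $\bar{\T}_\nu$) has each tensor factor equal to $I$ or to the single Pauli measured at that site is precisely what pins $\nu$ down to $\mu$ qubit by qubit; isolating and exploiting this is the crux, after which the corollary follows by combining the two directions with Proposition~\ref{pro:incomplete-setting}.
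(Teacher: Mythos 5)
Your proof is correct, and the forward direction (admissible $\Rightarrow$ no incomplete realization, via the contrapositive of Proposition~\ref{pro:incomplete-setting}) coincides with the paper's. For the converse, however, you take a genuinely different route. The paper argues directly: if $P_\mu$ is inadmissible, take a witness $\mu'$ with $P_{\mu'}\leq P_\mu$ and $\tr(P_{\mu'})<\tr(P_\mu)$, and observe that the incomplete measurement formed by the qubits on which $\mu$ and $\mu'$ agree already realizes $P_\mu$ (since $\bar{\T}_\mu\cap\bar{\T}_{\mu'}$ still contains $\L_\mu$) --- the same ``common part'' trick that underlies Lemma~\ref{lem:admissibleOneToOne}. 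You instead prove a structural criterion: $P_\mu$ is admissible iff the local subgroup $\L_\mu$ acts nontrivially on every qubit. Your key step --- that a nontrivial element $S^{(i)}\in\L_\mu$ lying in both $\bar{\T}_\mu$ and any competing $\bar{\T}_\nu$ forces the two measurements to coincide at qubit $i$ --- is sound, because every element of a locally generated group $\bar{\T}_\mu$ has each tensor factor equal (up to sign) to the identity or to the single-qubit Pauli measured there; this also correctly rules out incomplete competitors $\nu$, since $S^{(i)}$ could not then lie in $\bar{\T}_\nu$ at a qubit where $\nu$ is trivial. Your translation of $P_\nu\leq P_\mu$ into $\L_\nu\supseteq\L_\mu$ via the one-to-one correspondence between stabilizer groups and codes is also legitimate and is implicitly used throughout the paper. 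The paper's proof is shorter; yours buys the sharper intermediate statement that admissibility is equivalent to the support of $\L_\mu$ being all of $V$, which is of independent interest (it is essentially the mechanism behind Lemma~\ref{lemma:max-ind-iff-admissible}).
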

\begin{proof}
If the test projector can be realized by an incomplete Pauli measurement, then it is not admissible by Proposition~\ref{pro:incomplete-setting}. If the test projector  is realized by a complete Pauli measurement $\mu$, but is not admissible, then we can find a complete Pauli measurement $\mu'$, such that $P_{\mu'}\leq P_\mu$ and $\tr(P_{\mu'})<\tr(P_\mu)$. Note that $P_\mu$ can also be realized by the incomplete Pauli measurement that the two Pauli measurements $\mu$ and $\mu'$ have in common. This observation completes the proof of Corollary~\ref{cor:admissible}.
\end{proof}

The following lemma is a useful tool for determining admissible test projectors of stabilizer states; it is a direct consequence of Theorem~\ref{thm:TestProjStab}. 
\begin{lemma}\label{lem:admissibleStab}
	Suppose $P_\mu$ and $P_{\mu'}$ are the canonical test projectors for the stabilizer state $|\S\>$ based on  Pauli measurements $\mu$ and $\mu'$, respectively.  Then the following statements are equivalent: 
	\begin{enumerate}
		\item $P_{\mu'}\leq P_{\mu}$;
		\item $\|\bm{a}_{\mu'}\circ \bm{a}_\mu\|_1=\|\bm{a}_{\mu'}\|_1$; 
		\item $\rspan(\tilde{M}_{\S,\mu'})\leq \rspan(M_{\S,\mu})$; 
		\item $\ker(\tilde{M}_{\S,\mu'})\geq \ker(\tilde{M}_{\S,\mu})$; 
		\item $(\tilde{M}_{\S,\mu}; \tilde{M}_{\S,\mu'})$ and $\tilde{M}_{\S,\mu}$ have the same rank. 
	\end{enumerate}
\end{lemma}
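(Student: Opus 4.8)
The plan is to exploit the fact, established by Lemma~\ref{lem:TestProjCanonical} and Theorem~\ref{thm:TestProjStab}, that every canonical test projector of $|\S\>$ is diagonal in the single stabilizer basis $\{|\S_\bfw\>\}_{\bfw\in\Z_2^n}$ with diagonal entries equal to $0$ or $1$. Concretely, $P_\mu=\sum_{\bfw\in\rspan(\tilde{M}_{\S,\mu})}|\S_\bfw\>\<\S_\bfw|$, so $P_\mu$ is the orthogonal projector onto $\Span\{|\S_\bfw\>:\bfw\in\rspan(\tilde{M}_{\S,\mu})\}$, and by \eqref{eq:StabDiagElement} the test vector $\bm{a}_\mu$ is precisely the $0/1$ indicator of $\rspan(\tilde{M}_{\S,\mu})$. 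Since $P_\mu$ and $P_{\mu'}$ are simultaneously diagonal in this common basis, the operator inequality $P_{\mu'}\leq P_\mu$ holds iff the support of $P_{\mu'}$ is contained in that of $P_\mu$, i.e.\ iff $\{\bfw:\bm{a}_{\mu',\bfw}=1\}\subseteq\{\bfw:\bm{a}_{\mu,\bfw}=1\}$. As all entries are $0$ or $1$, this support containment is exactly the statement $\|\bm{a}_{\mu'}\circ\bm{a}_\mu\|_1=\|\bm{a}_{\mu'}\|_1$, since the left side counts the coordinates where both vectors equal $1$ and the right side counts those where $\bm{a}_{\mu'}$ equals $1$. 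This gives $(1)\Leftrightarrow(2)$; the coordinate $\bfw=0$, omitted from the test vector, carries the value $1$ for both projectors and hence does not affect the comparison.

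Next I would translate the combinatorial condition into linear algebra. By the indicator description above, $\supp(\bm{a}_{\mu'})\subseteq\supp(\bm{a}_\mu)$ is literally the subspace inclusion $\rspan(\tilde{M}_{\S,\mu'})\subseteq\rspan(\tilde{M}_{\S,\mu})$, which is $(2)\Leftrightarrow(3)$. To pass from row spans to kernels I would invoke the standard duality over $\Z_2$: with respect to the ordinary (non-symplectic) bilinear form on $\Z_2^n$ one has $\ker(M)=\rspan(M)^\perp$ for every matrix $M$, the complement is order-reversing, and $(U^\perp)^\perp=U$. Hence $\rspan(\tilde{M}_{\S,\mu'})\subseteq\rspan(\tilde{M}_{\S,\mu})$ holds iff $\ker(\tilde{M}_{\S,\mu'})=\rspan(\tilde{M}_{\S,\mu'})^\perp\supseteq\rspan(\tilde{M}_{\S,\mu})^\perp=\ker(\tilde{M}_{\S,\mu})$, giving $(3)\Leftrightarrow(4)$.

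Finally, for $(3)\Leftrightarrow(5)$ I would observe that the row span of the vertical concatenation $(\tilde{M}_{\S,\mu};\tilde{M}_{\S,\mu'})$ is the subspace sum $\rspan(\tilde{M}_{\S,\mu})+\rspan(\tilde{M}_{\S,\mu'})$, so its rank equals $\dim[\rspan(\tilde{M}_{\S,\mu})+\rspan(\tilde{M}_{\S,\mu'})]$; this coincides with $\rank(\tilde{M}_{\S,\mu})=\dim\rspan(\tilde{M}_{\S,\mu})$ precisely when $\rspan(\tilde{M}_{\S,\mu'})\subseteq\rspan(\tilde{M}_{\S,\mu})$, closing the cycle of equivalences. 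I expect the only genuinely nontrivial step to be $(1)\Leftrightarrow(2)$, and even there the substance lies entirely in reducing the positive-semidefinite ordering of the two projectors to plain support containment. This reduction is exactly what the simultaneous diagonalizability of canonical test projectors in the stabilizer basis buys us, which is why Theorem~\ref{thm:TestProjStab} is the right tool; the remaining equivalences among $(2)$–$(5)$ are routine facts about row spaces, kernels, and orthogonal complements over $\Z_2$.
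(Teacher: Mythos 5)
Your proof is correct and follows exactly the route the paper intends: the paper gives no explicit proof but states the lemma is a direct consequence of Theorem~\ref{thm:TestProjStab}, and your argument fills in precisely that reasoning (simultaneous diagonalization in the stabilizer basis reduces the operator ordering to support containment of the $0/1$ indicator vectors of the row spans, after which the remaining equivalences are standard $\Z_2$ linear algebra). The only cosmetic point is that statement (3) in the paper writes $M_{\S,\mu}$ without the tilde, which is evidently a typo that you have correctly read as $\tilde{M}_{\S,\mu}$.
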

\noindent Here $\bm{a}_{\mu}$ and $\bm{a}_{\mu'}$ are the test vectors associated with $P_\mu$ and $P_{\mu'}$, respectively; $\bm{a}_{\mu'}\circ \bm{a}_\mu$ denotes the element-wise product of $\bm{a}_{\mu'}$ and $\bm{a}_\mu$; and $\|\bm{a}_{\mu'}\|_1$ denotes the 1-norm of $\bm{a}_{\mu'}$, that is, $\|\bm{a}_{\mu'}\|_1=\sum_{\bfw\in \Z_2^n,\bfw\neq 0} \bm{a}_{\mu',\bfw}$. 

In the case of graph states, Lemma~\ref{lem:admissibleStab} can be simplified as follows, assuming that  the Pauli measurements are complete.
\begin{lemma}\label{lem:admissibleGraph}
Suppose $P_\mu$ and $P_{\mu'}$ are the canonical test projectors for the graph state $|G\>$ based on the complete Pauli measurements $\mu$ and $\mu'$, respectively. Let $A$ be the adjacency matrix of $G$. Then the following statements are equivalent:
\begin{enumerate}
\item $P_{\mu'}\leq P_{\mu}$;
\item $\|\bm{a}_{\mu'}\circ \bm{a}_\mu\|_1=\|\bm{a}_{\mu'}\|_1$; 
\item $\rspan(A_{\mu'})\leq \rspan(A_{\mu})$; 
\item $\ker(A_{\mu'})\geq \ker(A_{\mu})$; 
\item $(A_\mu; A_{\mu'})$ and $A_\mu$ have the same rank. 
\end{enumerate}
\end{lemma}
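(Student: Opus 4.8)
The plan is to treat condition (3), the row-span inclusion $\rspan(A_{\mu'})\le\rspan(A_\mu)$, as a hub and show that each of the other four conditions is equivalent to it, relying throughout on Theorem~\ref{thm:TestProjGraph}. That theorem supplies everything needed about the two projectors: each $P_\mu$ is diagonal in the graph basis $\{|G_\bfw\>\}_{\bfw\in\Z_2^n}$ with $\<G_\bfw|P_\mu|G_\bfw\>=1$ exactly when $\bfw\in\rspan(A_\mu)$ and $0$ otherwise. In particular $P_\mu=\sum_{\bfw\in\rspan(A_\mu)}|G_\bfw\>\<G_\bfw|$, so $P_\mu$ and $P_{\mu'}$ commute and every comparison between them collapses to a comparison of the index sets $\rspan(A_\mu)$ and $\rspan(A_{\mu'})$.

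For (1)$\Leftrightarrow$(3): since $P_\mu$ and $P_{\mu'}$ are simultaneously diagonal in the graph basis, the operator inequality $P_{\mu'}\le P_\mu$ holds iff the diagonal entries obey $\<G_\bfw|P_{\mu'}|G_\bfw\>\le\<G_\bfw|P_\mu|G_\bfw\>$ for every $\bfw$. As these entries are all $0$ or $1$, this is precisely the support inclusion $\rspan(A_{\mu'})\subseteq\rspan(A_\mu)$. For (2)$\Leftrightarrow$(3): the test vectors $\bm{a}_\mu,\bm{a}_{\mu'}$ are the $0/1$ indicator vectors of $\rspan(A_\mu)\setminus\{0\}$ and $\rspan(A_{\mu'})\setminus\{0\}$, so $\|\bm{a}_{\mu'}\|_1$ counts the nonzero entries of $\bm{a}_{\mu'}$ while $\|\bm{a}_{\mu'}\circ\bm{a}_\mu\|_1$ counts those shared with $\bm{a}_\mu$; equality of the two forces every nonzero entry of $\bm{a}_{\mu'}$ to be matched in $\bm{a}_\mu$, which is again the support inclusion (the $\bfw=0$ entry, equal to $1$ in both, can be appended harmlessly).

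The remaining equivalences are pure linear algebra over $\Z_2$. For (3)$\Leftrightarrow$(4) I would invoke the duality already recorded below \eqref{eq:GraphDiagElement2}, namely $\rspan(A_\mu)=\{\bfw:\bfw\cdot\bfy=0\ \forall\bfy\in\ker(A_\mu)\}$, i.e.\ $\rspan(A_\mu)$ is the orthogonal complement of $\ker(A_\mu)$ under the standard bilinear form. Taking orthogonal complements is an inclusion-reversing involution on subspaces, so $\rspan(A_{\mu'})\subseteq\rspan(A_\mu)$ is equivalent to $\ker(A_\mu)\subseteq\ker(A_{\mu'})$, which is condition (4). For (3)$\Leftrightarrow$(5) I would note that the row span of the vertically stacked matrix $(A_\mu;A_{\mu'})$ is the sum $\rspan(A_\mu)+\rspan(A_{\mu'})$, whose dimension is $\rank(A_\mu;A_{\mu'})$; this equals $\rank(A_\mu)=\dim\rspan(A_\mu)$ iff $\rspan(A_{\mu'})\subseteq\rspan(A_\mu)$.

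I do not expect any genuine obstacle: the lemma is essentially the graph-state shadow of Lemma~\ref{lem:admissibleStab}, obtained by substituting $A_\mu$ for $\tilde{M}_{\S,\mu}$ via Theorem~\ref{thm:TestProjGraph}. The one step deserving care is the reduction of the operator inequality in (1) to an entrywise comparison; this is valid only because $P_\mu$ and $P_{\mu'}$ are simultaneously diagonalizable, which is exactly what Theorem~\ref{thm:TestProjGraph} (together with Lemma~\ref{lem:TestProjCanonical}) guarantees, so it causes no difficulty.
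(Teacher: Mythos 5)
Your proposal is correct and follows exactly the route the paper intends: the paper states Lemma~\ref{lem:admissibleGraph} (via Lemma~\ref{lem:admissibleStab}) as a direct consequence of Theorem~\ref{thm:TestProjGraph} without writing out the details, and your argument — reducing everything to the support inclusion $\rspan(A_{\mu'})\subseteq\rspan(A_\mu)$ using the simultaneous $0/1$ diagonalization in the graph basis, the $\rspan$--$\ker$ duality over $\Z_2$, and the rank criterion for the stacked matrix — is precisely that deduction. No gaps.
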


 The total number of admissible test projectors for $|G\>$ is denoted by $\eta(G)$. Note that $\eta(G)$ is also the total number of admissible Pauli measurements by Lemma~\ref{lem:admissibleOneToOne}. 
 The value of $\eta(G)$ for every equivalent class of connected  graph states up to seven qubits is shown in Table~\ref{tab:optimal}.

\section{Optimal verification of graph states}\label{sec:spectral-gap-optimization}

In this section we study optimal verification of entangled (possibly nonconnected) graph states based on Pauli measurements. (Generalization to stabilizer states is straightforward.)
First, we show that the spectral gap of any verification protocol based on
separable measurements cannot surpass $2/3$. We also derive a necessary condition for Pauli measurements to attain the maximum spectral gap. Then we present a simple algorithm for constructing  optimal verification protocols based on Pauli measurements. Using this algorithm, 
we  construct an optimal verification protocol, attaining the maximum spectral gap of $2/3$, for every equivalent class of connected graph states up to seven qubits.  We believe that  the maximum spectral gap of $2/3$ can be attained for all graph states associated with nonempty graphs. Here  a nonempty graph is a graph that has at least one edge.

\subsection{Efficiency limit of separable and Pauli measurements}

Given a graph state $|G\>$ associated with the graph $G=(V,E)$, denote by $\nu(G)$ the maximum spectral gap of verification operators that are based on nonadaptive Pauli measurements, and by $\nu_\sep(G)$ the maximum spectral gap attainable by separable measurements. A verification operator $\Omega$ based on Pauli measurements is optimal if $\nu(\Omega)=\nu(G)$.

\begin{lemma}\label{lemma:optimal}
	Suppose $G=(V,E)$ is a (possibly nonconnected) nonempty graph. Then  $\nu(G)\leq \nu_\sep(G)\leq 2/3$.
\end{lemma}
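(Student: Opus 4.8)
The plan is to handle the two inequalities separately and to reduce the entire question to a single effective Bell pair. The first inequality $\nu(G)\le\nu_\sep(G)$ is immediate: each test operator obtained from a Pauli measurement is a tensor product over the $n$ qubits, hence a separable operator, so every verification operator built from Pauli measurements is separable and the supremum defining $\nu_\sep(G)$ is taken over a set containing all Pauli-based candidates.

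For the bound $\nu_\sep(G)\le 2/3$, I would fix any edge $\{i,j\}$ of $G$ (which exists since $G$ is nonempty) and single out the non-isolated vertex $i$. Because $Z_i$ anticommutes with the generator $S_i=X_i\bigotimes_{k\sim i}Z_k$ while commuting with the remaining generators, one gets $\langle G|Z_i|G\rangle=0$, and likewise $\langle G|X_i|G\rangle=\langle G|Y_i|G\rangle=0$; hence the one-qubit marginal is $\rho_i=\id/2$. Equivalently, across the bipartition $A=\{i\}$ versus $B=\overline{\{i\}}$ the state has two equal Schmidt coefficients, $|G\>=\tfrac{1}{\sqrt 2}\bigl(|0\>_A|\beta_0\>_B+|1\>_A|\beta_1\>_B\bigr)$ with $\langle\beta_0|\beta_1\rangle=0$. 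Thus, on the four-dimensional subspace $\hilb_A\otimes V$ with $V=\Span\{|\beta_0\>,|\beta_1\>\}$, the target is a genuine two-qubit maximally entangled state $|\tilde G\>$.

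Next I would compress $\Omega$ to this subspace and bound the effective gap with an entanglement witness. A separable $\Omega=\sum_k F^{(k)}_A\otimes G^{(k)}_B$ descends to $\tilde\Omega:=(\id_A\otimes P_V)\,\Omega\,(\id_A\otimes P_V)=\sum_k F^{(k)}_A\otimes(P_VG^{(k)}_BP_V)$, which is again separable across $\hilb_A\otimes V\cong\C^2\otimes\C^2$, still satisfies $\tilde\Omega|\tilde G\>=|\tilde G\>$ (because $|\tilde G\>\in\hilb_A\otimes V$), and obeys $\langle\phi|\Omega|\phi\rangle=\langle\phi|\tilde\Omega|\phi\rangle$ for every $|\phi\>\in\hilb_A\otimes V$, so that $\beta(\Omega)\ge\beta(\tilde\Omega)$. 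Consider the witness $W=\id_A\otimes P_V-2|\tilde G\>\<\tilde G|$; its partial transpose on $V$ equals $2|\Psi^-\>\<\Psi^-|\ge 0$ for the corresponding singlet $|\Psi^-\>$, so $W$ is nonnegative on all product vectors of $\C^2\otimes\C^2$ and therefore $\tr(\tilde\Omega W)\ge 0$. This gives $\tr(\tilde\Omega)=\tr(\tilde\Omega W)+2\langle\tilde G|\tilde\Omega|\tilde G\rangle\ge 2$, whence the three eigenvalues of $\tilde\Omega$ on the complement of $|\tilde G\>$ sum to at least $1$ and the largest is at least $1/3$. Hence $\beta(\Omega)\ge\beta(\tilde\Omega)\ge 1/3$, i.e.\ $\nu(\Omega)\le 2/3$ for every separable $\Omega$, as claimed.

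The step I expect to require the most care is verifying that separability genuinely survives the descent to the effective pair---namely that a fully separable operator remains separable after coarse-graining the $n-1$ qubits of $B$ into one party and after the compression by $P_V$---together with checking the fixed-point property $\tilde\Omega|\tilde G\>=|\tilde G\>$ and the nonnegativity of $W$ on product states. Everything else, including the passage from $\tr\tilde\Omega\ge 2$ to $\beta\ge 1/3$, is routine; alternatively one may simply cite the known optimal separable bound of $2/3$ for the two-qubit maximally entangled state \cite{ZhuH2019O}.
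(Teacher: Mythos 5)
Your proof is correct, and its skeleton is the same as the paper's: both arguments reduce to the bipartition of a single non-isolated qubit $i$ against the rest, observe that $|G\>$ is maximally entangled across this cut (you verify this explicitly via the vanishing of $\<G|X_i|G\>$, $\<G|Y_i|G\>$, $\<G|Z_i|G\>$, which the paper simply quotes from the graph-state literature), and note that a fully separable $\Omega$ remains separable across this coarse-grained cut. Where you genuinely diverge is in how the effective two-qubit bound is established: the paper asserts that the spectral gap of a Bell state cannot be increased by enlarging the local dimension of one subsystem and then cites the known value $2/3$ for the maximally entangled two-qubit state, whereas you make the dimension-reduction explicit via the compression $\tilde\Omega=(\id_A\otimes P_V)\Omega(\id_A\otimes P_V)$ (checking that separability, the fixed-point property $\tilde\Omega|\tilde G\>=|\tilde G\>$, and $\beta(\Omega)\geq\beta(\tilde\Omega)$ all survive) and then re-derive the Bell-state bound from scratch with the decomposable witness $W=\id_A\otimes P_V-2|\tilde G\>\<\tilde G|$, whose partial transpose is positive, giving $\tr(\tilde\Omega)\geq 2$ and hence a non-target eigenvalue of at least $1/3$. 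All of these steps check out (in particular $W^{T_B}$ has eigenvalues $\{0,0,0,2\}$ for any maximally entangled $|\tilde G\>$, and the restriction of a product vector $|a\>|b\>$ to $\hilb_A\otimes V$ reduces the positivity check to the genuine two-qubit case). What your route buys is a self-contained proof that does not lean on the prior optimality results for Bell states; what the paper's route buys is brevity. One cosmetic point: a test operator of a separable (or Pauli) measurement is in general a \emph{sum} of tensor products of positive operators rather than a single tensor product, but this does not affect your argument since separability is all you use.
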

\begin{proof}
	The inequality $\nu(G)\leq \nu_\sep(G)$ follows from the fact that 	Pauli measurements are separable measurements.	To prove the inequality $\nu_\sep(G)\leq 2/3$, note that the  graph state $\ket{G}$ is maximally entangled with respect to at least one bipartition into a single qubit and the other $n-1$ qubits since $G$ is nonempty \cite{hein2004,hein2006}.
	A separable measurement for $\ket{G}$  is necessarily separable with respect to the bipartition. Now note that the spectral gap for a Bell state or two-qubit maximally entangled state cannot be increased by increasing the local dimension of one of the subsystems. Therefore, the spectral gap of any verification operator $\Omega$ for $\ket{G}$ that is based on separable measurements cannot be larger than the maximum spectral gap $2/3$ achievable for a Bell state \cite{hayashi2009,pallister2018,ZhuH2019O,wang2019}, that is,  $\nu(\Omega)\leq 2/3$. This result implies the inequality $\nu_\sep(G)\leq 2/3$ and confirms Lemma~\ref{lemma:optimal}. 
\end{proof}

While Lemma~\ref{lemma:optimal} establishes the bound for a vast class of measurements, it provides no information as to what kind of measurements can saturate the bound.
Specializing to Pauli measurements, we prove a simple necessary requirement for constructing an optimal verification protocol:  $X$, $Y$, and $Z$ should be measured with an equal probability for each qubit.

\begin{theorem}\label{thm:optimal}
	Suppose $G=(V,E)$ is a (possibly nonconnected)  graph with no isolated vertex and $\Omega$ is a verification operator of $|G\>$ based on Pauli measurements. Let $p_{j}^X$,  $p_{j}^Y$,  $p_{j}^Z$ be the probability that $X$, $Y$, and $Z$ measurements are performed on the $j$th qubit, respectively. 
	Then 
	\begin{align}
	\beta(\Omega)&\geq \max_{j\in V} \max \{p_{j}^X,p_{j}^Y,p_{j}^Z\}\geq \frac{1}{3},  \label{eq:betaLB}  \\
	\nu(\Omega)&\leq \frac{2}{3}.\label{eq:nuUB}
	\end{align}
	If $\nu(\Omega)=2/3$ or $\beta(\Omega)=1/3$, then $p_{j}^X=p_{j}^Y=p_{j}^Z=1/3$ for all $j\in V$. 
\end{theorem}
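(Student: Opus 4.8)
The plan is to establish the lower bound \eqref{eq:betaLB} on $\beta(\Omega)$ by exhibiting, for each vertex $j$ and each single-qubit Pauli $P\in\{X,Y,Z\}$, a unit vector orthogonal to $|G\>$ that is accepted with certainty by every test measuring $P$ on qubit $j$. The natural candidate is $|\Phi_j^P\>:=P_j|G\>$. First I would verify orthogonality: $\<G|P_j|G\>=0$ because $P_j$ does not belong to the signed stabilizer group $\bar{\S}$ of $|G\>$. Indeed, the $X$-part of $P_j$ forces any stabilizer element with that $X$-part to be $S_j=X_j\bigotimes_{\{j,k\}\in E}Z_k$, whose $Z$-part is nontrivial precisely because $j$ is not isolated; hence $P_j$ can never equal a stabilizer element up to sign. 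This is the only place the hypothesis ``no isolated vertex'' is used, and it is indispensable, since for an isolated vertex one has $X_j\in\S$ and $X_j|G\>=|G\>$.

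Next I would show that $|\Phi_j^P\>$ passes the relevant tests. Let $\mu$ be any Pauli measurement that measures $P$ on qubit $j$, with canonical test projector $P_\mu=|\L_\mu|^{-1}\sum_{S\in\L_\mu}S$ (Lemma~\ref{lem:TestProjCanonical}), where $\L_\mu=\S\cap\bar{\T}_\mu$. Every $S\in\L_\mu$ lies in $\bar{\T}_\mu$, so its tensor factor on qubit $j$ belongs to $\{\pm\id,\pm P\}$ and therefore commutes with $P_j$; since $P_j$ acts trivially elsewhere, $[P_j,S]=0$ for all such $S$, giving $P_jP_\mu P_j=P_\mu$ and hence $P_\mu|\Phi_j^P\>=P_jP_\mu|G\>=|\Phi_j^P\>$. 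For any test operator $E_\mu\geq P_\mu$ this yields $\<\Phi_j^P|E_\mu|\Phi_j^P\>=1$. If instead $\mu$ measures qubit $j$ trivially, then $E_\mu$ acts only on the other qubits, and because the single-qubit unitary $P_j$ does not alter the reduced state of the remaining qubits, $\<\Phi_j^P|E_\mu|\Phi_j^P\>=\<G|E_\mu|G\>=1$. Writing $p_j^{\id}$ for the total probability that qubit $j$ is measured trivially, it follows that
\begin{equation}
\beta(\Omega)\geq\<\Phi_j^P|\Omega|\Phi_j^P\>\geq p_j^P+p_j^{\id}\geq\max\{p_j^X,p_j^Y,p_j^Z\},
\end{equation}
and maximizing over $j$ proves the first inequality in \eqref{eq:betaLB}.

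For the bound by $1/3$ I would apply pigeonhole to the three numbers $p_j^P+p_j^{\id}$, each of which lower-bounds $\beta(\Omega)$ by the previous display. Since $p_j^X+p_j^Y+p_j^Z+p_j^{\id}=1$, these three numbers sum to $1+2p_j^{\id}\geq1$, so the largest is at least $1/3$; thus
\begin{equation}
\beta(\Omega)\geq\tfrac13\bigl(1+2p_j^{\id}\bigr)\geq\tfrac13,\qquad\nu(\Omega)=1-\beta(\Omega)\leq\tfrac23,
\end{equation}
which recovers Lemma~\ref{lemma:optimal} and gives \eqref{eq:nuUB}. When qubit $j$ is always measured nontrivially ($p_j^{\id}=0$), the same pigeonhole gives $\max\{p_j^X,p_j^Y,p_j^Z\}\geq1/3$ directly, completing \eqref{eq:betaLB}.

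Finally, the saturation claim follows by propagating equality back through this chain. If $\beta(\Omega)=1/3$ (equivalently $\nu(\Omega)=2/3$), then for every $j$ the bound $\beta(\Omega)\geq\tfrac13(1+2p_j^{\id})$ forces $p_j^{\id}=0$, and equality in the pigeonhole step forces $p_j^X=p_j^Y=p_j^Z=1/3$. I expect the crux to be the second step: rigorously certifying that $P_j|G\>$ is accepted by all tests measuring $P$ (or the identity) on qubit $j$. This rests on the commutation identity $P_jP_\mu P_j=P_\mu$, which in turn relies on the structural description of the local subgroup $\L_\mu$ (Lemma~\ref{lem:TestProjCanonical}, Theorem~\ref{thm:TestProjStab}) guaranteeing that $\L_\mu$ acts on qubit $j$ only through operators commuting with $P$. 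The remaining steps are orthogonality bookkeeping and elementary pigeonhole.
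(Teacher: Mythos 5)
Your proof is correct, and it ends with the same pigeonhole step as the paper's, but the crucial intermediate claim --- that for each qubit $j$ and each $P\in\{X,Y,Z\}$ there is a unit vector orthogonal to $|G\>$ accepted with certainty by every test measuring $P$ on qubit $j$ --- is realized by a genuinely different mechanism. The paper forms the product $P_j^P$ of all canonical test projectors whose measurement is $P$ on qubit $j$ (legitimate because canonical test projectors commute, by Lemma~\ref{lem:TestProjCanonical}), notes $\Omega\geq p_j^XP_j^X+p_j^YP_j^Y+p_j^ZP_j^Z$, and argues that $P_j^P$ has rank at least $2$ since otherwise the single-qubit reduced state of $|G\>$ would be pure, contradicting the no-isolated-vertex hypothesis; it then evaluates $\Omega$ on an orthogonal state in the range of $P_j^P$. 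You instead exhibit the explicit witness $P_j|G\>$, proving orthogonality from $P_j\notin\bar{\S}_G$ and acceptance from the fact that every element of $\L_\mu\subseteq\bar{\T}_\mu$ acts on qubit $j$ by $\pm\id$ or $\pm P$ and hence commutes with $P_j$. The two arguments rest on the same structural facts (your witness lies in the range of the paper's $P_j^P$), but yours is more constructive and handles incomplete measurements more carefully: by also crediting the probability $p_j^{\id}$ of a trivial measurement on qubit $j$, you get $\beta(\Omega)\geq\tfrac13(1+2p_j^{\id})\geq\tfrac13$, which yields \eqref{eq:nuUB} without tacitly assuming $p_j^X+p_j^Y+p_j^Z=1$ as the paper's chain $\max\{p_j^X,p_j^Y,p_j^Z\}\geq 1/3$ does, and you correctly note that the middle inequality of \eqref{eq:betaLB} requires $p_j^{\id}=0$. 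One small gloss: your one-line justification of $P_j\notin\bar{\S}_G$ via the $X$-part directly covers $P=X$ and (with $A_{jj}=0$) $P=Y$; for $P=Z$ the $X$-part is trivial, so the only stabilizer candidate is the identity, which settles that case and requires no hypothesis on $j$ at all.
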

\begin{proof}
	Equation \eqref{eq:nuUB} follows from \eqref{eq:betaLB}. To prove \eqref{eq:betaLB}, we can assume that the verification protocol employs only canonical test projectors. Let $P_j^X$  be the product of all canonical test projectors based on Pauli measurements with $X$ measurement on qubit $j$, then $P_j^X$ is also a test projector for $|G\>$ (not necessarily associated with a Pauli measurement) since all canonical test projectors commute with each other. In addition, $P_j^X$ has rank at least 2 since, otherwise, the reduced state of $|G\>$ on qubit $j$ would be a pure state, in contradiction with the assumption that $G$ has no isolated vertex, which implies that  every single-qubit reduced state is completely mixed \cite{hein2004,hein2006}.  Define $P_j^Y$ and $P_j^Z$ in a similar way. 
	Then we have
	\begin{gather}
	\Omega\geq p_j^X P_j^X+p_j^Y P_j^Y+p_j^Z P_j^Z \quad \forall j,\\
	\beta(\Omega)\geq  \max \{p_{j}^X,p_{j}^Y,p_{j}^Z\}\geq \frac{1}{3}\quad  \forall j,
	\end{gather}
	which confirms \eqref{eq:betaLB} and implies \eqref{eq:nuUB}. The last statement in Theorem~\ref{thm:optimal} is an immediate consequence of \eqref{eq:betaLB}. 
\end{proof}

If $G$ is not connected and has $J$ connected components $G_1, G_2, \ldots, G_J$. Then $|G\>=\bigotimes_{j=1}^J|G_j\>$ and we have 
\begin{equation}
\nu(G) = \min_j \nu(G_j)
\end{equation}
according to Proposition~\ref{prop:nonconnected-cover} in Appendix \ref{app:nonconnected}. In addition, if $\Omega_j$ are optimal verification operators for $|G_j\>$, then $\bigotimes_{j=1}^J \Omega_j$ is an optimal verification operator for $|G\>$.
As a corollary, $\nu(G)=2/3$ if $\nu(G_j)=2/3$ for $j=1,2,\ldots, J$. To construct optimal verification protocols, therefore, we can focus on graph states of connected graphs.  

\subsection{\label{sec:Algorithm}Algorithm for constructing an optimal verification protocol}
To construct an optimal verification protocol based on Pauli measurements, it suffices to consider canonical test projectors associated with admissible Pauli measurements. Suppose we perform the Pauli measurement $\mu$ with probability $p_\mu$, then the resulting verification operator reads 
\begin{equation}\label{eq:GraphVO}
\Omega=\sum_{\mu} p_\mu P_\mu, 
\end{equation}
where $P_\mu$ is the canonical test projector associated with the Pauli measurement $\mu$. Since all canonical test projectors are diagonal in the graph basis, the operator $\Omega$ is also diagonal in the graph basis. Let 
\begin{equation}\label{eq:lambda}
\bm{\lambda}=\sum_{\mu} p_{\mu} \bm{a}_{\mu}. 
\end{equation}
where $\bm{a}_{\mu}$ is the test vector associated with the projector $P_\mu$ and the Pauli measurement $\mu$.
Then  all eigenvalues of $\Omega$ except the one associated with the target state are contained in $\bm{\lambda}$. In particular, we have
\begin{align}\label{eq:betaTestVec}
\beta(\Omega)&=\|\bm{\lambda}\|_\infty=\max_{\bfw\in \Z_2^n, \bfw\neq 0} p_\mu a_{\mu,\bfw} \nonumber \\
&=\max_{\bfw\in \Z_2^n, \bfw\neq 0}\sum_{\mu,\rspan(A_\mu)\ni \bfw} p_\mu,
\end{align}
where $\|\bm{\lambda}\|_\infty$ denotes the maximum of the elements in the vector $\bm{\lambda}$, and the last equality follows from Theorem~\ref{thm:TestProjGraph}.  
Maximizing the spectral gap $\nu(\Omega)=1-\beta(\Omega)$ is equivalent to minimizing $\beta(\Omega)$. To this end we need to only consider admissible test vectors (cf.~Sec.~\ref{sec:admissible}).

Define the \emph{test matrix} $\caA$ as the matrix composed of all admissible test vectors $\bm{a}_\mu$ as column vectors, and  let $\bm{p}$ be the column  vector composed of the probabilities $p_\mu$. Then $\bm{\lambda}=\caA\bm{p}$ and $\beta(\Omega)=\|\caA\bm{p}\|_\infty$. The minimization of $\beta(\Omega)$ can be cast as a constrained optimization problem
\begin{equation}
\begin{aligned}
&\underset{p \in \R^m}{\text{minimize}} &\quad \left\Vert \mathcal{A}\bm{p} \right\Vert_{\infty}& \\
&\text{subject to} &\quad p_\mu \ge 0& \\
&&	\sum_\mu p_\mu = 1&. \\
\end{aligned}
\end{equation}
After choosing a proper order of the test vectors
the minimization can be expressed as a linear programming:
\begin{equation}\label{algo:max-spectral-gap}
\begin{aligned}
&\text{minimize} &y &  \\
&\text{subject to} &(\mathcal{A}\bm{p})_j \le y,\quad 	p_j \ge 0\quad \forall j,& & \\
&&	\sum_j p_j = 1.& \\
\end{aligned}
\end{equation}
\noindent Putting all these together, we have a recipe for finding an optimal verification protocol as shown in Algorithm~\ref{algo:graph-verify}. Our Python module that implements the algorithm using the convex optimization package \texttt{cvxpy} and open source solvers are available on \href{https://github.com/ninnat/graph-state-verification}{Github}.

\begin{algorithm}[h]
	\caption{\protect\raggedright Finding an optimal verification protocol for graph states}
	\begin{algorithmic}[1]
		\Input 
		\Desc{$A$}{Adjacency matrix for an $n$-qubit graph state $\ket{G}$}
		\Output 
		\Desc{$\scrM$}{\hspace{2.5em} The set of Pauli measurements (represented by vectors in $(\Z_2^{2n})_\rmC$) employed in the optimal verification protocols}
		\Desc{$\{p_\mu\}_{\mu\in \scrM}$}{\hspace{3em}Probabilities for individual measurement settings as in \eqref{eq:GraphVO}}
		\Desc{$\nu(\Omega)$}{\hspace{2.5em}Optimal spectral gap}
		\State Determine all nontrivial test vectors. 
		\begin{enumerate}
			\item For each $\mu\in (\Z_2^{2n})_\rmC$, compute the matrix $A_\mu$ in \eqref{eq:Amu}. 
			\item Discard $A_{\mu}$ that has full rank by checking the determinant.
			\item  Compute the test vector $\bm{a}_\mu$ by virtue of Theorem~\ref{thm:TestProjGraph}. 
		\end{enumerate}
		\State Determine all admissible test vectors and construct the test matrix $\caA$. 
		\State Solve the linear program \eqref{algo:max-spectral-gap}. 
	\end{algorithmic}\label{algo:graph-verify}
\end{algorithm}

\subsection{Special examples}\label{sec:examples}

\begin{figure*}
  \begin{center}
    \includegraphics[width=0.6\linewidth]{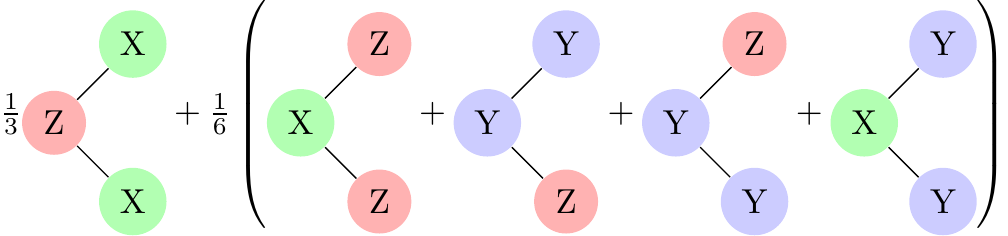}
  \end{center}
	\caption{Graphical representation of the optimal verification scheme for the star graph. Each term in the sum represents a Pauli measurement setting. 
	Measurement settings in the parenthesis are chosen with the probability 1/6 each; they are given in \eqref{eq:GHZ-Y-settings} in terms of all possible sets $\Y$ of parties that perform $Y$ measurements (displayed in blue) with $|\Y|=2t$ where $t=0,1,\dots,\lfloor n/2 \rfloor$.}
	\label{fig:GHZ}
\end{figure*}

To illustrate the algorithm presented in  Sec.~\ref{sec:Algorithm}, we first consider graph states associated with 
 star graphs  (No. 3, 5, 9, and 20 in \cite{hein2004}; omitted in Table~\ref{tab:optimal}). These states can be turned into  GHZ states by applying the Hadamard gate on each non-central qubit.

All information about the canonical test projector associated with a Pauli measurement $\mu = (\mu^\rmx;\mu^\rmz) \in \Z_2^{2n}$ is encoded in the matrix $A_{\mu}$ defined in \eqref{eq:Amu}, which can be thought of as the matrix version of the symplectic inner product between $\mu$ and the adjacency matrix $A$. For the star graph with $n$ vertices we have
\begin{align}
A = 
\begin{pmatrix}
	0 & 1 & \cdots & 1 \\
	1 & & & \\
	\vdots & & \mbox{\huge0} & \\
	1 & & & &  
\end{pmatrix}, 
&&
A_{\mu} = 
\begin{pmatrix}
	\mu^\rmz_1 & \mu^\rmx_1 & \mu^\rmx_1 & \cdots & \mu^\rmx_1 \\
	\mu^\rmx_2 & \mu^\rmz_2 & & \raisebox{-1ex}{\LARGE 0} &  \\
	\mu^\rmx_3 & & \mu^\rmz_3 & & \\
	\vdots & \mbox{\LARGE 0} & & \ddots & \\
	\mu^\rmx_n & & & & \mu^\rmz_n
\end{pmatrix}.
\end{align}
According to \eqref{eq:GraphPmuRank}, the rank of the canonical test projector associated with $\mu$ is the cardinality of the subspace spanned by the rows of $A_{\mu}$. Analysis shows that  the smallest subspace is obtained when $\mu^\rmz_1 = 1$, $\mu^\rmx_j = 1$ for $j \neq 1$, and every other component is zero; that is, when the Pauli measurement is $ZX\cdots X$. This Pauli measurement effective measures the $n-1$ stabilizer generators associated with the $n-1$ non-central qubits.
The resulting rank-2 test projector reads
\begin{align}\label{eq:rank-2-GHZ}
P_0 = \ketbra{0}{0}\otimes(\ketbra{+}{+})^{\otimes n-1} + \ketbra{1}{1}\otimes(\ketbra{-}{-})^{\otimes n-1}.
\end{align}
It is the only admissible Pauli measurement that measures $Z$ on the central qubit and, thus, must be included in any optimal verification protocol with probability 1/3 according to Theorem~\ref{thm:optimal}.

\begin{figure*}
  \begin{center}
    \includegraphics[width=1\linewidth]{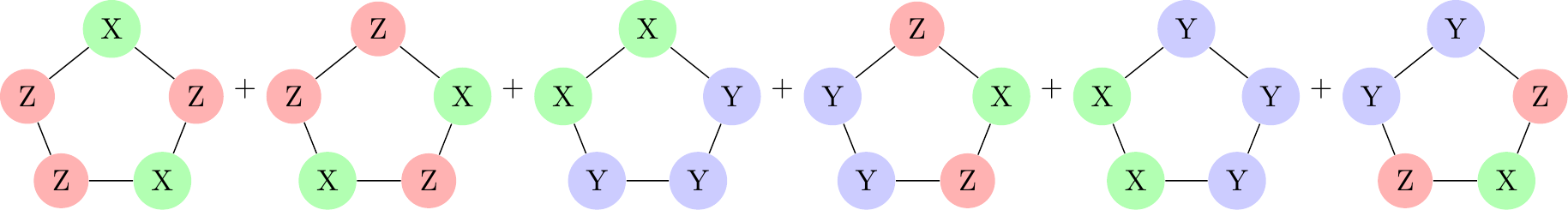}
  \end{center}
	\caption{Graphical representation of an optimal verification scheme for the five-qubit ring cluster state. Each term in the sum represents a Pauli measurement setting, and each setting is chosen with the probability 1/6.}
	\label{fig:ring}
\end{figure*}

By virtue of the connection with the GHZ state and the result derived in \cite{li2019_GHZ},
the other admissible Pauli measurements can be described compactly as follows.
We perform either $X$ or $Y$ measurement on the central qubit,  while either $Y$ or $Z$ measurement on each of the other qubits.
Let $\Y$ be the set of parties that perform $Y$ measurements assuming that $|\Y|$ is even, and $\overline{\Y}$ be the set of parties that perform a $Z$ measurement on a non-central qubit or an $X$ measurement on the central qubit. Let $|\Y|=2t$ with $t = 0,1,\dots,\lfloor n/2 \rfloor$ and
\begin{align}\label{eq:GHZ-Y-settings}
	S_{\Y} &= (-1)^t\bigotimes_{j\in\Y} Y_j \bigotimes_{j'\in\overline{\Y}} \sigma_{j'},
\end{align}
where $\sigma_1 = X_1$ and $\sigma_{j'} = Z_{j'}$ for all $j' \neq 1$. The Pauli measurement effectively  measures the product of the stabilizer $S_1$ of the central qubit together with all the stabilizers of all non-central qubits in the set $\Y$ (cf.~Fig.~\ref{fig:GHZ}). Thus, the corresponding canonical test projector 
\begin{align}
	P_{\Y} = \frac{\id + S_{\Y}}{2}
\end{align}
has rank $2^{n-1}$, and there are $2^{n-1}$ such canonical test projectors.
The unique optimal verification strategy is constructed by performing the test $P_0$ with probability $1/3$ and all other admissible tests $P_\Y$ with probability $1/(3\times 2^{n-2})$ each. The resulting  verification operator reads
\begin{align}
	\Omega = \frac{1}{3}\left(P_0 + \sum_{\Y} \frac{1}{2^{n-2}}P_{\Y} \right)=\frac{1}{3}\left(2|G\>\<G|+\openone \right).
\end{align} 
Compared with other $n$-qubit  graph states with $n\leq 7$, it turns out the graph state associated with the star graph requires the most number of measurement settings to construct an optimal verification protocol.

In general, optimal verification strategies based on Pauli measurements are not unique.  For example, let us consider the five-qubit ring cluster state  (No.~8 in Table~\ref{tab:optimal}). Calculation shows that this graph state has 21 admissible Pauli measurements. Two distinct optimal verification strategies can be constructed from six admissible Pauli measurements each as  presented below (left and right columns); each setting is measured with probability 1/6, which is consistent with Theorem~\ref{thm:optimal}.
\begin{center}
	\begin{tabular}{c|c}
		XZZXZ & ZXZZX \\
		ZZXZX & ZXZXZ \\
		XXYYY & XYYYX \\
		ZYYZX & XZYYZ \\
		YXXYY & YYXXY \\
		YYZXZ & YZXZY
	\end{tabular}
\end{center}
Note that the two strategies have no measurement setting in common. All 12 
 canonical test projectors associated with these Pauli measurements have rank 8 (some of other nine admissible test projectors have higher ranks).

\subsection{Generic case}

By means of the algorithm presented in Sec.~\ref{sec:Algorithm}, we have found an optimal verification protocol based on Pauli measurements for each equivalent class of connected graph states up to seven qubits \cite{hein2004}. It turns out the maximum spectral gap can always attain the upper bound $2/3$ for separable measurements presented in Lemma~\ref{lemma:optimal}. Therefore, we believe that the upper bound $2/3$ for the spectral gap can be saturated by Pauli measurements for all graph states. More details can be found in Table~\ref{tab:optimal}. 
Note that  optimal verification strategies are in general not unique, and it is possible that the numbers of measurement settings in the optimal  strategies presented in Table~\ref{tab:optimal} can be reduced.
According to this table,  linear cluster states require fewer measurement settings to construct optimal protocols compared with most other graph states:  only 5, 6, 6, 8, and 12 settings are required for 3, 4, 5, 6, and 7 qubits, respectively. 
Curiously, graph state No. 42 requires only 6 measurement settings to construct an optimal verification protocol, which is fewer than  the number 7 of qubits. This is the only connected graph state that has this property as far as we know.

\section{Optimal verification of graph states with $X$ and $Z$ measurements}\label{sec:XZ-verification}
According to \cite{ZhuH2019E}, all graph states can be verified using only $X$ and $Z$ measurements. Here we present an upper bound for the spectral gap of any verification protocol based on $X$ and $Z$ measurements and show that this bound can be saturated for all connected graph states up to seven qubits (which again implying that the bound can be saturated for nonconnected graph states built from these graphs according to Proposition~\ref{prop:nonconnected-cover}). In addition, we construct an optimal verification protocol for every ring cluster state. 

\subsection{\label{sec:XZprotocol}Verification protocols based on $X$ and $Z$ measurements}

\begin{corollary}\label{cor:optimalXZ}
Suppose $G=(V,E)$ is a (possible nonconnected) nonempty graph, and $\Omega$ is a verification operator of the graph state $|G\>$
 based on $X$ and $Z$ measurements. Then the spectral gap satisfies $\nu(\Omega)\leq 1/2$. To saturate this upper bound, $X$ and $Z$ should be measured with an equal probability for each qubit. 
\end{corollary}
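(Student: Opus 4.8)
The plan is to obtain Corollary~\ref{cor:optimalXZ} by specializing the argument that proves inequality~\eqref{eq:betaLB} in Theorem~\ref{thm:optimal}: restricting the allowed single-qubit measurements to $X$ and $Z$ sets $p_j^Y=0$ and leaves $p_j^X+p_j^Z=1$ for each qubit, so the three-way split that produced the threshold $1/3$ becomes a two-way split producing $1/2$. First I would carry over the standard reductions: by the discussion in Sec.~\ref{sec:admissible} any test operator dominates the canonical test projector of its Pauli measurement, and admissible Pauli measurements are complete by Proposition~\ref{pro:incomplete-setting}; hence without loss of generality $\Omega=\sum_\mu p_\mu P_\mu$ with every $\mu$ a complete $X$/$Z$ setting. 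Completeness forces every qubit to be measured in every setting, so indeed $p_j^X+p_j^Z=1$ for all $j\in V$.

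Next I would reproduce the core estimate of Theorem~\ref{thm:optimal} at a single non-isolated vertex, which exists because $G$ is nonempty. Fix such a $j$ and let $P_j^X$ (resp.\ $P_j^Z$) be the product of all canonical test projectors in $\Omega$ that measure $X$ (resp.\ $Z$) on qubit~$j$. Since canonical test projectors commute and are diagonal in the graph basis (Lemma~\ref{lem:TestProjCanonical} and Theorem~\ref{thm:TestProjGraph}), these products are again test projectors for $|G\>$, and each has rank at least $2$, for otherwise the single-qubit marginal of $|G\>$ at $j$ would be pure, contradicting that a graph state with no isolated vertex has maximally mixed single-qubit marginals. A rank-$\ge 2$ projector diagonal in the graph basis fixes some $|G_\bfw\>$ with $\bfw\neq 0$, orthogonal to the target $|G\>=|G_0\>$; using this vector in $\Omega\ge p_j^X P_j^X$ yields $\beta(\Omega)\ge p_j^X$, and symmetrically $\beta(\Omega)\ge p_j^Z$. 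Hence $\beta(\Omega)\ge\max\{p_j^X,p_j^Z\}\ge\tfrac12(p_j^X+p_j^Z)=\tfrac12$, that is $\nu(\Omega)=1-\beta(\Omega)\le 1/2$.

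For the saturation clause, $\nu(\Omega)=1/2$ forces $\max\{p_j^X,p_j^Z\}=1/2$ and therefore $p_j^X=p_j^Z=1/2$; running the same bound at every non-isolated vertex shows this holds for all of them. I expect the only genuine obstacle to be bookkeeping around isolated vertices in the disconnected case: such a qubit carries a $|+\>$ factor that a bare $X$ measurement already certifies, so $p_j^X=1$ can be optimal there and the rank-$\ge 2$ step breaks down. The cleanest fix is to read the ``equal probability'' conclusion as referring to the non-isolated qubits---precisely those to which the estimate applies---after which the whole argument is a faithful two-setting analogue of the proof of Theorem~\ref{thm:optimal}.
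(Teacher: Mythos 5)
Your proposal is correct and follows exactly the paper's route: the paper states Corollary~\ref{cor:optimalXZ} as an immediate consequence of Theorem~\ref{thm:optimal}, and your argument is precisely that specialization with $p_j^Y=0$ so that $\max\{p_j^X,p_j^Z\}\geq 1/2$. Your explicit handling of isolated vertices (which Theorem~\ref{thm:optimal} excludes but the corollary's hypothesis of a nonempty graph does not) is a small point of extra care that the paper leaves implicit.
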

This result is a simple corollary of  Theorem~\ref{thm:optimal}. Moreover, the bound $\nu(\Omega)\leq 1/2$  applies whenever 
each party can perform only two types of Pauli measurements. 

To construct an optimal protocol for the graph state $|G\>$ based on $X$ and $Z$ measurements, it suffices to consider canonical test projectors based on complete $X$ and $Z$ measurements. Each subset $B$ of $V$ determines a complete Pauli measurement by performing $X$ measurements on qubits in $B$ and $Z$ measurements on qubits in the complement $\overline{B}=V\setminus B$; the corresponding local subgroup and canonical test projector are denoted by $\L_B$ and  $P_B$, respectively. 
Conversely, each complete Pauli measurement based on $X$ and $Z$ measurements is determined by such a subset in  $V$. In view of this fact, complete Pauli measurements based on $X$ and $Z$ measurements and the corresponding canonical test projectors can be labeled by subsets in $V$. 
Let $\mu$ be the symplectic vector associated with the Pauli measurement determined by $B$,
then $\mu^\rmx_j=1$ iff $j\in B$, while $\mu^\rmz_j=1$ iff $j\in \overline{B}$.
Conversely, $B=\{j|\mu^\rmx_j=1\}$. Based on this observation, the local subgroup $\L_B$ and the test projector $P_B$ can be determined by virtue of Theorem~\ref{thm:TestProjGraph}, where the subscript $\mu$ characterizing the Pauli measurement can be replaced by $B$ to manifest the role of the set $B$; in particular, $A_\mu$ can be expressed as $A_B$. According to  \eqref{eq:Amu} and the above discussion, we have 
\begin{equation}
\rank(A_B)\geq |\overline{B}|=n-|B|,
\end{equation}
and the inequality  is saturated iff $B$ is an independent set (cf. the proof of Lemma~\ref{lem:kappaUB}).
In conjunction with \eqref{eq:GraphLocalSubgroupOrder} and  \eqref{eq:GraphPmuRank}, this equation
implies the following corollary.
\begin{corollary}\label{cor:TestProjXZ}
Let $B$ be a subset of the vertex set $V$ of the graph $G$. Then 	
\begin{align}
|\L_B|&\leq 2^{|B|},    \label{eq:LocalSubgroupXZ}\\
\tr(P_B)&\geq 2^{n-|B|}.  \label{eq:TestProjXZ}
\end{align}
The inequality \eqref{eq:LocalSubgroupXZ} is saturated iff $B$ is an independent set; the same holds for the inequality \eqref{eq:TestProjXZ}. 
\end{corollary}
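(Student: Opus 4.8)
The plan is to reduce both inequalities to a single rank estimate on the matrix $A_B$ and then invoke the explicit formulas \eqref{eq:GraphLocalSubgroupOrder} and \eqref{eq:GraphPmuRank}, which express $|\L_B|$ and $\tr(P_B)$ in terms of $\rank(A_B)$. Since $|\L_B|=2^{n-\rank(A_B)}$ and $\tr(P_B)=2^{\rank(A_B)}$, everything hinges on showing $\rank(A_B)\geq n-|B|$, with equality exactly when $B$ is independent.

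First I would unpack the structure of $A_B=A_\mu$ for the measurement assigning $X$ to the vertices in $B$ and $Z$ to those in $\overline{B}$, so that $\mu^\rmx_j=1$ iff $j\in B$ and $\mu^\rmz_j=1$ iff $j\in\overline{B}$. By \eqref{eq:Amu}, each row of $A_B$ indexed by $j\in\overline{B}$ equals the standard basis vector $e_j$ (contributed by $\diag(\mu^\rmz)$), while each row indexed by $j\in B$ equals the corresponding row of the adjacency matrix $A$ (contributed by $\diag(\mu^\rmx)A$). The $|\overline{B}|$ rows $\{e_j\}_{j\in\overline{B}}$ are manifestly linearly independent, which immediately gives $\rank(A_B)\geq|\overline{B}|=n-|B|$.

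Next I would settle the equality condition. The rows $\{e_j\}_{j\in\overline{B}}$ span precisely the coordinate subspace of $\Z_2^n$ supported on $\overline{B}$, so $\rank(A_B)=n-|B|$ holds iff every $B$-indexed row of $A$ already lies in this subspace, i.e.\ iff $A_{ij}=0$ for all $i,j\in B$. This is exactly the statement that $B$ induces no edges, namely that $B$ is an independent set; it recovers the claim quoted just before the corollary (cf.\ the proof of Lemma~\ref{lem:kappaUB}).

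Finally, substituting $\rank(A_B)\geq n-|B|$ into \eqref{eq:GraphLocalSubgroupOrder} and \eqref{eq:GraphPmuRank} yields $|\L_B|\leq 2^{|B|}$ and $\tr(P_B)\geq 2^{n-|B|}$, and both inequalities are saturated precisely when $\rank(A_B)=n-|B|$, hence iff $B$ is independent. I expect no serious obstacle here: all the heavy machinery, namely Theorem~\ref{thm:TestProjGraph} together with \eqref{eq:GraphLocalSubgroupOrder} and \eqref{eq:GraphPmuRank}, is already in place, and the only genuine step is the elementary linear-algebra observation that the $B$-rows of $A$ fall into the $\overline{B}$-coordinate subspace exactly when $B$ is edge-free.
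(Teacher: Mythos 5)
Your proof is correct and follows essentially the same route as the paper: both reduce the corollary to the rank estimate $\rank(A_B)\geq|\overline{B}|=n-|B|$ with saturation iff $B$ is independent, obtained from the row structure of $A_B$ (rows $e_j$ for $j\in\overline{B}$ and rows of $A$ for $j\in B$), and then substitute into \eqref{eq:GraphLocalSubgroupOrder} and \eqref{eq:GraphPmuRank}. Your write-up merely makes explicit the linear-algebra step that the paper delegates to the proof of Lemma~\ref{lem:kappaUB}.
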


When $B$ is an independent set, $\L_B$ is generated by the stabilizer operators $S_j$ for all $j\in B$, that is,
\begin{equation}\label{eq:LocalSubgroupInd}
\L_B=\langle\{S_j| j\in B\}\rangle;
\end{equation} 
accordingly, the canonical test projector reads
\begin{equation}\label{eq:TestProjInd}
P_B=\prod_{j\in B}\frac{1+S_j}{2}. 
\end{equation}
In addition, the test vector in \eqref{eq:GraphDiagElement} reduces to
\begin{align}
\bm{a}_{B,\bfw}&=\<G_\bfw|P_B|G_\bfw\>
=\begin{cases}
1 & \supp(\bfw)\in \overline{B},\\
0 &\mathrm{otherwise}, \label{eq:GraphDiagElementInd}
\end{cases}
\end{align}
where 
\begin{equation}
\supp(\bfw)=\{j| w_j=1\}.
\end{equation}
Test projectors based on  independent sets play a key role in constructing the cover and coloring protocols in  \cite{ZhuH2019E}. Compared with other test projectors, these test projectors  are easy to visualize; in addition, it is easier to compute the spectral gap of verification operators based on such test projectors. Nevertheless,  more general test projectors are helpful to enhance the spectral gap. 

The following lemma follows from \eqref{eq:graph-stabilizer-generator}, \eqref{eq:LocalSubgroupInd}, and \eqref{eq:TestProjInd}.
\begin{lemma}\label{lem:TwoTestBB}
Suppose $B$ and $B'$ are two subsets of the vertex set of the graph $G=(V,E)$, and $B$ is an independent set. If $P_{B'}\leq P_B$ or equivalently, $\L_{B'} \supseteq \L_B$, then $B\subseteq B'$. If  $B$ and $B'$ are independent sets, then  $P_{B'}\leq P_B$ iff $B\subseteq B'$. 
\end{lemma}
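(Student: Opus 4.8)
The plan is to first record why the two formulations of the hypothesis in the first sentence coincide, and then to exploit the explicit description of $\L_B$ that is available because $B$ is independent. The equivalence of $P_{B'}\leq P_B$ with $\L_{B'}\supseteq\L_B$ is already in hand: by Theorem~\ref{thm:TestProjGraph} the local subgroup $\L_\mu$ is the image of $\ker(A_\mu)$ under the injective map $\bfy\mapsto S^\bfy$, so $\L_{B'}\supseteq\L_B$ is the same as $\ker(A_{B'})\supseteq\ker(A_B)$, which by the equivalence of statements (1) and (4) in Lemma~\ref{lem:admissibleGraph} is exactly $P_{B'}\leq P_B$.

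For the first assertion I would use that $B$ independent yields the explicit generating set $\L_B=\langle\{S_j\mid j\in B\}\rangle$ from \eqref{eq:LocalSubgroupInd}. Then $\L_B\subseteq\L_{B'}=\S_G\cap\bar{\T}_{\mu'}$ forces each generator $S_j$ with $j\in B$ to lie in the signed local group $\bar{\T}_{\mu'}$, where $\mu'$ is the $X/Z$ Pauli measurement determined by $B'$ (namely $X$ on $B'$ and $Z$ on $\overline{B'}$). The key point is a locality argument: every element of $\bar{\T}_{\mu'}$ acts on each qubit $k\in\overline{B'}$ as either the identity or $Z_k$, and never as $X_k$. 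But by \eqref{eq:graph-stabilizer-generator} the generator $S_j$ acts as $X_j$ on qubit $j$, which is incompatible with $j\in\overline{B'}$. Hence $j\in B'$ for every $j\in B$, that is, $B\subseteq B'$.

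The second assertion then follows quickly. Its ``only if'' direction is precisely the first assertion, since $B$ is independent. For the ``if'' direction, assuming $B\subseteq B'$ with both sets independent, I would invoke the product formula \eqref{eq:TestProjInd} and write $P_{B'}=\bigl(\prod_{j\in B}\tfrac{1+S_j}{2}\bigr)\bigl(\prod_{j\in B'\setminus B}\tfrac{1+S_j}{2}\bigr)$. Because all the $S_j$ lie in the abelian group $\S_G$, the factors $\tfrac{1+S_j}{2}$ are mutually commuting projectors, so adjoining the extra factors indexed by $B'\setminus B$ can only shrink the projector, giving $P_{B'}\leq P_B$. Equivalently, one may compare the diagonal test vectors through \eqref{eq:GraphDiagElementInd}: from $B\subseteq B'$ one gets $\overline{B'}\subseteq\overline{B}$, so the support condition defining $\bm{a}_{B'}$ implies that defining $\bm{a}_B$, whence $\bm{a}_{B'}\leq\bm{a}_B$ entrywise and therefore $P_{B'}\leq P_B$.

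The only delicate step is the locality argument in the first assertion, where one must argue that membership in $\bar{\T}_{\mu'}$ pins down the measurement basis on each qubit. I expect this to be routine once the structure of $\bar{\T}_{\mu'}$ as tensor products of single-qubit operators — identity or $X_k$ on qubits in $B'$, identity or $Z_k$ on qubits in $\overline{B'}$ — is spelled out; everything else is a direct application of results already established.
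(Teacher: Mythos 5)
Your proposal is correct and fills in exactly the argument the paper intends: the paper gives no explicit proof, merely noting that the lemma ``follows from \eqref{eq:graph-stabilizer-generator}, \eqref{eq:LocalSubgroupInd}, and \eqref{eq:TestProjInd}'', and your locality argument (the $X_j$ factor of $S_j$ is incompatible with membership in $\bar{\T}_{\mu'}$ when $j\in\overline{B'}$) together with the product formula for $P_{B'}$ is precisely the intended use of those three equations. No gaps.
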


Optimal verification  protocols based on $X$ and $Z$ measurements can be found using a similar algorithm as presented in Sec.~\ref{sec:Algorithm} with minor modification. It turns out the bound $\nu(\Omega)\leq 1/2$ in Corollary~\ref{cor:optimalXZ} can be saturated for all entangled graph states up to seven qubits shown in Table~\ref{tab:minimal-settings}. (In fact, this result also holds for graph states not listed in the table; see Appendix \ref{app:general-graphs}.) 
\begin{proposition}
For any entangled CSS state or entangled graph state associated with a two-colorable graph, the maximum spectral gap achievable by $X$ and $Z$ measurements is $1/2$.
\end{proposition}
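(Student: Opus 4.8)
The plan is to combine the upper bound already in hand with an explicit protocol saturating it. Corollary~\ref{cor:optimalXZ} gives $\nu(\Omega)\le 1/2$ for every verification operator built from $X$ and $Z$ measurements on a nonempty graph, and an entangled graph state corresponds to a graph with at least one edge; so all that remains is achievability. I would first prove the statement for entangled two-colorable graph states and then transfer it to CSS states by a local Clifford equivalence. To begin, fix a proper $2$-coloring $V=V_1\sqcup V_2$ of $G$ into two independent sets. Because $G$ has at least one edge, the two endpoints of that edge receive different colors, so both $V_1$ and $V_2$ are nonempty.

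Next I would take the two-coloring protocol $\Omega=\tfrac12\bigl(P_{V_1}+P_{V_2}\bigr)$, where $P_{V_1}$ is realized by measuring $X$ on $V_1$ and $Z$ on $\overline{V_1}=V_2$, and $P_{V_2}$ by measuring $X$ on $V_2$ and $Z$ on $V_1$; both use only $X$ and $Z$ and, since $V_1,V_2$ are independent, take the simple form \eqref{eq:TestProjInd}. The spectral gap is read off from the test vectors: by \eqref{eq:GraphDiagElementInd}, $\bm{a}_{V_1,\bfw}=1$ iff $\supp(\bfw)\subseteq V_2$, and $\bm{a}_{V_2,\bfw}=1$ iff $\supp(\bfw)\subseteq V_1$. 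For $\bfw\neq 0$ these two conditions are mutually exclusive, since together they would force $\supp(\bfw)\subseteq V_1\cap V_2=\emptyset$. Hence every component of $\bm{\lambda}=\tfrac12(\bm{a}_{V_1}+\bm{a}_{V_2})$ is at most $1/2$, and it equals $1/2$ whenever $\supp(\bfw)$ lies entirely within one color class (for instance $\bfw=e_j$ with $j\in V_1$). Therefore $\beta(\Omega)=\|\bm{\lambda}\|_\infty=1/2$ and $\nu(\Omega)=1/2$; the strict inequality $\beta(\Omega)<1$ simultaneously certifies that the target is the unique common pass eigenstate, so the protocol is reliable.

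Finally I would transfer the result to CSS states. Since CSS states are exactly the local-Clifford images of two-colorable graph states \cite{chen2007}, any entangled CSS state is obtained from an entangled two-colorable graph state by applying Hadamards on one color class. Because $HXH=Z$ and $HZH=X$, conjugation by such Hadamards sends an $X$--$Z$ measurement setting to another $X$--$Z$ setting, and—being unitary—preserves the spectrum of the verification operator; hence the maximum gap attainable by $X$--$Z$ measurements is the same for entangled CSS states as for entangled two-colorable graph states, namely $1/2$. The step requiring the most care is exactly this last one: I expect the main obstacle to be verifying that the Clifford relating CSS states to two-colorable graph states can be taken to consist solely of local Hadamards, so that the class of $X$--$Z$ protocols (and therefore the optimal gap) is genuinely preserved, rather than some more general local Clifford that would rotate $X$ and $Z$ into $Y$ and thereby leave the restricted measurement set.
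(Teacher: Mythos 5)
Your treatment of the two-colorable graph-state case is correct and is essentially the paper's route: the paper simply cites the coloring protocol of \cite{ZhuH2019E} together with Corollary~\ref{cor:optimalXZ}, whereas you unfold that protocol explicitly via \eqref{eq:TestProjInd} and \eqref{eq:GraphDiagElementInd}; the disjointness argument for the two test vectors is exactly right and gives $\beta(\Omega)=1/2$.

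The CSS half is where you diverge from the paper, and it is where your proof has a real gap --- one you yourself flag but do not close. You reduce CSS states to two-colorable graph states via local Hadamards, which requires the structural claim that the local Clifford relating them can be taken to be Hadamards on one color class (possibly composed with local Paulis, which also preserve the $X$--$Z$ measurement class); a generic local Clifford would rotate $X$ into $Y$ and destroy the transfer. As written, this claim is asserted and then admitted to be unverified, so the proof is incomplete. The claim is in fact true and provable: every element $\pm\X^u\Zb^v$ of a CSS stabilizer group with $u\in C_X$, $v\in C_Z=C_X^\perp$ has $u\cdot v=0$, i.e.\ an even number of $Y$ factors, and this parity is Hadamard-invariant; it is precisely the obstruction to the adjacency matrix acquiring a nonzero diagonal (which would force phase gates rather than Hadamards), and choosing $\overline{T}$ to be an information set of $C_X$ makes the transformed $\rmx$-block invertible while forcing $A$ to vanish on $T\times T$ and $\overline{T}\times\overline{T}$. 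But none of this is needed: the paper handles CSS states directly by taking the two settings $XX\cdots X$ and $ZZ\cdots Z$ with probability $1/2$ each, whose local subgroups contain the $X$-type and $Z$-type parts of the stabilizer group respectively and hence jointly generate it, so the two commuting canonical test projectors intersect only on the target and $\beta(\Omega)=1/2$; the matching upper bound comes from Theorem~\ref{thm:optimal}. That direct protocol is both simpler and avoids the Clifford bookkeeping entirely, so I would either adopt it or supply the Hadamard argument sketched above before relying on the transfer.
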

\begin{proof}
First, consider an entangled graph state. According to Corollary~\ref{cor:optimalXZ}, the spectral gap of any verification operator based on $X$ and $Z$ measurements is upper bounded by $1/2$. If the graph is two-colorable, then the bound $1/2$ can be saturated by a coloring protocol proposed in \cite{ZhuH2019E}, so the maximum spectral gap achievable by $X$ and $Z$ measurements is $1/2$.

Next, consider  an entangled CSS state. According to Theorem~\ref{thm:optimal},   the spectral gap of any verification operator based on $X$ and $Z$ measurements is also upper bounded by $1/2$. Meanwhile, the bound can be saturated by the protocol composed of the two measurement settings $XX\cdots X$ and $ZZ\cdots Z$ chosen with an equal probability. This result is consistent with the fact that any CSS state is equivalent to a graph state associated with a two-colorable graph, and vice versa \cite{chen2007}.
\end{proof}

\subsection{\label{sec:AdmissibleXZ}Admissible test projectors based on $X$ and $Z$ measurements}

In contrast to the definitions in Sec.~\ref{sec:admissible}, there are two sensible definitions of admissible measurements and test projectors based
on $X$ and $Z$ measurements. Such a test projector (and corresponding measurement) is  admissible if there is no smaller test projectors based on  Pauli measurements. The test projector (and corresponding measurement) is weakly admissible if there is no smaller test projectors  based on $X$ and $Z$ measurements. Let $\eta_{XZ}$ be the number of admissible test projectors based
on $X$ and $Z$ measurements and $\eta_{XZ}'$  the number of weakly admissible test projectors. 
Obviously, an admissible test projector based on $X$ and $Z$ measurements is weakly admissible,  so  we have $\eta_{XZ} \le \eta_{XZ}'$. This inequality is usually strict since a  weakly admissible measurement is not necessarily admissible.
Actually,  $\eta_{XZ}$ and $\eta'_{XZ}$ are not invariant under LC and the equality $\eta_{XZ} = \eta_{XZ}'$ is too strong to expect.

As an example, let $U_n$ be an $n$-qubit Clifford unitary operator with $n\geq 3$ that interchanges $Y$ and $Z$ for  every qubit, and consider the state $\ket{\Psi} = U_n\ket{G}$, where $\ket{G}$ is the star-graph state as discussed in Sec.~\ref{sec:examples}.
Here, the measurement $X\cdots X$ yields a rank-4 test projector $P_{\mu}$ that is weakly admissible. However, this projector cannot be admissible according to the discussion in  Sec.~\ref{sec:examples} (cf. \cite{li2019_GHZ}). To be concrete, let 
$P_{\mu'}$  be the unique rank-2 admissible test projector based on the measurement $YX\cdots X$ (which would have been $ZX\cdots X$ before the unitary operator $U_n$ is applied), then we have $P_{\mu'} \le P_{\mu}$ and  $\tr(P_{\mu'}) < \tr(P_{\mu})$, so $P_\mu$ is not admissible.

For graph states, it turns out the two notions of admissibility coincide. The following proposition is proved in Appendix~\ref{app:coloring}.
\begin{proposition}\label{pro:eta_XZ}
Given  any graph state $|G\>$, a  test projector based on $X$ and $Z$ measurements  is admissible iff it is weakly admissible; $\eta_{XZ}(G) = \eta_{XZ}'(G)$.
\end{proposition}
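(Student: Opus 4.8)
The plan is to prove the nontrivial direction by contraposition and to reduce the whole question to a rank computation for the adjacency matrix of the subgraph induced on the set of $X$-measured qubits. The forward implication is free: every $X$ and $Z$ measurement is a Pauli measurement, so if no Pauli test projector lies strictly below $P_B$ then neither does any $X,Z$ test projector; hence admissible always implies weakly admissible (this is the inequality $\eta_{XZ}(G)\le\eta_{XZ}'(G)$ already noted), and equality will follow once the reverse implication is shown. So I would assume $P_B$ is not admissible and produce a strictly smaller $X,Z$ test projector.

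First I would extract a ``free'' qubit. By Corollary~\ref{cor:admissible}, non-admissibility means $P_B$ is realized by an incomplete Pauli measurement $\mu''$, trivial on some qubit $k$. Since equal projectors correspond to equal local subgroups, $\L_B=\L_{\mu''}=\S\cap\bar{\T}_{\mu''}$, and every element of $\bar{\T}_{\mu''}$ is the identity on qubit $k$; hence every stabilizer operator in $\L_B$ is trivial on $k$. Translating through Theorem~\ref{thm:TestProjGraph}, this says every $\bfy\in\ker(A_B)$ satisfies $y_k=0$ and $(A\bfy)_k=0$. To exploit this I would set up the induced-subgraph dictionary: writing $A_{BB}$ for the adjacency matrix of the subgraph induced on $B$, one checks from \eqref{eq:Amu} that $v\in\rspan(A_B)$ iff $v|_B\in\rspan(A_{BB})$, so restriction to $B$ carries $\ker(A_B)$ isomorphically onto $\ker(A_{BB})$ and $\rank(A_B)=n-|B|+\rank(A_{BB})$. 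Under this dictionary the condition above reads: if $k\in B$ then $e_k\in\rspan(A_{BB})$ (the constraint $(A\bfy)_k=0$ being automatic), while if $k\in\overline{B}$ then the neighbourhood vector $\mathbf{b}=A_{k,B}$ lies in $\rspan(A_{BB})$.

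Next I would flip qubit $k$, taking $B'=B\setminus\{k\}$ when $k\in B$ and $B'=B\cup\{k\}$ when $k\in\overline{B}$. Using the membership conditions just derived together with criterion (4) of Lemma~\ref{lem:admissibleGraph}, namely $\ker(A_{B'})\supseteq\ker(A_B)$, one verifies $P_{B'}\le P_B$. The hard part will be strictness, $\rank(A_{B'})<\rank(A_B)$: by the bookkeeping above this reduces to a rank statement for $A_{BB}$ under deletion or bordering of the row–column $k$, and the obstacle is that a symmetric border generically changes the rank by $0$ or $2$, so I must rule out the wrong value. The resolution is the zero-diagonal symmetry over $\Z_2$. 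When $k\in B$, choosing $\mathbf{c}$ with $A_{BB}\mathbf{c}=e_k$ gives $c_k=\mathbf{c}^TA_{BB}\mathbf{c}=0$, because $\mathbf{c}^TA_{BB}\mathbf{c}$ vanishes for any symmetric matrix with zero diagonal over $\Z_2$; this forces the solution off $k$ and pins the rank drop upon deleting row–column $k$ to exactly $2$. Dually, when $k\in\overline{B}$, writing $\mathbf{b}=A_{BB}\mathbf{z}$ gives $\mathbf{z}^T\mathbf{b}=\mathbf{z}^TA_{BB}\mathbf{z}=0$, so the bordering row is a combination of the others and the rank is unchanged; in both cases $\dim\ker(A_{B'})>\dim\ker(A_B)$.

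Putting the pieces together, $P_{B'}\lneq P_B$ exhibits a strictly smaller $X,Z$ test projector, so $P_B$ is not weakly admissible, which is the contrapositive of weakly admissible $\Rightarrow$ admissible. Combined with the trivial inclusion this yields that the two notions coincide and hence $\eta_{XZ}(G)=\eta_{XZ}'(G)$. I expect the entire argument to hinge on the single identity $\mathbf{c}^TA_{BB}\mathbf{c}=0$, which is exactly the structural feature (a symmetric zero-diagonal adjacency matrix) that a genuine graph state enjoys but the twisted state $U_n\ket{G}$ of the counterexample does not; this is what makes the $Y$-measurement gap between the two notions collapse for graph states.
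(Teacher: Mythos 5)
Your proof is correct, and it takes a genuinely different route from the paper's. The paper argues via the measurement-transformation rules for graph states: if an $X,Z$ test projector is inadmissible, Corollary~\ref{cor:admissible} lets it be realized by an incomplete $X,Z$ measurement, and the residual state on the unmeasured qubits is $U_{\LC}\ket{G'}$ where, crucially, $U_{\LC}$ only interchanges $X$ and $Z$; hence the optimal completion of the measurement can itself be chosen among $X$ and $Z$ measurements, yielding the required strictly smaller $X,Z$ test projector. You instead stay entirely inside the $\Z_2$ linear algebra of Theorem~\ref{thm:TestProjGraph}, reduce everything to the induced adjacency matrix $A_{BB}$, and flip the single free qubit $k$; the engine is the identity $\mathbf{c}^T M\mathbf{c}=0$ for a symmetric zero-diagonal $M$ over $\Z_2$. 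The paper's route buys brevity at the price of importing the local-Clifford bookkeeping of Hein \emph{et al.}; yours buys a self-contained, constructive argument that names the qubit to flip and isolates exactly the structural feature (zero diagonal of the adjacency matrix in the given local frame) whose failure explains the counterexample $U_n\ket{G}$ of Sec.~\ref{sec:AdmissibleXZ}. Two points deserve a line more than you give them. First, in the case $k\in B$ the assertion that deleting row and column $k$ drops $\rank(A_{BB})$ by exactly $2$ does not follow from $c_k=0$ alone: one should note that $A_{BB}\mathbf{c}=e_k$ with $c_k=0$ exhibits a vector of $\ker(A_{B'B'})$ on which the bordering row $\mathbf{a}=A_{B',k}$ evaluates to $1$, whence $\mathbf{a}\notin\rspan(A_{B'B'})$ and $\dim\ker(A_{BB})=\dim\ker(A_{B'B'})-1$, which pins the rank drop. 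Second, like the paper you should dispose of $X,Z$ test projectors arising from incomplete measurements; this is immediate, since any complete $X,Z$ extension either already strictly shrinks the projector or reproduces it and then falls under your main case.
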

\noindent The values of $\eta_{XZ}(G)$ for graph states up to seven qubits are shown in Table~\ref{tab:minimal-settings}.
These results suggest that  the number of admissible settings based on $X$ and $Z$ measurements grows only linearly with the number of qubits,
although  the total number of admissible  measurement settings grows exponentially  (cf. Table~\ref{tab:optimal}).

The following lemma clarifies all admissible test projectors that are based on independent sets; see Appendix~\ref{app:coloring} for a proof. This lemma provides further insight on the cover and coloring protocols proposed in \cite{ZhuH2019E}.
\begin{lemma}\label{lemma:max-ind-iff-admissible}
Suppose $B$ is an independent set of the graph $G=(V,E)$. Then
the test projector $P_B$ given  in \eqref{eq:TestProjInd} is admissible iff
$B$ is a maximal independent set.
\end{lemma}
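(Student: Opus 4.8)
The plan is to reduce the statement to a comparison among canonical test projectors coming from $X$ and $Z$ measurements only, and then to translate the operator ordering $P_{B'}\le P_B$ into a purely combinatorial condition via the kernel criterion of Lemma~\ref{lem:admissibleGraph}. Since $P_B$ is itself based on an $X,Z$ measurement ($X$ on $B$, $Z$ on $\overline{B}$), Proposition~\ref{pro:eta_XZ} says it is admissible iff it is weakly admissible, so throughout I only compare $P_B$ against subsets $B'\subseteq V$. The computation I would do first is $\ker(A_B)$ for independent $B$: writing $A_B=\diag(\mu^\rmz)+\diag(\mu^\rmx)A$ with $\mu^\rmx_j=1$ iff $j\in B$, the rows indexed by $\overline{B}$ force $y_j=0$ there, while the rows indexed by $B$ impose $\sum_{k\in N(j)}y_k=0$, which is vacuous because $B$ is independent. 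Hence
\begin{equation}
\ker(A_B)=\{\bfy\in\Z_2^n:\supp(\bfy)\subseteq B\},
\end{equation}
of dimension $|B|$, recovering $\tr(P_B)=2^{n-|B|}$.

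For the ``if'' direction, assume $B$ is a maximal independent set and suppose $P_{B'}\le P_B$. By Lemma~\ref{lem:admissibleGraph} this means $\ker(A_{B'})\supseteq\ker(A_B)$, so $e_k\in\ker(A_{B'})$ for every $k\in B$. Inspecting the $k$th column of $A_{B'}$, the condition $A_{B'}e_k=0$ first forces $k\in B'$: otherwise $k\in\overline{B'}$, the row of $A_{B'}$ indexed by $k$ equals $e_k^T$, and the $(k,k)$ entry is $1$. This gives $B\subseteq B'$ (consistent with Lemma~\ref{lem:TwoTestBB}). With $B\subseteq B'$, the remaining column entries are $A_{j,k}=[\,j\sim k\,]$ for $j\in B'$, so $A_{B'}e_k=0$ forces $N(k)\cap B'=\emptyset$ for all $k\in B$. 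In particular, every $c\in B'\setminus B$ satisfies $N(c)\cap B=\emptyset$, which contradicts the maximality of $B$ unless $B'\setminus B=\emptyset$. Thus $B'=B$, so $P_B$ has no strictly smaller $X,Z$ test projector below it; it is weakly admissible and hence admissible.

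For the ``only if'' direction I would argue contrapositively. If $B$ is independent but not maximal, choose $v\notin B$ with $N(v)\cap B=\emptyset$, so that $B\cup\{v\}$ is again independent. Since $B\subseteq B\cup\{v\}$ with both sets independent, Lemma~\ref{lem:TwoTestBB} yields $P_{B\cup\{v\}}\le P_B$, while Corollary~\ref{cor:TestProjXZ} gives $\tr(P_{B\cup\{v\}})=2^{n-|B|-1}<2^{n-|B|}=\tr(P_B)$. Hence $P_B$ dominates a strictly smaller test projector based on an $X,Z$ measurement and is therefore not admissible.

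I expect the main obstacle to lie in the ``if'' direction, namely making rigorous the passage from the algebraic containment $\ker(A_{B'})\supseteq\ker(A_B)$ to the combinatorial statement that no edge joins $B$ to $B'$; the symmetry of $A$ together with the explicit description of $\ker(A_B)$ above is what makes this step clean. The other delicate point is the initial reduction: admissibility is defined against \emph{all} Pauli measurements, and it is Proposition~\ref{pro:eta_XZ} that lets me restrict the whole comparison to $X$ and $Z$ measurements, so that invocation is essential rather than cosmetic.
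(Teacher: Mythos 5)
Your proof is correct and takes essentially the same route as the paper's: the non-maximal direction is identical, and the maximal direction likewise reduces admissibility to weak admissibility via Proposition~\ref{pro:eta_XZ}, establishes $B\subseteq B'$, and invokes maximality of $B$ to rule out any vertex of $B'\setminus B$. The only difference is cosmetic: you express the containment $P_{B'}\le P_B$ through $\ker(A_{B'})\supseteq\ker(A_B)$ via Lemma~\ref{lem:admissibleGraph}, whereas the paper argues with the local subgroups $\L_{B'}\supseteq\L_B=\langle\{S_j\mid j\in B\}\rangle$ and Lemma~\ref{lem:TwoTestBB}, and these descriptions are equivalent by Theorem~\ref{thm:TestProjGraph}.
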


\subsection{Optimal verification of ring cluster states}
A ring cluster state is the graph state associated with a ring graph, that is, a cycle. Here we show that the upper bound $1/2$ for the spectral gap in Corollary~\ref{cor:optimalXZ} can be saturated for all ring cluster states by constructing an optimal verification protocol using $X$ and $Z$ measurements only.

Let $|G\>$ be the $n$-qubit ring cluster state associated with the ring graph $G=(V,E)$, where $V=\{1,2,\ldots, n\}$. When $n$ is even, according to \cite{ZhuH2019E}, an optimal  protocol for verifying $|G\>$ can be constructed using two measurement settings associated with the two independent sets $B_1=\{1,3,\ldots,n-3, n-1\}$ and $B_2=\{2,4,\ldots,n-2, n\}$, respectively.

To construct an optimal protocol when $n$ is odd, we first introduce   $n+1$ subsets of $V$ defined as follows,
\begin{align}
&B_j=\{j,j+2,j,\ldots, j+n-3\},\quad j=1,2,\ldots,n,\\
&B_{n+1}=V=\{1,2,\dots, n\};
\end{align}
note that the first $n$ subsets are independent sets of $G$. Each set $B_j$ for $j=1,2,\ldots, n+1$ defines a canonical test by performing $X$ measurements on qubits in $B_j$ and $Z$ measurements on qubits in the complement $V\setminus B_j$ as described in Sec.~\ref{sec:XZprotocol}. Now an optimal protocol can be constructed by performing the  $n+1$ tests $P_{B_j}$   with probability $1/(n+1)$ each; the resulting verification operator reads
\begin{equation}
\Omega=\frac{1}{n+1}\sum_{j=1}^{n+1}P_{B_j}. 
\end{equation}

To corroborate our claim, note that the test projectors $P_{B_j}$ for $j=1,2,\ldots, n$ are determined by \eqref{eq:TestProjInd} and all have rank $2^{(n+1)/2}$; the corresponding test vectors are determined by \eqref{eq:GraphDiagElementInd}. To determine the test vector $\bm{a}_{B_{n+1}}$ associated with the  test projector $P_{B_{n+1}}$, note that $A_{B_{n+1}}=A$, so that $a_{B_{n+1},\bfw}=1$ iff $\bfw\in \rspan(A)$ according to Theorem~\ref{thm:TestProjGraph}. In addition,  $A$ has rank $n-1$, and  $\ker(A)$ is spanned by $\{1,1,\ldots, 1\}$, which implies that
\begin{align}\label{eq:TestVecBx}
a_{B_{n+1},\bfw}&=\<G_\bfw|P_{B_{n+1}}|G_\bfw\>=\begin{cases}
1 & \mbox{$\bfw$ has even weight},\\
0& \mbox{otherwise}.
\end{cases}
\end{align}
Therefore,
\begin{align}
&\beta(\Omega)=\frac{1}{n+1}\max_{\bfw\in \Z_2^n,\bfw\neq 0}\sum_{j=1}^{n+1} a_{B_j,\bfw}\nonumber\\
&=
\frac{1}{n+1}\max_{\bfw\in \Z_2^n,\bfw\neq 0}\bigl(
|\{j|\supp(\bfw)\in \overline{B}_j\}|+ \bm{a}_{B_{n+1},\bfw}\bigr)\nonumber\\
&=\frac{1}{2}.
\end{align}
Here the last inequality follows from \eqref{eq:TestVecBx} and the fact that $|\{j|\supp(\bfw)\in \overline{B}_j\}|\leq (n+1)/2$, and the inequality is saturated iff $\bfw$ has weight 1. 

\section{Verification with minimum number of settings}\label{sec:minimal-verification}

Besides the spectral gap, the number of distinct measurement settings is an important figure of merit in practice.  What is the  measurement complexity required to verify a stabilizer state? Here we summarize our understanding about this problem. Since every stabilizer state is equivalent to a graph state under local Clifford transformations, we can  focus on graph states. All statements proven in this section that are not specific to a particular graph apply to connected graphs as well as nonconnected ones. The relation between verification of a nonconnected graph state and verification of its connected components is clarified in Appendix~\ref{app:nonconnected}. 
 
\subsection{Local cover number}

Suppose  $G=(V,E)$ is  a graph with adjacency matrix $A$. Let $|G\>$ be  the graph state associated with the graph $G$ and the stabilizer group $\S_G$. To verify $|G\>$ based on Pauli measurements, we need to construct a verification operator $\Omega$ with positive spectral gap, that is, $\nu(\Omega)>0$ or, equivalently, $\beta(\Omega)<1$. The following lemma clarifies the necessary requirements. 
\begin{lemma}\label{lem:ValidProtocol}
	Suppose $\Omega=\sum_{\mu\in \scrM} p_\mu P_\mu^G $ with $\scrM\in (\Z_2^{2n})_\rmC$ with $p_\mu>0$ for all $\mu\in \scrM$. Then the following statements are equivalent: 
	\begin{enumerate}
		\item $\nu(\Omega)>0$;
		\item $\< \cup_{\mu\in \scrM}\L_\mu\>=\S_G$;
		\item $\Span(\cup_{\mu\in \scrM} \ker(A_\mu))=\Z_2^n$;
		\item $\cap_{\mu\in \scrM} \rspan(A_\mu)=0$;
		\item $\prod^\circ_{\mu\in \scrM}\bm{a}_{\mu}=0$. 
	\end{enumerate}
	Here $\prod^\circ_{\mu\in \scrM}\bm{a}_{\mu} = 0$ means that the element-wise product
	$\prod^\circ_{\mu\in \scrM}a_{\mu,\bfw} = \prod^\circ_{\mu\in \scrM}\av{G_{\bfw}|P_{\mu}|G_{\bfw}} =0$ for all $\bfw\in \Z_2^n$ with $\bfw\neq 0$. 
\end{lemma}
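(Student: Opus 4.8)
The backbone of the argument is that, by Theorem~\ref{thm:TestProjGraph}, every canonical test projector $P_\mu$ is diagonal in the graph basis $\{|G_\bfw\>\}_{\bfw\in\Z_2^n}$ with $0/1$ entries $a_{\mu,\bfw}=\<G_\bfw|P_\mu|G_\bfw\>=\mathbbm{1}[\bfw\in\rspan(A_\mu)]$. Hence the plan is to first note that $\Omega$ is itself diagonal in this basis, with top eigenvalue $\sum_\mu p_\mu=1$ at $\bfw=0$, and
\[
\beta(\Omega)=\max_{\bfw\neq 0}\sum_\mu p_\mu a_{\mu,\bfw}=\max_{\bfw\neq 0}\sum_{\mu:\,\bfw\in\rspan(A_\mu)}p_\mu.
\]
Since each $a_{\mu,\bfw}\in\{0,1\}$, all $p_\mu>0$, and $\sum_\mu p_\mu=1$, a diagonal entry at $\bfw\neq 0$ attains the value $1$ precisely when $\bfw\in\rspan(A_\mu)$ for \emph{every} $\mu\in\scrM$. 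Thus $\nu(\Omega)=1-\beta(\Omega)>0$ iff no nonzero $\bfw$ lies in $\cap_{\mu}\rspan(A_\mu)$, which is exactly $\cap_\mu\rspan(A_\mu)=0$. This reading yields (1)$\Leftrightarrow$(4) immediately, and (5) is just its pointwise reformulation, since $\prod^\circ_\mu a_{\mu,\bfw}=\mathbbm{1}[\bfw\in\cap_\mu\rspan(A_\mu)]$ vanishes for all $\bfw\neq 0$ iff (4) holds.

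Next I would establish (3)$\Leftrightarrow$(4) through the duality between row span and kernel over $\Z_2$. Under the standard dot product on $\Z_2^n$ one has $\rspan(A_\mu)^\perp=\ker(A_\mu)$, hence $\rspan(A_\mu)=\ker(A_\mu)^\perp$. Taking perpendiculars then converts an intersection of complements into the complement of a span: $\cap_\mu\rspan(A_\mu)=\cap_\mu\ker(A_\mu)^\perp=\bigl(\Span(\cup_\mu\ker(A_\mu))\bigr)^\perp$. Because the form is non-degenerate, $U^\perp=0$ iff $U=\Z_2^n$, so $\cap_\mu\rspan(A_\mu)=0$ iff $\Span(\cup_\mu\ker(A_\mu))=\Z_2^n$, which is (3).

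Finally, (2)$\Leftrightarrow$(3) follows from the group isomorphism $\bfy\mapsto S^\bfy$ between $(\Z_2^n,+)$ and $\S_G$. By Theorem~\ref{thm:TestProjGraph}, $\L_\mu=\{S^\bfy:\bfy\in\ker(A_\mu)\}$, so the subgroup generated by $\cup_\mu\L_\mu$ is the image under this isomorphism of the subspace $\Span(\cup_\mu\ker(A_\mu))$; this image is all of $\S_G$ iff that subspace is all of $\Z_2^n$. Chaining (2)$\Leftrightarrow$(3)$\Leftrightarrow$(4)$\Leftrightarrow$(1) together with (4)$\Leftrightarrow$(5) closes the equivalence.

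The spectral reading and the two restatements are routine bookkeeping; the only step demanding care is the $\Z_2$ duality in the second paragraph, and that is where I expect the main obstacle to lie. One must remember that the dot product on $\Z_2^n$, although it admits isotropic vectors and is not an inner product in the usual sense, is nonetheless a non-degenerate symmetric bilinear form, so that $(U^\perp)^\perp=U$ and $\dim U+\dim U^\perp=n$ still hold. This is the single place where a naive Euclidean intuition could mislead, and it is precisely the fact that ties the algebraic conditions (3) and (4) together; everything else is a direct consequence of Theorem~\ref{thm:TestProjGraph}.
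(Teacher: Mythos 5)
Your proof is correct and uses essentially the same ingredients as the paper's: the diagonal form of the $P_\mu$ from Theorem~\ref{thm:TestProjGraph}, the spectral formula \eqref{eq:betaTestVec}, the $\ker$/$\rspan$ duality over $\Z_2$, and the correspondence $\bfy\mapsto S^\bfy$. The only difference is the order of the chain (you link $1\Leftrightarrow 4$ directly via the spectral reading where the paper links $1\Leftrightarrow 2$ via the stabilizer-code-projector characterization of Lemma~\ref{lem:TestProjCanonical}), and you correctly flag the one nontrivial point, namely that $(U^\perp)^\perp=U$ survives the presence of isotropic vectors over $\Z_2$.
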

\begin{proof}
	The equivalence of statements 1 and 2 follows from the fact that $P_\mu$ is the stabilizer code projector associated with the local subgroup $\L_\mu$ according to Lemma~\ref{lem:TestProjCanonical}. The equivalence of statements 2 and 3 follows from \eqref{eq:LocalSubgroupGraph} in Theorem~\ref{thm:TestProjGraph}. The equivalence of statements 3 and 4 is a simple fact in linear algebra. The equivalence of statements 1 and 5 follows from \eqref{eq:betaTestVec}. 
\end{proof}

The \emph{local cover number}  $\tilde{\chi}(G)$ is defined as  the minimum number of Pauli measurement settings required to 
 verify $|G\>$. It  is equal to  the minimum cardinality of $\scrM$ with $\scrM\subset (\Z_2^{2n})_\rmC$ that satisfies one of the statements in Lemma~\ref{lem:ValidProtocol} and can be expressed as follows,
\begin{align}
\tilde{\chi}(G)&=\min\{|\scrM| \,|\, \<\cup_{\mu\in \scrM}\L_\mu\>=\S_G  \} \label{eq:LocalCoverNum}\\
&=\min\{|\scrM| \,|\, \Span(\cup_{\mu\in \scrM} \ker(A_\mu))=\Z_2^n  \}\\
&=\mathrm{min} \{|\scrM| \,|\, \cap_{\mu\in \scrM} \rspan(A_\mu)=0 \}\\
&=\mathrm{min} \{|\scrM| \,|\, \textstyle{\prod^\circ_{\mu\in \scrM}\bm{a}_{\mu}=0} \}.
\end{align}
Here the terminology is inspired by 
 \eqref{eq:LocalCoverNum} according to which $\tilde{\chi}(G)$ is equal to the minimum number of local stabilizer groups required to generate the stabilizer group of $|G\>$. The above equations can be applied to computing $\tilde{\chi}(G)$, although such algorithms are not efficient. In general,  we cannot expect to find an efficient algorithm in view of the connection (via Proposition~\ref{pro:LocalCoverNum} below) between $\tilde{\chi}(G)$ and the chromatic number $\chi(G)$, which is NP-hard to compute \cite{godsil2001}. 
The local cover number  $\tilde{\chi}_{XZ}(G)$ can be defined and computed in a similar way, except that only $X$ and $Z$ measurements are considered.

If one of the statements in Lemma~\ref{lem:ValidProtocol} holds, then  we have $\Span(\cup_{\mu\in \scrM} \ker(A_\mu))=\Z_2^n$, which implies that $2^n\leq \prod_{\mu\in \scrM} |\ker(A_\mu)|$, so that
\begin{align}
&n\leq \sum_{\mu\in \scrM}\dim(\ker(A_\mu))\leq |\scrM|\max_{\mu\in \scrM}\dim(\ker(A_\mu))\nonumber\\
&\leq |\scrM|\max_{\mu\in (\Z_2^n)_\rmC}[n-\rank(A_\mu)]=|\scrM|[n-\kappa(G)]. 
\end{align}
As a corollary, we have 
\begin{equation}\label{eq:LocalCoverNumLB}
\tilde{\chi}(G)\geq\frac{n}{n-\kappa(G)}.
\end{equation}

If  one of the five statements in  Lemma~\ref{lem:ValidProtocol} holds, and $p_\mu=1/m$ for all $\mu\in \scrM$, where $m=|\scrM|$,  then 
\begin{equation}
\beta(\Omega)=\frac{1}{m}\max_{\bfw\in \Z_2^n, \bfw\neq 0} a_{\mu,\bfw}\leq \frac{m-1}{m}
\end{equation}
according to \eqref{eq:betaTestVec}. Here the inequality follows from the fact that $a_{\mu,\bfw}=0$ for at least one $\mu\in \scrM$  for each $\bfw\in \Z_2^n$ with $\bfw\neq 0$. Therefore, $\nu(\Omega)\geq 1/m$. If $m=|\scrM|=\tilde{\chi}(G)$, then $\nu(\Omega)\leq 1/m$ according to Proposition~\ref{pro:SpectralGapMinSettings} given that the number of measurement settings cannot be reduced. These observations imply the following theorem, which clarifies the efficiency limit of verification protocols based on the minimum number of Pauli measurement settings.
\begin{theorem}\label{thm:SpectralMinMax}
	The maximum spectral gap of  verification operators of $|G\>$  based on $\tilde{\chi}(G)$ distinct Pauli measurements is $1/\tilde{\chi}(G)$.
\end{theorem}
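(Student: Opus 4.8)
The plan is to establish the two matching bounds that together pin the maximum at $1/\tilde{\chi}(G)$: an achievability bound showing that some $\tilde{\chi}(G)$-setting protocol reaches spectral gap $1/\tilde{\chi}(G)$, and a converse showing that no $\tilde{\chi}(G)$-setting protocol can exceed it. Both halves are essentially pre-assembled in the discussion preceding the statement, so the task is mainly to organize them cleanly and to verify the minimality hypothesis needed for the converse.

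For achievability, I would fix a set $\scrM$ of exactly $m:=\tilde{\chi}(G)$ complete Pauli measurements that suffices to verify $|G\>$; such a set exists by the definition of $\tilde{\chi}(G)$ and satisfies the equivalent conditions of Lemma~\ref{lem:ValidProtocol}. Choosing the uniform operator $\Omega=\tfrac{1}{m}\sum_{\mu\in\scrM}P_\mu$, I would invoke statement~5 of Lemma~\ref{lem:ValidProtocol}: for every nonzero $\bfw\in\Z_2^n$ the element-wise product $\prod^\circ_{\mu\in\scrM}a_{\mu,\bfw}$ vanishes, so at least one factor $a_{\mu,\bfw}$ is zero. Since each $a_{\mu,\bfw}\in\{0,1\}$ by Theorem~\ref{thm:TestProjGraph}, it follows that $\sum_{\mu\in\scrM}a_{\mu,\bfw}\le m-1$ for all $\bfw\neq 0$. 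Feeding this into \eqref{eq:betaTestVec} gives $\beta(\Omega)\le (m-1)/m$, and hence $\nu(\Omega)\ge 1/m=1/\tilde{\chi}(G)$.

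For the converse, I would argue that any verification operator built from $\tilde{\chi}(G)$ distinct Pauli measurements has $\nu\le 1/\tilde{\chi}(G)$. If the operator fails to verify, then $\nu=0$ and the bound is trivial; otherwise it verifies, and I claim its test operators form a minimal set in the sense of Proposition~\ref{pro:SpectralGapMinSettings}. Indeed, were some proper subset able to verify $|G\>$, that subset would verify with fewer than $\tilde{\chi}(G)$ settings, contradicting the minimality built into the definition of $\tilde{\chi}(G)$; equivalently, by Lemma~\ref{lem:ValidProtocol} the common pass eigenspace of any proper subset must have dimension greater than one. Proposition~\ref{pro:SpectralGapMinSettings} then yields $\nu\le 1/\tilde{\chi}(G)$. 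Combining the two bounds shows that the maximum spectral gap over all $\tilde{\chi}(G)$-setting protocols equals $1/\tilde{\chi}(G)$, attained by the uniform protocol constructed above. The only point demanding care—and the closest thing to an obstacle—is precisely this verification of the minimality hypothesis, which reduces to the equivalence of the conditions in Lemma~\ref{lem:ValidProtocol}; everything else is direct substitution.
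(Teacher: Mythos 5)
Your proposal is correct and follows essentially the same route as the paper: achievability via the uniform mixture of the $\tilde{\chi}(G)$ canonical test projectors together with statement~5 of Lemma~\ref{lem:ValidProtocol} and \eqref{eq:betaTestVec}, and the converse via Proposition~\ref{pro:SpectralGapMinSettings} applied to a minimal set of settings. Your explicit check of the minimality hypothesis (no proper subset can verify, by the definition of $\tilde{\chi}(G)$) is a point the paper leaves implicit, but it is the same argument.
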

\noindent Theorem~\ref{thm:SpectralMinMax} follows from  Proposition~\ref{pro:SpectralGapMinSettings} and the commutativity of canonical test projectors, so it applies to all graph states, irrespective whether the graph is connected or not. According to the coloring protocol proposed in \cite{ZhuH2019E}, by virtue of $\chi(G)$ settings based on $X$ and $Z$ measurements, we can achieve spectral gap $1/\chi(G)$. Theorem~\ref{thm:SpectralMinMax} may be seen as a generalization of this result.

\subsection{Connection to the chromatic number}
Here we discuss the connection between the local cover numbers $\tilde{\chi}(G), \tilde{\chi}_{XZ}(G)$ and the chromatic number $\chi(G)$. Note that $\tilde{\chi}(G)$ is invariant under LC  of the graph $G$ (corresponding to LC of the graph state $|G\>$), but this is not the case for $\tilde{\chi}_{XZ}(G)$ and $\chi(G)$. To remedy this defect,  define $\tilde{\chi}_2(G)$ as the minimum number of  settings required to verify $|G\>$ when each party can perform only two different Pauli measurements. Note that each party needs to perform at least two different Pauli measurements to verify any graph state of a connected graph with two or more vertices \cite{ZhuH2019E,ZhuH2019AdvL}. 
Define $\chi_\LC(G)$ as the minimum chromatic number of any graph that is equivalent to $G$ under LC, that is, 
\begin{equation}
\chi_\LC(G)=\min_{G'\overset{\LC}{\simeq} G}\chi(G'),
\end{equation}
where the symbol $\overset{\LC}{\simeq}$ means equivalence under LC.

\begin{proposition}\label{pro:LocalCoverNum}
$\tilde{\chi}(G)\leq \tilde{\chi}_2(G)\leq \chi_\LC(G)\leq \chi(G)$ for any graph $G$. 
\end{proposition}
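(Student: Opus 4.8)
The plan is to verify the chain from its two ends inward. The outer inequality $\chi_\LC(G)\le\chi(G)$ is immediate from the definition of $\chi_\LC$: since $G$ is LC-equivalent to itself, the minimum of $\chi(G')$ over the LC class of $G$ is no larger than its value at $G$ itself. The left inequality $\tilde{\chi}(G)\le\tilde{\chi}_2(G)$ is likewise essentially definitional. A protocol admissible under the two-Pauli-per-party restriction is in particular a collection of complete Pauli measurements (each party performs one of its two allowed nontrivial settings, so every setting is complete), hence a feasible collection in the optimization defining $\tilde{\chi}(G)$ via Lemma~\ref{lem:ValidProtocol}. Since $\tilde{\chi}(G)$ minimizes the number of settings over a larger feasible set, $\tilde{\chi}(G)\le\tilde{\chi}_2(G)$.

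The substance is the middle inequality $\tilde{\chi}_2(G)\le\chi_\LC(G)$, which I would prove in two steps. First I would establish that $\tilde{\chi}_2$ is an LC invariant. If $|G'\>=U|G\>$ with $U=\bigotimes_j U_j$ a local Clifford unitary, then conjugation by $U$ carries any verification operator $\Omega=\sum_\mu p_\mu P_\mu$ for $|G\>$ to the verification operator $U\Omega U\dgg$ for $|G'\>$, sending each canonical test projector $P_\mu$ for $|G\>$ to the canonical test projector of the image Pauli measurement for $|G'\>$, of the same rank. Because each $U_j$ normalizes the Pauli group, it maps the (at most) two Pauli settings used by party $j$ to (at most) two Pauli settings, so both the two-Pauli-per-party constraint and the total number of settings are preserved. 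Hence $\tilde{\chi}_2(G)=\tilde{\chi}_2(G')$ for every $G'\overset{\LC}{\simeq}G$.

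Second, I would exhibit an explicit two-Pauli-per-party protocol with $\chi(G')$ settings on the LC representative $G'$ achieving $\chi(G')=\chi_\LC(G)$. Take a proper coloring of $G'$ with color classes $B_1,\dots,B_{\chi(G')}$, which partition $V$ into independent sets. For each class assign the complete Pauli measurement performing $X$ on the qubits of $B_c$ and $Z$ on the rest; then party $j$ measures $X$ when $j\in B_c$ and $Z$ otherwise, using only two Pauli types. By \eqref{eq:LocalSubgroupInd} the local subgroup of the $c$-th setting is $\L_{B_c}=\langle\{S_j : j\in B_c\}\rangle$, and since the classes cover $V$ we obtain $\langle\cup_c \L_{B_c}\rangle=\langle S_1,\dots,S_n\rangle=\S_{G'}$. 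By the equivalence of statements~1 and~2 in Lemma~\ref{lem:ValidProtocol} this protocol has positive spectral gap, so it is a valid verification protocol with two Pauli settings per party. Therefore $\tilde{\chi}_2(G')\le\chi(G')$, and combining with the invariance gives $\tilde{\chi}_2(G)=\tilde{\chi}_2(G')\le\chi(G')=\chi_\LC(G)$.

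The main obstacle is the first step of the middle inequality: making the LC-invariance argument airtight. One must check not merely that a local Clifford permutes single-qubit Pauli operators, but that it maps a complete Pauli measurement with a prescribed two-element menu on each qubit to another measurement of the same type, and that the canonical-test-projector structure underlying the validity criteria of Lemma~\ref{lem:ValidProtocol} is preserved under conjugation. The remaining pieces—the coloring construction and the two trivial inequalities—are routine once the invariance and the validity criterion are in hand.
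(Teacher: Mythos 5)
Your proposal is correct and follows essentially the same route as the paper: the first and third inequalities are definitional, and the middle one is obtained by passing to an LC representative $G'$ with $\chi(G')=\chi_\LC(G)$ and running the coloring protocol there, using the LC-invariance of $\tilde{\chi}_2$. The only difference is one of presentation — the paper simply cites the coloring protocol of Zhu and Hayashi, whereas you reconstruct it from \eqref{eq:LocalSubgroupInd} and Lemma~\ref{lem:ValidProtocol} and spell out the LC-invariance argument that the paper leaves implicit.
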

\begin{proof}
Here the first and third inequalities follow from the definitions. To prove the second inequality, let $G'$ be a graph that is equivalent to $G$ under LC and satisfies $\chi(G')=\chi_{\LC}(G)$. According to \cite{ZhuH2019E}, $|G'\>$ can be verified by a coloring protocol composed of  $\chi(G')$ distinct settings based on $X$ and $Z$ measurements. Therefore,
\begin{equation}
\tilde{\chi}_2(G)=\tilde{\chi}_2(G')\leq \tilde{\chi}_{XZ}(G')\leq \chi(G')=\chi_{\LC}(G),
\end{equation} 
which confirms the second inequality in Proposition~\ref{pro:LocalCoverNum}. 
\end{proof}

\begin{restatable}{conjecture}{conjChi}\label{con:chi}
	$\tilde{\chi}(G)= \tilde{\chi}_2(G)= \chi_\LC(G)$ for any graph $G$. 
\end{restatable}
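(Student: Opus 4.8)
The plan is to collapse the chain $\tilde{\chi}(G)\le\tilde{\chi}_2(G)\le\chi_\LC(G)$ of Proposition~\ref{pro:LocalCoverNum} by establishing the single reverse inequality $\chi_\LC(G)\le\tilde{\chi}(G)$; since $\tilde{\chi}$ is the smallest and $\chi_\LC$ the largest quantity in the chain, this forces all three to coincide. The central tool is the action of local complementation on Pauli measurements. Local complementation at a vertex $v$ is implemented on $|G\>$ by the single-qubit Clifford $U_v=\sqrt{-\rmi X_v}\bigotimes_{w\in N(v)}\sqrt{\rmi Z_w}$, which (up to signs that merely relabel outcomes) swaps $Y\leftrightarrow Z$ on $v$, swaps $X\leftrightarrow Y$ on every neighbor $w\in N(v)$, and fixes all other qubits. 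Consequently $U_v$ carries any complete Pauli measurement $\mu$ of $|G\>$ to a complete Pauli measurement of $|G\ast v\>$ with an identical local-subgroup structure (the same $\ker(A_\mu)$ up to the induced relabeling), so every verification protocol of $|G\>$ is transported to one of $|G\ast v\>$ with the same number of settings. This re-derives the $\LC$ invariance of $\tilde{\chi}$ and, more importantly, lets us move a putative minimal cover freely through the $\LC$ class.

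First I would isolate the $X$ and $Z$ case as a clean intermediate claim: $\tilde{\chi}_{XZ}(H)=\chi(H)$ for every graph $H$. The inequality $\tilde{\chi}_{XZ}(H)\le\chi(H)$ is the coloring protocol of \cite{ZhuH2019E}, in which the $\chi(H)$ color classes are independent sets $B_i$ whose kernels $\ker(A_{B_i})=\Z_2^{B_i}$ (Corollary~\ref{cor:TestProjXZ}) span $\Z_2^n$ because they partition $V$. For the reverse inequality I would use Lemma~\ref{lem:ValidProtocol}: an $X$ and $Z$ protocol based on subsets $B_1,\dots,B_m$ verifies $|H\>$ iff $\Span(\bigcup_i\ker(A_{B_i}))=\Z_2^n$. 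Since each $\ker(A_{B_i})$ is supported on $B_i$ and, by Lemma~\ref{lemma:max-ind-iff-admissible}, the only admissible $X$ and $Z$ test projectors arise from maximal independent sets, the goal is to show that the span condition can always be met by $m$ independent-set measurements, whence the $B_i$ form an independent-set cover of $V$ and $\chi(H)\le m$. From this intermediate claim the conjecture would follow in two steps: showing $\tilde{\chi}_2(G)\le\tilde{\chi}(G)$ by canonicalizing a minimal Pauli cover into one using only two Pauli types per qubit via a sequence of local complementations; and, given an optimal two-Pauli protocol with $m=\tilde{\chi}_2(G)$ settings, selecting for each qubit the single-qubit Clifford taking its two allowed Paulis to $\{X,Z\}$, which turns the protocol into an $X$ and $Z$ protocol for a stabilizer state in the $\LC$ class of $|G\>$, so that if that state is a graph state $|G'\>$ the claim yields $\chi_\LC(G)\le\chi(G')=\tilde{\chi}_{XZ}(G')\le m$.

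The hardest parts are twofold. First, the lower bound $\tilde{\chi}_{XZ}(H)\ge\chi(H)$ is subtle because $\ker(A_{B_i})$ can contain high-weight vectors that are not sums of independent-set kernels: already for $H=K_3$ the all-$X$ measurement contributes the weight-three vector $(1,1,1)$, which lies outside every single independent-set kernel yet still belongs to the span. One must therefore rule out that such ``entangled'' covers ever beat the coloring count, a linear-algebraic statement over $\Z_2$ that I do not see how to reduce to a dimension count, since the only automatic bound $\sum_i\dim\ker(A_{B_i})\ge n$ is far too weak. Second, the reduction to the $XZ$ case is delicate: a single local complementation acts on all $m$ settings at once, so one needs a single $\LC$ sequence simultaneously canonicalizing every setting, and the per-qubit Clifford used in the second step generally produces a stabilizer state that is only $\LC$-equivalent to, rather than equal to, a graph state, so the intermediate claim does not apply verbatim. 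Closing this gap, presumably by characterizing exactly when the reduced state is a genuine graph state or by arguing entirely within the local-complementation group, is where I expect the real work to lie; the computational verification up to seven qubits recorded in Table~\ref{tab:optimal} and Table~\ref{tab:minimal-settings} is consistent with all of these steps.
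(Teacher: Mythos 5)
This statement is stated in the paper as a \emph{conjecture}: the authors prove only the one-directional chain $\tilde{\chi}(G)\leq\tilde{\chi}_2(G)\leq\chi_\LC(G)$ (Proposition~\ref{pro:LocalCoverNum}), verify equality computationally for all connected graphs up to seven vertices, and establish it rigorously only in special cases (Theorem~\ref{thm:chi2} for $\chi_\LC(G)=2$ and Proposition~\ref{pro:chiOddRing} for odd rings). So there is no paper proof to match your attempt against, and your attempt itself is not a proof: you explicitly flag the two hard steps as open. More importantly, your central intermediate claim --- that $\tilde{\chi}_{XZ}(H)=\chi(H)$ for \emph{every} graph $H$ --- is false, and the paper itself supplies the counterexample. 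In Appendix~\ref{app:general-graphs} and Fig.~\ref{fig:strict_eq}, the six-vertex graph No.~94 has $\tilde{\chi}_2(G)=\chi_\LC(G)=2$, $\tilde{\chi}_{XZ}(G)=4$, and $\chi(G)=5$, so both inequalities $\tilde{\chi}_{XZ}(G)\leq\chi(G)$ and $\tilde{\chi}_2(G)\leq\tilde{\chi}_{XZ}(G)$ can be strict. Since $\tilde{\chi}_{XZ}$ and $\chi$ are not LC-invariant, no amount of local complementation will let you route the general conjecture through an identity between them.

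A secondary error feeds this: you read Lemma~\ref{lemma:max-ind-iff-admissible} as saying that the only admissible $X$ and $Z$ test projectors arise from maximal independent sets. The lemma only characterizes admissibility \emph{within} the family of independent-set projectors; admissible $X$ and $Z$ projectors from non-independent sets exist and are essential --- the paper's own optimal protocol for the odd ring cluster state uses the all-$X$ test $P_{B_{n+1}}$ with $B_{n+1}=V$, and the four-setting $XZ$ protocol for graph No.~94 in Fig.~\ref{fig:strict_eq} cannot be an independent-set cover (that would force $\chi(G)\leq 4<5$). This is exactly the ``entangled cover'' obstruction you identify for $K_3$, and it does defeat the proposed reduction rather than being a technicality to be cleaned up later. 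The genuinely open content of the conjecture --- showing that a minimal Pauli cover can always be canonicalized, via a single LC sequence acting on all settings simultaneously, into one using two Pauli types per qubit, and that this bounds $\chi_\LC(G)$ from above by $\tilde{\chi}(G)$ --- remains untouched by your argument, as you acknowledge.
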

\noindent We have verified Conjecture~\ref{con:chi} for all connected graphs  up to seven vertices (graph states up to seven qubits). Actually we have 
\begin{equation}\label{eq:chi-equality}
\tilde{\chi}(G)=\tilde{\chi}_2(G)= \tilde{\chi}_{XZ}(G)= \chi(G)=\chi_{\LC}(G)
\end{equation} 
for all the graphs shown in Table~\ref{tab:minimal-settings} and all nonconnected graphs built from these graphs thanks to Proposition~\ref{prop:nonconnected-cover} in Appendix~\ref{app:nonconnected}.
(This result does not mean that \eqref{eq:chi-equality} holds  for all connected graphs up to seven vertices since $\tilde{\chi}_{XZ}(G)$ and $\chi(G)$ are not invariant under LC; see Appendix~\ref{app:general-graphs} for more detail.) 
Therefore, for all such graphs, 
the maximum spectral gap of  verification operators  based on the minimum number of settings is $1/\chi(G)$ according to Theorem~\ref{thm:SpectralMinMax}.
Incidentally, all protocols with the  minimum number of settings in Table~\ref{tab:minimal-settings} are chosen to be coloring protocols.
Table~\ref{tab:minimal-settings} in addition contains the fractional chromatic number $\chi^*(G)$ for all the graphs listed. The inverse fractional chromatic number $1/\chi^*(G)$ is the maximum spectral gap achievable by the cover protocol proposed in \cite{ZhuH2019E}.

\begin{proposition}\label{pro:chi1}
	$\tilde{\chi}(G)=1$ iff $G$ is an empty  graph (with no edges). 
\end{proposition}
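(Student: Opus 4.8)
The plan is to reduce the claim to a single-matrix condition by means of Lemma~\ref{lem:ValidProtocol} and then to settle it by an elementary inspection of the matrix $A_\mu$. By definition $\tilde{\chi}(G)=1$ means that a single complete Pauli measurement verifies $|G\>$, i.e.\ that there exists $\mu\in(\Z_2^{2n})_\rmC$ for which the singleton $\scrM=\{\mu\}$ satisfies one of the equivalent conditions of Lemma~\ref{lem:ValidProtocol}. Taking statement~3 (or, equivalently, statement~4) with $\scrM=\{\mu\}$, and recalling that $\ker(A_\mu)$ and $\rspan(A_\mu)$ are already subspaces, this reduces to $\ker(A_\mu)=\Z_2^n$, i.e.\ to $A_\mu=0$. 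So the proposition becomes the assertion that there is a complete $\mu$ with $A_\mu=0$ if and only if $A=0$.

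For the easy direction I would suppose $G$ is empty, so $A=0$, and exhibit a witness. Choosing the measurement $X\cdots X$, namely $\mu^\rmx=\mathbf{1}$ and $\mu^\rmz=0$ (which is complete), gives $A_\mu=\diag(\mu^\rmz)+\diag(\mu^\rmx)A=0$ by \eqref{eq:Amu}; hence $\scrM=\{\mu\}$ satisfies Lemma~\ref{lem:ValidProtocol} and $\tilde{\chi}(G)=1$.

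For the converse I would read off $A=0$ from $A_\mu=0$ entrywise. The $(j,k)$ entry of $A_\mu=\diag(\mu^\rmz)+\diag(\mu^\rmx)A$ is $\mu^\rmz_j\delta_{jk}+\mu^\rmx_jA_{jk}$. Since $A_{jj}=0$, the diagonal entries are just $\mu^\rmz_j$, so $A_\mu=0$ forces $\mu^\rmz=0$; completeness of $\mu$, i.e.\ $(\mu^\rmx_j,\mu^\rmz_j)\neq(0,0)$ for every $j$, then forces $\mu^\rmx_j=1$ for all $j$. The off-diagonal entries are thereby $\mu^\rmx_jA_{jk}=A_{jk}$, so $A_\mu=0$ yields $A_{jk}=0$ for all $j\neq k$; that is, $A=0$ and $G$ is empty.

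The argument is short and I do not anticipate a genuine obstacle; the only point needing care is to invoke completeness at the right moment. It is precisely the constraint $(\mu^\rmx_j,\mu^\rmz_j)\neq(0,0)$ that upgrades ``$\mu^\rmz=0$'' to ``$\mu=X\cdots X$'', and without it (for instance allowing $\mu=0$) the matrix $A_\mu$ could vanish even for nonempty graphs. I would therefore flag explicitly that restricting to $(\Z_2^{2n})_\rmC$ — already built into $\tilde{\chi}(G)$ through Lemma~\ref{lem:ValidProtocol} — is exactly what singles out the $X\cdots X$ measurement as the only candidate and pins down $A=0$.
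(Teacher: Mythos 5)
Your proof is correct, and it takes a genuinely different route from the paper's. The paper argues at the level of states: if $G$ is empty then $|G\>=|+\>^{\otimes n}$ is verified by the single setting $X\cdots X$, and conversely a state verifiable by one Pauli setting must be a tensor product of local Pauli eigenstates, hence the graph is empty (with a pointer to external results for an alternative justification). You instead stay entirely inside the paper's $A_\mu$ formalism: you reduce $\tilde{\chi}(G)=1$ via Lemma~\ref{lem:ValidProtocol} to the existence of a complete $\mu$ with $A_\mu=0$, and then read off $\mu^\rmz=0$ from the diagonal (using $A_{jj}=0$), invoke completeness to get $\mu^\rmx=\mathbf{1}$, and conclude $A=0$ from the off-diagonal entries. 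What your approach buys is a self-contained, fully explicit converse — the paper's assertion that the state ``must be a tensor product of eigenstates of local Pauli operators'' is left to the reader or outsourced to citations, whereas your entrywise computation substantiates it directly and makes visible exactly where completeness of the measurement is needed. What the paper's version buys is brevity and a conceptual picture (rank-one canonical test projector $\Leftrightarrow$ product of Pauli eigenstates) that generalizes beyond the graph-state formalism. Both are valid; your reduction to $\ker(A_\mu)=\Z_2^n$ for a singleton $\scrM$ is exactly what statements 3 and 4 of Lemma~\ref{lem:ValidProtocol} give, and the restriction to complete measurements is already built into the paper's definition of $\tilde{\chi}(G)$, so no generality is lost there.
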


\begin{proof}
	If $G$ is empty, then $|G\>$ is a  product state of the form $|+\>^{\otimes n}$, which can be verified by  performing $X$ measurements on all qubits, so 	$\tilde{\chi}(G)=1$. Conversely, if $|G\>$ can be verified by a single setting based on a Pauli measurement, then $|G\>$ must be a tensor product of eigenstates of local Pauli operators, so $G$ must be an empty graph. Alternatively, Proposition~\ref{pro:chi1} follows from Lemma~3 in \cite{ZhuH2019O}  and Proposition~3 in \cite{ZhuH2019AdvL}.
\end{proof}

As an implication of Propositions~\ref{pro:LocalCoverNum} and \ref{pro:chi1}, we have 	$\tilde{\chi}(G)=2$ for any nonempty two-colorable graph. The following theorem provides a partial converse and confirms Conjecture~\ref{con:chi} in a special case of practical interest. See Appendix~\ref{app:thm:chi2} for a proof.
\begin{theorem}\label{thm:chi2}
	$\tilde{\chi}(G)=2$ iff $\chi_{\LC}(G)=2$. A stabilizer state can be verified by two settings based on Pauli measurements iff it is equivalent to a CSS state or, equivalently, a graph state of a two-colorable graph. 
\end{theorem}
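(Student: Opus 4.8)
The plan is to reduce both stated equivalences to the single nontrivial implication $\tilde{\chi}(G)=2 \Rightarrow \chi_{\LC}(G)=2$, since everything else follows from results already established. For the converse $\chi_{\LC}(G)=2 \Rightarrow \tilde{\chi}(G)=2$, I would note that Proposition~\ref{pro:LocalCoverNum} gives $\tilde{\chi}(G)\le \chi_{\LC}(G)=2$, while $\chi_{\LC}(G)=2$ forces $G$ to be nonempty (an empty graph has $\chi_{\LC}=1$), so $\tilde{\chi}(G)\ge 2$ by Proposition~\ref{pro:chi1}; hence $\tilde{\chi}(G)=2$. For the second sentence I would combine the equivalence $\tilde{\chi}(G)=2\Leftrightarrow\chi_{\LC}(G)=2$ with the fact \cite{chen2007} that CSS states are exactly those LC-equivalent to two-colorable graph states: the condition $\chi_{\LC}(G)=2$ says precisely that $|G\>$ is LC-equivalent to a graph state of a nonempty two-colorable graph, equivalently to an (entangled) CSS state.

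For the hard implication the strategy is to \emph{canonicalize} a two-setting protocol by single-qubit Clifford operations. Since $\tilde{\chi}(G)=2\neq 1$, the graph $G$ is nonempty by Proposition~\ref{pro:chi1}. I would pick two complete Pauli measurements $\mu,\nu\in(\Z_2^{2n})_\rmC$ witnessing $\tilde{\chi}(G)=2$, so that $\langle \L_\mu\cup\L_\nu\rangle=\S_G$ by Lemma~\ref{lem:ValidProtocol}. Assuming first that $G$ is connected with at least two vertices, I would invoke the established fact \cite{ZhuH2019E,ZhuH2019AdvL} that every party must perform at least two distinct Pauli measurements in any valid protocol. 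With only the two settings $\mu,\nu$ available, this forces the single-qubit Pauli measured on qubit $j$ in the two settings to differ, $P_j^\mu\neq P_j^\nu$, for every $j$. Distinct single-qubit Paulis anticommute, so for each $j$ there is a single-qubit Clifford $U_j$ with $U_jP_j^\mu U_j^\dagger=X$ and $U_jP_j^\nu U_j^\dagger=Z$.

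Setting $U=\bigotimes_j U_j$, the state $U|G\>$ is LC-equivalent to $|G\>$, and its stabilizer group $\S'=U\S_G U^\dagger$ is generated by $U\L_\mu U^\dagger$ and $U\L_\nu U^\dagger$. By construction these are exactly the pure-$X$ and pure-$Z$ parts of $\S'$, namely $\S'\cap\bar{\T}_{X\cdots X}$ and $\S'\cap\bar{\T}_{Z\cdots Z}$. Hence $\S'$ is generated by pure-$X$ and pure-$Z$ Pauli operators, which is precisely the defining property of a CSS stabilizer group; thus $U|G\>$ is a CSS state and $|G\>$ is LC-equivalent to a CSS state. By \cite{chen2007}, $G$ is then LC-equivalent to a two-colorable graph, and since $G$ is nonempty we conclude $\chi_{\LC}(G)=2$, which settles the connected case.

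To treat a general (possibly disconnected) graph I would restrict a two-setting verification operator to each connected component: by Proposition~\ref{pro:ReducedVeriO} together with the tensor-factor discussion, the reduced operator verifies that component using at most two settings, so each component $G_i$ satisfies $\tilde{\chi}(G_i)\le 2$ and is therefore two-colorable by the connected case (single-vertex components being trivially two-colorable with $\tilde{\chi}=1$). Consequently $G$ is two-colorable and $\chi_{\LC}(G)=2$. \textbf{The main obstacle} is exactly the step guaranteeing $P_j^\mu\neq P_j^\nu$ on every relevant qubit: without it one cannot simultaneously rotate both settings into the canonical CSS form $X\cdots X$ and $Z\cdots Z$, and the CSS conclusion fails. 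This is why the ``at least two Paulis per party'' result, together with the reduction to connected components that isolates the harmless single-vertex factors (where the two settings may legitimately coincide), is the crux of the argument.
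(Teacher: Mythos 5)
Your proposal is correct and follows essentially the same route as the paper's proof in the appendix: both directions are reduced to the same ingredients (Propositions~\ref{pro:LocalCoverNum} and \ref{pro:chi1} for the easy direction; the ``at least two distinct Paulis per party'' result of \cite{ZhuH2019AdvL} plus a local Clifford canonicalization of the two settings to $X\cdots X$ and $Z\cdots Z$, the CSS characterization of \cite{chen2007}, and a reduction to connected components for the hard direction). The only cosmetic difference is that you phrase the canonicalized local subgroups as the pure-$X$ and pure-$Z$ parts of the rotated stabilizer group, whereas the paper only needs that they are contained in them; this does not affect the argument.
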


The following proposition determines the local cover numbers of odd ring graphs, which 
 are typical examples of graphs that are not two-colorable. 
\begin{proposition}\label{pro:chiOddRing}
Suppose $G$ is an odd ring graph with at least five vertices; then
	\begin{equation}
	\tilde{\chi}(G)=\tilde{\chi}_2(G)=\tilde{\chi}_{XZ}(G)= \chi_\LC(G)= \chi(G)=3. 
	\end{equation} 	
\end{proposition}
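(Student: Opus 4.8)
The plan is to establish the single lower bound $\tilde{\chi}(G)\ge 3$; everything else follows from the chain in Proposition~\ref{pro:LocalCoverNum}. Indeed, an odd cycle is not bipartite but is $3$-colorable, so $\chi(G)=3$, whence $\tilde{\chi}(G)\le \tilde{\chi}_2(G)\le \chi_{\LC}(G)\le \chi(G)=3$. Since $X$ and $Z$ measurements are one admissible choice of two settings per party, $\tilde{\chi}_2(G)\le \tilde{\chi}_{XZ}(G)$, while the coloring protocol of \cite{ZhuH2019E} applied to a proper $3$-coloring of the ring gives $\tilde{\chi}_{XZ}(G)\le \chi(G)=3$. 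Thus all five quantities lie in $\{2,3\}$, and once $\tilde{\chi}(G)\ge 3$ is shown, $\tilde{\chi}(G)=3$ forces $\tilde{\chi}_2(G)=\chi_{\LC}(G)=3$, and then $3=\tilde{\chi}_2(G)\le\tilde{\chi}_{XZ}(G)\le 3$ forces $\tilde{\chi}_{XZ}(G)=3$ as well.

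To prove $\tilde{\chi}(G)\ge 3$ I would use the rank bound \eqref{eq:LocalCoverNumLB}, $\tilde{\chi}(G)\ge n/[n-\kappa(G)]$. It suffices to show $\kappa(G)>n/2$: then $n-\kappa(G)<n/2$ gives $\tilde{\chi}(G)>2$, forcing the integer $\tilde{\chi}(G)$ to be at least $3$. Concretely I will prove that for every complete Pauli measurement $\mu$ the matrix $A_\mu$ of \eqref{eq:Amu} satisfies $\dim\ker(A_\mu)\le (n-1)/2$, i.e. $\rank(A_\mu)\ge (n+1)/2$; minimizing over $\mu$ yields $\kappa(G)\ge (n+1)/2$, and together with Lemma~\ref{lem:kappaUB} (giving $\kappa(G)\le n-\alpha(G)=(n+1)/2$) this pins down $\kappa(G)=(n+1)/2$, so in particular $\tilde{\chi}(G)\ge 2n/(n-1)>2$.

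The core estimate $\dim\ker(A_\mu)\le (n-1)/2$ comes from reading $\ker(A_\mu)$ as the solution space of local $\mathbb{Z}_2$-constraints on the cycle: writing $\mathbf y\in\ker(A_\mu)$ as a vertex function, the constraint at vertex $k$ is $y_k=0$ if $Z$ is measured there, $y_{k-1}+y_{k+1}=0$ if $X$, and $y_{k-1}+y_k+y_{k+1}=0$ if $Y$. The $Z$-vertices pin their values to $0$ and cut the ring into arcs of consecutive $X/Y$-vertices flanked by zeros. On each arc the $X$ and $Y$ constraints each solve for $y_{k+1}$ in terms of $(y_{k-1},y_k)$, so the arc is determined by the single free value at its first vertex, and the zero boundary condition at its far end imposes one further linear equation; hence each arc contributes at most one dimension. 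As the arcs are separated by $Z$-vertices their number is at most $\lfloor n/2\rfloor=(n-1)/2$, and since distinct arcs involve disjoint variables the contributions add, giving $\dim\ker(A_\mu)\le (n-1)/2$.

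The one case needing separate treatment — and the main obstacle — is a measurement with \emph{no} $Z$-vertex, where the arc decomposition degenerates into a single cyclic recurrence. There I would compose the per-vertex transfer matrices $\bigl(\begin{smallmatrix}0&1\\1&0\end{smallmatrix}\bigr)$ (for $X$) and $\bigl(\begin{smallmatrix}0&1\\1&1\end{smallmatrix}\bigr)$ (for $Y$) around the ring to get $\dim\ker(A_\mu)=\dim\ker(\mathcal{T}+I)\le 2$, where $\mathcal{T}$ is the ordered product. The inequality $2\le (n-1)/2$ holds precisely for $n\ge 5$, which is exactly the hypothesis; this is also what excludes the triangle $C_3$, whose graph state is $\LC$-equivalent to the bipartite path and indeed has $\tilde{\chi}=2$. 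The remaining work is routine bookkeeping: checking that completeness of $\mu$ is consistent with the constraint list, that the arc contributions are genuinely independent, and that the arc count never exceeds $(n-1)/2$. As a byproduct, $\kappa(G)=(n+1)/2$ together with Theorem~\ref{thm:chi2} reproves that odd rings with $n\ge 5$ are not $\LC$-equivalent to any two-colorable graph, i.e. $\chi_{\LC}(G)=3$.
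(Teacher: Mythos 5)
Your proof is correct, and its skeleton coincides with the paper's: both get the upper bound from Proposition~\ref{pro:LocalCoverNum} together with $\chi(G)=3$, and both get the lower bound $\tilde{\chi}(G)\geq 3$ from \eqref{eq:LocalCoverNumLB} combined with $\kappa(G)=(n+1)/2$. Where you genuinely diverge is in how that value of $\kappa(G)$ is established. The paper (Lemma~\ref{lem:LambdaRing}, proved in Appendix~\ref{app:lem:LambdaRing}) goes through the geometric measure of entanglement: it invokes $\Lambda_\rmP(G)=2^{-\kappa(G)}$ from Lemma~\ref{lem:LambdaKappaGraph} and bounds $\Lambda_\rmP(G)\leq 2^{-(n+1)/2}$ by peeling off one qubit --- if the optimal product state has a $Z$ or $Y$ eigenstate as a tensor factor, the residual state is an $(n-1)$-qubit linear or ring cluster state whose $\Lambda$ is taken from Markham, Miyake and Virmani, and the all-$X$ case is handled separately via Lemma~\ref{lem:GraphFidelity}. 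You instead bound $\rank(A_\mu)\geq(n+1)/2$ directly over $\Z_2$: the kernel constraints decompose into arcs cut by the $Z$-measured vertices, each arc contributing at most one dimension with at most $\lfloor n/2\rfloor=(n-1)/2$ arcs, and the $Z$-free case is settled by a cyclic transfer-matrix product giving kernel dimension at most $2$, which is where the hypothesis $n\geq 5$ enters as $2\leq(n-1)/2$ (and which correctly excludes the triangle). Your route is more elementary and self-contained --- it needs neither the entanglement measure nor the external input on cluster-state entanglement --- while the paper's route additionally yields the value of $\Lambda_\rmP(G)$ itself (for even rings as well), which is of independent interest. Your bookkeeping among the five quantities, including $\tilde{\chi}_2(G)\leq\tilde{\chi}_{XZ}(G)\leq\chi(G)$ via the coloring protocol, correctly fills in what the paper leaves implicit.
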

\noindent To prove Proposition~\ref{pro:chiOddRing}, note that $\tilde{\chi}(G)\leq \chi(G)=3$ for the odd ring graph thanks to Proposition~\ref{pro:LocalCoverNum}. Conversely, $\tilde{\chi}(G)\geq 3$ according to \eqref{eq:LocalCoverNumLB} and the following lemma, which is proved in Appendix~\ref{app:lem:LambdaRing}.

\begin{lemma}\label{lem:LambdaRing}
	Suppose $G$ is a ring graph with $n\geq 4$ vertices. Then 
	\begin{align}
	\Lambda_\rmP(G)=2^{-\lfloor(n+1)/2\rfloor}, \label{eq:LambdaRing}\\
	\kappa(G)=\lfloor(n+1)/2\rfloor. \label{eq:kappaRing}
	\end{align}
\end{lemma}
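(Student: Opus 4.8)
The plan is to prove the single identity $\kappa(G)=\lfloor(n+1)/2\rfloor$; the formula for $\Lambda_\rmP(G)$ then follows immediately from Lemma~\ref{lem:LambdaKappaGraph}, which gives $\Lambda_\rmP(G)=2^{-\kappa(G)}$. Throughout I label the vertices $1,\dots,n$ cyclically, so vertex $j$ is adjacent exactly to $j\pm1 \pmod n$; hence row $j$ of $A_\mu=\diag(\mu^\rmz)+\diag(\mu^\rmx)A$ equals $e_{j-1}+e_{j+1}$, $e_j$, or $e_{j-1}+e_j+e_{j+1}$ according to whether the $j$th Pauli measurement is $X$, $Z$, or $Y$. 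Since $\kappa(G)=\min_\mu\rank(A_\mu)$ over complete $\mu$, I must prove matching upper and lower bounds on this rank.

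\textbf{Upper bound.} This is immediate. The independence number of the cycle is $\alpha(C_n)=\lfloor n/2\rfloor$, so Lemma~\ref{lem:kappaUB} gives $\kappa(G)\le n-\alpha(G)=n-\lfloor n/2\rfloor=\lfloor(n+1)/2\rfloor$ at once; concretely this is the measurement applying $X$ on a maximum independent set and $Z$ elsewhere.

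\textbf{Lower bound (the crux).} I must show $\rank(A_\mu)\ge\lceil n/2\rceil$, equivalently $\dim\ker(A_\mu)\le\lfloor n/2\rfloor$, for every complete $\mu$. Writing $\bfy\in\ker(A_\mu)$, the constraints are $y_{j-1}+y_{j+1}=0$ at an $X$ vertex, $y_j=0$ at a $Z$ vertex, and $y_{j-1}+y_j+y_{j+1}=0$ at a $Y$ vertex. The key observation is that at any $X$ or $Y$ vertex the value $y_{j+1}$ is \emph{determined} by $y_{j-1}$ and $y_j$ (namely $y_{j+1}=y_{j-1}$, respectively $y_{j+1}=y_{j-1}+y_j$), whereas a $Z$ vertex merely pins $y_j=0$ and decouples its two sides. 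Let $m$ be the number of $Z$ vertices. If $m\ge1$, these vertices cut the ring into $m$ arcs of consecutive $X/Y$ vertices whose two endpoints are forced to $0$. I will argue that each arc contributes at most one free parameter, and only if it is nonempty: fixing the first interior variable of an arc determines all the others by the propagation rule above, and the vanishing condition at the far $Z$ endpoint is a single linear constraint, so a nonempty arc carries at most one degree of freedom while an empty arc (two adjacent $Z$ vertices) carries none. Hence $\dim\ker(A_\mu)$ is at most the number $k$ of nonempty arcs. Since there are $m$ arcs in total and the nonempty ones contain pairwise disjoint interior vertices totaling $n-m$, we get $k\le\min(m,n-m)\le\lfloor n/2\rfloor$. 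The remaining case $m=0$ (no $Z$ measurement) I will treat by a transfer-matrix closure: the state $(y_{j-1},y_j)$ evolves by an invertible $2\times2$ matrix over $\Z_2$ at each vertex, and the cyclic boundary condition forces $\dim\ker(A_\mu)=\dim\ker(T-I)\le2\le\lfloor n/2\rfloor$ for $n\ge4$.

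Combining the two bounds yields $\kappa(G)=\lfloor(n+1)/2\rfloor$, and then \eqref{eq:LambdaRing} follows from \eqref{eq:kappaRing} via Lemma~\ref{lem:LambdaKappaGraph}. The main obstacle is the lower bound, and within it the careful bookkeeping that each $Z$-delimited arc yields at most one free parameter. I expect the cleanest exposition to phrase the per-arc degree of freedom through the deterministic propagation from one anchored endpoint and a single closing constraint at the other, with the counting $k\le\min(m,n-m)$ supplying $\lfloor n/2\rfloor$. A minor point to verify is that empty arcs are correctly counted as contributing nothing and that the $m=0$ closure is consistent with $n\ge4$.
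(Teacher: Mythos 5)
Your proof is correct, but it takes a genuinely different route from the paper's. The paper proves \eqref{eq:LambdaRing} first and deduces \eqref{eq:kappaRing} from Lemma~\ref{lem:LambdaKappaGraph}: for even $n$ it simply cites \cite{markham2007}, and for odd $n$ it bounds $|\<\varphi|G\>|^2$ by peeling off one qubit, using the single-qubit contraction rules of \cite{hein2004} to identify the residual $(n-1)$-qubit state as a linear or ring cluster state whose geometric measure $2^{-(n-1)/2}$ is again quoted from \cite{markham2007} (with a separate case, handled via Lemma~\ref{lem:GraphFidelity}, when every tensor factor is an $X$ eigenstate). You instead go the other way, computing $\kappa(G)=\min_\mu\rank(A_\mu)$ directly over $\Z_2$: the upper bound is the same independent-set measurement as in Lemma~\ref{lem:kappaUB}, and your lower bound via the arc decomposition is sound --- the kernel splits as a direct sum over the $Z$-delimited arcs, each nonempty arc carries at most one degree of freedom by the forward propagation from its anchored endpoint, the count $\min(m,n-m)\le\lfloor n/2\rfloor$ closes the case $m\ge1$, and the transfer-matrix closure $\dim\ker(T-\Id)\le 2\le\lfloor n/2\rfloor$ handles $m=0$ precisely where the hypothesis $n\ge4$ is needed (and the lemma indeed fails for $n=3$). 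What your approach buys is a self-contained, parity-uniform argument that never invokes the external entanglement computations of \cite{markham2007}; what the paper's approach buys is brevity given those cited results and a reduction technique (qubit contraction) that generalizes to other graph families.
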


\section{Summary and open problems}\label{sec:summary}

We have investigated  systematically optimal verification of stabilizer states (including graph states in particular) using Pauli measurements. We proved that the spectral gap of any verification operator of any entangled stabilizer state based on separable measurements (including Pauli measurements) is bounded from above by $2/3$.  Moreover, we introduced the concepts of canonical test projectors, admissible Pauli measurements, and admissible test projectors and clarified their properties. By virtue of these concepts, we proposed a simple algorithm for constructing optimal verification protocols based on (nonadaptive) Pauli measurements. Although this algorithm is not efficient for large systems, it enables us to construct an optimal protocol for any stabilizer state up to ten qubits without difficulty. In particular, our calculation shows that the bound $2/3$ for the spectral gap can be saturated for all entangled stabilizer states up to seven qubits. It is quite surprising that the maximum spectral gap seems to be independent of the specific stabilizer state, although different stabilizer states may have very different entanglement structures. 
In the case of graph states,  we also prove that the upper bound for the spectral gap is reduced to $1/2$ if only $X$ and $Z$ measurements are accessible. Again, this bound can be saturated for all entangled graph states up to seven qubits. These results naturally lead to the following conjectures.
\begin{conjecture}\label{con:gapPauli}
For any entangled stabilizer state, the maximum spectral gap of  verification operators based on Pauli measurements is $2/3$. 
\end{conjecture}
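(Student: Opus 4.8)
The statement has two halves: that the spectral gap never exceeds $2/3$, and that it is actually attained by some Pauli protocol. The first half is already Lemma~\ref{lemma:optimal}, and Theorem~\ref{thm:optimal} sharpens it into the necessary condition that any optimal $\Omega$ measures $X,Y,Z$ with probability $1/3$ on every qubit; in particular $\beta(\Omega)\ge 1/3$ for every Pauli verification operator of a graph with no isolated vertex. Thus the entire content of Conjecture~\ref{con:gapPauli} is the \emph{achievability} of $\beta(\Omega)=1/3$. The plan is first to reduce to connected graph states: $\nu$ is an LC invariant, every stabilizer state is LC-equivalent to a graph state, and $\nu(G)=\min_j\nu(G_j)$ over connected components by Proposition~\ref{prop:nonconnected-cover}, so it suffices to build, for each connected $G$, a Pauli operator with $\beta=1/3$.

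Next I would recast achievability as a combinatorial covering claim. By Theorem~\ref{thm:TestProjGraph} and \eqref{eq:betaTestVec} every admissible $\Omega=\sum_\mu p_\mu P_\mu$ is diagonal in the graph basis with eigenvalue vector $\bm\lambda=\caA\bm p$, where $a_{\mu,\bfw}=1$ exactly when $\bfw\in\rspan(A_\mu)$. Writing the orthogonal graph-basis states as the $Z$-type errors $|G_\bfw\>=Z^\bfw|G\>$, the entry $a_{\mu,\bfw}=1$ says precisely that $Z^\bfw$ commutes with every stabilizer of the local subgroup $\L_\mu$, i.e. that the error $\bfw$ \emph{passes} the measurement $\mu$. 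Achievability is therefore equivalent to exhibiting a probability distribution $\bm\pi$ over complete (admissible) Pauli measurements under which every nonzero error $\bfw$ is detected with probability at least $2/3$, i.e. $\caA\bm\pi\le\tfrac13\mathbf 1$ entrywise; combined with $\beta\ge1/3$ this forces $\beta=1/3$. The most symmetric target is detection probability exactly $2/3$ for every $\bfw$, which is the same as realizing $\Omega=\tfrac23|G\>\<G|+\tfrac13\openone$ --- precisely the operator produced for the star/GHZ graph in Sec.~\ref{sec:examples} --- so the conjecture amounts to producing this flat spectrum for an arbitrary connected graph.

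To construct $\bm\pi$ I would pursue three complementary routes. (i) \emph{Symmetrization}: average a single well-chosen seed test over a symmetry group fixing $|G\>$ (graph automorphisms together with the local-complementation orbit) so that the induced action on the error labels $\bfw$ renders the detection probability constant, exactly as permutation symmetry does for the GHZ state. (ii) \emph{Independent-set covers}: combine the coloring and cover protocols of \cite{ZhuH2019E} built from maximal independent sets (Lemma~\ref{lemma:max-ind-iff-admissible}) with higher-rank admissible tests such as the $Y$-pattern settings, balancing the detection weight across all error classes as in the five-qubit ring example of Sec.~\ref{sec:examples}. (iii) \emph{Linear-programming duality}: the minimax dual of \eqref{algo:max-spectral-gap} reads $\beta^\star=\max_{\bm q}\min_\mu \Pr_{\bfw\sim\bm q}[\bfw\in\rspan(A_\mu)]$ with $\bm q$ ranging over distributions on nonzero errors, so it suffices to prove that for \emph{every} error distribution $\bm q$ some admissible test detects at least $2/3$ of its weight; one would attack this by contradiction, extracting structure from a hypothetical hard $\bm q$ and contradicting the maximal single-qubit entanglement forced by connectivity.

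The hard part will be uniformity across all connected graphs, since each route is graph-dependent at its core. Route (i) collapses whenever $G$ has trivial automorphism group; route (ii) must avoid the trivial full-rank tests and requires control of the higher-rank admissible kernels $\ker(A_\mu)$, for which no closed form valid for every $G$ is known; and route (iii) needs a handle on $\min_\mu$ over an exponentially large, $G$-specific test family. There is no evident single argument delivering the flat spectrum $\tfrac23|G\>\<G|+\tfrac13\openone$ for all $G$ simultaneously, and the local-complementation moves preserving $\nu$ do not reduce a general connected graph to any tractable normal form. This is why at present only the computer-assisted check of Algorithm~\ref{algo:graph-verify} up to seven qubits is available; closing the gap --- most plausibly by proving the minimax detection bound of route (iii) from first principles --- is the crux.
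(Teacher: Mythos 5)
You have not proven this statement, but neither does the paper: it is stated as Conjecture~\ref{con:gapPauli} precisely because only the upper bound and computational evidence up to seven qubits are available. Your writeup faithfully reproduces everything the paper actually establishes --- the bound $\nu_\sep(G)\le 2/3$ from Lemma~\ref{lemma:optimal}, the necessary condition $p_j^X=p_j^Y=p_j^Z=1/3$ from Theorem~\ref{thm:optimal}, the reduction to connected graph states via LC-equivalence and Proposition~\ref{prop:nonconnected-cover}, and the recasting of achievability as the linear program \eqref{algo:max-spectral-gap} over admissible test vectors --- and you are honest that none of your three routes closes the achievability half in general. That assessment matches the paper's own: the explicit constructions (star graph, rings, the table up to seven qubits) are graph-specific, and no uniform argument is known. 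So there is no gap relative to the paper; there is simply no proof on either side.

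One correction worth making: you write that the conjecture ``amounts to producing the flat spectrum $\Omega=\tfrac23|G\>\<G|+\tfrac13\openone$.'' That operator is sufficient but not necessary for $\nu(\Omega)=2/3$; what is required is only that every nonzero graph-basis eigenvalue be at most $1/3$. The paper's own optimal protocol for the five-qubit ring is a counterexample to necessity: it uses six rank-$8$ projectors with weight $1/6$ each, so $\tr(\Omega)=8$, whereas the flat operator would have trace $34/3$. Hence the correct target for your route (iii) is the one-sided condition $\caA\bm{p}\le\tfrac13\mathbf{1}$, not equality, and insisting on the flat form would make the symmetrization route (i) strictly harder than it needs to be. With that caveat, your identification of the minimax dual --- that for every distribution $\bm{q}$ on nonzero errors some admissible setting passes at most $1/3$ of its weight --- is indeed the cleanest formulation of what remains open.
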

\begin{conjecture}\label{con:gapXZ}
	For any graph state of a nonempty graph, the maximum spectral gap of  verification operators based on $X$ and $Z$ measurements is $1/2$. 
\end{conjecture}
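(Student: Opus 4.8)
The upper bound $\nu(\Omega)\le 1/2$ is already supplied by Corollary~\ref{cor:optimalXZ}, so the entire content of the conjecture is the \emph{achievability}: for every nonempty graph $G$ one must exhibit a verification operator $\Omega=\sum_{B\subseteq V}p_B P_B$ built from $X$ and $Z$ measurements whose spectral gap equals $1/2$. By \eqref{eq:betaTestVec} this amounts to finding a probability vector $(p_B)$ with $\sum_{B}p_B a_{B,\bfw}\le 1/2$ for every nonzero $\bfw\in\Z_2^n$, where $a_{B,\bfw}=\<G_\bfw|P_B|G_\bfw\>$ equals $1$ precisely when $\bfw\in\rspan(A_B)$. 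Writing $\mathcal D$ for the distribution $(p_B)$ over tests, the task is equivalently to arrange that \emph{every} nonzero error pattern $\bfw$ is detected (that is, $\bfw\notin\rspan(A_B)$) with probability at least $1/2$ under $\mathcal D$. This is a fractional-covering feasibility problem, and my plan is to establish its solvability for all $G$.

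First I would dispose of the two-colorable case, already handled by the proposition proved above for CSS states: the two color classes $B_1,B_2$ are independent sets with $B_1\cup B_2=V$, and by \eqref{eq:GraphDiagElementInd} a nonzero $\bfw$ is passed by an independent-set test $B$ only when $\supp(\bfw)\cap B=\varnothing$; since $\supp(\bfw)$ meets at least one color class, $p_{B_1}=p_{B_2}=1/2$ yields $\beta=1/2$. The genuinely new regime is $\chi_{\LC}(G)\ge3$, where the cover protocol of \cite{ZhuH2019E} based on independent sets alone attains only the gap $1/\chi^*(G)\le 1/2$, with equality iff $G$ is bipartite. To push the gap up to $1/2$ one must supplement the independent-set tests with higher-rank tests associated with non-independent subsets $B$, whose test vectors are governed by $\ker(A_B)$ through \eqref{eq:GraphDiagElement2} rather than by the simple rule \eqref{eq:GraphDiagElementInd}. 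The template is the construction for odd ring cluster states given above, where the $n$ shifted maximum independent sets are augmented by the single complete test $B=V$; by \eqref{eq:TestVecBx} the latter detects exactly the odd-weight patterns left uncovered by the independent-set tests, and the uniform weight $1/(n+1)$ balances all constraints to $1/2$. The same phenomenon occurs for $K_3$, where the three singletons together with $B=V$, each taken with probability $1/4$, already give $\beta=1/2$.

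For the general case I would recast achievability through linear-programming duality. Viewing the problem as a zero-sum game between a verifier choosing a test $B$ and an adversary choosing a distribution $\bm q$ over nonzero patterns, the minimax theorem shows that $\min_{\mathcal D}\max_{\bfw\neq0}\sum_B p_B a_{B,\bfw}\le 1/2$ holds iff for every $\bm q$ there is a single $X$/$Z$ test $B$ with $\Pr_{\bfw\sim\bm q}[\bfw\notin\rspan(A_B)]\ge 1/2$. I would then try to produce a fixed distribution $\mathcal D$ over tests for which $\Pr_{B\sim\mathcal D}[\bfw\in\rspan(A_B)]\le 1/2$ holds for \emph{every} fixed nonzero $\bfw$ simultaneously; such a $\mathcal D$ is at once the desired optimal protocol. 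The analysis would split on the Hamming weight of $\bfw$: the weight-one patterns $e_j$ furnish the binding constraints, and by \eqref{eq:GraphDiagElementInd} an independent-set test detects $e_j$ exactly when $j\in B$, so the independent-set part of $\mathcal D$ must cover each vertex with total probability $1/2$ (cf.\ Corollary~\ref{cor:TestProjXZ}); the higher-weight patterns must then be shown to be detected with probability at least $1/2$ once the complete test $B=V$ and, where needed, intermediate non-independent tests are added.

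The main obstacle is precisely this last point: producing a single construction, valid for \emph{all} graphs, that simultaneously meets the weight-one constraints (which force each vertex to be covered with probability exactly $1/2$) and certifies detection probability $\ge1/2$ for every higher-weight pattern. The difficulty is that the detection set $\{\bfw:\bfw\notin\rspan(A_B)\}$ of a non-independent test depends on $\ker(A_B)$ in a way that is neither monotone nor local in $B$, so there is no obvious greedy or coloring-based recipe as in the bipartite case. I expect that either an averaging argument over a cleverly correlated random subset $B$ (each vertex included with marginal probability $1/2$, the correlations chosen to kill odd-cycle obstructions) or an explicit LP-feasibility certificate built from a fractional coloring plus the complete test will close the gap; the computer-verified saturation of $1/2$ for all connected graphs up to seven vertices (Table~\ref{tab:minimal-settings} and Appendix~\ref{app:general-graphs}) strongly suggests that such a uniform construction exists.
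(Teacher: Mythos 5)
Your proposal cannot be compared against a proof in the paper, because the paper has none: the statement you were asked to prove is Conjecture~\ref{con:gapXZ}, which the authors explicitly leave open. What the paper supplies is exactly the partial evidence you reproduce: the upper bound $\nu(\Omega)\leq 1/2$ (Corollary~\ref{cor:optimalXZ}), saturation for two-colorable graphs and CSS states via the coloring protocol, saturation for ring cluster states via the explicit $(n+1)$-setting protocol of Sec.~\ref{sec:XZ-verification}, and computer verification for all connected graphs up to seven vertices (Table~\ref{tab:minimal-settings} and Appendix~\ref{app:general-graphs}). Those portions of your argument are correct and essentially identical to the paper's. The obstacle you candidly name at the end --- a uniform construction meeting all constraints for an arbitrary graph --- is the genuine gap, and it is precisely the reason the statement is a conjecture rather than a theorem. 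So as a proof, your proposal is incomplete; as a survey of what is actually known, it matches the paper.

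One concrete error in your plan is worth flagging, because it would derail the general construction even if the other pieces were in place. You assert that, since an independent-set test detects $e_j$ exactly when $j\in B$, ``the independent-set part of $\mathcal{D}$ must cover each vertex with total probability $1/2$.'' This does not follow, and it contradicts the very ring template you invoke: there, non-independent tests also detect weight-one patterns. By \eqref{eq:TestVecBx}, the complete test $B_{n+1}=V$ on the odd ring satisfies $a_{B_{n+1},\bfw}=0$ for every odd-weight $\bfw$, in particular for every $e_j$; in the paper's protocol each vertex is covered by independent sets with probability only $\bigl[(n-1)/2\bigr]/(n+1)<1/2$, and the complete test supplies the missing $1/(n+1)$ to reach detection probability exactly $1/2$. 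Any correct general argument must therefore account for the detection of low-weight patterns by non-independent tests, governed by $\ker(A_B)$ via \eqref{eq:GraphDiagElement2}, rather than imposing the vertex-coverage constraint on independent sets alone. This interplay --- which sets of error patterns a non-independent test can be made to detect --- is exactly the non-monotone, non-local structure you identify as the difficulty, and neither your minimax reformulation nor the averaging heuristic resolves it.
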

Conjecture~\ref{con:gapPauli}  holds for GHZ states according to \cite{li2019_GHZ}. For a graph state associated with the graph $G$, this conjecture amounts to the following equality
\begin{equation}
\nu(G)=\frac{2}{3},
\end{equation}
where  $\nu(G)$ denotes the maximum spectral gap of verification operators for $|G\>$ that are based on  Pauli measurements.
When the local dimension is an odd prime $p$ instead of 2, we believe that the maximum spectral gap of  verification operators based on Pauli measurements is $p/(p+1)$, which  holds for GHZ states according to \cite{li2019_GHZ}.
 Conjecture~\ref{con:gapXZ} holds for graph states associated with two-colorable graphs and  ring graphs according to Sec.~\ref{sec:XZ-verification}.

In addition, we studied the problem of verifying graph states with the minimum number of settings. For any given graph state $|G\>$, it turns out that the minimum  number of settings  required $\tilde{\chi}(G)$ is upper bounded by the chromatic number $\chi_\LC(G)$ minimized over LC equivalent graphs. In addition, the upper bound still applies even if each party can perform only two types of Pauli measurements, so we have   $\tilde{\chi}(G)\leq \tilde{\chi}_2(G)\leq \chi_\LC(G)$ (cf. Proposition~\ref{pro:LocalCoverNum}). Actually, the two inequalities are saturated for all two-colorable graphs, all graphs up to seven vertices, and ring (or cycle) graphs (cf. Sec.~\ref{sec:minimal-verification} and Table~\ref{tab:minimal-settings}). These facts lead to the following conjecture originally stated in Sec.~\ref{sec:minimal-verification}.
\conjChi*

In the future, it would be desirable to prove or disprove the above conjectures. In either case, we may gain further insight on quantum state verification and stabilizer states themselves. The number of admissible Pauli measurements (or $X$ and $Z$ measurements) and its scaling behavior with the number of qubits are also of interest from the theoretical perspective. In practice, it is desirable to find more efficient approaches for constructing optimal verification protocols and protocols with the minimum number of measurement settings. Furthermore, our study leads to the following question, which is of interest beyond the immediate focus of this work: What are the generic and maximum values of $\tilde{\chi}(G)$, $ \tilde{\chi}_2(G)$, and $\chi_\LC(G)$, respectively, for graphs of $n$ vertices.

\section*{Acknowledgments}

We thank Zihao Li for stimulating discussion.
This work is supported by  the National Natural Science Foundation of China (Grant No.~11875110) and  Shanghai Municipal Science and Technology Major Project (Grant No.~2019SHZDZX01).

\appendix

\begin{appendices}

\section{Proof of Lemma~\ref{lem:StabFidelity}}\label{app:lemma2}
\begin{proof}[Proof of Lemma~\ref{lem:StabFidelity}]
	Let $W$ and $W'$ be the isotropic subspaces associated with $S$ and $S'$, respectively. Then $W'=\{M'\bfy\,|\,\bfy\in \Z_2^n\}$ and 
	\begin{align}
W\cap W'= W^\perp\cap W'
	=\{M'\bfy\,|\,\bfy\in \Z_2^n, M^TJ M'\bfy =0\}. 
	\end{align}
	Therefore,
	\begin{align}
	&|\bar{\S}\cap\S'|=|W\cap W'|\nonumber\\
	&=|\{M'\bfy\,|\,\bfy\in \Z_2^n, M^TJ M'\bfy=0 \}|=|\ker(M^T JM')|\nonumber\\
	&=2^{n-\rank(M^T JM')},
	\end{align}
	which confirms the first two equalities in  \eqref{eq:StabIntersect}. The last equality in \eqref{eq:StabIntersect} follows from the following equality
	\begin{align}
	M^T JM'&=M_\rmz^TM_\rmx'+M_\rmx^TM_\rmz'. 
	\end{align}
	Equation \eqref{eq:StabFidelity3} follows from \eqref{eq:StabFidelity2} and \eqref{eq:StabIntersect}.
\end{proof}

\section{Proofs of Theorem~\ref{thm:TestProjStab} and Theorem~\ref{thm:TestProjGraph}}\label{app:thm1_2}
\begin{proof}[Proof of Theorem~\ref{thm:TestProjStab}]
	Let $V_\mu$ be the isotropic subspace associated with $\T_\mu$. Then the column span of $M_\mu = (\diag(\mu^\rmx);\diag(\mu^\rmz))$, where the semicolon denotes the vertical concatenation, coincides with $V_\mu$. ($M_\mu$ is a basis matrix for $V_\mu$ when the Pauli measurement is complete). Let $N_\mu$ be the $n\times n$ diagonal matrix over $\Z_2$ such that $(N_\mu)_{jj}=1$ iff $j\in \scrU_2$. Then by construction, the column span of the block matrix
	\begin{equation}
	M_\mu^{\perp}=\begin{pmatrix}
	\diag(\mu^\rmx) & N_\mu & 0\\
	\diag(\mu^\rmz) & 0& N_\mu
	\end{pmatrix}
	\end{equation}
	coincides with $V_\mu^{\perp}$, the symplectic complement of $V_\mu$.
	(Note that the first $n$ columns of	$M_\mu^{\perp}$ coincide with $M_\mu$.) 
	Therefore, 
	\begin{equation}
	V_\mu=(V_\mu^{\perp})^{\perp}=\ker((M_\mu^{\perp})^T J),
	\end{equation}
	where $J$ is the symplectic form \eqref{eq:SymplecticForm}.

	Let $V_\S$ be the Lagrangian subspace associated with the stabilizer group $\S$. Then $V_\S=\{M_\S \bfy |\bfy\in \Z_2^n\}$, where $M_S$ is the basis matrix of $\S$.  
	$U_\mu=V_\mu\cap V_\S$ is the isotropic subspace associated with the local subgroup $\L_\mu$. In addition, we have
	\begin{align}
	U_\mu&=V_\mu\cap V_\S=\ker((M_\mu^{\perp})^TJ)\cap V_\S\nonumber\\
	&=\{M_\S\bfy |\bfy\in \Z_2^n, (M_\mu^{\perp})^T JM_\S\bfy=0 \}\nonumber\\
	&=\{M_\S\bfy |\bfy\in \Z_2^n, \tilde{M}_{\S,\mu}\bfy=0\} \nonumber\\
	&=\{M_\S\bfy | \bfy\in \ker(\tilde{M}_{\S,\mu})\}=M_\S \ker(\tilde{M}_{\S,\mu}).
	 \label{eq:LocalIsotropicStab}
	\end{align}
	To derive the fourth equality, note that 
	\begin{align}
	(M_\mu^{\perp})^T JM_\S&=\begin{pmatrix}
	\diag(\mu^\rmz)M_{\S}^\rmx+\diag(\mu^\rmx)M_{\S}^\rmz\\
	N_\mu M_{\S}^\rmz\\
	N_\mu M_{\S}^\rmx
	\end{pmatrix}\nonumber\\
	& =(M_{\S,\mu} (\scrU_1);N_\mu M_{\S}^\rmz; N_\mu M_{\S}^\rmx),
	\end{align}	
	which reduces to $\tilde{M}_{\S,\mu}$ after interchanging $N_\mu M_{\S}^\rmz$ and $N_\mu M_{\S}^\rmx$ and deleting rows of zeros. As an implication of \eqref{eq:LocalIsotropicStab}, we have $|\ker(\tilde{M}_{\S,\mu})|=|U_\mu|=|\L_\mu|$ and 
	\begin{equation}\label{eq:LocalSubgroupProof}
\L_\mu=\{S^\bfy|\bfy\in \ker(\tilde{M}_{\S,\mu})\}, 
	\end{equation}
which confirms \eqref{eq:LocalSubgroupStab}. Equation \eqref{eq:TestProjStab} follows from \eqref{eq:LocalSubgroupStab} and Lemma~\ref{lem:TestProjCanonical}. 
	
	Furthermore, Lemma~\ref{lem:TestProjCanonical} and \eqref{eq:StabState2}  imply that
	\begin{align}
	&\<\S_\bfw|P_\mu|\S_\bfw\>=\frac{1}{2^n|\L_\mu |}\sum_{S'\in \L_\mu }\sum_{\bfy\in \Z_2^n}(-1)^{\bfw\cdot\bfy}\tr(S'S^{\bfy})\nonumber\\
	&=\frac{1}{|\L_\mu |}\sum_{\bfy\in\Z_2^n, S^\bfy\in \L_\mu }(-1)^{\bfw\cdot\bfy}=\frac{1}{|\L_\mu |}\sum_{\bfy\in \Z_2^n, M_\S\bfy\in U_\mu}(-1)^{\bfw\cdot\bfy}\nonumber\\
	&=\frac{1}{|\L_\mu |}\sum_{\bfy\in \Z_2^n, \tilde{M}_{\S,\mu} \bfy=0}(-1)^{\bfw\cdot\bfy}=\frac{1}{|\L_\mu |}\sum_{\bfy\in\ker(\tilde{M}_{\S,\mu})}(-1)^{\bfw\cdot\bfy},\label{eq:DiagonalEleStab}
	\end{align}
	where the fourth equality follows from \eqref{eq:LocalIsotropicStab}.  The summation in \eqref{eq:DiagonalEleStab} is nonzero iff  $\bfw\cdot\bfy=0$ for all $\bfy \in \ker(\tilde{M}_{\S,\mu})$. This is the case iff $\bfw\in \rspan(\tilde{M}_{\S,\mu})$, in which  case we have
	\begin{equation}
	\sum_{\bfy\in\ker(\tilde{M}_{\S,\mu})}(-1)^{\bfw\cdot\bfy}=|\ker(\tilde{M}_{\S,\mu})|=|U_\mu|=|\L_\mu|,
	\end{equation}
	which implies \eqref{eq:StabDiagElement}. 
\end{proof}

\begin{proof}[Proof of Theorem~\ref{thm:TestProjGraph}]
	Let $V_\mu$ be the Lagrangian subspace associated with $\T_\mu$ and  $M_\mu:=(\diag(\mu^\rmx);\diag(\mu^\rmz))$; then $M_\mu$ is a basis matrix for $V_\mu$. Let $V_G$ be the Lagrangian subspace associated with the graph state $|G\>$; then we have $V_G=\{\tilde{A}\bfy |\bfy\in \Z_2^n\}$, where  $\tilde{A}=(\mathbf{1};A)$ is the canonical basis matrix for $V_G$. Let  $U_\mu=V_\mu\cap V_G$; then  $U_\mu$ is the isotropic subspace associated with the local subgroup $\L_\mu$. In addition,
	\begin{align}
	U_\mu&=V_\mu\cap V_G=V_\mu^{\perp}\cap V_G\nonumber\\
	&=\{\tilde{A}\bfy |\bfy\in \Z_2^n, M_\mu^T J\tilde{A}\bfy=0\}\nonumber\\
	&=\{\tilde{A}\bfy |\bfy\in \Z_2^n, A_\mu\bfy=0\}   \nonumber\\
	&=\{\tilde{A}\bfy | \bfy\in \ker(A_\mu)\}=\tilde{A}\ker(A_\mu),	
	\label{eq:LocalIsotropic}
	\end{align}
where $J$ is defined in \eqref{eq:SymplecticForm}, and the fourth equality follows from the fact that $M_\mu^T J\tilde{A}=A_\mu$. 	As an implication of \eqref{eq:LocalIsotropic}, we have $|\ker(A_\mu)|=|U_\mu|=|\L_\mu|$ and 
\begin{equation}
\L_\mu=\{S^\bfy|\bfy \in \ker(A_\mu)\},
\end{equation} 
which confirm \eqref{eq:LocalSubgroupGraph}. Equation \eqref{eq:TestProjGraph} follows from \eqref{eq:LocalSubgroupGraph} and Lemma~\ref{lem:TestProjCanonical}. 

Furthermore, Lemma~\ref{lem:TestProjCanonical} and \eqref{eq:GraphState2} imply that
	\begin{align}
	&\<G_\bfw|P_\mu|G_\bfw\>=\frac{1}{2^n|\L_\mu |}\sum_{S'\in \L_\mu }\sum_{\bfy\in \Z_2^n}(-1)^{\bfw\cdot\bfy}\tr(S'S^{\bfy})\nonumber\\
	&=\frac{1}{|\L_\mu |}\sum_{\bfy\in\Z_2^n, S^\bfy\in \L_\mu }(-1)^{\bfw\cdot\bfy}=\frac{1}{|\L_\mu |}\sum_{\bfy\in \Z_2^n, \tilde{A}\bfy\in U_\mu}(-1)^{\bfw\cdot\bfy}\nonumber\\
	&=\frac{1}{|\L_\mu |}\sum_{\bfy\in \Z_2^n, A_\mu \bfy=0}(-1)^{\bfw\cdot\bfy}=\frac{1}{|\L_\mu |}\sum_{\bfy\in\ker(A_\mu)}(-1)^{\bfw\cdot\bfy},\label{eq:DiagonalEle}
	\end{align}
	where the fourth equality follows from \eqref{eq:LocalIsotropic}.  The summation in \eqref{eq:DiagonalEle} is nonzero iff  $\bfw\cdot\bfy=0$ for all $\bfy \in \ker(A_\mu)$. This is the case iff $\bfw$ belongs to the row span of $A_\mu$, in which case
	\begin{equation}
	\sum_{\bfy\in\ker(A_\mu)}(-1)^{\bfw\cdot\bfy}=|\ker(A_\mu)|=|U_\mu|=|\L_\mu|,
	\end{equation}
which implies \eqref{eq:GraphDiagElement}. 
\end{proof}

\section{Proofs of Proposition~\ref{pro:eta_XZ} and Lemma~\ref{lemma:max-ind-iff-admissible}}\label{app:coloring}
\begin{proof}[Proof of Proposition~\ref{pro:eta_XZ}]Note that an admissible test projector based on $X$ and $Z$ measurements is automatically weakly admissible. To prove Proposition~\ref{pro:eta_XZ}
we need to prove that if a test projector $P_{\mu_{XZ}}$ based on $X$ and $Z$ measurements is inadmissible, then it is not weakly admissible either; in other words, there always exists a smaller test projector $P_{\nu_{XZ}} \leq  P_{\mu_{XZ}}$ with $\tr(P_{\nu_{XZ}})<  \tr(P_{\mu_{XZ}})$ that is also based on $X$ and $Z$ measurements. According to Corollary~\ref{cor:admissible}, the inadmissible measurement $\mu_{XZ}$ can be replaced by an incomplete measurement $\mu'_{XZ}$ on $k<n$ qubits. 
After the measurement $\mu'_{XZ}$, the reduced state on the remaining $n-k$ qubits is a stabilizer state of the form $U_{\mathrm{LC}}\ket{G'}$, where $G'$ is a graph of $n-k$ vertices, and $U_{\LC}$ is an outcome-dependent local Clifford unitary \cite{hein2004}. Crucially, when $\mu'_{XZ}$ consists of $X$ and $Z$ measurements, the unitary operator $U_{\LC}$  can only map $X$ to $Z$ and vice versa up to an overall  phase factor. Therefore, we can obtain a smaller test projector by performing suitable $X$ and $Z$ measurements on the remaining $n-k$ qubits, which implies that $P_{\mu_{XZ}}$  is not weakly admissible and confirms the proposition.
\end{proof}

\begin{proof}[Proof of Lemma~\ref{lemma:max-ind-iff-admissible}]	In one direction, suppose that $B$ is not maximal and let $B'$ be a larger independent set containing $B$. Then we have $P_{B'}\leq P_B$ and $\tr(P_{B'})<\tr(P_B)$ according to \eqref{eq:TestProjInd}, so $P_B$ is not admissible.

In the other direction, suppose that $B$ is maximal. If $P_B$ is not admissible, then it is not weakly admissible either by Proposition~\ref{pro:eta_XZ}. So there exists a subset $B'$ of $V$ such that $P_{B'}\leq P_B$ and $\tr(P_{B'})<\tr(P_B)$, that is,
\begin{equation}\label{eq:LBB}
\L_{B'} \supset \L_B=\langle\{S_j| j\in B\}\rangle,
\end{equation}
where $\L_B$ and $\L_{B'}$ are the local subgroups associated with the Pauli measurements determined by $B$ and $B'$, respectively, and  the equality follows from \eqref{eq:LocalSubgroupInd}.
Equation~\eqref{eq:LBB} implies that $B\subset B'$ by Lemma~\ref{lem:TwoTestBB}. 
Since $B$ is a maximal independent set by assumption, there must exist a vertex $j\in B$ that is adjacent to some vertex $k\in B'\setminus B$, which implies that $S_j\notin \L_{B'}$, in contradiction with \eqref{eq:LBB}. This contradiction completes the proof of Lemma~\ref{lemma:max-ind-iff-admissible}.
\end{proof}

\section{\label{app:nonconnected}Verification of graph states of nonconnected graphs}

Let $G$ be a disjoint union of (possibly empty) connected graphs $\{G_j\}_{j=1}^J$. 
Then $\ket{G} = \bigotimes_{j=1}^J \ket{G_j}$ is a graph state which is not genuinely multipartite entangled \cite{hein2004,hein2006}. Here we clarify the relations between optimal verification of $|G\>$ based on Pauli measurements and that of $|G_j\>$. It is worth pointing out that optimal protocols (with respect to the spectral gap or the number of measurement settings) can be constructed from canonical test projectors; cf. Sec.~\ref{sec:TestProjGraph}. 

\begin{lemma}\label{lem:TestProjCanProd}
	Every canonical test projector $P$ for  $\ket{G} = \bigotimes_{j=1}^J \ket{G_j}$ has a tensor-product form $P=\bigotimes_{j=1}^J P_j$, where $P_j$ is a canonical test projector for $|G_j\>$, and vice versa. 
\end{lemma}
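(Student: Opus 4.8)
The plan is to reduce the whole statement to the identification of canonical test projectors with stabilizer code projectors (Lemma~\ref{lem:TestProjCanonical}) and then to invoke the factorization of stabilizer-group intersections established in Lemma~\ref{lem:StabIntersection}. First I would fix an arbitrary canonical test projector $P=P_\mu$ of $\ket{G}$, associated with some Pauli measurement $\mu\in\Z_2^{2n}$, and recall that by Lemma~\ref{lem:TestProjCanonical} it coincides with the stabilizer code projector of the local subgroup $\L_\mu=\S_G\cap\bar{\T}_\mu$, namely $P_\mu=\frac{1}{|\L_\mu|}\sum_{S\in\L_\mu}S$ as in \eqref{eq:StabCode}.

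Next I would exploit the disjoint-union structure. Because $G=\bigsqcup_{j=1}^J G_j$, the adjacency matrix is block diagonal, so each generator in \eqref{eq:graph-stabilizer-generator} acts nontrivially only on the qubits of a single component; hence $\S_G=\S_{G_1}\times\cdots\times\S_{G_J}$. Likewise, since the $n$ qubits split among the components, the Pauli measurement $\mu$ restricts to a Pauli measurement $\mu_j$ on the qubits of $G_j$, and the associated stabilizer group factorizes, $\T_\mu=\T_{\mu_1}\times\cdots\times\T_{\mu_J}$, with the same factorization for $\bar{\T}_\mu$. The engine of the argument is then Lemma~\ref{lem:StabIntersection}, which gives $\L_\mu=\S_G\cap\bar{\T}_\mu=\L_{\mu_1}\times\cdots\times\L_{\mu_J}$ with $\L_{\mu_j}=\S_{G_j}\cap\bar{\T}_{\mu_j}$. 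Applying the code-projector formula \eqref{eq:StabCode} to a direct product of stabilizer groups produces a tensor product of the individual code projectors, since $\sum_{S\in\L_{\mu_1}\times\cdots\times\L_{\mu_J}}S=\bigotimes_{j=1}^J\sum_{S_j\in\L_{\mu_j}}S_j$; therefore $P=\Pi_{\L_\mu}=\bigotimes_{j=1}^J\Pi_{\L_{\mu_j}}=\bigotimes_{j=1}^J P_{\mu_j}$, and each $P_{\mu_j}$ is, again by Lemma~\ref{lem:TestProjCanonical}, a canonical test projector of $\ket{G_j}$. For the converse I would start from canonical test projectors $P_{\mu_j}$ of the $\ket{G_j}$, assemble the Pauli measurement $\mu=(\mu_1,\ldots,\mu_J)$ on all $n$ qubits, and run the identical chain of equalities to conclude that $\bigotimes_{j=1}^J P_{\mu_j}=P_\mu$ is a canonical test projector of $\ket{G}$.

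I expect no serious obstacle: the content is essentially bookkeeping around the block structure, and the one genuine input—that intersecting a product stabilizer group with a signed product stabilizer group factors componentwise—is exactly Lemma~\ref{lem:StabIntersection}. The only points requiring a little care are verifying that $\S_G$ indeed splits as the claimed direct product, which is immediate from the local support of the generators \eqref{eq:graph-stabilizer-generator} on a disconnected graph, and noting that the argument does not require $\mu$ to be complete, since Lemma~\ref{lem:StabIntersection} holds for arbitrary (possibly incomplete) stabilizer groups $\T_{\mu_j}$.
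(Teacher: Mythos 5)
Your proof is correct and follows essentially the same route as the paper: identify the canonical test projector with the code projector of the local subgroup $\L_\mu=\S_G\cap\bar{\T}_\mu$ via Lemma~\ref{lem:TestProjCanonical}, factorize $\L_\mu$ componentwise using Lemma~\ref{lem:StabIntersection}, and reverse the argument for the converse. No gaps.
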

\begin{proof}
	The stabilizer group $\S$ of $|G\>$ is a direct product of  the form $\S=\S_1\times \S_2\times \cdots\times \S_J$, where  $\S_j$ for $j=1,2,\ldots, J$ are the  stabilizer groups of $|G_j\>$, respectively. Suppose the test projector $P$ is associated with the Pauli measurement specified by the symplectic vector $\mu$; let $\T_\mu$ and $\bar{T}_\mu$ be the stabilizer group and signed stabilizer group associated with the Pauli measurement $\mu$. Then $\T_\mu$ has the form $\T_\mu =\T_1\times \T_2\times \cdots \times \T_J$, where $\T_j$ for $j=1,2,\ldots, J$ are stabilizer groups associated with certain Pauli measurements on $|G_j\>$, respectively. Let $\L_\mu=\S\cap \bar{\T}_\mu$ be the local subgroup associated with the Pauli measurement $\mu$. Thanks to Lemma~\ref{lem:StabIntersection}, $\L_\mu$ has the form $\L_\mu =\L_1\times \L_2\times \cdots\times \L_J$, where $\L_j=\S_j\cap \bar{\T}_j$ are local subgroups of $|G_j\>$. According to  \eqref{eq:TestProjCanonical} we have
	\begin{align}
	P=\frac{1}{|\L_\mu |}\sum_{S\in \L_\mu }S=\bigotimes_{j=1}^J\Biggl(\frac{1}{|\L_j |}\sum_{S_j\in \L_j }S_j\Biggr) =\bigotimes_{j=1}^J P_j,
	\end{align}
	where
	\begin{equation}
	P_j=\frac{1}{|\L_j |}\sum_{S_j\in \L_j }S_j
	\end{equation}
	are canonical test projectors for  $|G_j\>$.

	Conversely, suppose $P_j$ are  canonical test projectors for  $|G_j\>$ that are associated with the local subgroups $\L_j$ for $j=1,2,\ldots, J$. Then $P=\bigotimes_{j=1}^J P_j$ is a canonical test projector for $|G\>$ that is associated with the local subgroup $\L_1\times \L_2\times \cdots\times \L_J$. This observation completes the proof of Lemma~\ref{lem:TestProjCanProd}. 
\end{proof}

\begin{figure*}
    \centering
    \begin{subfigure}
        \centering
        \includegraphics[height=1.35in]{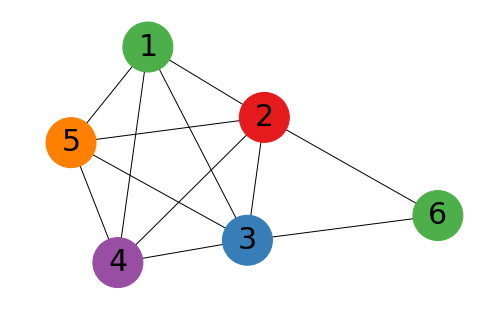}
    \end{subfigure}%
    ~
    \begin{subfigure}
        \centering
        \includegraphics[height=1.35in]{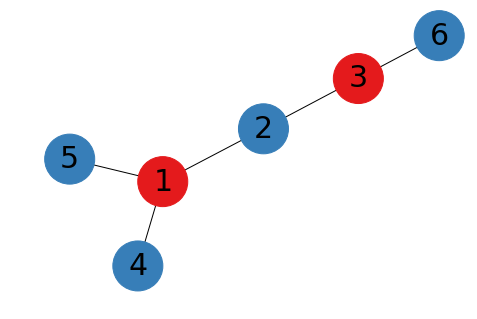}
    \end{subfigure}
    \caption{\label{fig:strict_eq}On the left is an example of a graph $G$ for which the inequalities $\tilde{\chi}_2(G) \le \tilde{\chi}_{XZ}(G) \le \chi(G)$ are strict. This six-vertex graph is No. 94 in the graph database \cite{mckay-database}, with $\tilde{\chi}_2(G) = \chi_{\mathrm{LC}}(G) = 2, \tilde{\chi}_{XZ}(G) = 4$, and $\chi(G) = 5$. The corresponding graph state can be verified by two measurement settings, say $YYYYYZ$ and $YZXYYY$ [$\tilde{\chi}_2(G) = 2$], or four settings based on $X$ and $Z$ measurements, say $ZZZZXX$, $ZZZXZX$, $XXXZZX$, and $ZZXZZZ$ [$\tilde{\chi}_{XZ}(G) = 4$]. The graph $G$ is equivalent to the 2-colorable graph on the right (No.12 in Table~\ref{tab:optimal})
under LC with respect to the vertices No. 1, 2, 6, and 3 in succession.}
\end{figure*}

\begin{proposition}\label{prop:nonconnected-cover}
	Suppose the graph $G$ is a disjoint union of (possibly empty) connected subgraphs $G_j$ for $j=1,2,\dots, J$. Then 
	\begin{align}
	\nu(G) &= \min_j \nu(G_j),  \label{eq:uGprod} \\
	\tilde{\chi}(G) &= \max_j \tilde{\chi}(G_j), \label{eq:local_cover_subgraph} \\
	\tilde{\chi}_2(G) &= \max_j \tilde{\chi}_2(G_j), \\
	\tilde{\chi}_{XZ}(G) &= \max_j \tilde{\chi}_{XZ}(G_j). \label{eq:local_cover_subgraphXZ} 
	\end{align}
\end{proposition}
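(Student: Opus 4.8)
The plan is to separate the spectral-gap identity \eqref{eq:uGprod} from the three cover-number identities \eqref{eq:local_cover_subgraph}--\eqref{eq:local_cover_subgraphXZ}, since the former follows from the general theory of tensor-product verification operators developed in Sec.~\ref{sec:QSV}, whereas the latter three rest on a single group-theoretic observation. The common structural ingredient is that a complete Pauli measurement $\mu$ on $|G\>$ factorizes across the components as $\mu=(\mu_1,\dots,\mu_J)$, and by Lemma~\ref{lem:TestProjCanProd} (equivalently Lemma~\ref{lem:StabIntersection}) the associated local subgroup is the direct product $\L_\mu=\L_{\mu_1}\times\cdots\times\L_{\mu_J}$, where $\L_{\mu_j}$ is the local subgroup of $|G_j\>$ for the restricted measurement $\mu_j$. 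I would establish this factorization first and use it throughout.

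For \eqref{eq:uGprod}, the inequality $\nu(G)\le\min_j\nu(G_j)$ follows from Proposition~\ref{pro:ReducedVeriO}: any Pauli verification operator $\Omega$ for $|G\>$ has reduced operators $\Omega_j$ that are themselves Pauli verification operators for $|G_j\>$, so $\nu(\Omega)\le\nu(\Omega_j)\le\nu(G_j)$ for every $j$. The reverse inequality is obtained by taking optimal Pauli operators $\Omega_j$ for each component and forming $\Omega=\bigotimes_j\Omega_j$, which is Pauli-realizable and whose spectral gap equals $\min_j\nu(\Omega_j)=\min_j\nu(G_j)$ by the computation in Sec.~\ref{sec:QSV}. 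Combining the two directions gives the identity.

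For the cover numbers I would work entirely through the characterization in Lemma~\ref{lem:ValidProtocol}: a set $\scrM$ of complete settings verifies $|G\>$ iff $\<\cup_{\mu\in\scrM}\L_\mu\>=\S_G$. To show $\tilde{\chi}(G)\ge\max_j\tilde{\chi}(G_j)$, given an optimal $\scrM$ for $|G\>$ I would apply the restriction homomorphism $\pi_j\colon\S_G\to\S_{G_j}$; since $\pi_j(\L_\mu)=\L_{\mu_j}$ and $\pi_j$ is surjective, the local subgroups $\{\L_{\mu_j}\}_{\mu\in\scrM}$ generate $\S_{G_j}$, so the restricted settings verify $|G_j\>$ with at most $|\scrM|$ distinct settings. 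For $\tilde{\chi}(G)\le\max_j\tilde{\chi}(G_j)$ I would take optimal protocols for each component, pad them to the common length $m=\max_j\tilde{\chi}(G_j)$ by repeating settings, and combine them into $m$ settings on all $n$ qubits. The identical argument carries over verbatim to $\tilde{\chi}_2$ and $\tilde{\chi}_{XZ}$, because restricting a protocol to a component and combining component protocols both preserve the ``two Pauli settings per party'' constraint and the ``$X$ and $Z$ only'' constraint.

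The step I expect to be the crux is the upper bound $\tilde{\chi}(G)\le\max_j\tilde{\chi}(G_j)$, namely explaining why the answer is the maximum and not the sum of the component cover numbers: within a single combined setting the product local subgroup $\L_{\mu_1}\times\cdots\times\L_{\mu_J}$ appears to couple the components. The resolution, which I would state as the key observation, is that $\id\in\L_{\mu_j}$ for every $j$, so $\L_\mu$ already contains the embedded copy of each $\L_{\mu_j}$ (carrying the identity on all other factors). Consequently $\<\cup_{\mu\in\scrM}\L_\mu\>$ contains the embedded copy of $\<\cup_{\mu}\L_{\mu_j}\>=\S_{G_j}$ for every $j$, hence contains $\S_{G_1}\times\cdots\times\S_{G_J}=\S_G$ and equals it; the generation problem therefore decouples across components and the required number of settings is the largest, not the total.
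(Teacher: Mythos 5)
Your proof is correct and follows essentially the same route as the paper's: the spectral-gap identity via Proposition~\ref{pro:ReducedVeriO} and the tensor-product construction, and the cover-number identities via the factorization $\L_\mu=\L_{\mu_1}\times\cdots\times\L_{\mu_J}$ (Lemma~\ref{lem:TestProjCanProd}) together with restriction to components and padding to a common number of settings. The only cosmetic difference is that for the lower bound on $\tilde{\chi}(G)$ the paper argues through reduced canonical test projectors and the bound $\nu(\Omega_j)\ge\nu(\Omega)\ge 1/m$, whereas you apply the generation criterion of Lemma~\ref{lem:ValidProtocol} under the restriction homomorphism; these two formulations are interchangeable by that same lemma.
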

Here $\nu(G)$ denotes the maximum spectral gap of verification operators of $|G\>$ that are based on Pauli measurements. Equation~\eqref{eq:uGprod} still applies if we consider the maximum spectral gap achievable by separable measurements, which follows from almost the same reasoning as the one presented below. $\tilde{\chi}(G)$ denotes the minimum number of Pauli measurement settings required to verify $|G\>$, while $\tilde{\chi}_{XZ}(G)$ and $\tilde{\chi}_2(G)$ denote the minimum numbers of settings based on $XZ$ measurements and two measurement settings for each party, respectively. Incidentally, the minimum number of measurement settings based on the coloring protocol \cite{ZhuH2019E} is equal to the chromatic number $\chi(G)$, which satisfies
\begin{align}
\chi(G) = \max_j \chi(G_j). 
\end{align}
So Proposition~\ref{prop:nonconnected-cover} may be seen as a generalization of this equation.

\begin{proof}
	Suppose $\Omega$ is an optimal verification operator for $|G\>$ with  $\nu(\Omega)=\nu(G)$ that can be realized by canonical test projectors. Let $\Omega_j$ be the reduced verification operator of $\Omega$ for $|G_j\>$. Then $\Omega_j$ can also be realized by canonical test projectors according to  Lemma~\ref{lem:TestProjCanProd}. Therefore,
	\begin{equation}\label{eq:nuGprodProof1}
	\nu(G)=\nu(\Omega)\leq \min_j \nu(\Omega_j)\leq \min_j\nu(G_j), 
	\end{equation}
	where the second inequality follows from Proposition~\ref{pro:ReducedVeriO}.

	Conversely, suppose $\Omega_j$ for $j=1,2\ldots, J$ are optimal verification operators of $G_j$ that are based on Pauli measurements, so that $\nu(\Omega_j)=\nu(G_j)$. Let $\Omega=\bigotimes_{j=1}^J \Omega_j$; then $\Omega$ is a verification operator of $|G\>$ that is based on Pauli measurements. Therefore,
	\begin{equation}\label{eq:nuGprodProof2}
	\nu(G)\geq \nu(\Omega)=\min_{1\leq j\leq J}\nu(\Omega_j)= \min_j \nu(G_j).
	\end{equation}
	Equations~\eqref{eq:nuGprodProof1} and \eqref{eq:nuGprodProof2} together imply \eqref{eq:uGprod}.

	To prove \eqref{eq:local_cover_subgraph}-\eqref{eq:local_cover_subgraphXZ}, suppose $|G\>$ can be verified by $m$ canonical test projectors $P_1, P_2, \ldots, P_m$; let $\Omega=\sum_k P_k/m$. Let $|\overline{G}_j\>= \bigotimes_{j'\neq j} |G_{j'}\>$,  $P_k^{(j)}=\<\overline{G}_j|P_k|\overline{G}_j\>$, and $\Omega_j=\sum_k P_k^{(j)}/m$; then $P_k^{(j)}$ for $k=1,2,\ldots, m$ are canonical test projectors for $|G_j\>$ according to Lemma~\ref{lem:TestProjCanProd}. Moreover, $|G_j\>$ can be verified by these canonical test projectors since $\nu(\Omega_j)\geq \nu(\Omega)\geq1/m$. 
	If each $P_k$ is based on $XZ$ measurements or two measurement settings for each party, then each $P_k^{(j)}$ has the same property. These observations imply that
	\begin{align}
	\tilde{\chi}(G) &\geq  \max_j \tilde{\chi}(G_j), \\
	\tilde{\chi}_2(G) &\geq  \max_j \tilde{\chi}_2(G_j), \\
	\tilde{\chi}_{XZ}(G) &\geq  \max_j \tilde{\chi}_{XZ}(G_j).
	\end{align}
	
	Conversely,  suppose $|G_j\>$ can be verified by the set of canonical test projectors $\{P_k^{(j)}\}_{k=1}^{m_j}$. Let $m=\max_j m_j$; then $|G\>$ can be  verified by the following canonical test projectors
	\begin{equation}
	P_k:=\bigotimes_{j=1}^J P_k^{(j)},\quad k=1,2,\ldots, m,
	\end{equation}
	where $P_k^{(j)}=\openone $ if $m_j<k\leq m$; cf.~\eqref{eq:TestOperatorJoint}. Let $\Omega=\sum_k P_k/m$, then $\nu(\Omega)\geq 1/m$ since these canonical test projectors commute with each other.
	If each $P_k^{(j)}$ is based on $XZ$ measurements or two measurement settings for each party, then each $P_k$ has the same property. These observations imply that
	\begin{align}
	\tilde{\chi}(G) &\leq  \max_j \tilde{\chi}(G_j), \\
	\tilde{\chi}_2(G) &\leq  \max_j \tilde{\chi}_2(G_j), \\
	\tilde{\chi}_{XZ}(G) &\leq  \max_j \tilde{\chi}_{XZ}(G_j),
	\end{align}
	which confirms \eqref{eq:local_cover_subgraph}-\eqref{eq:local_cover_subgraphXZ} in view of the opposite inequalities derived above.
\end{proof}

\section{General connected graphs up to seven vertices}\label{app:general-graphs}

When restricted to $X$ and $Z$ measurements, many results on the verification of graph states are not invariant under LC. Therefore, it is of interest to consider those connected graphs up to seven vertices not necessarily listed in Table~\ref{tab:optimal}. 
Here we briefly discuss optimal verification protocols (with respect to the spectral gap and the number of measurement settings) of graph states associated with these graphs. There are 996 such (non-isomorphic) graphs \cite{mckay-database}. Our calculation shows that  the maximum spectral gap achievable by  $X$ and $Z$ measurements is $1/2$ for all these graph states.
Since every such graph $G$ is equivalent under LC to some graph in Table~\ref{tab:optimal}, $\chi_{\mathrm{LC}}(G)$ is either 2 or 3. By Proposition~\ref{pro:LocalCoverNum} and the results presented in Table~\ref{tab:minimal-settings}, we have $\tilde{\chi}(G) = \tilde{\chi}_2(G) = \tilde{\chi}_{\mathrm{LC}}(G)$ for all these graphs. In contrast, $\tilde{\chi}_{XZ}(G)$ can take any value from $\tilde{\chi}_2(G)$ up to $\chi(G)$, so \eqref{eq:chi-equality} does not hold in general. A graph $G$ for which the inequalities $\tilde{\chi}_2(G) \le \tilde{\chi}_{XZ}(G) \le \chi(G)$ are strict is shown in Fig.~\ref{fig:strict_eq}.

\begin{proposition}
	For the complete graph of $n$ vertices, $\tilde{\chi}_{XZ}(G) = \chi(G) = n$.
\end{proposition}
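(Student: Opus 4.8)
The plan is to prove the two bounds $\tilde{\chi}_{XZ}(G)\le n$ and $\tilde{\chi}_{XZ}(G)\ge n$, the identity $\chi(G)=n$ being immediate since in a complete graph every pair of vertices is adjacent and a proper coloring must use $n$ distinct colors. The upper bound follows from the coloring protocol of \cite{ZhuH2019E}, which verifies any graph state $|G\>$ using $\chi(G)$ settings based on $X$ and $Z$ measurements (cf.\ Proposition~\ref{pro:LocalCoverNum}); for the complete graph the $n$ singleton color classes $\{i\}$ correspond to the $X$-$Z$ settings that measure $X$ on qubit $i$ and $Z$ on all other qubits, i.e.\ the stabilizer generators $S_i$. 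The entire content of the proposition is therefore the matching lower bound.

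For the lower bound I would invoke the verification criterion of Lemma~\ref{lem:ValidProtocol}: a family $\scrM$ of complete $X$-$Z$ settings verifies $|G\>$ if and only if $\Span\bigl(\bigcup_{\mu\in\scrM}\ker(A_\mu)\bigr)=\Z_2^n$. Following Sec.~\ref{sec:XZprotocol} I would label each $X$-$Z$ setting by the subset $B\subseteq V$ on which $X$ is measured, so that $\mu^\rmx_j=1$ iff $j\in B$ and $\mu^\rmz_j=1$ iff $j\notin B$. The first step is to read off $A_B$ from \eqref{eq:Amu} using $A_{ij}=1$ for $i\neq j$ and $A_{ii}=0$: the $i$-th row of $A_B$ equals the standard basis row $\rme_i$ when $i\notin B$, and equals the all-ones row with a $0$ in position $i$ when $i\in B$.

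The key step is the explicit computation of $\ker(A_B)$. Solving $A_B\bfy=0$, the rows with $i\notin B$ force $y_i=0$, while the rows with $i\in B$ give $\sum_{j\neq i}y_j=0$, that is $y_i=\sum_j y_j=:s$ for every $i\in B$. Summing the last relation over $i\in B$ (and using $y_j=0$ for $j\notin B$) yields $s=|B|\,s$, i.e.\ $(1+|B|)s=0$ in $\Z_2$. Hence if $|B|$ is even then $s=0$ and $\bfy=0$, so $A_B$ has full rank and the test is trivial by the remark following Theorem~\ref{thm:TestProjGraph}; if $|B|$ is odd then $s$ is free and $\ker(A_B)=\Span(\bfv_B)$ is one-dimensional, where $\bfv_B$ is the indicator vector of $B$. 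Consequently only odd-weight sets contribute and each contributes a single dimension, so $\dim\Span\bigl(\bigcup_{B\in\scrM}\ker(A_B)\bigr)\le|\scrM|$. For this span to equal the $n$-dimensional space $\Z_2^n$ one needs $|\scrM|\ge n$, which gives $\tilde{\chi}_{XZ}(G)\ge n$ and hence equality.

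The only real work is the kernel computation, and the single point that needs care is the parity dichotomy: it is tempting to try to exploit even-weight settings, but these produce full-rank $A_B$ and hence trivial test projectors that contribute nothing to the span, leaving the one-dimensional kernels $\Span(\bfv_B)$ with $|B|$ odd as the sole resource. Since at least $n$ linearly independent such vectors are required to span $\Z_2^n$, the lower bound follows with no further estimation, and the elegance of the complete-graph case is precisely that every useful $X$-$Z$ test pins down exactly one basis direction.
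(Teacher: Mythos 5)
Your proof is correct, and the lower bound is obtained by a genuinely different route from the paper's. The paper also gets $\chi(G)=n$ and the upper bound from the coloring protocol, but for $\tilde{\chi}_{XZ}(G)\ge n$ it argues indirectly: it observes that the complete-graph state is LC-equivalent to the GHZ state and imports from \cite{li2019_GHZ} the fact that the corresponding test projectors have rank $2^{n-1}$ or $2^n$, so each local subgroup has order at most $2$ and, by Lemma~\ref{lem:ValidProtocol}, at least $n$ settings are needed to generate the full stabilizer group. You reach the same intermediate fact — every $X$-$Z$ setting contributes a kernel of dimension at most one — by a direct, self-contained computation of $A_B$ and $\ker(A_B)$ from \eqref{eq:Amu}, including the parity dichotomy ($|B|$ even gives full rank, $|B|$ odd gives $\ker(A_B)=\Span(\bfv_B)$), and then conclude via statement 3 of Lemma~\ref{lem:ValidProtocol} rather than statement 2. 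Your version is more elementary and avoids any appeal to the GHZ literature or LC-equivalence, and as a bonus it identifies exactly which settings are nontrivial and what each one contributes; the paper's version is shorter given the cited GHZ result and fits the narrative of Sec.~\ref{sec:examples}. Both are complete; the only implicit step in yours (shared with the paper) is the standard reduction to complete measurements and canonical test projectors, which is justified in Sec.~\ref{sec:XZprotocol} and by the definition of the local cover number.
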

\begin{proof}
	The equality $\chi(G) = n$	is an immediate corollary of the assumption that $G$ is a complete graph of $n$ vertices.
	According to the coloring protocol proposed in \cite{ZhuH2019E}, any graph state $|G\>$ can be verified by $\chi(G)$ settings based on $X$ and $Z$ measurements, which implies that $\tilde{\chi}_{XZ}(G) \leq \chi(G) = n$. To complete the proof, it remains to prove that $\tilde{\chi}_{XZ}(G)\geq n$. 
	
	It is straightforward to verify that the canonical test projector based on $Y^{\otimes n}$ has rank 2. Moreover, all canonical test projectors based on $X$ and $Z$ measurements have ranks either $2^{n-1}$ or $2^n$ [cf. \eqref{eq:GraphPmuRank}], so the corresponding local subgroups are either trivial or have  order 2, given that all canonical test projectors of the standard GHZ state based on $X$ and $Y$ measurements have ranks either $2^{n-1}$ or $2^n$ according to \cite{li2019_GHZ} (cf. Sec.~\ref{sec:examples}). So at least $n$ settings based on $X$ and $Z$ measurements are required to verify $|G\>$ in view of  Lemma~\ref{lem:ValidProtocol}, that is, $\tilde{\chi}_{XZ}(G)\geq n$, which completes the proof.
\end{proof}

\section{\label{app:thm:chi2}Proof of Theorem~\ref{thm:chi2}}
\begin{proof}[Proof of Theorem~\ref{thm:chi2}]
	If $\chi_{\LC}(G)=2$, then $G$ is nonempty, so $\tilde{\chi}(G)=2$ according to Propositions \ref{pro:LocalCoverNum} and \ref{pro:chi1}. 
	
	Conversely, if 	$\tilde{\chi}(G)=2$, then  $G$ is nonempty according to Proposition~\ref{pro:chi1}. In addition,  $|G\>$ can be verified by two settings based on Pauli measurement. First, suppose $G$ is connected, then $|G\>$ is genuinely multipartite entangled, so the Pauli operators measured for each qubit associated with the two settings must be different according to  Proposition~3 in \cite{ZhuH2019AdvL}. By a suitable local Clifford transformation $U$, the state $U|G\>$ can be verified by two  measurement settings in which one setting is based on $X$ measurements only, while the other setting is based on $Z$ measurements only. Up to a sign factor, each generator of the local subgroup associated with the first (second) setting is a product of some $X$ ($Z$) operators for individual qubits.
	Therefore, the stabilizer group of $U|G\>$ can be generated by a set of generators each of which is a product of local $X$ operators only or  a product of local $Z$ operators only. It follows that  $U|G\>$ is a CSS state, so $|G\>$ is equivalent to a graph state of a two-colorable graph according to \cite{chen2007}.
	In other words, $G$ is equivalent to a two-colorable graph under LC, which implies that $\chi_{\LC}(G)=2$ given that $G$ is nonempty. 
	
	If $G$ is not connected, then each connected component of $G$  is  equivalent to a two-colorable graph under LC, so the same holds for $G$.
	Therefore, we still have $\chi_{\LC}(G)=2$.  
	
	The second statement in Theorem~\ref{thm:chi2}  follows from the first statement and the fact that every stabilizer state is equivalent to a graph state under a local Clifford transformation \cite{schlingemann2001b,grassl2002,nest2004}.
\end{proof}

\section{\label{app:lem:LambdaRing}Proof of Lemma~\ref{lem:LambdaRing}}

\begin{proof}[Proof of Lemma~\ref{lem:LambdaRing}]
	Equation \eqref{eq:kappaRing} follows from \eqref{eq:LambdaRing} and Lemma~\ref{lem:LambdaKappaGraph}, so it suffices to prove 	\eqref{eq:LambdaRing}. 
	When $n$ is even, \eqref{eq:LambdaRing} is proved in \cite{markham2007}. When $n$ is odd, 
	Lemmas \ref{lem:LambdaKappaGraph} and \ref{lem:kappaUB} imply that 
	\begin{align}\label{eq:LambdaRingLB}
	\Lambda_\rmP(G)\geq 2^{n-\alpha(G)}=2^{-(n+1)/2}=2^{-\lfloor(n+1)/2\rfloor}, 
	\end{align}
	given that $\alpha(G)=(n-1)/2$. So  it remains to prove the opposite inequality $\Lambda_\rmP(G)\leq 2^{-(n+1)/2}$. 
	
	Suppose that $\ket{\varphi} = \ket{\varphi_1}\otimes\ket{\varphi_2}\otimes\cdots\otimes\ket{\varphi_n}$ is a tensor product of eigenstates of Pauli $X,Y$, or $Z$ such that $\Lambda_\rmP(G) = |\av{\varphi|G}|^2$. Then 
	\begin{equation}
	\Lambda_\rmP(G)=|\<\varphi|G\>|^2=\frac{1}{2}|\<\varphi'|\Psi'\>|^2\leq \frac{1}{2}\Lambda(|\Psi'\>), 
	\end{equation}
	where the kets $\ket{\varphi'} = \ket{\varphi_1}\otimes\ket{\varphi_2}\otimes\cdots\otimes\ket{\varphi_{n-1}}$ and $|\Psi'\>=\sqrt{2} \<\varphi_n|G\>$ denote $(n-1)$-qubit stabilizer states.
	If $|\varphi_n\>$ is an eigenstate of $Z$, then $|\Psi'\>$ is an $(n-1)$-qubit linear cluster state by the general set of rules in \cite{hein2004}. If $|\varphi_n\>$ is an eigenstate of $Y$, by  contrast, then $|\Psi'\>$ is equivalent to a ring cluster state. In both cases, we  have $\Lambda(|\Psi'\>)=2^{-(n-1)/2}$ \cite{markham2007}, which implies that $|\<\varphi|G\>|^2\leq 2^{-(n+1)/2}$. The same inequality holds if at least one of the tensor factors $|\varphi_j\>$ is an eigenstate of $Z$ or $Y$. It remains to consider the case in which every $|\varphi_j\>$ is an eigenstate of $X$, so that $|\varphi\>$ belongs to the graph basis associated with the empty graph. According to Lemma~\ref{lem:GraphFidelity}, then we have 
	\begin{equation}
	|\<\varphi|G\>|^2\leq 2^{-\rank(A)}=2^{-(n-1)}\leq 2^{-(n+1)/2},
	\end{equation}
	where $A$ is  the adjacency matrix of $G$. It follows that $\Lambda_\rmP(G)\leq 2^{-(n+1)/2}$, which implies \eqref{eq:LambdaRing} in view of 
	\eqref{eq:LambdaRingLB}. 
\end{proof}

\clearpage
\pagebreak

\section{Table of optimal verification protocols}

\begin{longtable*}{ccccccc}
	\caption{\label{tab:optimal}Optimal verification protocols
 for  connected graph states up to seven qubits. There are 45 equivalent classes with respect to LC and graph isomorphism and here the labeling follows from \cite{hein2004}. Graph states associated with an edge and star graphs (No. 3, 5, 9, and 20) are omitted since optimal protocols for these states have a simple description as discussed  in Sec.~\ref{sec:examples} (cf. Fig.~\ref{fig:GHZ}). 
 For each graph,  the optimal protocol is specified by  Pauli measurement settings shown in the fifth column together with the corresponding probabilities shown in the sixth column. The spectral gap $\nu(\Omega)$ of the verification operator $\Omega$ is $2/3$, which attains the upper bound presented  in Theorem~\ref{thm:optimal}. 
 For completeness, the table also shows a minimum coloring of each graph, the number $\#(\Omega)$ of measurement settings in the optimal protocol and the ranks of canonical test projectors. In addition, $\eta(G)$ denotes the total number of admissible test projectors (that is, the number of admissible Pauli measurements) for the graph state $|G\>$.
} \\ 
	\hhline{=======}
	No. & Graph $G$ & $\eta(G)$ & $\#(\Omega)$ & Setting & Probability & Rank \\ 
	\hline
	
	\multirow{5}{*}{2} & \multirow{5}{*}{\includegraphics[width=45mm, trim=0 0 0 0.5in]{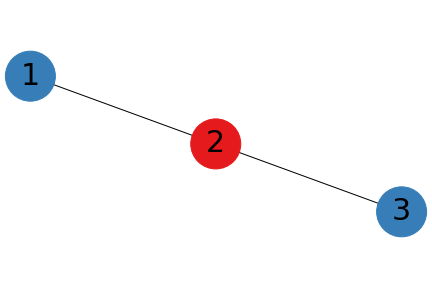}} & \multirow{5}{*}{5} & \multirow{5}{*}{5}
	& XZX & 1/3 & 2 \\
	& & & & ZXZ & 1/6 & 4 \\
	& & & & ZYY & 1/6 & 4 \\
	& & & & YXY & 1/6 & 4 \\
	& & & & YYZ & 1/6 & 4 \\
	\cline{5-7}
	
	\multirow{6}{*}{4} & \multirow{6}{*}{\includegraphics[width=45mm]{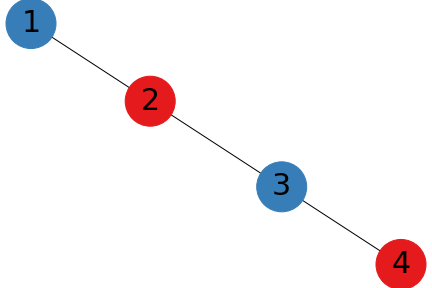}} & \multirow{6}{*}{9} & \multirow{6}{*}{6}
	& ZXZX & 1/6 & 4 \\
	& & & & XZXZ & 1/6 & 4 \\
	& & & & XZYY & 1/6 & 4 \\
	& & & & YYZX & 1/6 & 4 \\
	& & & & ZYYZ & 1/6 & 8 \\
	& & & & YXXY & 1/6 & 8 \\
	\cline{5-7}
	
	\multirow{10}{*}{6} & \multirow{10}{*}{\includegraphics[width=45mm]{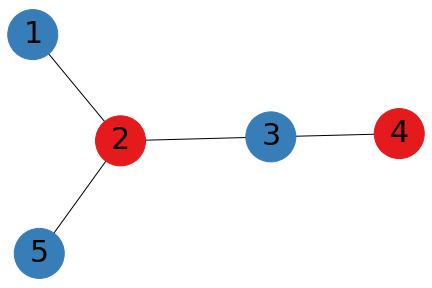}} & \multirow{10}{*}{15} & \multirow{10}{*}{10}
	& XZXZX & 1/6 & 4 \\
	& & & & XZYYX & 1/6 & 4 \\
	& & & & ZXZXZ & 1/12 & 8 \\
	& & & & ZYZXY & 1/12 & 8 \\
	& & & & YXZXY & 1/12 & 8 \\
	& & & & YYZXZ & 1/12 & 8 \\
	& & & & ZXXYY & 1/12 & 16 \\
	& & & & ZYXYZ & 1/12 & 16 \\
	& & & & YXYZZ & 1/12 & 16 \\
	& & & & YYYZY & 1/12 & 16 \\
	\cline{5-7}
	
	\multirow{6}{*}{7} & \multirow{6}{*}{\includegraphics[width=45mm, trim=0 0 0 1in]{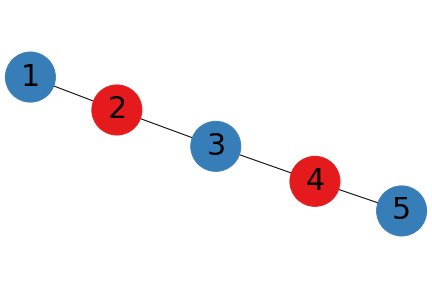}} & \multirow{6}{*}{17} & \multirow{6}{*}{6}
	& XZXZX & 1/6 & 4 \\
	& & & & ZXZXZ & 1/6 & 8 \\
	& & & & ZYYZX & 1/6 & 8 \\
	& & & & XZYYZ & 1/6 & 8 \\
	& & & & YYZYY & 1/6 & 8 \\
	& & & & YXXXY & 1/6 & 16 \\
	\cline{5-7}
	
	\multirow{6}{*}{8} & \multirow{6}{*}{\includegraphics[width=45mm]{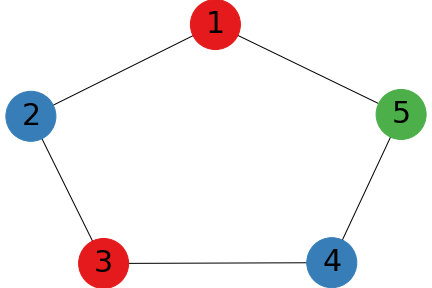}} & \multirow{6}{*}{21} & \multirow{6}{*}{6}
	& ZZXZX & 1/6 & 8 \\
	& & & & ZXZZX & 1/6 & 8 \\
	& & & & XZYYZ & 1/6 & 8 \\
	& & & & XXYYY & 1/6 & 8 \\
	& & & & YYZXZ & 1/6 & 8 \\
	& & & & YYXXY & 1/6 & 8 \\
	\cline{5-7}
	
	\multirow{18}{*}{10} & \multirow{18}{*}{\includegraphics[width=50mm]{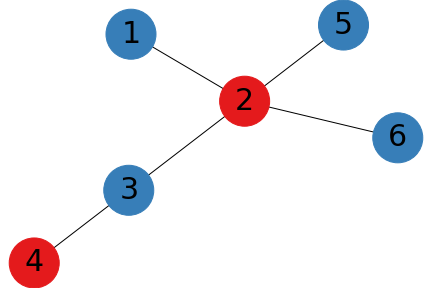}} & \multirow{18}{*}{27} & \multirow{18}{*}{18}
	& XZXZXX & 1/6 & 4 \\
	& & & & XZYYXX & 1/6 & 4 \\
	& & & & ZXZXZZ & 1/24 & 16 \\
	& & & & ZXZXYY & 1/24 & 16 \\
	& & & & ZYZXZY & 1/24 & 16 \\
	& & & & ZYZXYZ & 1/24 & 16 \\
	& & & & YXZXZY & 1/24 & 16 \\
	& & & & YXZXYZ & 1/24 & 16 \\
	& & & & YYZXZZ & 1/24 & 16 \\
	& & & & YYZXYY & 1/24 & 16 \\
	& & & & ZXXYZY & 1/24 & 32 \\
	& & & & ZXXYYZ & 1/24 & 32 \\
	& & & & ZYXYZZ & 1/24 & 32 \\
	& & & & ZYXYYY & 1/24 & 32 \\
	& & & & YXYZZZ & 1/24 & 32 \\
	& & & & YXYZYY & 1/24 & 32 \\
	& & & & YYYZZY & 1/24 & 32 \\
	& & & & YYYZYZ & 1/24 & 32 \\
	\cline{5-7}
	
	\multirow{12}{*}{11} & \multirow{12}{*}{\includegraphics[width=50mm]{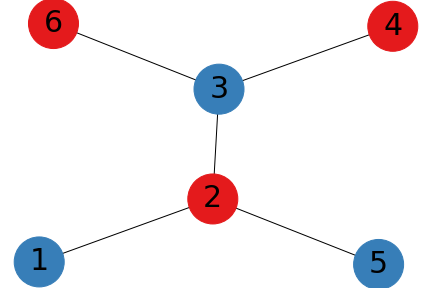}} & \multirow{12}{*}{25} & \multirow{12}{*}{12}
	& ZXZXZX & 1/12 & 8 \\
	& & & & ZYZXYX & 1/12 & 8 \\
	& & & & XZXZXZ & 1/12 & 8 \\
	& & & & XZXYXY & 1/12 & 8 \\
	& & & & XZYZXY & 1/12 & 8 \\
	& & & & XZYYXZ & 1/12 & 8 \\
	& & & & YXZXYX & 1/12 & 8 \\
	& & & & YYZXZX & 1/12 & 8 \\
	& & & & ZXXZYY & 1/12 & 32 \\
	& & & & ZYYYZY & 1/12 & 32 \\
	& & & & YXXYZZ & 1/12 & 32 \\
	& & & & YYYZYZ & 1/12 & 32 \\
	\cline{5-7}
	
	\multirow{11}{*}{12} & \multirow{11}{*}{\includegraphics[width=50mm]{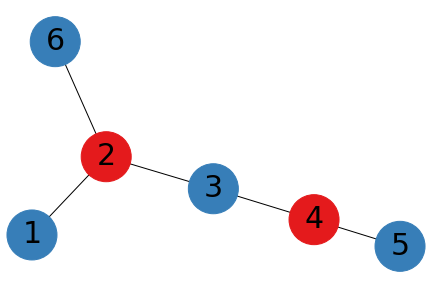}} & \multirow{11}{*}{29} & \multirow{11}{*}{11}
	& XZXZXX & 1/6 & 4 \\
	& & & & XZYXYX & 1/12 & 8 \\
	& & & & XZYYZX & 1/12 & 8 \\
	& & & & ZXZXZZ & 1/12 & 16 \\
	& & & & ZXYZXY & 1/12 & 16 \\
	& & & & ZYZXZY & 1/12 & 16 \\
	& & & & ZYYZXZ & 1/12 & 16 \\
	& & & & YXZYYY & 1/12 & 16 \\
	& & & & YYZYYZ & 1/12 & 16 \\
	& & & & YXXXYZ & 1/12 & 32 \\
	& & & & YYXYZY & 1/12 & 32 \\
	\cline{5-7}
	
	\multirow{10}{*}{13} & \multirow{10}{*}{\includegraphics[width=50mm]{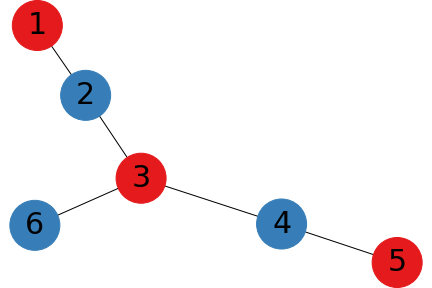}} & \multirow{10}{*}{27} & \multirow{10}{*}{10}
	& ZXZXZX & 1/6 & 8 \\
	& & & & XZXZXZ & 1/12 & 8 \\
	& & & & XZYZXY & 1/12 & 8 \\
	& & & & YYZYYX & 1/6 & 8 \\
	& & & & XZXXYY & 1/12 & 16 \\
	& & & & XZYXYZ & 1/12 & 16 \\
	& & & & YXXZXY & 1/12 & 16 \\
	& & & & YXYZXZ & 1/12 & 16 \\
	& & & & ZYXYZZ & 1/12 & 32 \\
	& & & & ZYYYZY & 1/12 & 32 \\
	\cline{5-7}
	
	\multirow{8}{*}{14} & \multirow{8}{*}{\includegraphics[width=50mm]{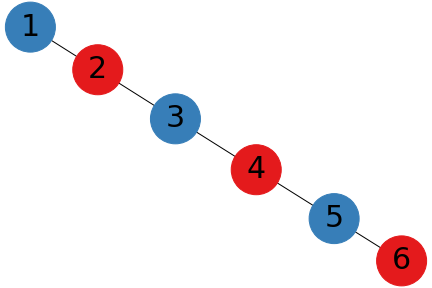}} & \multirow{8}{*}{31} & \multirow{8}{*}{8}
	& ZXZXZX & 1/6 & 8 \\
	& & & & XZXZXZ & 1/6 & 8 \\
	& & & & XZYYZX & 1/6 & 8 \\
	& & & & YXYZYY & 1/6 & 16 \\
	& & & & YYZYXY & 1/12 & 16 \\
	& & & & YYZYYZ & 1/12 & 16 \\
	& & & & ZYXXXY & 1/12 & 32 \\
	& & & & ZYXXYZ & 1/12 & 32 \\
	\cline{5-7}
	
	\multirow{6}{*}{15} & \multirow{6}{*}{\includegraphics[width=50mm,trim=0 0 0 1in]{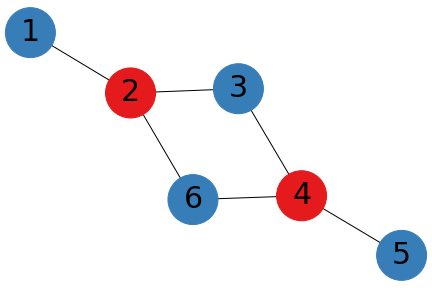}} & \multirow{6}{*}{33} & \multirow{6}{*}{6}
	& XZXZXX & 1/6 & 4 \\
	& & & & ZXZXZZ & 1/6 & 16 \\
	& & & & ZYZZXY & 1/6 & 16 \\
	& & & & XZYYZZ & 1/6 & 16 \\
	& & & & YXXXYX & 1/6 & 16 \\
	& & & & YYYYYY & 1/6 & 16 \\
	\cline{5-7}
	
	\multirow{8}{*}{16} & \multirow{8}{*}{\includegraphics[width=50mm, trim = 0 0 0 1in]{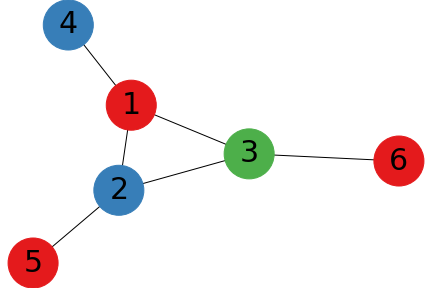}} & \multirow{8}{*}{35} & \multirow{8}{*}{8}
	& ZZXXXZ & 1/6 & 8 \\
	& & & & ZXZXZX & 1/6 & 8 \\
	& & & & XZZZXX & 1/6 & 8 \\
	& & & & XXXYYY & 1/12 & 16 \\
	& & & & XXYYYZ & 1/12 & 16 \\
	& & & & YYXZZY & 1/12 & 16 \\
	& & & & YYYZZZ & 1/12 & 16 \\
	& & & & YYYYYY & 1/6 & 32 \\
	\cline{5-7}
	
\pagebreak
	
	\multirow{14}{*}{17} & \multirow{14}{*}{\includegraphics[width=50mm]{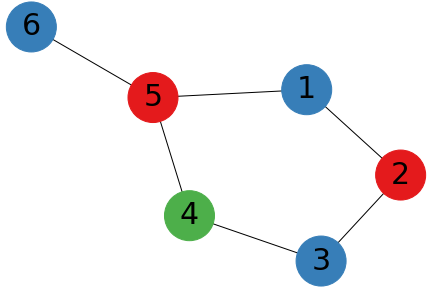}} & \multirow{14}{*}{37} & \multirow{14}{*}{14}
	& ZXZXZX & 1/8 & 8 \\
	& & & & XZZXZX & 1/12 & 8 \\
	& & & & XZXZZX & 1/12 & 8 \\
	& & & & ZZXZXZ & 1/24 & 16 \\
	& & & & ZYXYZX & 1/24 & 16 \\
	& & & & ZYYZYY & 1/8 & 16 \\
	& & & & XXYYXY & 1/8 & 16 \\
	& & & & YZXZYZ & 1/24 & 16 \\
	& & & & YXXYYZ & 1/12 & 16 \\
	& & & & YYXXXY & 1/24 & 16 \\
	& & & & YYYXXZ & 1/12 & 16 \\
	& & & & XYZZYZ & 1/24 & 32 \\
	& & & & YZZYXZ & 1/24 & 32 \\
	& & & & YZZYYY & 1/24 & 32 \\
	\cline{5-7}
	
	\multirow{12}{*}{18} & \multirow{12}{*}{\includegraphics[width=50mm]{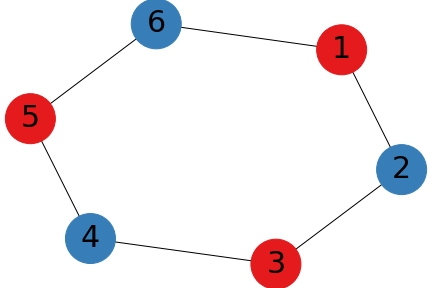}} & \multirow{12}{*}{43} & \multirow{12}{*}{12}
	& ZXZXZX & 2/9 & 8 \\
	& & & & XZXZXZ & 2/9 & 8 \\
	& & & & ZZXZYY & 1/27 & 16 \\
	& & & & ZXZZYY & 1/27 & 16 \\
	& & & & ZYXYZX & 1/27 & 16 \\
	& & & & XYXYXX & 1/27 & 16 \\
	& & & & XYYXYY & 2/27 & 16 \\
	& & & & YZYYZY & 2/27 & 16 \\
	& & & & YXYZXZ & 1/27 & 16 \\
	& & & & YXYXXX & 1/27 & 16 \\
	& & & & YYZYYZ & 2/27 & 16 \\
	& & & & YYYYYY & 1/9 & 16 \\
	\cline{5-7}
	
	\multirow{19}{*}{19} & \multirow{19}{*}{\includegraphics[width=50mm]{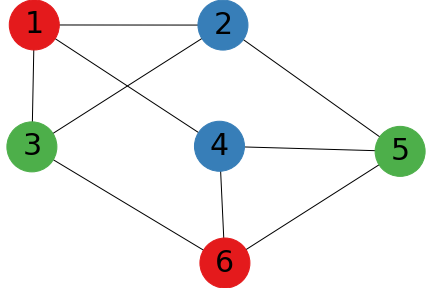}} & \multirow{19}{*}{63} & \multirow{19}{*}{19}
	& ZZZYYY & 16/129 & 16 \\
	& & & & ZZXZXZ & 8/129 & 16 \\
	& & & & ZZYYYX & 1/129 & 16 \\
	& & & & ZXZXZZ & 1/129 & 16 \\
	& & & & ZXYZYX & 4/43 & 16 \\
	& & & & ZYYXXY & 5/129 & 16 \\
	& & & & XZZZZX & 8/129 & 16 \\
	& & & & XXZXXZ & 8/129 & 16 \\
	& & & & XXXXXX & 7/129 & 16 \\
	& & & & XXYZYY & 5/129 & 16 \\
	& & & & XYXXYX & 5/129 & 16 \\
	& & & & XYYYZZ & 10/129 & 16 \\
	& & & & YZXXZY & 7/129 & 16 \\
	& & & & YZYZXZ & 1/43 & 16 \\
	& & & & YXXYZY & 10/129 & 16 \\
	& & & & YYZXYX & 4/129 & 16 \\
	& & & & YYZYXX & 2/43 & 16 \\
	& & & & YYYZZZ & 7/129 & 16 \\
	& & & & YYXXXZ & 2/43 & 32 \\
	\cline{5-7}
	
	\multirow{34}{*}{21} & \multirow{34}{*}{\includegraphics[width=50mm]{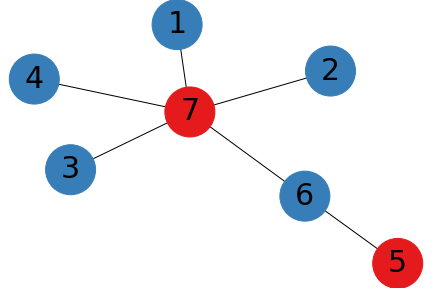}} & \multirow{34}{*}{51} & \multirow{34}{*}{34}
	& XXXXZXZ & 1/6 & 4 \\
	& & & & XXXXYYZ & 1/6 & 4 \\
	& & & & ZZZZXZX & 1/48 & 32 \\
	& & & & ZZZYXZY & 1/48 & 32 \\
	& & & & ZZYZXZY & 1/48 & 32 \\
	& & & & ZZYYXZX & 1/48 & 32 \\
	& & & & ZYZZXZY & 1/48 & 32 \\
	& & & & ZYZYXZX & 1/48 & 32 \\
	& & & & ZYYZXZX & 1/48 & 32 \\
	& & & & ZYYYXZY & 1/48 & 32 \\
	& & & & YZZZXZY & 1/48 & 32 \\
	& & & & YZZYXZX & 1/48 & 32 \\
	& & & & YZYZXZX & 1/48 & 32 \\
	& & & & YZYYXZY & 1/48 & 32 \\
	& & & & YYZZXZX & 1/48 & 32 \\
	& & & & YYZYXZY & 1/48 & 32 \\
	& & & & YYYZXZY & 1/48 & 32 \\
	& & & & YYYYXZX & 1/48 & 32 \\
	& & & & ZZZZZYY & 1/48 & 64 \\
	& & & & ZZZYZYX & 1/48 & 64 \\
	& & & & ZZYZYXX & 1/48 & 64 \\
	& & & & ZZYYYXY & 1/48 & 64 \\
	& & & & ZYZZYXX & 1/48 & 64 \\
	& & & & ZYZYZYY & 1/48 & 64 \\
	& & & & ZYYZZYY & 1/48 & 64 \\
	& & & & ZYYYYXX & 1/48 & 64 \\
	& & & & YZZZZYX & 1/48 & 64 \\
	& & & & YZZYYXY & 1/48 & 64 \\
	& & & & YZYZYXY & 1/48 & 64 \\
	& & & & YZYYZYX & 1/48 & 64 \\
	& & & & YYZZYXY & 1/48 & 64 \\
	& & & & YYZYZYX & 1/48 & 64 \\
	& & & & YYYZYXX & 1/48 & 64 \\
	& & & & YYYYZYY & 1/48 & 64 \\
	\cline{5-7}
	
	\multirow{20}{*}{22} & \multirow{20}{*}{\includegraphics[width=50mm]{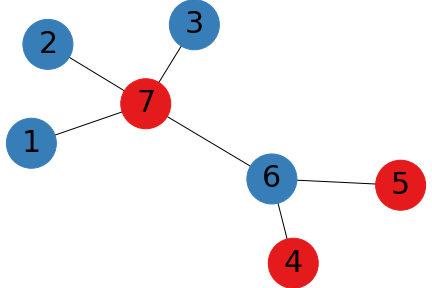}} & \multirow{20}{*}{45} & \multirow{20}{*}{20}
	& XXXZZXZ & 1/12 & 8 \\
	& & & & XXXZYYZ & 1/12 & 8 \\
	& & & & XXXYZYZ & 1/12 & 8 \\
	& & & & XXXYYXZ & 1/12 & 8 \\
	& & & & ZZZXXZX & 1/24 & 16 \\
	& & & & ZZYXXZY & 1/24 & 16 \\
	& & & & ZYZXXZY & 1/24 & 16 \\
	& & & & ZYYXXZX & 1/24 & 16 \\
	& & & & YZZXXZY & 1/24 & 16 \\
	& & & & YZYXXZX & 1/24 & 16 \\
	& & & & YYZXXZX & 1/24 & 16 \\
	& & & & YYYXXZY & 1/24 & 16 \\
	& & & & ZZZZZYY & 1/24 & 64 \\
	& & & & ZZYZYXX & 1/24 & 64 \\
	& & & & ZYZZYXX & 1/24 & 64 \\
	& & & & ZYYZZYY & 1/24 & 64 \\
	& & & & YZZYZXX & 1/24 & 64 \\
	& & & & YZYYYYY & 1/24 & 64 \\
	& & & & YYZYYYY & 1/24 & 64 \\
	& & & & YYYYZXX & 1/24 & 64 \\
	\cline{5-7}
	
	\multirow{19}{*}{23} & \multirow{19}{*}{\includegraphics[width=50mm]{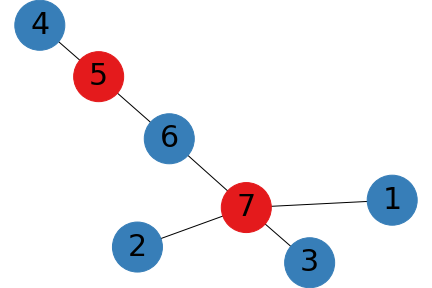}} & \multirow{19}{*}{53} & \multirow{19}{*}{19}
	& XXXXZXZ & 1/6 & 4 \\
	& & & & XXXZYYZ & 1/12 & 8 \\
	& & & & XXXYXYZ & 1/12 & 8 \\
	& & & & ZZZZXZX & 1/24 & 32 \\
	& & & & ZZZXZYY & 1/24 & 32 \\
	& & & & ZZYXZYX & 1/24 & 32 \\
	& & & & ZZYYYZY & 1/24 & 32 \\
	& & & & ZYZXZYX & 1/24 & 32 \\
	& & & & ZYZYYZY & 1/24 & 32 \\
	& & & & ZYYZXZX & 1/24 & 32 \\
	& & & & YZZYYZY & 1/24 & 32 \\
	& & & & YZYYYZX & 1/24 & 32 \\
	& & & & YYZZXZX & 1/24 & 32 \\
	& & & & YYZXZYY & 1/24 & 32 \\
	& & & & YYYZXZY & 1/24 & 32 \\
	& & & & ZYYZYXY & 1/24 & 64 \\
	& & & & YZZYXXX & 1/24 & 64 \\
	& & & & YZYYXXY & 1/24 & 64 \\
	& & & & YYYZYXX & 1/24 & 64 \\
	\cline{5-7}
	
	\multirow{11}{*}{24} & \multirow{11}{*}{\includegraphics[width=50mm,trim=0 0 0 1in]{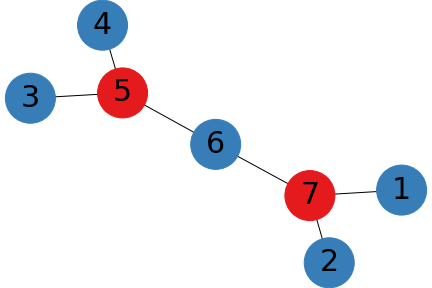}} & \multirow{11}{*}{49} & \multirow{11}{*}{11}
	& XXXXZXZ & 1/6 & 4 \\
	& & & & ZZXXZYY & 1/12 & 16 \\
	& & & & XXZZYYZ & 1/12 & 16 \\
	& & & & XXYZXYZ & 1/12 & 16 \\
	& & & & YZXXZYX & 1/12 & 16 \\
	& & & & ZZZZXZX & 1/12 & 32 \\
	& & & & ZYYYXZY & 1/12 & 32 \\
	& & & & YZYZYZY & 1/12 & 32 \\
	& & & & YYZYYZX & 1/12 & 32 \\
	& & & & ZYYYYXX & 1/12 & 64 \\
	& & & & YYZYXXY & 1/12 & 64 \\
	\cline{5-7}
	
	\multirow{18}{*}{25} & \multirow{18}{*}{\includegraphics[width=50mm,trim=0 0 0 4.5in]{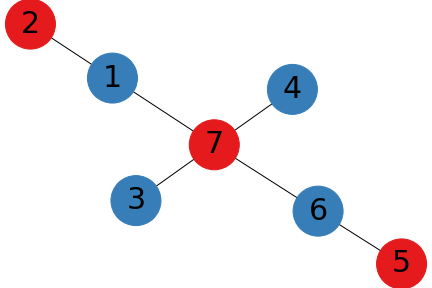}} & \multirow{18}{*}{45} & \multirow{18}{*}{18}
	& XZXXZXZ & 1/6 & 8 \\
	& & & & YYXXYYZ & 1/6 & 8 \\
	& & & & ZXZZXZX & 1/24 & 16 \\
	& & & & ZXZYXZY & 1/24 & 16 \\
	& & & & ZXYZXZY & 1/24 & 16 \\
	& & & & ZXYYXZX & 1/24 & 16 \\
	& & & & ZXZZZYY & 1/24 & 32 \\
	& & & & ZXZYYXX & 1/24 & 32 \\
	& & & & ZXYZYXX & 1/24 & 32 \\
	& & & & ZXYYYXY & 1/24 & 32 \\
	& & & & XYZZXZY & 1/24 & 32 \\
	& & & & XYZYXZX & 1/24 & 32 \\
	& & & & XYYZXZX & 1/24 & 32 \\
	& & & & XYYYXZY & 1/24 & 32 \\
	& & & & YZZZZYX & 1/24 & 64 \\
	& & & & YZZYZYY & 1/24 & 64 \\
	& & & & YZYZZYY & 1/24 & 64 \\
	& & & & YZYYYXX & 1/24 & 64 \\
	\cline{5-7}
	
	\multirow{12}{*}{26} & \multirow{12}{*}{\includegraphics[width=50mm]{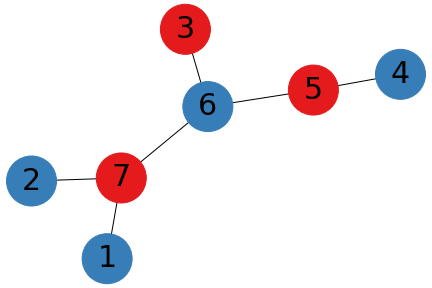}} & \multirow{12}{*}{45} & \multirow{12}{*}{12}
	& XXZXZXZ & 1/12 & 8 \\
	& & & & XXYXZYZ & 1/12 & 8 \\
	& & & & ZZXZXZX & 1/12 & 16 \\
	& & & & ZYXYYZY & 1/12 & 16 \\
	& & & & XXZYXYZ & 1/12 & 16 \\
	& & & & XXYZYXZ & 1/12 & 16 \\
	& & & & YZXYYZY & 1/12 & 16 \\
	& & & & YYXZXZX & 1/12 & 16 \\
	& & & & ZYYXZXX & 1/12 & 32 \\
	& & & & YZZXZYX & 1/12 & 32 \\
	& & & & ZZZZYXY & 1/12 & 64 \\
	& & & & YYYYXYY & 1/12 & 64 \\
	\cline{5-7}
	
	\multirow{10}{*}{27} & \multirow{10}{*}{\includegraphics[width=50mm]{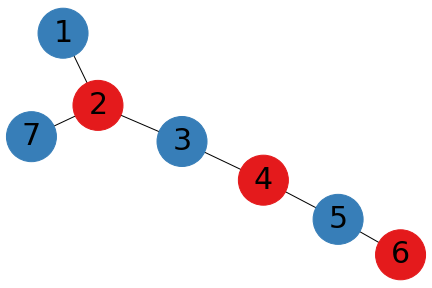}} & \multirow{10}{*}{53} & \multirow{10}{*}{10}
	& XZXZXZX & 1/6 & 8 \\
	& & & & ZXZXZXZ & 1/12 & 16 \\
	& & & & ZYZXZXY & 1/12 & 16 \\
	& & & & XZYXYZX & 1/6 & 16 \\
	& & & & ZXXYZXY & 1/12 & 32 \\
	& & & & ZYYZYYZ & 1/12 & 32 \\
	& & & & YXZYXYY & 1/12 & 32 \\
	& & & & YXYZYYZ & 1/12 & 32 \\
	& & & & YYZYXYZ & 1/12 & 32 \\
	& & & & YYXYZXY & 1/12 & 32 \\
	\cline{5-7}
	
	\multirow{11}{*}{28} & \multirow{11}{*}{\includegraphics[width=50mm]{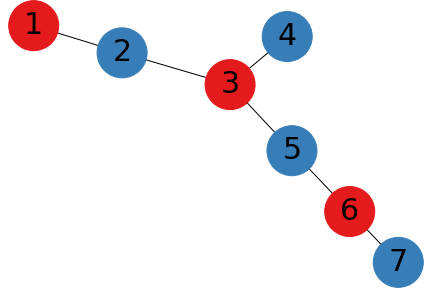}} & \multirow{11}{*}{51} & \multirow{11}{*}{11}
	& ZXZXXZX & 1/6 & 8 \\
	& & & & XZXZZXZ & 1/12 & 16 \\
	& & & & XZXYYZX & 1/12 & 16 \\
	& & & & XZYYZYY & 1/12 & 16 \\
	& & & & YYZXYXY & 1/12 & 16 \\
	& & & & YYZXYYZ & 1/12 & 16 \\
	& & & & ZYXZYZX & 1/12 & 32 \\
	& & & & ZYYZZYY & 1/12 & 32 \\
	& & & & XZYZXYZ & 1/12 & 32 \\
	& & & & YXXYZXZ & 1/12 & 32 \\
	& & & & YXYYXXY & 1/12 & 64 \\
	\cline{5-7}
	
	\multirow{10}{*}{29} & \multirow{10}{*}{\includegraphics[width=50mm]{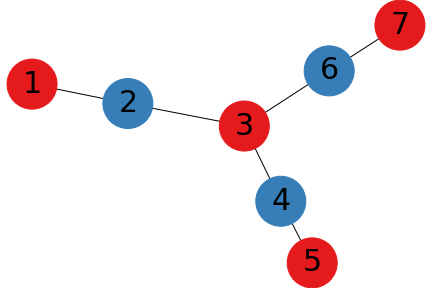}} & \multirow{10}{*}{53} & \multirow{10}{*}{10}
	& XZXZXZX & 1/12 & 8 \\
	& & & & ZXZXZXZ & 1/6 & 16 \\
	& & & & ZYYZXZX & 1/12 & 16 \\
	& & & & XZYZXYZ & 1/12 & 16 \\
	& & & & XZYYZZX & 1/12 & 16 \\
	& & & & YYZYYYY & 1/6 & 16 \\
	& & & & ZYXXYZX & 1/12 & 32 \\
	& & & & XZXYZXY & 1/12 & 32 \\
	& & & & YXXZXYZ & 1/12 & 32 \\
	& & & & YXYXYXY & 1/12 & 64 \\
	\cline{5-7}
	
\pagebreak	

	\multirow{12}{*}{30} & \multirow{12}{*}{\includegraphics[width=50mm]{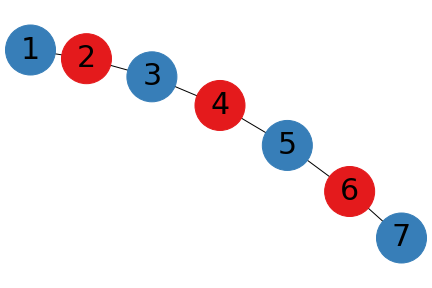}} & \multirow{12}{*}{57} & \multirow{12}{*}{12}
	& XZXZXZX & 1/6 & 8 \\
	& & & & ZXZXZXZ & 1/8 & 16 \\
	& & & & ZXZYYZX & 1/24 & 16 \\
	& & & & XZYXYZX & 1/24 & 16 \\
	& & & & XZYYZYY & 1/8 & 16 \\
	& & & & YYZXZYY & 1/24 & 16 \\
	& & & & ZYXXYZX & 1/12 & 32 \\
	& & & & ZYYZYXY & 1/24 & 32 \\
	& & & & YXXYZXZ & 1/24 & 32 \\
	& & & & YXYZYXY & 1/8 & 32 \\
	& & & & YYZYXYZ & 1/8 & 32 \\
	& & & & ZYXXXYZ & 1/24 & 64 \\
	\cline{5-7}
	
	\multirow{11}{*}{31} & \multirow{11}{*}{\includegraphics[width=50mm]{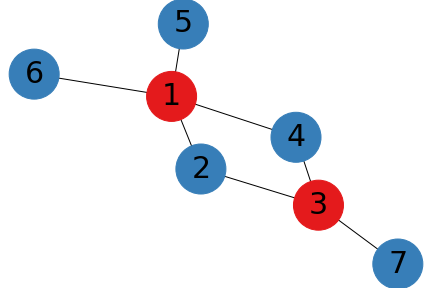}} & \multirow{11}{*}{57} & \multirow{11}{*}{11}
	& ZXZXXXX & 1/6 & 4 \\
	& & & & ZZYYXXZ & 1/12 & 16 \\
	& & & & ZYXZXXY & 1/12 & 16 \\
	& & & & XZZYZYX & 1/12 & 32 \\
	& & & & XZXZZZZ & 1/12 & 32 \\
	& & & & XXXXYZY & 1/12 & 32 \\
	& & & & XYYYYYY & 1/12 & 32 \\
	& & & & YZYZZYY & 1/12 & 32 \\
	& & & & YXYXZZZ & 1/12 & 32 \\
	& & & & YYZZYYX & 1/12 & 32 \\
	& & & & YYXYYZZ & 1/12 & 32 \\
	\cline{5-7}
	
	\multirow{12}{*}{32} & \multirow{12}{*}{\includegraphics[width=50mm]{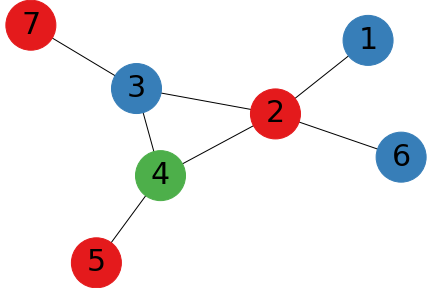}} & \multirow{12}{*}{61} & \multirow{12}{*}{12}
	& XXZXZXZ & 1/6 & 8 \\
	& & & & XXXZXZZ & 1/6 & 8 \\
	& & & & ZZXXZZX & 1/12 & 16 \\
	& & & & ZYXXZZY & 1/12 & 16 \\
	& & & & ZZZZYYY & 1/12 & 32 \\
	& & & & ZYZYXYX & 1/24 & 32 \\
	& & & & ZYYYXXX & 1/24 & 32 \\
	& & & & YZYYXXX & 1/12 & 32 \\
	& & & & YYZZYYY & 1/24 & 32 \\
	& & & & YYYZYXY & 1/24 & 32 \\
	& & & & YZYYYYY & 1/12 & 64 \\
	& & & & YYYYYYX & 1/12 & 64 \\
	\cline{5-7}
	
	\multirow{17}{*}{33} & \multirow{17}{*}{\includegraphics[width=50mm]{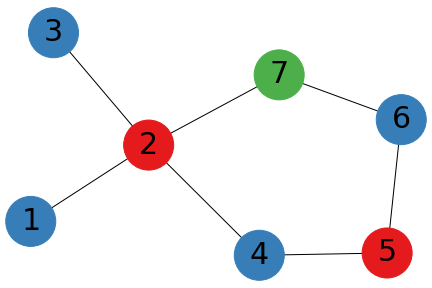}} & \multirow{17}{*}{65} & \multirow{17}{*}{17}
	& XXZZXZX & 1/8 & 8 \\
	& & & & XXZXZZX & 1/8 & 8 \\
	& & & & XXZZYXY & 1/24 & 16 \\
	& & & & XXZYXYZ & 1/24 & 16 \\
	& & & & ZZXXYYY & 1/12 & 32 \\
	& & & & ZZYYZXZ & 1/24 & 32 \\
	& & & & ZZYYXXY & 1/24 & 32 \\
	& & & & ZYXYZXZ & 1/12 & 32 \\
	& & & & ZYYZZXZ & 1/24 & 32 \\
	& & & & ZYYZYYZ & 1/24 & 32 \\
	& & & & YZXXXYY & 1/24 & 32 \\
	& & & & YZYZZXZ & 1/24 & 32 \\
	& & & & YZYXYYY & 1/24 & 32 \\
	& & & & YYXYYYX & 1/12 & 32 \\
	& & & & YYYZXZY & 1/24 & 32 \\
	& & & & YYYYXXY & 1/24 & 32 \\
	& & & & YZXXYZZ & 1/24 & 64 \\
	\cline{5-7}
	
	\multirow{11}{*}{34} & \multirow{11}{*}{\includegraphics[width=50mm]{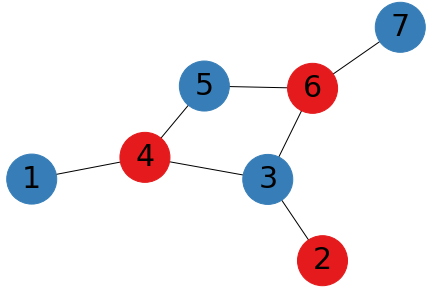}} & \multirow{11}{*}{59} & \multirow{11}{*}{11}
	& XZXZXZX & 1/6 & 8 \\
	& & & & ZXZXZXZ & 1/6 & 16 \\
	& & & & ZXZYYZX & 1/12 & 16 \\
	& & & & XXZZYYZ & 1/12 & 16 \\
	& & & & ZZYYZZX & 1/24 & 32 \\
	& & & & ZYYYXYZ & 1/24 & 32 \\
	& & & & XZYZZXY & 1/24 & 32 \\
	& & & & XZYZZYZ & 1/24 & 32 \\
	& & & & YZYXZZX & 1/24 & 32 \\
	& & & & YYXYYYY & 1/6 & 32 \\
	& & & & YYYXXXY & 1/8 & 32 \\
	\cline{5-7}
	
\pagebreak	
	
	\multirow{10}{*}{35} & \multirow{10}{*}{\includegraphics[width=50mm]{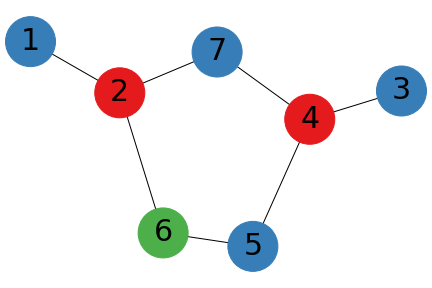}} & \multirow{10}{*}{67} & \multirow{10}{*}{10}
	& XXZZXZX & 1/6 & 8 \\
	& & & & XXZXZZX & 1/12 & 8 \\
	& & & & ZXZXZXZ & 1/12 & 16 \\
	& & & & XZYZXZY & 1/12 & 16 \\
	& & & & ZZXXYYY & 1/6 & 32 \\
	& & & & ZYXYZXZ & 1/12 & 32 \\
	& & & & YZYYZYZ & 1/12 & 32 \\
	& & & & YYXYXYY & 1/12 & 32 \\
	& & & & YYYZYXZ & 1/12 & 32 \\
	& & & & YYYYYXX & 1/12 & 32 \\
	\cline{5-7}
	
	\multirow{12}{*}{36} & \multirow{12}{*}{\includegraphics[width=50mm]{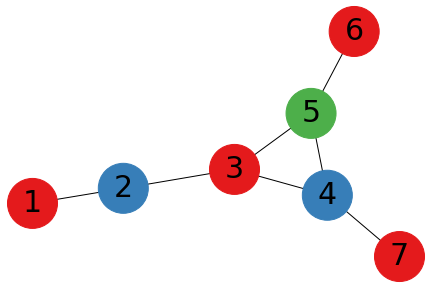}} & \multirow{12}{*}{61} & \multirow{12}{*}{12}
	& XZXZZXX & 1/6 & 8 \\
	& & & & ZXZZXZX & 1/12 & 16 \\
	& & & & ZXZXZXZ & 1/12 & 16 \\
	& & & & XZYXXYY & 1/8 & 16 \\
	& & & & XZYYYZZ & 1/24 & 16 \\
	& & & & YYZZXZX & 1/12 & 16 \\
	& & & & YYZYZXY & 1/12 & 16 \\
	& & & & ZYXYYZZ & 1/8 & 32 \\
	& & & & YXXXXYY & 1/24 & 32 \\
	& & & & ZYYYYYY & 1/24 & 64 \\
	& & & & YXYXYYZ & 1/12 & 64 \\
	& & & & YXYYYYY & 1/24 & 64 \\
	\cline{5-7}
	
	\multirow{15}{*}{37} & \multirow{15}{*}{\includegraphics[width=50mm]{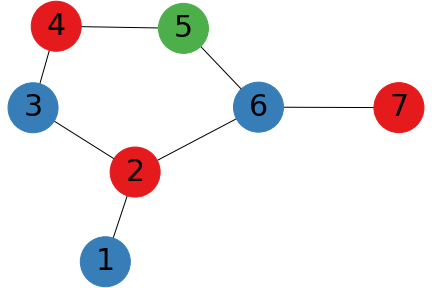}} & \multirow{15}{*}{65} & \multirow{15}{*}{15}
	& XZXZXZX & 1/6 & 8 \\
	& & & & ZXZXZZX & 1/24 & 16 \\
	& & & & ZYYZXZX & 5/72 & 16 \\
	& & & & XZZXZXZ & 5/72 & 16 \\
	& & & & XZXZYYZ & 1/24 & 16 \\
	& & & & XZYYZXZ & 1/18 & 16 \\
	& & & & YYZXZZX & 1/18 & 16 \\
	& & & & ZXXYYXY & 1/8 & 32 \\
	& & & & ZYZXZXY & 1/24 & 32 \\
	& & & & ZYYXXYZ & 1/18 & 32 \\
	& & & & YXZXZYZ & 5/72 & 32 \\
	& & & & YXYYXYY & 1/24 & 32 \\
	& & & & YYYYYXZ & 1/24 & 32 \\
	& & & & YYYYYYY & 5/72 & 32 \\
	& & & & YXZZYYY & 1/18 & 64 \\
	\cline{5-7}
	
	\multirow{15}{*}{38} & \multirow{15}{*}{\includegraphics[width=50mm]{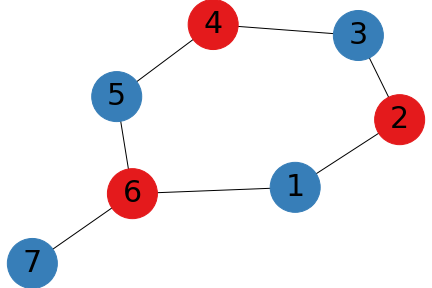}} & \multirow{15}{*}{75} & \multirow{15}{*}{15}
	& XZXZXZX & 4/21 & 8 \\
	& & & & ZXZXZXZ & 1/6 & 16 \\
	& & & & ZXZXZYY & 1/42 & 16 \\
	& & & & XZYXYZX & 1/21 & 16 \\
	& & & & YXYZXZX & 1/21 & 16 \\
	& & & & YYZYYZX & 1/21 & 16 \\
	& & & & ZXZYXXY & 1/21 & 32 \\
	& & & & ZYXYZYY & 1/21 & 32 \\
	& & & & ZYYZYXY & 1/21 & 32 \\
	& & & & XYXXXYZ & 1/21 & 32 \\
	& & & & YZYYZXY & 1/14 & 32 \\
	& & & & YZYYZYZ & 1/42 & 32 \\
	& & & & YXXXYYY & 1/21 & 32 \\
	& & & & YYYYYYZ & 2/21 & 32 \\
	& & & & XYZZYYY & 1/21 & 64 \\
	\cline{5-7}
	
	\multirow{13}{*}{39} & \multirow{13}{*}{\includegraphics[width=50mm]{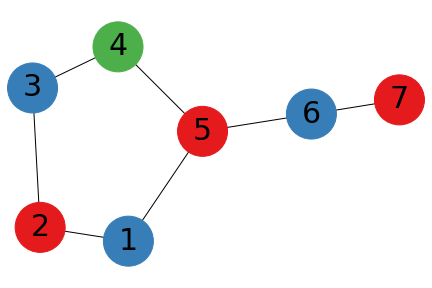}} & \multirow{13}{*}{67} & \multirow{13}{*}{13}
	& ZZXZXZX & 1/6 & 16 \\
	& & & & ZXZXZYY & 1/12 & 16 \\
	& & & & XZZXZXZ & 1/12 & 16 \\
	& & & & XZYYZYY & 1/24 & 16 \\
	& & & & XXYYYZX & 1/8 & 16 \\
	& & & & YZXZYZX & 1/24 & 16 \\
	& & & & YYZXZXZ & 1/12 & 16 \\
	& & & & YYZXZYY & 1/24 & 16 \\
	& & & & ZYYZYXY & 1/12 & 32 \\
	& & & & XYYYYXY & 1/24 & 32 \\
	& & & & YXXYXYZ & 1/8 & 32 \\
	& & & & YYYXYXY & 1/24 & 32 \\
	& & & & XYZZXYZ & 1/24 & 64 \\
	\cline{5-7}
	
	\multirow{17}{*}{40} & \multirow{17}{*}{\includegraphics[width=50mm, trim = 0 0 0 1in]{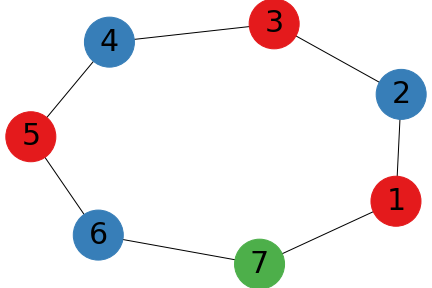}} & \multirow{17}{*}{85} & \multirow{17}{*}{17}
	& ZZXZXZX & 2/15 & 16 \\
	& & & & ZXZZXZX & 1/30 & 16 \\
	& & & & ZXZYYZX & 1/30 & 16 \\
	& & & & XZZXZXZ & 1/30 & 16 \\
	& & & & XZXZZXZ & 1/30 & 16 \\
	& & & & YYZXZXZ & 1/6 & 16 \\
	& & & & ZYXXYZX & 1/30 & 32 \\
	& & & & ZYYZZYY & 1/15 & 32 \\
	& & & & ZYYZYXY & 1/30 & 32 \\
	& & & & XZZYXYZ & 1/15 & 32 \\
	& & & & XXXYYYX & 1/10 & 32 \\
	& & & & XXYYXXY & 1/15 & 32 \\
	& & & & XYXYYYY & 1/30 & 32 \\
	& & & & YZYXYZY & 1/30 & 32 \\
	& & & & YZYYZZY & 1/30 & 32 \\
	& & & & YXYZXZZ & 1/30 & 32 \\
	& & & & YXYXYYY & 1/15 & 32 \\
	\cline{5-7}
	
	\multirow{17}{*}{41} & \multirow{17}{*}{\includegraphics[width=50mm]{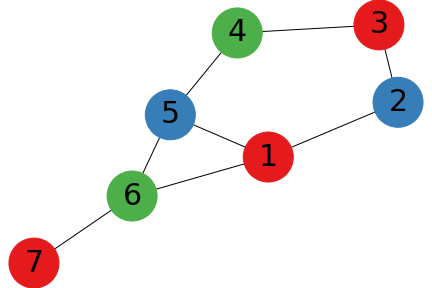}} & \multirow{17}{*}{77} & \multirow{17}{*}{17}
	& ZZXZXZX & 2/17 & 16 \\
	& & & & ZXZXZXZ & 7/51 & 16 \\
	& & & & XXYYYZX & 5/34 & 16 \\
	& & & & XYYYXZX & 1/34 & 16 \\
	& & & & YZXZYXY & 5/34 & 16 \\
	& & & & YYZXZZX & 1/102 & 16 \\
	& & & & YYYXXZX & 1/34 & 16 \\
	& & & & ZZYYZYY & 1/102 & 32 \\
	& & & & ZYYZZYY & 7/102 & 32 \\
	& & & & XYZXZYZ & 11/102 & 32 \\
	& & & & XYZYXYZ & 1/51 & 32 \\
	& & & & XYXYXYY & 1/34 & 32 \\
	& & & & YZZYXXY & 1/51 & 32 \\
	& & & & YZZYXYZ & 2/51 & 32 \\
	& & & & YXYXXXZ & 1/34 & 32 \\
	& & & & YXYXXYY & 1/51 & 32 \\
	& & & & YYXYYYY & 2/51 & 32 \\
	\cline{5-7}
	
	\multirow{6}{*}{42} & \multirow{6}{*}{\includegraphics[width=50mm, trim=0 0 0 1in]{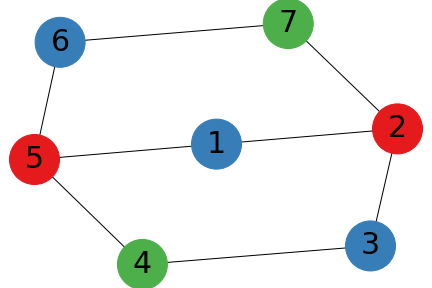}} & \multirow{6}{*}{87} & \multirow{6}{*}{6}
	& ZZXZXZX & 1/6 & 16 \\
	& & & & ZYYZXZX & 1/6 & 16 \\
	& & & & XZZXZXZ & 1/6 & 16 \\
	& & & & XYZXZYZ & 1/6 & 16 \\
	& & & & YXXYYXY & 1/6 & 16 \\
	& & & & YXYYYYY & 1/6 & 16 \\
	\cline{5-7}
	
	\multirow{13}{*}{43} & \multirow{13}{*}{\includegraphics[width=50mm]{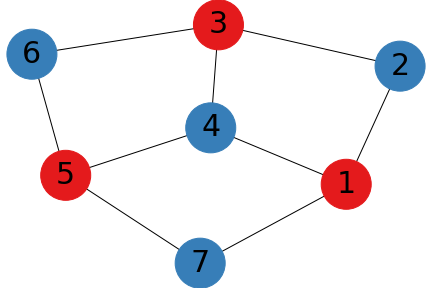}} & \multirow{13}{*}{101} & \multirow{13}{*}{13}
	& ZXZXZXX & 1/4 & 8 \\
	& & & & XZXZXZZ & 2/9 & 16 \\
	& & & & YYYYYYY & 1/9 & 16 \\
	& & & & ZZYXXXY & 1/36 & 32 \\
	& & & & ZYYZYZY & 1/36 & 32 \\
	& & & & ZYYYZYX & 1/36 & 32 \\
	& & & & XZYXXZY & 1/36 & 32 \\
	& & & & XXXYYYY & 1/18 & 32 \\
	& & & & XYYYZYZ & 1/36 & 32 \\
	& & & & YZYYYZX & 1/18 & 32 \\
	& & & & YXYXZYY & 1/36 & 32 \\
	& & & & YYZZYYZ & 1/12 & 32 \\
	& & & & YYXYXXY & 1/18 & 32 \\
	\cline{5-7}
	
	\multirow{17}{*}{44} & \multirow{17}{*}{\includegraphics[width=50mm]{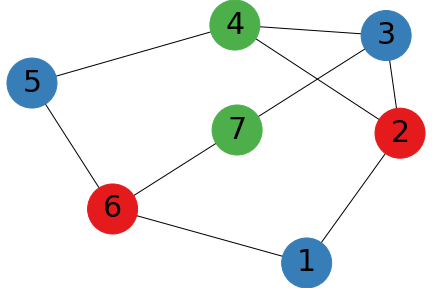}} & \multirow{17}{*}{103} & \multirow{17}{*}{17}
	& ZZYYYZX & 23/123 & 16 \\
	& & & & ZXZXZXZ & 3/41 & 16 \\
	& & & & XZXZZXZ & 11/123 & 16 \\
	& & & & XXZZZXZ & 3/41 & 16 \\
	& & & & XXZZXZZ & 11/123 & 16 \\
	& & & & YXZYZXZ & 1/123 & 16 \\
	& & & & YYXXXYX & 17/123 & 16 \\
	& & & & ZZYYZYY & 2/41 & 32 \\
	& & & & ZYYZXXY & 1/41 & 32 \\
	& & & & XYZYYYY & 10/123 & 32 \\
	& & & & YZZZXZY & 1/123 & 32 \\
	& & & & YXYZYYY & 1/41 & 32 \\
	& & & & YXYXXYY & 5/123 & 32 \\
	& & & & YYXXZZY & 5/123 & 32 \\
	& & & & YYXXYXY & 5/123 & 32 \\
	& & & & YYYYXZX & 1/123 & 32 \\
	& & & & YXXZXXY & 1/41 & 64 \\
	\cline{5-7}
	
	\multirow{19}{*}{45} & \multirow{19}{*}{\includegraphics[width=50mm]{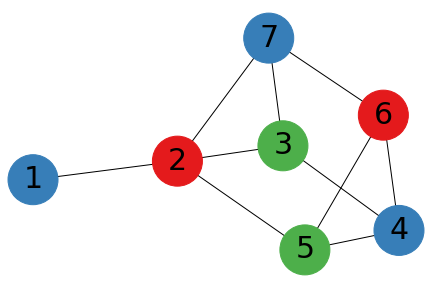}} & \multirow{19}{*}{105} & \multirow{19}{*}{19}
	& XZZZXZX & 23/375 & 16 \\
	& & & & XZZYYZX & 1/25 & 16 \\
	& & & & XZZYYXY & 18/871 & 16 \\
	& & & & XZXXZXX & 67/600 & 16 \\
	& & & & XZYZXZY & 58/911 & 16 \\
	& & & & XZYXYYZ & 9/250 & 16 \\
	& & & & ZXZZZXZ & 47/493 & 32 \\
	& & & & ZXZZYXY & 7/250 & 32 \\
	& & & & ZXXXXZZ & 13/300 & 32 \\
	& & & & ZYZXYXY & 3/100 & 32 \\
	& & & & ZYZYXYZ & 29/500 & 32 \\
	& & & & ZYYZZZY & 19/1000 & 32 \\
	& & & & YXXZYYX & 33/500 & 32 \\
	& & & & YXXYZZY & 53/1000 & 32 \\
	& & & & YXYYXXZ & 47/986 & 32 \\
	& & & & YYXXXYY & 48/809 & 32 \\
	& & & & YYYXYZZ & 53/1000 & 32 \\
	& & & & YYYYZYX & 42/773 & 32 \\
	& & & & ZYYYYYY & 25/419 & 64 \\
	\hhline{=======}
\end{longtable*}

\section{Table of protocols based on $X$ and $Z$ measurements}

\begin{longtable*}{ccccccccc} 
	\caption{\label{tab:minimal-settings}Optimal verification protocols 
based on $X$ and $Z$ measurements and protocols with the minimum number of settings for the same graph states as shown in Table~\ref{tab:optimal} (cf.~\cite{hein2004}). For each graph, the optimal protocol can achieve the spectral gap $\nu(\Omega)=1/2$, which attains the upper bound in  Corollary~\ref{cor:optimalXZ}. The minimum number of settings $\tilde{\chi}(G)$ is equal to  the chromatic number $\chi(G)$ shown in the fourth column (cf. Proposition~\ref{pro:LocalCoverNum});  the corresponding protocol achieves the spectral gap $1/\chi(G)$ (cf. Theorem~\ref{thm:SpectralMinMax}). All protocols with the minimum number of settings shown in the table are coloring protocols \cite{ZhuH2019E} determined by the colorings shown in Table~\ref{tab:optimal}. 
Each protocol is specified by  Pauli measurement settings shown in the seventh column: all settings are measured with the same probability unless noted otherwise (for graphs No. 41 and No. 42). When $\chi(G)=2$, only one protocol for $|G\>$ is shown because the protocol can achieve the maximum spectral gap $\nu(\Omega)=1/2$ and meanwhile requires the minimum number of settings. For completeness, the table also shows the qubit number $n$ in the second column, the rank of each canonical test projector in the eighth column, and the minimum number of generators of the local subgroup in the ninth column (cf.~Sec.~\ref{sec:TestProjGraph}). In addition, $\eta_{XZ}(G)$ denotes the total number of admissible test projectors for the graph state $|G\>$ that are based on $X$ and $Z$ measurements (cf. Sec.~\ref{sec:AdmissibleXZ}).
$\chi^*(G)$ denotes the fractional chromatic number of $G$ \cite{godsil2001}, whose inverse is the maximum spectral gap achievable by the cover protocol proposed in \cite{ZhuH2019E}.
Note that $\eta_{XZ}(G)$, $\chi(G)$, and $\chi^*(G)$ are not LC-invariant. 	
}\\
	\hhline{=========}
	No. & $n$ & $\eta_{XZ}(G)$ & $\chi(G)$ & $\chi^*(G)$ & $\nu(\Omega)$ & Setting & Rank & Number of \\ 
	&  & & & & & & & generators \\ \hline
	\endfirsthead
	
	\\\hline
	No. & $n$ & $\eta_{XZ}(G)$ & $\chi(G)$ & $\chi^*(G)$ & $\nu(\Omega)$ & Setting & Rank & Number of \\ 
	&  & & & & & & & generators \\ \hline
	\endhead

	\multirow{2}{*}{2} & \multirow{2}{*}{3} & \multirow{2}{*}{2} & \multirow{2}{*}{2} & \multirow{2}{*}{2} & \multirow{2}{*}{1/2}
	& XZX & 2 & 2 \\
	& & & & & & ZXZ & 4 & 1 \\
	\hline
	
	\multirow{2}{*}{3} & \multirow{2}{*}{4} & \multirow{2}{*}{2} & \multirow{2}{*}{2} & \multirow{2}{*}{2} & \multirow{2}{*}{1/2}
	& ZXXX & 2 & 3 \\
	& & & & & & XZZZ & 8 & 1 \\
	\hline
	
	\multirow{2}{*}{4} & \multirow{2}{*}{4} & \multirow{2}{*}{3} & \multirow{2}{*}{2} & \multirow{2}{*}{2} & \multirow{2}{*}{1/2}
	& ZXZX & 4 & 2 \\
	& & & & & & XZXZ & 4 & 2 \\
	\hline
	
	\multirow{2}{*}{5} & \multirow{2}{*}{5} & \multirow{2}{*}{2} & \multirow{2}{*}{2} & \multirow{2}{*}{2} & \multirow{2}{*}{1/2}
	& ZXXXX & 2 & 4 \\
	& & & & & & XZZZZ & 16 & 1 \\
	\hline
	
	\multirow{2}{*}{6} & \multirow{2}{*}{5} & \multirow{2}{*}{3} & \multirow{2}{*}{2} & \multirow{2}{*}{2} & \multirow{2}{*}{1/2}
	& XZXZX & 4 & 3 \\
	& & & & & & ZXZXZ & 8 & 2 \\
	\hline
	
	\multirow{2}{*}{7} & \multirow{2}{*}{5} & \multirow{2}{*}{4} & \multirow{2}{*}{2} & \multirow{2}{*}{2} & \multirow{2}{*}{1/2}
	& XZXZX & 4 & 3 \\
	& & & & & & ZXZXZ & 8 & 2 \\
	\hline
	
	\multirow{9}{*}{8} & \multirow{9}{*}{5} & \multirow{9}{*}{6} & \multirow{9}{*}{3} & \multirow{9}{*}{5/2} &  \multirow{3}{*}{1/3}
	& XZXZZ & 8 & 2 \\
	& & & & & & ZXZXZ & 8 & 2 \\
	& & & & & & ZZZZX & 16 & 1 \\
	\cline{6-9}
	\multirow{6}{*}{} & \multirow{6}{*}{} & \multirow{6}{*}{} & \multirow{6}{*}{} & \multirow{6}{*}{} & \multirow{6}{*}{1/2}
	& ZZXZX & 8 & 2 \\ 
	& & & & & & ZXZZX & 8 & 2 \\ 
	& & & & & & ZXZXZ & 8 & 2 \\ 
	& & & & & & XZZXZ & 8 & 2 \\ 
	& & & & & & XZXZZ & 8 & 2 \\ 
	& & & & & & XXXXX & 16 & 1 \\ 
	\hline
	
	\multirow{2}{*}{9} & \multirow{2}{*}{6} & \multirow{2}{*}{2} & \multirow{2}{*}{2} & \multirow{2}{*}{2} & \multirow{2}{*}{1/2}
	& ZXXXXX & 2 & 5 \\
	& & & & & & XZZZZZ & 32 & 1 \\
	\hline
	
	\multirow{2}{*}{10} & \multirow{2}{*}{6} & \multirow{2}{*}{3} & \multirow{2}{*}{2} & \multirow{2}{*}{2} & \multirow{2}{*}{1/2}
	& XZXZXX & 4 & 4 \\
	& & & & & & ZXZXZZ & 16 & 2 \\
	\hline
	
	\multirow{2}{*}{11} & \multirow{2}{*}{6} & \multirow{2}{*}{3} & \multirow{2}{*}{2} & \multirow{2}{*}{2} & \multirow{2}{*}{1/2}
	& ZXZXZX & 8 & 3 \\
	& & & & & & XZXZXZ & 8 & 3 \\
	\hline
	
	\multirow{2}{*}{12} & \multirow{2}{*}{6} & \multirow{2}{*}{4} & \multirow{2}{*}{2} & \multirow{2}{*}{2} & \multirow{2}{*}{1/2}
	& XZXZXX & 4 & 4 \\
	& & & & & & ZXZXZZ & 16 & 2 \\
	\hline
	
	\multirow{2}{*}{13} & \multirow{2}{*}{6} & \multirow{2}{*}{5} & \multirow{2}{*}{2} & \multirow{2}{*}{2} & \multirow{2}{*}{1/2}
	& ZXZXZX & 8 & 3 \\
	& & & & & & XZXZXZ & 8 & 3 \\
	\hline
	
	\multirow{2}{*}{14} & \multirow{2}{*}{6} & \multirow{2}{*}{5} & \multirow{2}{*}{2} & \multirow{2}{*}{2} & \multirow{2}{*}{1/2}
	& ZXZXZX & 8 & 3 \\
	& & & & & & XZXZXZ & 8 & 3 \\
	\hline
	
	\multirow{2}{*}{15} & \multirow{2}{*}{6} & \multirow{2}{*}{5} & \multirow{2}{*}{2} & \multirow{2}{*}{2} & \multirow{2}{*}{1/2}
	& XZXZXX & 4 & 4 \\
	& & & & & & ZXZXZZ & 16 & 2 \\
	\hline
	
	\multirow{7}{*}{16} & \multirow{7}{*}{6} & \multirow{7}{*}{5} & \multirow{7}{*}{3} & \multirow{7}{*}{3} &  \multirow{3}{*}{1/3}
	& XZZZXX & 8 & 3 \\
	& & & & & & ZXZXZZ & 16 & 2 \\
	& & & & & & ZZXZZZ & 32 & 1 \\
	\cline{6-9}
	\multirow{4}{*}{} & \multirow{4}{*}{} & \multirow{4}{*}{} & \multirow{4}{*}{} & \multirow{4}{*}{} & \multirow{4}{*}{1/2}
	& ZZXXXZ & 8 & 3 \\ 
	& & & & & & ZXZXZX & 8 & 3 \\ 
	& & & & & & XZZZXX & 8 & 3 \\ 
	& & & & & & XXXZZZ & 32 & 1 \\ 
	\hline
	
	\multirow{9}{*}{17} & \multirow{9}{*}{6} & \multirow{9}{*}{6} & \multirow{9}{*}{3} & \multirow{9}{*}{5/2} &  \multirow{3}{*}{1/3}
	& ZXZZXZ & 16 & 2 \\
	& & & & & & XZXZZX & 8 & 3 \\
	& & & & & & ZZZXZZ & 32 & 1 \\
	\cline{6-9}
	\multirow{6}{*}{} & \multirow{6}{*}{} & \multirow{6}{*}{} & \multirow{6}{*}{} & \multirow{6}{*}{} & \multirow{6}{*}{1/2}
	& ZXZXZX & 8 & 3 \\ 
	& & & & & & XZZXZX & 8 & 3 \\ 
	& & & & & & XZXZZX & 8 & 3 \\ 
	& & & & & & ZZXZXZ & 16 & 2 \\ 
	& & & & & & ZXZZXZ & 16 & 2 \\ 
	& & & & & & XXXXXZ & 32 & 1 \\ 
	\hline
	
	\multirow{2}{*}{18} & \multirow{2}{*}{6} & \multirow{2}{*}{6} & \multirow{2}{*}{2} & \multirow{2}{*}{2} & \multirow{2}{*}{1/2}
	& ZXZXZX & 8 & 3 \\
	& & & & & & XZXZXZ & 8 & 3 \\
	\hline
	
	\multirow{9}{*}{19} & \multirow{9}{*}{6} & \multirow{9}{*}{12} & \multirow{9}{*}{3} & \multirow{9}{*}{3} &  \multirow{3}{*}{1/3}
	& XZZZZX & 16 & 2 \\
	& & & & & & ZXZXZZ & 16 & 2 \\
	& & & & & & ZZXZXZ & 16 & 2 \\
	\cline{6-9}
	\multirow{6}{*}{} & \multirow{6}{*}{} & \multirow{6}{*}{} & \multirow{6}{*}{} & \multirow{6}{*}{} & \multirow{6}{*}{1/2}
	& ZZXZXZ & 16 & 2 \\ 
	& & & & & & ZZXXZZ & 16 & 2 \\ 
	& & & & & & ZXZZZX & 16 & 2 \\ 
	& & & & & & XZZZZX & 16 & 2 \\ 
	& & & & & & XXZXXZ & 16 & 2 \\ 
	& & & & & & XXXXXX & 16 & 2 \\ 
	\hline
	
	\multirow{2}{*}{20} & \multirow{2}{*}{7} & \multirow{2}{*}{2} & \multirow{2}{*}{2} & \multirow{2}{*}{2} & \multirow{2}{*}{1/2}
	& ZXXXXXX & 2 & 6 \\
	& & & & & & XZZZZZZ & 64 & 1 \\
	\hline
	
	\multirow{2}{*}{21} & \multirow{2}{*}{7} & \multirow{2}{*}{3} & \multirow{2}{*}{2} & \multirow{2}{*}{2} & \multirow{2}{*}{1/2}
	& XXXXZXZ & 4 & 5 \\
	& & & & & & ZZZZXZX & 32 & 2 \\
	\hline
	
	\multirow{2}{*}{22} & \multirow{2}{*}{7} & \multirow{2}{*}{3} & \multirow{2}{*}{2} & \multirow{2}{*}{2} & \multirow{2}{*}{1/2}
	& XXXZZXZ & 8 & 4 \\
	& & & & & & ZZZXXZX & 16 & 3 \\
	\hline
	
	\multirow{2}{*}{23} & \multirow{2}{*}{7} & \multirow{2}{*}{4} & \multirow{2}{*}{2} & \multirow{2}{*}{2} & \multirow{2}{*}{1/2}
	& XXXXZXZ & 4 & 5 \\
	& & & & & & ZZZZXZX & 32 & 2 \\
	\hline
	
	\multirow{2}{*}{24} & \multirow{2}{*}{7} & \multirow{2}{*}{4} & \multirow{2}{*}{2} & \multirow{2}{*}{2} & \multirow{2}{*}{1/2}
	& XXXXZXZ & 4 & 5 \\
	& & & & & & ZZZZXZX & 32 & 2 \\
	\hline
	
	\multirow{2}{*}{25} & \multirow{2}{*}{7} & \multirow{2}{*}{5} & \multirow{2}{*}{2} & \multirow{2}{*}{2} & \multirow{2}{*}{1/2}
	& XZXXZXZ & 8 & 4 \\
	& & & & & & ZXZZXZX & 16 & 3 \\
	\hline
	
	\multirow{2}{*}{26} & \multirow{2}{*}{7} & \multirow{2}{*}{5} & \multirow{2}{*}{2} & \multirow{2}{*}{2} & \multirow{2}{*}{1/2}
	& XXZXZXZ & 8 & 4 \\
	& & & & & & ZZXZXZX & 16 & 3 \\
	\hline
	
	\multirow{2}{*}{27} & \multirow{2}{*}{7} & \multirow{2}{*}{5} & \multirow{2}{*}{2} & \multirow{2}{*}{2} & \multirow{2}{*}{1/2}
	& XZXZXZX & 8 & 4 \\
	& & & & & & ZXZXZXZ & 16 & 3 \\
	\hline
	
	\multirow{2}{*}{28} & \multirow{2}{*}{7} & \multirow{2}{*}{6} & \multirow{2}{*}{2} & \multirow{2}{*}{2} & \multirow{2}{*}{1/2}
	& ZXZXXZX & 8 & 4 \\
	& & & & & & XZXZZXZ & 16 & 3 \\
	\hline
	
	\multirow{2}{*}{29} & \multirow{2}{*}{7} & \multirow{2}{*}{8} & \multirow{2}{*}{2} & \multirow{2}{*}{2} & \multirow{2}{*}{1/2}
	& XZXZXZX & 8 & 4 \\
	& & & & & & ZXZXZXZ & 16 & 3 \\
	\hline
	
	\multirow{2}{*}{30} & \multirow{2}{*}{7} & \multirow{2}{*}{7} & \multirow{2}{*}{2} & \multirow{2}{*}{2} & \multirow{2}{*}{1/2}
	& XZXZXZX & 8 & 4 \\
	& & & & & & ZXZXZXZ & 16 & 3 \\
	\hline
	
	\multirow{2}{*}{31} & \multirow{2}{*}{7} & \multirow{2}{*}{5} & \multirow{2}{*}{2} & \multirow{2}{*}{2} & \multirow{2}{*}{1/2}
	& ZXZXXXX & 4 & 5 \\
	& & & & & & XZXZZZZ & 32 & 2 \\
	\hline
	
	\multirow{7}{*}{32} & \multirow{7}{*}{7} & \multirow{7}{*}{5} & \multirow{7}{*}{3} & \multirow{7}{*}{3} &  \multirow{3}{*}{1/3}
	& ZXZZXZX & 16 & 3 \\
	& & & & & & XZXZZXZ & 16 & 3 \\
	& & & & & & ZZZXZZZ & 64 & 1 \\
	\cline{6-9}
	\multirow{4}{*}{} & \multirow{4}{*}{} & \multirow{4}{*}{} & \multirow{4}{*}{} & \multirow{4}{*}{} & \multirow{4}{*}{1/2}
	& XXZXZXZ & 8 & 4 \\ 
	& & & & & & XXXZXZZ & 8 & 4 \\ 
	& & & & & & ZZXXZZX & 16 & 3 \\ 
	& & & & & & ZZZZXXX & 64 & 1 \\ 
	\hline
	
	\multirow{9}{*}{33} & \multirow{9}{*}{7} & \multirow{9}{*}{6} & \multirow{9}{*}{3} & \multirow{9}{*}{5/2} &  \multirow{3}{*}{1/3}
	& ZXZZXZZ & 32 & 2 \\
	& & & & & & XZXXZXZ & 8 & 4 \\
	& & & & & & ZZZZZZX & 64 & 1 \\
	\cline{6-9}
	\multirow{6}{*}{} & \multirow{6}{*}{} & \multirow{6}{*}{} & \multirow{6}{*}{} & \multirow{6}{*}{} & \multirow{6}{*}{1/2}
	& XXZZXZX & 8 & 4 \\ 
	& & & & & & XXZXZZX & 8 & 4 \\ 
	& & & & & & XXZXZXZ & 8 & 4 \\ 
	& & & & & & ZZXZZXZ & 32 & 2 \\ 
	& & & & & & ZZXZXZZ & 32 & 2 \\ 
	& & & & & & ZZXXXXX & 64 & 1 \\ 
	\hline
	
	\multirow{2}{*}{34} & \multirow{2}{*}{7} & \multirow{2}{*}{6} & \multirow{2}{*}{2} & \multirow{2}{*}{2} & \multirow{2}{*}{1/2}
	& XZXZXZX & 8 & 4 \\
	& & & & & & ZXZXZXZ & 16 & 3 \\
	\hline
	
	\multirow{9}{*}{35} & \multirow{9}{*}{7} & \multirow{9}{*}{6} & \multirow{9}{*}{3} & \multirow{9}{*}{5/2} &  \multirow{3}{*}{1/3}
	& ZXZXZZZ & 32 & 2 \\
	& & & & & & XZXZXZX & 8 & 4 \\
	& & & & & & ZZZZZXZ & 64 & 1 \\
	\cline{6-9}
	\multirow{6}{*}{} & \multirow{6}{*}{} & \multirow{6}{*}{} & \multirow{6}{*}{} & \multirow{6}{*}{} & \multirow{6}{*}{1/2}
	& XXZZXZX & 8 & 4 \\ 
	& & & & & & XXZXZZX & 8 & 4 \\ 
	& & & & & & ZXZXZXZ & 16 & 3 \\ 
	& & & & & & XZXZXZZ & 16 & 3 \\ 
	& & & & & & ZZXZZXZ & 32 & 2 \\ 
	& & & & & & ZZXXXXX & 64 & 1 \\ 
	\hline
	
	\multirow{7}{*}{36} & \multirow{7}{*}{7} & \multirow{7}{*}{7} & \multirow{7}{*}{3} & \multirow{7}{*}{3} &  \multirow{3}{*}{1/3}
	& XZXZZXX & 8 & 4 \\
	& & & & & & ZXZXZZZ & 32 & 2 \\
	& & & & & & ZZZZXZZ & 64 & 1 \\
	\cline{6-9}
	\multirow{4}{*}{} & \multirow{4}{*}{} & \multirow{4}{*}{} & \multirow{4}{*}{} & \multirow{4}{*}{} & \multirow{4}{*}{1/2}
	& XZXZZXX & 8 & 4 \\ 
	& & & & & & ZXZZXZX & 16 & 3 \\ 
	& & & & & & ZXZXZXZ & 16 & 3 \\ 
	& & & & & & XZXXXZZ & 32 & 2 \\ 
	\hline
	
	\pagebreak
	
	\multirow{9}{*}{37} & \multirow{9}{*}{7} & \multirow{9}{*}{7} & \multirow{9}{*}{3} & \multirow{9}{*}{5/2} &  \multirow{3}{*}{1/3}
	& ZXZXZZX & 16 & 3 \\
	& & & & & & XZXZZXZ & 16 & 3 \\
	& & & & & & ZZZZXZZ & 64 & 1 \\
	\cline{6-9}
	\multirow{6}{*}{} & \multirow{6}{*}{} & \multirow{6}{*}{} & \multirow{6}{*}{} & \multirow{6}{*}{} & \multirow{6}{*}{1/2}
	& XZXZXZX & 8 & 4 \\ 
	& & & & & & ZXZZXZX & 16 & 3 \\ 
	& & & & & & ZXZXZZX & 16 & 3 \\ 
	& & & & & & XZZXZXZ & 16 & 3 \\ 
	& & & & & & XZXZZXZ & 16 & 3 \\ 
	& & & & & & ZXXXXXZ & 64 & 1 \\ 
	\hline
	
	\multirow{2}{*}{38} & \multirow{2}{*}{7} & \multirow{2}{*}{7} & \multirow{2}{*}{2} & \multirow{2}{*}{2} & \multirow{2}{*}{1/2}
	& XZXZXZX & 8 & 4 \\
	& & & & & & ZXZXZXZ & 16 & 3 \\
	\hline
	
	\multirow{9}{*}{39} & \multirow{9}{*}{7} & \multirow{9}{*}{9} & \multirow{9}{*}{3} & \multirow{9}{*}{5/2} &  \multirow{3}{*}{1/3}
	& ZXZZXZX & 16 & 3 \\
	& & & & & & XZXZZXZ & 16 & 3 \\
	& & & & & & ZZZXZZZ & 64 & 1 \\
	\cline{6-9}
	\multirow{6}{*}{} & \multirow{6}{*}{} & \multirow{6}{*}{} & \multirow{6}{*}{} & \multirow{6}{*}{} & \multirow{6}{*}{1/2}
	& ZZXZXZX & 16 & 3 \\ 
	& & & & & & ZXZZXZX & 16 & 3 \\ 
	& & & & & & ZXZXZXZ & 16 & 3 \\ 
	& & & & & & XZZXZXZ & 16 & 3 \\ 
	& & & & & & XZXZZXZ & 16 & 3 \\ 
	& & & & & & XXXXXZX & 32 & 2 \\ 
	\hline
	
	\multirow{11}{*}{40} & \multirow{11}{*}{7} & \multirow{11}{*}{8} & \multirow{11}{*}{3} & \multirow{11}{*}{7/3} &  \multirow{3}{*}{1/3}
	& XZXZXZZ & 16 & 3 \\
	& & & & & & ZXZXZXZ & 16 & 3 \\
	& & & & & & ZZZZZZX & 64 & 1 \\
	\cline{6-9}
	\multirow{8}{*}{} & \multirow{8}{*}{} & \multirow{8}{*}{} & \multirow{8}{*}{} & \multirow{8}{*}{} & \multirow{8}{*}{1/2}
	& ZZXZXZX & 16 & 3 \\ 
	& & & & & & ZXZZXZX & 16 & 3 \\ 
	& & & & & & ZXZXZZX & 16 & 3 \\ 
	& & & & & & ZXZXZXZ & 16 & 3 \\ 
	& & & & & & XZZXZXZ & 16 & 3 \\ 
	& & & & & & XZXZZXZ & 16 & 3 \\ 
	& & & & & & XZXZXZZ & 16 & 3 \\ 
	& & & & & & XXXXXXX & 64 & 1 \\ 
	\hline
	
	\multirow{10}{*}{41} & \multirow{10}{*}{7} & \multirow{10}{*}{10} & \multirow{10}{*}{3} & \multirow{10}{*}{3} &  \multirow{3}{*}{1/3}
	& XZXZZZX & 16 & 3 \\
	& & & & & & ZXZZXZZ & 32 & 2 \\
	& & & & & & ZZZXZXZ & 32 & 2 \\
	\cline{6-9}
	\multirow{7}{*}{} & \multirow{7}{*}{} & \multirow{7}{*}{} & \multirow{7}{*}{} & \multirow{7}{*}{} & \multirow{7}{*}{1/2}
	& ZZXZXZX & 16 & 3 \\ 
	& & & & & & ZXZZXZX & 16 & 3 \\ 
	& & & & & & ZXZXZXZ (1/4) & 16 & 3 \\ 
	& & & & & & XZZXZZX & 16 & 3 \\ 
	& & & & & & XZXZZZX & 16 & 3 \\ 
	& & & & & & XZXZXXZ & 32 & 2 \\ 
	& & & & & & XXXXXXZ & 32 & 2 \\ 
	\hline
	
	\multirow{9}{*}{42} & \multirow{9}{*}{7} & \multirow{9}{*}{10} & \multirow{9}{*}{3} & \multirow{9}{*}{5/2} &  \multirow{3}{*}{1/3}
	& ZXZZXZZ & 32 & 2 \\
	& & & & & & XZXZZXZ & 16 & 3 \\
	& & & & & & ZZZXZZX & 32 & 2 \\
	\cline{6-9}
	\multirow{6}{*}{} & \multirow{6}{*}{} & \multirow{6}{*}{} & \multirow{6}{*}{} & \multirow{6}{*}{} & \multirow{6}{*}{1/2}
	& ZZXZXZX (1/4) & 16 & 3 \\ 
	& & & & & & ZXZXZZX & 16 & 3 \\ 
	& & & & & & XZZXZXZ (1/4) & 16 & 3 \\ 
	& & & & & & XXZZXZZ & 16 & 3 \\ 
	& & & & & & ZXXXXXZ & 64 & 1 \\ 
	& & & & & & XXXZZXX & 64 & 1 \\ 
	\hline
	
	\multirow{2}{*}{43} & \multirow{2}{*}{7} & \multirow{2}{*}{9} & \multirow{2}{*}{2} & \multirow{2}{*}{2} & \multirow{2}{*}{1/2}
	& ZXZXZXX & 8 & 4 \\
	& & & & & & XZXZXZZ & 16 & 3 \\
	\hline
	
	\multirow{7}{*}{44} & \multirow{7}{*}{7} & \multirow{7}{*}{11} & \multirow{7}{*}{3} & \multirow{7}{*}{3} &  \multirow{3}{*}{1/3}
	& ZXZZZXZ & 32 & 2 \\
	& & & & & & XZXZXZZ & 16 & 3 \\
	& & & & & & ZZZXZZX & 32 & 2 \\
	\cline{6-9}
	\multirow{4}{*}{} & \multirow{4}{*}{} & \multirow{4}{*}{} & \multirow{4}{*}{} & \multirow{4}{*}{} & \multirow{4}{*}{1/2}
	& ZXZXZXZ & 16 & 3 \\ 
	& & & & & & XXZZXZZ & 16 & 3 \\ 
	& & & & & & ZZXZZZX & 32 & 2 \\ 
	& & & & & & XZXXXXX & 32 & 2 \\ 
	\hline
	
	\pagebreak
	
	\multirow{7}{*}{45} & \multirow{7}{*}{7} & \multirow{7}{*}{12} & \multirow{7}{*}{3} & \multirow{7}{*}{3} &  \multirow{3}{*}{1/3}
	& ZXZZZXZ & 32 & 2 \\
	& & & & & & XZZXZZX & 16 & 3 \\
	& & & & & & ZZXZXZZ & 32 & 2 \\
	\cline{6-9}
	\multirow{4}{*}{} & \multirow{4}{*}{} & \multirow{4}{*}{} & \multirow{4}{*}{} & \multirow{4}{*}{} & \multirow{4}{*}{1/2}
	& XZZZXZX & 16 & 3 \\ 
	& & & & & & XZXXZXX & 16 & 3 \\ 
	& & & & & & ZXZZZXZ & 32 & 2 \\ 
	& & & & & & ZXXXXZZ & 32 & 2 \\ 
	
	\hhline{=========}
\end{longtable*}
\end{appendices}


\bibliographystyle{plainnat-no-url}

\end{document}